\newif\ifshort\shortfalse
\DeclareMathOperator{\iso}{iso}
\DeclareMathOperator{\classical}{classical}
\DeclareMathOperator{\inj}{inj}
\DeclareMathOperator{\preps}{preps}
\DeclareMathOperator{\flags}{flags}
\DeclareMathOperator{\FV}{FV}
\DeclareMathOperator{\size}{size}
\DeclareMathOperator{\encode}{enc}
\DeclareMathOperator{\enc}{enc}
\DeclareMathOperator{\dom}{dom}
\DeclareMathOperator{\tr}{tr}
\DeclareMathOperator{\depth}{depth}
\DeclarePairedDelimiter{\norm}{\|}{\|}
\DeclarePairedDelimiter{\psem}{\llparenthesis}{\rrparenthesis}
\DeclarePairedDelimiter{\msem}{\llbracket}{\rrbracket}
\newenvironment{DIFnomarkup}{}{}
\newcommand*{\defeq}{\mathrel{\vcenter{\baselineskip0.5ex \lineskiplimit0pt
                     \hbox{\scriptsize.}\hbox{\scriptsize.}}}%
                     =}
\newcommand*{\defeqq}{\mathrel{\vcenter{\baselineskip0.5ex \lineskiplimit0pt
                     \hbox{\scriptsize.}\hbox{\scriptsize.}}}%
                     \mathrel{\vcenter{\baselineskip0.5ex \lineskiplimit0pt
                     \hbox{\scriptsize.}\hbox{\scriptsize.}}}%
                     =}
\newcommand{\spanning}[2]{\textup{spanning}_{\color{gray} #1}\left( \begin{aligned} #2 \end{aligned} \right)}
\newcommand{\ortho}[2]{\textup{ortho}_{\color{gray} #1}\left( \begin{aligned} #2 \end{aligned} \right)}
\newcommand{\erases}[1]{\textup{erases}_{\color{gray} #1}}
\newcommand{\MLPi}{\textup{ML}_{\Pi}}
\newcommand{\grover}[1]{\texttt{grover}_{\color{gray}#1}}
\newcommand{\Void}{\texttt{Void}}
\newcommand{\Unit}{\texttt{()}}
\newcommand{\Child}{\texttt{Child}}
\newcommand{\Vertex}{\texttt{Vertex}}
\newcommand{\Maybe}[1]{\texttt{Maybe}\; #1}
\newcommand{\nothing}[1]{\texttt{nothing}_{\color{gray}#1}}
\newcommand{\equals}[1]{\texttt{equals}_{\color{gray}#1}}
\newcommand{\validate}[1]{\texttt{validate}_{\color{gray}#1}}
\newcommand{\just}[1]{\texttt{just}_{\color{gray}#1}}
\newcommand{\Bit}{\texttt{Bit}}
\newcommand{\lef}[2]{\texttt{left}_{\color{gray}{#1 \oplus #2}}}
\newcommand{\rit}[2]{\texttt{right}_{\color{gray}{#1 \oplus #2}}}
\newcommand{\tagsum}[2]{\texttt{tag}_{\color{gray}{#1 \oplus #2}}}
\newcommand{\pair}[2]{\texttt{(} #1 \texttt{,} #2 \texttt{)}}
\newcommand{\nprep}[1][]{{n_{\textsc{prep#1}}}}
\newcommand{\nflag}[1][]{{n_{\textsc{flag#1}}}}
\newcommand{\ngarb}[1][]{{n_{\textsc{garb#1}}}}
\newcommand{\consarrow}[1]{\hookrightarrow_{\color{gray}{#1}}}
\newcommand{\zero}{\texttt{0}}
\newcommand{\one}{\texttt{1}}
\newcommand{\cdown}{\texttt{cdown}}
\newcommand{\cleft}{\texttt{cleft}}
\newcommand{\cright}{\texttt{cright}}
\newcommand{\vleft}{\texttt{vleft}}
\newcommand{\vright}{\texttt{vright}}
\newcommand{\Coin}{\texttt{Coin}}
\newcommand{\Leaf}{\texttt{Leaf}}
\newcommand{\rooot}[1]{\texttt{root}_{\color{gray}{#1}}}
\newcommand{\rootone}[1]{\texttt{root\textquotesingle}_{\color{gray}{#1}}}
\newcommand{\downcast}[1]{\texttt{downcast}_{\color{gray}{#1}}}
\newcommand{\leftchild}[1]{\texttt{leftchild}_{\color{gray}{#1}}}
\newcommand{\rightchild}[1]{\texttt{rightchild}_{\color{gray}{#1}}}
\newcommand{\nextcoin}[1]{\texttt{nextcoin}_{\color{gray}{#1}}}
\newcommand{\asleaf}[1]{\texttt{asleaf}_{\color{gray}{#1}}}
\newcommand{\diffusion}[1]{\texttt{diffusion}_{\color{gray}{#1}}}
\newcommand{\walk}[1]{\texttt{walk}_{\color{gray}{#1}}}
\newcommand{\gphase}[2]{\texttt {gphase}_{\color{gray}{#1}}\texttt{(} #2  \texttt{)}}
\newcommand{\rphase}[4]{\texttt {rphase}_{\color{gray}{#1}} \left\{ \begin{aligned} #2 &\mapsto #3 \\ \texttt{else} &\mapsto #4 \end{aligned} \right\}}
\newcommand{\had}{\texttt{had}}
\newcommand{\fst}[2]{\texttt{fst}_{\color{gray}{#1 \otimes #2}}}
\newcommand{\snd}[2]{\texttt{snd}_{\color{gray}{#1 \otimes #2}}}
\newcommand{\reflect}[1]{\texttt{reflect}_{\color{gray}#1}}
\newcommand{\trycatch}[2]{\texttt{try}\; #1 \; \texttt{catch}\; #2}
\newcommand{\cntrl}[4]{\texttt{ctrl } #1 \tensor*[_{\color{gray}{#2}}]{\left\{ \begin{aligned} #3 \end{aligned} \right\}}{_{\color{gray}{#4}}}}
\newcommand{\match}[3]{\texttt{match} \tensor*[_{\color{gray}{#1}}]{\left\{ \begin{aligned} #2 \end{aligned} \right\}}{_{\color{gray}{#3}}}}
\newcommand{\meas}[1]{\texttt{meas}_{\color{gray}{#1}}}
\newcommand{\divby}{\texttt{/}}
\newcommand{\complex}{\mathbb C}
\newcommand{\X}{\mathbb X}
\newcommand{\V}{\mathbb V}
\newcommand{\E}{\mathcal E}
\newcommand{\linear}{\mathcal L}
\newcommand{\Hilb}{\mathcal{H}}
\newcommand{\unit}{\texttt{()}}
\newcommand{\uthree}[3]{\texttt u_{\texttt 3} \texttt{(} #1 \texttt{,} #2 \texttt{,} #3 \texttt{)} }
\newcommand{\qiso}[2]{\texttt{q}_{\color{gray} #1 \rightsquigarrow #2 }}
\newcommand{\qarr}[2]{\texttt{q}_{\color{gray}#1 \Rrightarrow #2}}
\newcommand\doubleplus{\mathbin{+\!+}}
\newcommand\partition{\mathbin{\|}}
\newcommand{\cmark}{\ding{51}}
\newcommand{\xmark}{\ding{55}}
\newcommand{\subcap}[1]{_{\textsc{#1}}}
\newcommand{\cptp}{\textsc{cptp}}
\newcommand{\Ccptp}{\mathcal{C}\subcap{cptp}}
\newcommand{\projv}[1]{P_{\V(#1)}^{\text{\cmark}}}
\newcommand{\projvx}[1]{P_{\V(#1)}^{\text{\xmark}}}
\newcommand{\op}[2]{| #1 \rangle\!\langle #2 |}
\newcommand{\ip}[2]{\langle #1 | #2 \rangle}
\begin{document}

\title{Qunity}
\ifshort
\subtitle{A Unified Language for Quantum and Classical Computing}
\else
\subtitle{A Unified Language for Quantum and Classical Computing (Extended Version)}
\fi
\begin{abstract}
We introduce Qunity, a new quantum programming language designed to treat quantum computing as a natural generalization of classical computing.
Qunity presents a unified syntax where familiar programming constructs can have both quantum and classical effects.
For example, one can use sum types to implement the direct sum of linear operators, exception-handling syntax to implement projective measurements, and aliasing to induce entanglement.
Further, Qunity takes advantage of the overlooked \textsc{bqp} subroutine theorem, allowing one to construct reversible subroutines from irreversible quantum algorithms through the uncomputation of ``garbage'' outputs.
Unlike existing languages that enable quantum aspects with separate add-ons (like a classical language with quantum gates bolted on), Qunity provides a unified syntax and a novel denotational semantics that guarantees that programs are quantum mechanically valid.
We present Qunity's syntax, type system, and denotational semantics, showing how it can cleanly express several quantum algorithms.
We also detail how Qunity can be compiled into a low-level qubit circuit language like Open\textsc{Qasm}, proving the realizability of our design.
\end{abstract}

\author{Finn Voichick}
\orcid{0000-0002-1913-4178}
\affiliation{
	\institution{University of Maryland}
	\city{College Park}
	\country{USA}
}
\email{finn@umd.edu}

\author{Liyi Li}
\orcid{0000-0001-8184-0244}
\affiliation{
	\institution{University of Maryland}
	\city{College Park}
	\country{USA}
}
\email{liyili2@umd.edu}

\author{Robert Rand}
\orcid{0000-0001-6842-5505}
\affiliation{
	\institution{University of Chicago}
	\city{Chicago}
	\country{USA}
}
\email{rand@uchicago.edu}

\author{Michael Hicks}
\orcid{0000-0002-2759-9223}
\authornote{Work completed before starting at Amazon.}
\affiliation{
	\institution{University of Maryland and Amazon}
	\city{College Park}
	\country{USA}
}
\email{mwh@cs.umd.edu}

\begin{CCSXML}
<ccs2012>
   <concept>
       <concept_id>10011007.10011006.10011008.10011009.10011021</concept_id>
       <concept_desc>Software and its engineering~Multiparadigm languages</concept_desc>
       <concept_significance>500</concept_significance>
       </concept>
   <concept>
       <concept_id>10003752.10010124.10010131.10010133</concept_id>
       <concept_desc>Theory of computation~Denotational semantics</concept_desc>
       <concept_significance>300</concept_significance>
       </concept>
   <concept>
       <concept_id>10003752.10003753.10003758.10010626</concept_id>
       <concept_desc>Theory of computation~Quantum information theory</concept_desc>
       <concept_significance>300</concept_significance>
       </concept>
   <concept>
       <concept_id>10011007.10011006.10011039.10011040</concept_id>
       <concept_desc>Software and its engineering~Syntax</concept_desc>
       <concept_significance>300</concept_significance>
       </concept>
   <concept>
       <concept_id>10011007.10011006.10011008.10011024.10011027</concept_id>
       <concept_desc>Software and its engineering~Control structures</concept_desc>
       <concept_significance>300</concept_significance>
       </concept>
   <concept>
       <concept_id>10011007.10011006.10011008.10011024.10011028</concept_id>
       <concept_desc>Software and its engineering~Data types and structures</concept_desc>
       <concept_significance>100</concept_significance>
       </concept>
   <concept>
       <concept_id>10011007.10011006.10011041</concept_id>
       <concept_desc>Software and its engineering~Compilers</concept_desc>
       <concept_significance>100</concept_significance>
       </concept>
   <concept>
       <concept_id>10011007.10011006.10011008.10011009.10011012</concept_id>
       <concept_desc>Software and its engineering~Functional languages</concept_desc>
       <concept_significance>100</concept_significance>
       </concept>
   <concept>
       <concept_id>10011007.10011006.10011008.10011024.10011036</concept_id>
       <concept_desc>Software and its engineering~Patterns</concept_desc>
       <concept_significance>100</concept_significance>
       </concept>
   <concept>
       <concept_id>10011007.10011006.10011039.10011311</concept_id>
       <concept_desc>Software and its engineering~Semantics</concept_desc>
       <concept_significance>100</concept_significance>
       </concept>
   <concept>
       <concept_id>10003752.10010124.10010125.10010126</concept_id>
       <concept_desc>Theory of computation~Control primitives</concept_desc>
       <concept_significance>100</concept_significance>
       </concept>
   <concept>
       <concept_id>10003752.10010124.10010131.10010137</concept_id>
       <concept_desc>Theory of computation~Categorical semantics</concept_desc>
       <concept_significance>100</concept_significance>
       </concept>
   <concept>
       <concept_id>10003752.10003753.10003758.10010625</concept_id>
       <concept_desc>Theory of computation~Quantum query complexity</concept_desc>
       <concept_significance>100</concept_significance>
       </concept>
   <concept>
       <concept_id>10003752.10010124.10010138.10010144</concept_id>
       <concept_desc>Theory of computation~Assertions</concept_desc>
       <concept_significance>100</concept_significance>
       </concept>
 </ccs2012>
\end{CCSXML}

\ccsdesc[500]{Software and its engineering~Multiparadigm languages}
\ccsdesc[300]{Theory of computation~Denotational semantics}
\ccsdesc[300]{Theory of computation~Quantum information theory}
\ccsdesc[300]{Software and its engineering~Syntax}
\ccsdesc[300]{Software and its engineering~Control structures}
\ccsdesc[100]{Software and its engineering~Data types and structures}
\ccsdesc[100]{Software and its engineering~Compilers}
\ccsdesc[100]{Software and its engineering~Functional languages}
\ccsdesc[100]{Software and its engineering~Patterns}
\ccsdesc[100]{Software and its engineering~Semantics}
\ccsdesc[100]{Theory of computation~Control primitives}
\ccsdesc[100]{Theory of computation~Categorical semantics}
\ccsdesc[100]{Theory of computation~Quantum query complexity}
\ccsdesc[100]{Theory of computation~Assertions}

\keywords{algebraic data types, Kraus operators, quantum subroutines, reversible computing}

\maketitle

\section{Introduction}
\label{sec:intro}

		Quantum computing \emph{generalizes} classical computing.
		That is, any efficiently-implementable classical algorithm can also be efficiently implemented on a quantum computer.
                However, quantum programming languages today do not fully leverage this connection.
                Rather, to varying degrees, they impose a separation between their classical and quantum programming constructs.
                Such a separation owes in part to the \textsc{qram} computing model \cite{qram-pseudocode}, and is reflected in slogans such as ``quantum data, classical control'' \cite{quantum-data-classical-control}.
                While keeping the quantum and classical separated in the language makes some sense in the near term (the ``\textsc{nisq}'' era \cite{nisq}), it artificially limits the long-term potential of quantum algorithm designs.
                It also fails to take advantage of a classical programmer's intuition and limits the reuse of classical code and ideas in quantum algorithms.

                In this paper, we present \emph{Qunity} (``\textsc{kyoo}-nih-tee''), a new programming language whose programming constructs will be familiar to classical programmers but are generalized to include quantum behavior.
                Thus, Qunity aims to \emph{unify} quantum and classical concepts in a single language.
                Qunity draws inspiration from prior languages which contain some unified constructs~\cite{qml,silq}, but Qunity broadens and deepens that unification.

\subsection{Motivating Example: Deutsch's Algorithm}
\label{sec:deutsch}

To give a sense of Qunity's design, we present Deutsch's algorithm \cite{deutsch}.
Given black-box access to a function $f : \{\zero, \one\} \to \{\zero, \one\}$, this algorithm computes whether or not $f(\zero) \stackrel{?}{=} f(\one)$ using only a single query to the function. 

\begin{align*}
	&\texttt{deutsch}(f) \defeq \\
	&\texttt{let } x \texttt{ =}_{\color{gray}{\Bit}}\; {(\had\; \zero)} \texttt{ in} \\
	&\left(\cntrl{(f\; x)}{\color{gray}\Bit}{\zero &\mapsto x \\ \one &\mapsto x \triangleright \gphase{\Bit}{\pi}}{\Bit}\right) \triangleright \had
\end{align*}

The algorithm has three steps:
\begin{enumerate}
\item Apply a Hadamard operator \texttt{had} to a qubit in the zero state, yielding a qubit in state $\ket{+}$; the \texttt{let} expression binds this qubit to $x$.
\item Query an oracle to conditionally flip the phase of the qubit, coherently performing the linear map:
		\[
			\ket x \mapsto (-1)^{f(x)} \ket{x}
			\]
This step is implemented by the \texttt{ctrl} expression: if $f x$ is $\zero$ then $\ket{x}$ is unchanged, but if $f x$ is $\one$ then $x \triangleright \gphase{\Bit}{\pi}$ applies a phase of $e^{i\pi}$ (which is $-1$) to $\ket{x}$.
\item Finally, apply a Hadamard operator to the qubit output from step 2.
\end{enumerate}
If $f(\zero) \neq f(\one)$, the output will be $\ket \one$ up to global phase; otherwise, it will be $\ket \zero$.

Qunity's version of Deutsch's algorithm reads like a typical functional program and is more general than typically presented versions.
Some presentations \cite{mike-and-ike} require constructing a two-qubit unitary oracle $U_f$ from the classical function $f$ such that $U_f \ket{x,y} = \ket{x, y \oplus f(x)}$ for all $x, y \in \{\zero,\one\}$.
In Qunity, no separate oracle is needed; the ``oracle'' is $f$ itself. 
Existing programming languages \cite{reverc,reqwire,vqo} support automatic construction of oracles $U_f$ from explicitly-written programs $f$, but these require that the implementation of $f$ is strictly classical.
For example, Silq's \cite{silq} uncomputation construct requires that the program of interest be ``qfree,'' with strictly classical behavior, and Quipper \cite{quipper} supports automatic oracle construction only from classical Haskell programs.
In Qunity, $f$ can be an \emph{arbitrary} quantum algorithm that takes a qubit as input and produces a qubit as output, even if it uses measurement or interspersed classical operations.

Qunity takes advantage of the \textsc{bqp} subroutine theorem \cite{bqp,bqp-watrous}, which allows for the construction of reversible subroutines from arbitrary (not necessarily reversible) quantum algorithms.
If the quantum algorithm has probabilistic behavior, the reversible subroutine may have a degree of error, but the \textsc{bqp} subroutine theorem places reasonable bounds on this error.
These bounded-error subroutines are commonly used in the design of quantum algorithms \cite{kothari}, but existing programming languages provide no convenient way to construct and compose them, a gap that Qunity fills.

 According to Qunity's type system (Section~\ref{sec:types}), the following typing judgment is valid:
 \[
	\inference{\vdash f : \Bit \Rrightarrow \Bit}{{\color{gray} \varnothing \partition \varnothing} \vdash \texttt{deutsch}(f) : \Bit}
\]

\sloppy The rule says: ``given an arbitrary quantum algorithm for computing a function $f : \{\zero,\one\} \to \{\zero,\one\}$, our $\texttt{deutsch}(f)$ program outputs a bit.''
Per Qunity's formal semantics
(Section~\ref{sec:semantics}),
$\texttt{deutsch}(f)$ corresponds to a single-qubit pure state whenever $f$ corresponds to a single-qubit quantum channel.
This is possible because of the unique way that Qunity's semantics interweaves the usage of pure and mixed quantum states.

\subsection{Design Principles}

Qunity's design is motivated by four key principles: generalization of classical constructs, expressiveness, compositionality, and realizability.

\paragraph{Generalization of classical constructs}
To make quantum computing easier for programmers, Qunity allows them to draw on intuition from classical computing.
Many elements of Qunity's syntax are simply quantum generalizations of classical constructs: for example, tensor products generalize pairs, projective measurements generalize \texttt{try}-\texttt{catch}, and quantum control generalizes pattern matching.
Rather than use linear types (as in \textsc{Qwire} \cite{qwire} and the Proto-Quipper languages~\cite{protoquipper,protoquipperD}), Qunity allows variables to be freely duplicated and discarded as in classical languages, but its semantics treats variable duplication as an entangling operation, and variable discarding as a partial trace.

\paragraph{Expressiveness}
Qunity allows for writing algorithms at a higher level of abstraction than existing languages.
One way that it does this is through algebraic data types: rather than manipulate fixed-length arrays of qubits directly, programmers can work with more complicated types.
For example, in our quantum walk algorithm
given in Section~\ref{sec:quantum-walk},
we deal with superpositions of variable-length lists.
Qunity's semantics also allows one to ``implement'' mathematical objects that are frequently used in algorithm analysis but seldom used in algorithm implementation.
For example, the semantics of a Qunity program can be a superoperator \cite[p.~57]{klm}, an isometry \cite[p.~210]{advanced-linear-algebra}, or a projector \cite[p.~70]{mike-and-ike}, and these can be composed in useful ways.

		In detail, the operators that make up Qunity's semantics are drawn from a broad class called \emph{Kraus operators} \cite[p.~60]{klm}, which include norm-decreasing operators such as projectors.
		Projectors are used in quantum algorithms, but more often for \textit{analyzing} quantum algorithms, and few quantum languages allow projectors to be directly implemented.
		Motivated by the fact that operators produced by the \textsc{bqp} subroutine theorem can be viewed as norm-\textit{decreasing} rather than norm-preserving, we give Qunity programs norm non-increasing semantics.
		Expanding the language to include projectors turns out to be quite useful:
		We can treat the null space of a projector as a sort of ``exception space,'' allowing us to reason about quantum projectors by a syntactic analogy with classical programming strategies, namely exception handling.
		This allows us to implement a few more useful program transformations, like a ``reflection'' in the style of Grover's diffusion operator. Given the ability to implement a projector $P$, it is straightforward to implement the unitary reflection $(2P - I)$, a common feature in quantum algorithms.

\paragraph{Compositionality}
Qunity programs can be composed in useful ways.
For example, our \texttt{ctrl} construct uses the \textsc{bqp} subroutine theorem \cite{bqp,bqp-watrous} to reversibly use an irreversible quantum algorithm as a condition for executing another---this happens in step 2 of our version of Deutsch's algorithm.
In general, Qunity's denotational semantics allows composing programs that are mathematically described in different ways, like defining an expression's ``pure'' operator semantics in terms of a subexpression's ``mixed'' superoperator semantics.
By contrast, prior languages have not managed such a compositional semantics~\cite{mingsheng-ying,qml-update,qml-thesis}, as discussed below. %

\paragraph{Realizability}
		We have designed Qunity so that it can be compiled into qubit-based unitary circuits written in a lower-level language such as Open\textsc{Qasm}.
		We have designed such a compilation procedure and proven that the circuits it produces correctly implement Qunity's denotational semantics.
		In Section~\ref{sec:compilation}, we give an overview of our compilation strategy, and we include the full details and proofs in
\ifshort
the supplemental report \cite{supplement}.
\else
Appendix~\ref{app:compilation}.
\fi
		To limit our analysis to finite-dimensional Hilbert spaces, we limit Qunity programs to working with finite types only.
		Some quantum languages \cite[Chapter 7]{mingsheng-ying} work with infinite-dimensional Hilbert spaces such as Fock spaces, making it possible to coherently manipulate superpositions of \emph{unbounded} lists, for example.
		However, since qubit-based circuits work only with finite-dimensional Hilbert spaces, allowing Qunity to work with infinite data types would mean that compilation would have to approximate infinite-dimensional Hilbert spaces with finite ones, which is challenging to do satisfyingly.
		For this reason, Qunity has no notion of recursive types, though we use classical metaprogramming to define parameterized types. %

\subsection{Related Work}               
                
\begin{table}[th]
	\begin{minipage}{\columnwidth}
		\begin{center}
\begin{tabular}{@{}lllllll@{}} \toprule
& \multicolumn{6}{c}{Language} \\ \cmidrule(r){2-7}
	Feature & Qunity & Silq & \textsc{Qml} & \textsc{Spm} & Quipper & Qu\textsc{Gcl} \\ \midrule
Decoherence & \cmark & \cmark & \cmark & \xmark & \cmark & \cmark \\
Denotational Semantics & \cmark & \xmark & \cmark* & \xmark & \xmark & \cmark* \\
Quantum Sum Types & \cmark & \xmark & \cmark & \cmark & \xmark & \xmark \\
Isometries & \cmark & \cmark & \cmark* & \xmark & \cmark* & \xmark \\
Projectors & \cmark & \xmark & \xmark & \xmark & \xmark & \cmark \\
Classical Uncomputation & \cmark & \cmark & \xmark & \xmark & \cmark & \xmark \\
Quantum Uncomputation & \cmark & \xmark & \xmark & \xmark & \xmark & \xmark \\ \bottomrule
\end{tabular}
\end{center}
\bigskip
\end{minipage}
\caption{Comparison with existing languages}
\label{tab:comparison}
\end{table}

Qunity's design was inspired by several existing languages, summarized in Table~\ref{tab:comparison}: Silq \cite{silq}, \textsc{Qml} \cite{qml}, the symmetric pattern matching language \cite{symmetric-pattern-matching} (hereafter \textsc{Spm}), Quipper \cite{quipper}, and Qu\textsc{Gcl} \cite[Chapter 6]{mingsheng-ying}.
This table shows some of Qunity's most interesting features and whether these features are present in existing languages.
The features listed are:
\begin{itemize}
	\item
		Decoherence.
		All of these languages support some form of measurement or discarding, except for \textsc{Spm}, which is restricted to unitary operators.
	\item
		Compositional denotational semantics.
		Most of these languages define an operational semantics, not a denotational one.
		Qu\textsc{Gcl}'s denotational semantics is explicitly \emph{non}-compositional, meaning that its equivalence relation does not allow ``equivalent'' programs to be substituted as subexpressions.
		Later work on \textsc{Qml} \cite{qml-update,qml-thesis} found that the denotational semantics derived from \textsc{Qml}'s operational semantics was non-compositional because of entangled ``garbage'' outputs related to sum types, and \textsc{Qml}'s denotational semantics is only partially defined even with sum types removed.
\item
		Quantum sum types. 
		Like \textsc{Qml} and \textsc{Spm}, Qunity can coherently manipulate tagged unions, and the ``tag'' qubit can be in superposition.
	\item
		Isometries.
		Silq is the only other language here with a semantics defined in terms of non-unitary isometries.
		Quipper and \textsc{Qml} allow isometries to be implemented through qubit initialization, but their semantics are defined only in terms of unitaries, meaning there is no notion of equivalence between different unitaries implementing the same isometry.
	\item
		Projectors.
		Like with isometries, Quipper allows projectors to be implemented through assertative termination, but there is no notion of equivalence between different unitaries implementing the same projector.
	\item
		Classical uncomputation.
		Silq and Quipper both have convenient facilities for converting (irreversible) classical programs into reversible quantum subroutines.
	\item
		Quantum uncomputation.
		Silq can only uncompute ``qfree'' programs that are strictly classical.
		Quipper does have some facilities for uncomputing values produced by quantum programs, such as the \texttt{with\_computed} function.
		However, this function's correctness depends on conditions that are nontrivial to verify.
		In the words of Quipper's documentation: ``This is a very general but relatively unsafe operation. It is the user's responsibility to ensure that the computation can indeed be undone.'' In Qunity, safety is assured.
\end{itemize}

\subsection{Contributions and Roadmap}
		Our core contribution is a new quantum programming language, Qunity, designed to unify classical and quantum computing through an expressive generalization of classical programming constructs.
		Qunity's powerful semantics brings constructions commonly used in algorithm \emph{analysis}---such as bounded-error quantum subroutines, projectors, and direct sums---into the realm of algorithm \emph{implementation}.
We
describe Qunity's formal syntax (Section~\ref{sec:syntax}), an efficiently checkable typing relation (Section~\ref{sec:types}), a compositional denotational semantics (Section~\ref{sec:semantics}), and a strategy for compiling to lower-level unitary quantum circuits (Section~\ref{sec:compilation}).
We prove that well-typed Qunity programs have a well-defined semantic  denotation, 
and we show that denotation is realizable by proving that our compilation strategy does indeed produce correct circuits.
We also show how Qunity can be used to program several interesting examples, including Grover's algorithm, the quantum Fourier transform, and a quantum walk (Section~\ref{sec:examples}).

\section{Syntax}
\label{sec:syntax}

\begin{figure}[t]
  \centering
  \begin{minipage}[t]{.45\textwidth}
\begin{alignat*}{3}
	T \defeqq &\quad&&&\textit{(data type)} \\
						&&& \Void &\quad \textit{(bottom)} \\
						|&&& \Unit &\quad \textit{(unit)} \\
						|&&& T \oplus T &\quad \textit{(sum)} \\
						|&&& T \otimes T &\quad \textit{(product)} \\
						F \defeqq &\quad &&& \textit{(program type)} \\
						&&&T \rightsquigarrow T &\quad \textit{(coherent map)} \\
						|&&& T \Rrightarrow T &\quad \textit{(quantum channel)} \\
\end{alignat*}
\caption{Qunity types}
\label{fig:types}
\smallskip
\begin{alignat*}{3}
	\Gamma &\defeqq&& &\textit{(context)} \\
								 &&& \varnothing & \textit{(empty)} \\
	&&|\;& \Gamma, x : T & \textit{(binding)} \\
	\dom(\varnothing) &\defeq&& \varnothing &        \textit{(dom-none)}\\
	\dom(\Gamma, x : T)&\defeq&& \dom(\Gamma) \cup \{ x \} \; & \textit{(dom-bind)}\\
\end{alignat*}
\caption{Typing contexts}
\label{fig:context}
  \end{minipage}
  \hspace{.1in}
  \begin{minipage}[t]{.5\textwidth}
\begin{alignat*}{3}
	e \defeqq &&&&\quad\textit{(expression)} \\
						&& \;& \unit &\quad \textit{(unit)} \\
						|&&& x &\quad \textit{(variable)} \\
						|&&& \pair e e &\quad \textit{(pair)} \\
						|&&& \cntrl{e}{T}{e &\mapsto e \\ &\cdots \\ e &\mapsto e}{T} &\quad \textit{(coherent control)} \\
						|&&& \trycatch{e}{e} &\quad \textit{(error recovery)} \\
						|&&& f\: e &\quad \textit{(application)} \\
	f \defeqq &&&&\quad\textit{(program)} \\
						&&& \uthree r r r &\quad \textit{(qubit gate)} \\
						|&&& \lef T T &\quad \textit{(left tag)} \\
						|&&& \rit T T &\quad \textit{(right tag)} \\
						|&&& \lambda e \xmapsto{{\color{gray}{T}}} e &\quad \textit{(abstraction)} \\
						|&&& \rphase T e r r &\quad \textit{(relative phase)}
\end{alignat*}
\caption{Base Qunity syntax}
\label{fig:syntax}
\end{minipage}
\end{figure}

Qunity's formal syntax is defined in
Figures~\ref{fig:types}--\ref{fig:syntax}. 
Qunity's types are shown in Figure~\ref{fig:types}. 
The algebraic \emph{data} types $T$ have essentially the same interpretation as in a typical classical programming language, with the caveat that values can be in superposition.
The symbols $\oplus$ and $\otimes$ are used because of how these data types will correspond to direct sums and tensor products.
The two \emph{program} types $F$ differ in whether they decohere quantum states: programs of type $T \rightsquigarrow T'$ will have a semantics defined by an (often unitary) linear operator, while programs of type $T \Rrightarrow T'$ will have a semantics defined by a superoperator, a completely positive trace-non-increasing map that may involve measurement or discarding.

Qunity's term language is defined in Figure~\ref{fig:syntax}:
\emph{expressions} $e$ are assigned data types $T$ and \emph{programs} $f$
are assigned program types $F$. Figure~\ref{fig:notation} shows additional derived forms. 
In the figure, $x$ ranges over some infinite set $\X$ of variables (for example \textsc{Ascii} strings), and $r$ ranges over some representation of real numbers
\ifshort
(one example of which is defined in the supplemental report \cite{supplement}).
\else
(such as that defined in Appendix~\ref{app:real}).
\fi
Some syntax elements use type annotations $T$, which are grayed out
to reduce visual clutter in the more complex
examples. A type inference algorithm could allow
these annotations could to be removed, and throughout this paper, we occasionally omit type annotations for brevity.

\begin{figure}[ht]
  \centering
  \begin{minipage}{.4\textwidth}
	\begin{align*}
		\Bit &\defeq \Unit \oplus \Unit \\
		\zero &\defeq \texttt{left}_{\color{gray}\Bit} \unit \\
		\one &\defeq \texttt{right}_{\color{gray}\Bit} \unit \\
	f^{\dagger {\color{gray}{T}}} &\defeq (\lambda (f\: x) \xmapsto{{\color{gray}{T}}} x) \\
		\Maybe(T) &\defeq \Unit \oplus T \\
		\nothing{T} &\defeq \texttt{left}_{\color{gray}\Maybe(T)} \unit \\
		\just{T} &\defeq \texttt{right}_{\color{gray}\Maybe(T)} \\
		T^{\otimes 0} &\defeq \Unit \\
		T^{\otimes(n+1)} &\defeq T \otimes T^{\otimes n} \\
	e^{\otimes 0} &\defeq \unit \\
	e^{\otimes (n+1)} &\defeq \pair{e}{e^{\otimes n}} \\
		\gphase{T}{r} &\defeq \rphase{T}{x}{r}{r} \\
  \end{align*}
  \end{minipage}
  \begin{minipage}{.59\textwidth}
	\begin{align*}
	(e \triangleright f) &\defeq (f\; e) \\
	(\texttt{let } e_1 \texttt{ =}_{\color{gray}{T}}\; e_2 \texttt{ in } e_3) &\defeq (e_2 \triangleright \lambda e_1 \xmapsto{{\color{gray}{T}}} e_3) \\
	(f \circ_{\color{gray}{T}} f') &\defeq (\lambda x \xmapsto{{\color{gray}{T}}} f (f'\: x)) \\
	\fst{T_0}{T_1} &\defeq \lambda \pair{x_0}{x_1} \xmapsto{{\color{gray}{T_0 \otimes T_1}}} x_0 \\
	\snd{T_0}{T_1} &\defeq \lambda \pair{x_0}{x_1} \xmapsto{{\color{gray}{T_0 \otimes T_1}}} x_1 \\
	\texttt{had} &\defeq \uthree{\pi \divby 2}{0}{\pi} \\
	\texttt{plus} &\defeq \had\:\zero \\
	\texttt{minus} &\defeq \had\:\one \\
	\texttt{equals}_{\color{gray}T}(e) &\defeq \left( \begin{aligned} &\lambda x \xmapsto{\textcolor{gray}T} \\ &\texttt{try} (x \triangleright \lambda e \xmapsto{\textcolor{gray}T} \one) \texttt{ catch 0} \end{aligned} \right) \\
	\reflect{T}(e) &\defeq \rphase{T}{e}{0}{\pi} \\
  \end{align*}
  \end{minipage}
	\caption{Syntactic sugar}
	\label{fig:notation}
\end{figure}

Qunity makes use of standard classical programming language features  generalized to the quantum setting.
As examples, pairs generalize to creating tensor products of quantum states, sums generalize to allowing their data to be in superposition, and \texttt{ctrl} generalizes classical pattern-matching (where in branch $e \mapsto e'$ the $e$ is a pattern which may bind variables in $e'$) 
to \emph{quantum control flow} using superposition.
The type system requires that the left-hand-side patterns are non-overlapping and thus correspond to orthogonal subspaces, and that the right-hand-side expressions appropriately use all of the variables from the condition expression, ensuring that this data is reversibly uncomputed rather than discarded.
In general, we allow the left-hand-side patterns to be non-exhaustive, in which case the semantics is a norm-\emph{decreasing} operator rather than norm-preserving.
Operationally, a decrease in norm corresponds to the probability of being in a special ``exceptional state.''

Existing languages like Proq \cite{proq} have similarly used norm-decreasing operators such as projectors to describe assertions.
Their system is effective for testing and debugging, but Qunity takes these projective assertions a step further by allowing them to be used in the control flow itself.
Failed assertions are treated as \emph{exceptions}, which can be dynamically caught and handled using the \texttt{try}-\texttt{catch} construct, generalizing another familiar classical programming construct to the quantum setting.
The quantum behavior here is well-defined using the language of \emph{projective measurements}.
Assuming projector $P$ is implemented by Qunity program $f_P$, and state $\ket \psi$ is produced by Qunity program $e_\psi$, the Qunity expression $\trycatch{\just{T} (f_P\:e_\psi)}{\nothing{T}}$ (using \texttt{Maybe} syntax from Figure~\ref{fig:notation}) produces the mixed state defined by the density operator:
\[
	P \op{\psi}{\psi} P \oplus \bra{\psi} (I - P) \ket{\psi}.
\]
This state has $P\ket{\psi}$ in the ``\texttt{just}'' subspace and the norm of $(I - P) \ket{\psi}$ in the ``\texttt{nothing}'' subspace.
Though we use the language of exception handling, this construct is also useful for non-exceptional conditions, like in the definition of the \texttt{equals} program in Figure~\ref{fig:notation}.
This function can be used to measure whether two states are equal; for example, a simple ``coin flip'' can be implemented by $(\had\: \zero \triangleright \texttt{equals}_{\color{gray}\Bit}(\one))$, which applies a Hadamard gate to a qubit in the $\ket\zero$ state and then measures whether the result is $\ket\one$.

Notice that the innermost lambda in the definition of \texttt{equals} uses a non-exhaustive pattern on the left side; the program ``$\lambda \zero \xmapsto{{\color{gray}\Bit}} \one$'' implements the projector $\op{\one}{\zero}$.
While a classical operational semantics typically interprets lambdas in terms of \emph{substitution} of arguments for parameters, Qunity's semantics is best interpreted as a more general linear mapping.
This interpretation means that Qunity's type system can allow for a much wider range of expressions on the left ``parameter'' side of the lambda.
As an extreme case, consider the definition of ``$f^{\dagger {\color{gray}{T}}}$'' syntax in Figure~\ref{fig:notation}, which applies a function on the left side.
Semantically, this lambda corresponds to the adjoint of $f$, and can be interpreted like this: ``given $f(x)$ as input, output $x$.''
(However, this interpretation is imprecise when $f$ is norm-decreasing and its adjoint is not its inverse.)

Two of Qunity's constructs have no classical analog.
The $\texttt u_{\texttt 3}$ construct is a parameterized gate that allows us to implement any single-qubit gate \cite{openqasm2}; e.g., it is used to implement \texttt{had}, per Figure~\ref{fig:notation}. The $\texttt{rphase}$ construct induces a relative phase of $e^{ir}$, where the value $r$ comes from the first branch for states in the subspace spanned by the first branch's parameter $e$ and from the second branch for states in the orthogonal subspace.
It is used to implement $\texttt{gphase}$ in Figure~\ref{fig:notation}, used to implement conditional phase flip from Deutsch's algorithm (Section~\ref{sec:deutsch}). More generally, it can be used to implement reflections, as used in Grover's search algorithm, the iterator for which is shown below. We see that it includes the same conditional phase flip as Deutsch's, composed with $\reflect{\Bit^{\otimes n}}(\texttt{plus}^{\otimes n})$ to perform inversion about the mean.
If the state $\ket{\psi}$ is implemented by the expression $e_\psi$, then $\reflect{T}(e)$ implements the reflection $(2\op{\psi}{\psi} - I)$ by coherently applying a phase of $e^{i \pi} = -1$ to any state orthogonal to $\ket{\psi}$.

\begin{alignat*}{3}
	\grover{n}(f) &\defeq&& \lambda x \xmapsto{{\color{gray}\Bit^{\otimes n}}} \cntrl{f\: x}{\Bit}{\zero &\mapsto x \\ \one &\mapsto x \triangleright \gphase{\Bit^{\otimes n}}{\pi}}{\Bit^{\otimes n}} \triangleright \reflect{\Bit^{\otimes n}}(\texttt{plus}^{\otimes n})
\end{alignat*}

Several lambda calculi \cite{qlambda,qlambda-vantonder,lineal} have explored the use of higher-order functions in a quantum setting, but Qunity does not.
We aim for Qunity's denotational semantics to closely correspond to existing notations and conventions used in quantum algorithms, and higher-order functions and ``superpositions of programs'' are uncommon and inconsistently defined.
In our experience, quantum mechanical notation (and to some degree quantum computing in general) is ill-suited for higher-order programs.
In particular, we interpret programs $f$ as quantum operations acting on their argument, and expressions $e$ as quantum operations acting on their free variables, but allowing for higher-order functions means that programs can also have free variables, and one must take some sort of tensor product of a program's two inputs.
To avoid this extra complication, we have first-order functions only, and our typing relation prevents them from containing free variables.

\section{Typing}
\label{sec:types}

This section describes Qunity's type system. We prove that well-typed programs have a well-defined semantics (given in Section~\ref{sec:semantics}), and we have implemented a type checker for Qunity programs \cite{typechecker}.

Qunity's type system takes the form of three different,
interdependent typing judgments: \emph{pure expression typing},
\emph{mixed expression typing}, and \emph{program typing}.
The distinction between pure and mixed typing comes from the fact that there are two ways to mathematically represent quantum states depending on whether any classical probability is involved: pure states are described by state vectors and do not involve classical probability, while mixed states are described by density matrices, usually interpreted as a classical probability distribution over pure states.
Program types are similarly divided into pure and mixed versions, where pure programs correspond to linear operators from pure states to pure states, and mixed programs correspond to \emph{superoperators} from mixed states to mixed states.
\ifshort
In the supplemental report \cite{supplement},
\else
In Appendix~\ref{app:superop},
\fi
we include further discussion on the need for both of these in Qunity.

Judgments are parameterized by \emph{typing contexts}, which are ordered lists of variable-type pairs, as defined in Fig.~\ref{fig:context}. 
We say that a typing context $(x_1 : T_1, \ldots, x_n : T_n)$ is \textit{well-formed} if all variables are distinct; that is, if $x_j \neq x_k$ whenever $j \neq k$. Concatenating two typing contexts $\Gamma_1$ and $\Gamma_2$ is written $\Gamma_1, \Gamma_2$.
Within the inference rules, there is an implicit assumption that contexts are well-formed (so concatenation requires $\dom(\Gamma_1) \cap \dom(\Gamma_2) = \varnothing$). 

\subsection{Pure expression typing}

The pure expression
typing judgment is written $\Gamma\partition \Delta \vdash e : T$,\footnote{Whether a context is classical or quantum is based on its \emph{position} in the typing judgment---left of $\partition$ is classical and right of it is quantum. We write $\Gamma$ ($\Delta$, resp.) for classical (quantum, resp.) variables as a common but not universal convention.}
indicating that expression $e$ has pure type $T$ with respect to classical context $\Gamma$ and quantum context $\Delta$.
Variables in the classical context are in a classical basis state, and are automatically uncomputed, while quantum context variables are consumed and may be in superposition. 
Operationally, classical data is still compiled into qubits, but these qubits are only used as control wires for controlled operations, and they are uncomputed when they go out of scope.
Whenever $\varnothing \partition \varnothing \vdash e : T$ holds, the semantics corresponds to a pure state in $\Hilb(T)$, the Hilbert space assigned to $T$.

\begin{figure}[ht]
\[
	\inference{}{\Gamma\partition \varnothing \vdash \unit : \Unit}[\textsc{T-Unit}]
	\quad
	\inference{}{\Gamma, x : T, \Gamma'\partition \varnothing \vdash x : T}[\textsc{T-Cvar}]
	\quad
	\inference{x \notin \dom(\Gamma)}{\Gamma\partition x : T \vdash x : T}[\textsc{T-Qvar}]
\]
\vspace{2mm}
\[
	\inference{\Gamma\partition \Delta, \Delta_0 \vdash e_0 : T_0 \qquad \Gamma\partition \Delta, \Delta_1 \vdash e_1 : T_1}{\Gamma\partition \Delta, \Delta_0, \Delta_1 \vdash \pair {e_0} {e_1} : T_0 \otimes T_1}[\textsc{T-PurePair}]
\]
\vspace{2mm}
\[
	\inference{\Gamma, \Delta \Vdash e : T \qquad \ortho{T}{e_1, \ldots, e_n} \qquad \varnothing\partition \Gamma_j \vdash e_j : T \text{ for all } j \\ \erases{T'}(x; e_1', \ldots, e_n') \text{ for all } x \in \dom(\Delta) \qquad \Gamma, \Gamma', \Gamma_j\partition \Delta, \Delta' \vdash e_j' : T' \text{ for all } j}{\Gamma, \Gamma' \partition \Delta, \Delta' \vdash \cntrl{e}{T}{e_1 &\mapsto e_1' \\ &\cdots \\ e_n &\mapsto e_n'}{T'} : T'}[\textsc{T-Ctrl}]
\]
\vspace{2mm}
\[
	\inference{\vdash f : T \rightsquigarrow T' \qquad \Gamma\partition \Delta \vdash e : T}{\Gamma\partition \Delta \vdash f\: e : T'}[\textsc{T-PureApp}]
	\quad
	\inference{\Gamma \partition \Delta \vdash e : T}{\pi\subcap{g}(\Gamma) \partition \pi\subcap{d}(\Delta) \vdash e : T}[\textsc{T-PurePerm}]
\]
\caption{Pure expression typing rules}
\label{fig:t-pure-exp}
\end{figure}

The pure expression typing rules are given in Figure~\ref{fig:t-pure-exp}. As is typical in quantum computing languages, these rules are \emph{substructural} \cite{substructural}. In particular, as in \textsc{Qml} \cite{qml}, quantum variables are \emph{relevant}, meaning they must be used \emph{at least once}.
This invariant is evident from the \textsc{T-QVar} rule, which requires the quantum context to contain only the variable $x$ of interest, and the \textsc{T-CVar} and \textsc{T-Unit} rules, which require the quantum context to be empty. Indeed, we can prove that $\Gamma\partition \Delta \vdash e : T$ implies $\dom(\Delta) \subseteq \FV(e)$. On the other hand, the rules do not enforce $\dom(\Gamma) \subseteq \FV(e)$. 

By enforcing variable relevance, Qunity's type system can control what parts of a program are allowed to discard information.
Semantically, the non-use of a variable corresponds to a \emph{partial trace}, a quantum operation described by decoherence.
Quantum control flow is ill-defined in the presence of decoherence \cite{alternation}, and decoherence is inherently irreversible, so we use relevant types wherever computation must be reversible or subject to quantum control.

A quantum lambda calculus \cite{qlambda} uses \emph{affine} types for quantum data, ensuring quantum variables are used at most once, while Proto-Quipper~\cite{protoquipper} and \textsc{Qwire} \cite{qwire} use \emph{linear} types, requiring them to be used exactly once. 
This is done as static enforcement of the no-cloning theorem \cite{nocloning}, which makes sense in the \textsc{qram} computational model \cite{qram-pseudocode}. However, Qunity's aim (like \textsc{Qml}'s \cite{qml}) is to treat quantum computing as a generalization of classical computing. Qunity fixes a particular standard computational basis and treats expression $(x,x)$ as \emph{entangling} variable $x$ when it is in superposition, essentially as a linear isometry $\alpha\ket{\zero} + \beta\ket{\one} \mapsto \alpha\ket{\zero\zero} + \beta\ket{\one\one}$. The expression is deemed well typed by \textsc{T-PurePair}---notice that $\Delta$ appears when typing both $e_0$ and $e_1$, allowing duplication (``sharing'') of quantum resources across the pair. In contrast, \textsc{qram}-based languages reject $(x,x)$ to avoid confusion with the non-physical cloning function 
$\alpha\ket{\zero} + \beta\ket{\one} \mapsto (\alpha\ket{\zero} + \beta\ket{\one}) \otimes (\alpha\ket{\zero} + \beta\ket{\one})$.

The difference between ``sharing'' and ``cloning'' affects only non-basis states.
\emph{Sharing} a qubit $\alpha\ket{\zero} + \beta\ket{\one}$ to a second qubit register produces the entangled state $\alpha\ket{\texttt{00}} + \beta\ket{\texttt{11}}$, while \emph{cloning} would produce the unentangled state $(\alpha\ket{\zero} + \beta\ket{\one})(\alpha\ket{\zero} + \beta\ket{\one}) = \alpha^2\ket{\texttt{00}} + \alpha \beta \ket{\texttt{01}} + \alpha\beta\ket{\texttt{10}} + \beta^2 \ket{\texttt{11}}$.
The former is basis-dependent and implementable in Qunity as ``$\pair{x}{x}$,'' while the latter is basis-independent and physically prohibited by the no-cloning theorem.

The \textsc{T-PureApp} rule applies a linear operator $f$ to its argument $e$. 
The \textsc{T-PurePerm} rule exists to allow for the usual structural rule of exchange, which is typically implicit.
Here and throughout this work, the functions $\pi$ are list permutation functions, arbitrarily permuting the bindings within a context.
Making the exchange rule explicit allows compilation to be a direct function of the typing judgment, with \textsc{swap} gates introduced at uses of the explicit exchange rule (Section~\ref{sec:qunity2hl}).

Rule \textsc{T-Ctrl} types pattern matching and quantum control. We defer discussing it to Section~\ref{sec:ctrl}, after we have considered the other judgments. 

\subsection{Mixed expression typing}

We write $\Delta \Vdash e : T$ to indicate that expression $e$ has mixed type $T$ under quantum context $\Delta$;\footnote{We use the double-lined ``$\Vdash$'' symbol because this typing judgment can be used for measurements, and quantum circuit diagrams conventionally use double-lined wires to carry measurement results.} the rules are in Figure~\ref{fig:t-mixed-exp}.
$\varnothing \Vdash e : T$ implies $e$'s semantics corresponds to a \emph{mixed state} in $\Hilb(T)$.

\begin{figure}[ht]
\[
	\inference{\varnothing \partition \Delta \vdash e : T}{\Delta \Vdash e : T}[\textsc{T-Mix}]
	\quad
	\inference{\Delta \Vdash e : T}{\pi(\Delta) \Vdash e : T}[\textsc{T-MixedPerm}]
	\]
\vspace{2mm}
	\[
	\inference{\Delta, \Delta_0 \Vdash e_0 : T_0 \qquad \Delta, \Delta_1 \Vdash e_1 : T_1}{\Delta, \Delta_0, \Delta_1 \Vdash \pair{e_0}{e_1} : T_0 \otimes T_1}[\textsc{T-MixedPair}]
\]
\vspace{2mm}
\[
	\inference{\Delta_0 \Vdash e_0 : T \qquad \Delta_1 \Vdash e_1 : T}{\Delta_0, \Delta_1 \Vdash \trycatch{e_0}{e_1} : T}[\textsc{T-Try}]
	\quad
	\inference{\vdash f : T \Rrightarrow T' \qquad \Delta \Vdash e : T}{\Delta \Vdash f\: e : T'}[\textsc{T-MixedApp}]
\]
\caption{Mixed expression typing rules}
\label{fig:t-mixed-exp}
\end{figure}

Rule \textsc{T-Mix} allows pure expressions to be typed as mixed.
(One could equivalently treat pure types as a \emph{subtype} of mixed types.)
Rules \textsc{T-MixedPerm} and \textsc{T-MixedPair} are analogous to the pure-expression versions. Rule \textsc{T-MixedApp} allows applying a quantum channel $f$ on an expression $e$---such an $f$ may perform measurements, as discussed below.
Rule \textsc{T-Try} allows exceptions occurring in expression $e_0$ to be caught and replaced by expression $e_1$.
Operationally, this is effectively ``measuring whether an error occurred'' and thus \texttt{try}-\texttt{catch} expressions have no \emph{pure} type; it also cannot be done without perturbing the input data $\Delta_0$ and thus expression $e_1$ is typed in a separate context.

\subsection{Program typing}
  
We write $\vdash f : F$ to indicate that program $f$ has type $F$; the rules are in Figure~\ref{fig:t-prog}.
Whenever $\vdash f : T \rightsquigarrow T'$ holds, the semantics corresponds to a linear operator mapping $\Hilb(T)$ to $\Hilb(T')$.
Whenever $\vdash f : T \Rrightarrow T'$ holds, the semantics corresponds to a superoperator mapping mixed states in $\Hilb(T)$ to mixed states in $\Hilb(T')$.

\begin{figure}[ht]
\[
	\inference{}{\vdash \uthree{r_\theta}{r_\phi}{r_\lambda} : \Bit \rightsquigarrow \Bit}[\textsc{T-Gate}]
\]
\vspace{2mm}
\[
	\inference{}{\vdash \lef{T_0}{T_1} : T_0 \rightsquigarrow T_0 \oplus T_1}[\textsc{T-Left}]
	\quad
	\inference{}{\vdash \rit{T_0}{T_1} : T_1 \rightsquigarrow T_0 \oplus T_1}[\textsc{T-Right}]
\]
\vspace{2mm}
\[
	\inference{\varnothing\partition \Delta \vdash e : T \qquad \varnothing\partition \Delta \vdash e' : T'}{\vdash \lambda e \xmapsto{{\color{gray}T}} e' : T \rightsquigarrow T'}[\textsc{T-PureAbs}]
	\quad
	\inference{\varnothing \partition \Delta \vdash e : T}{\vdash \rphase{T}{e}{r}{r'} : T \rightsquigarrow T}[\textsc{T-Rphase}]
\]
\vspace{2mm}
\[
	\inference{\vdash f : T \rightsquigarrow T'}{\vdash f : T \Rrightarrow T'}[\textsc{T-Channel}]
	\quad
	\inference{\varnothing\partition \Delta, \Delta_0 \vdash e : T \qquad \Delta \Vdash e' : T'}{\vdash \lambda e \xmapsto{{\color{gray}T}} e' : T \Rrightarrow T'}[\textsc{T-MixedAbs}]
\]
\caption{Program typing rules}
\label{fig:t-prog}
\label{fig:t-pure-prog}
\label{fig:t-mixed-prog}
\end{figure}

\textsc{T-Gate} types a single-qubit unitary gate, and \textsc{T-Left} and \textsc{T-Right} type sum introduction. These are linear operations. \textsc{T-PureAbs} types a linear abstraction. In abstraction $\lambda e \xmapsto{{\color{gray}T}} e'$, variables introduced in $e$ are \emph{relevant} in $e'$---they must be present in quantum context $\Delta$ used to type both $e$ and $e'$. (It is not hard to prove that $\Delta$ can always be uniquely determined.) \textsc{T-MixedAbs} is more relaxed: variables in $e$ can be contained in a context $\Delta_0$ which is \emph{not} used to type $e'$; such variables are \emph{discarded} in $e'$, which implies measuring them. So the typing judgment $\vdash \lambda x \xmapsto{{\color{gray}\Bit}} \zero : \Bit \rightsquigarrow \Bit$ is invalid but the typing judgment $\vdash \lambda x \xmapsto{{\color{gray}\Bit}} \zero : \Bit \Rrightarrow \Bit$ is valid.
\textsc{Rphase} uses the expression $e$ as a pattern for coherently inducing a phase, either $e^{i r}$ or $e^{i r'}$ depending on whether pattern $e$ is matched.
\textsc{T-Channel} permits a linear operator to be treated as a superoperator.

Qunity programs are typed without context to help avoid scenarios where ``entangling through variable re-use'' might be confusing.
For example, the program $\lambda x \xmapsto{{\color{gray}\Bit}} (x \triangleright (\lambda y \xmapsto{{\color{gray}\Bit}} x))$ is ill-typed in Qunity because the subprogram $(\lambda y \xmapsto{{\color{gray}\Bit}} x)$ has a free variable $x$.
This is not a major loss in expressiveness because the rewritten program $\lambda x \xmapsto{{\color{gray}\Bit}} \texttt{let } \pair{x_0}{x_1} =_{\color{gray}\Bit \otimes \Bit} \pair{x}{x} \texttt{ in}\; x_0$ is valid, with type $\Bit \Rrightarrow \Bit$.
This program measures a qubit by re-using the variable to share the qubit to a second register and then discarding the entangled qubit, and the valid program makes this sharing explicit.
Qunity is designed so that pairing is the only way to perform this kind of entanglement, by using a variable on both sides of the pair.

\subsection{Typing quantum control}
\label{sec:ctrl}

Qunity allows for quantum control by generalizing pattern matching via \texttt{ctrl}, which is typed via the \textsc{T-Ctrl} rule (Figure~\ref{fig:t-pure-exp}). The $e_j$s in the premises of this rule refer to the indexed expressions in the conclusion.
		A ``prime'' symbol should be viewed as part of the variable name.
The following is the \textsc{T-Ctrl} rule for $n=2$:
		\[
	\inference{\Gamma, \Delta \Vdash e : T \qquad \ortho{T}{e_1, e_2} \qquad \varnothing\partition \Gamma_1 \vdash e_1 : T \qquad \varnothing\partition \Gamma_2 \vdash e_2 : T \\ \erases{T'}(x; e_1', e_2') \text{ for all } x \in \dom(\Delta) \qquad \Gamma, \Gamma', \Gamma_1\partition \Delta \vdash e_1' : T' \qquad \Gamma, \Gamma', \Gamma_2\partition \Delta \vdash e_2' : T'}{\Gamma,\Gamma'\partition \Delta \vdash \cntrl{e}{T}{e_1 &\mapsto e_1' \\ e_2 &\mapsto e_2'}{T'} : T'}
			\]

To ensure realizable circuits, the type rule imposes several restrictions on patterns. First, both pattern and body expressions $e_j$ and $e_j'$ must be pure expressions; this means they cannot include invocations of superoperators, which might involve measurements.  
Some recent work has tried to generalize the definition of quantum control to a notion of ``quantum alternation'' that allows measurements to be controlled.
Qu\textsc{Gcl} \cite[Chapter 6]{mingsheng-ying} attempts this, but the resulting denotational semantics is non-compositional.
B\u adescu and Panangaden \cite{alternation} shed more light on the problems with quantum alternation, proving that quantum alternation is not monotone with respect to the L\"owner order \cite[p.~17]{mingsheng-ying} and concluding that ``quantum alternation is a fantasy arising from programming language semantics rather than from physics.''

For \texttt{ctrl} semantics to be physically meaningful, \textsc{T-Ctrl} depends on two additional judgments applied to the body expressions---``ortho'' for the left-hand-side (defined further below), and ``erases'' for the right-hand-side (defined in Figure~\ref{fig:erasure}).
The ortho judgment ensures that the left-hand-side patterns $e_j$ are purely classical and non-overlapping.
The erases judgment (defined using \texttt{gphase} syntax from Figure~\ref{fig:notation}) ensures that all of the right-hand-side patterns $e_j'$ use the variables from $e$ in a consistent way, and its name comes from the way this judgment is used by the compiler to coherently ``erase'' these variables.
The contexts $\Gamma_j$ are the primary motivation that the typing relation includes classical contexts $\Gamma$ at all.
Semantically, the variables they represent are purely classical; operationally, the information on these registers is used without being ``consumed,'' under the assumption that it will be uncomputed.
When typing the expressions $e_j$, these contexts $\Gamma_j$ appear in the ``quantum'' part of the context so that the type system can enforce variable relevance, but the restrictions placed by the ortho judgment ensure that these variables are still practically classical.

\begin{figure}[t]
\[
	\inference{\erases{T}(x; e_1, \ldots, e_n)}{\erases{T}(x; e_1, \ldots, e_{j-1}, e_j \triangleright \gphase{T}{r}, e_{j+1}, \ldots, e_n)}[\textsc{E-Gphase}]
\]
\vspace{2mm}
\[
	\inference{\erases{T}(x; e_1, \ldots, e_{j-1}, e_{j,1}, \ldots, e_{j,m}, e_{j+1}, \ldots, e_n)}{\erases{T}\left(x; e_1, \ldots, e_{j-1}, \cntrl{e}{T'}{e_1' &\mapsto e_{j,1} \\ &\cdots \\ e_m' &\mapsto e_{j,m}}{T}, e_{j+1}, \ldots, e_n\right)}[\textsc{E-Ctrl}]
\]
\vspace{2mm}
\[
	\inference{}{\erases{T}(x; x, x, \ldots, x)}[\textsc{E-Var}]
	\quad
	\inference{\erases{T_0}(x; e_{0,1}, \ldots, e_{0,n})}{\erases{T_0 \otimes T_1}(x; \pair{e_{0,1}}{e_{1,1}}, \ldots, \pair{e_{0,n}}{e_{1,n}})}[\textsc{E-Pair0}]
\]
\vspace{2mm}
\[
	\inference{\erases{T_1}(x; e_{1,1}, \ldots, e_{1,n})}{\erases{T_0 \otimes T_1}(x; \pair{e_{0,1}}{e_{1,1}}, \ldots, \pair{e_{0,n}}{e_{1,n}})}[\textsc{E-Pair1}]
\]
\caption{Erasure inference rules}
\label{fig:erasure}
\end{figure}

The orthogonality judgment is defined in terms of a ``spanning'' judgment defined
\ifshort
in the supplemental report \cite{supplement}.
\else
in Appendix~\ref{app:spanning}.
\fi
This judgment, written $\spanning{T}{e_1, \ldots, e_n}$ and largely inspired by \textsc{Spm}'s \cite{symmetric-pattern-matching} ``orthogonal decomposition'' judgment, denotes that the set of expressions $\{e_1, \ldots, e_n\}$ is a spanning set for type $T$.
This judgment enforces that the expressions form an exhaustive set of patterns for the type $T$, which in our quantum setting semantically corresponds to a set of states that span the Hilbert space corresponding to $T$.

We write $\ortho{T}{e_1, \ldots, e_n}$ to denote that the set of expressions $\{e_1, \ldots, e_n\}$ is orthogonal.
An orthogonal set of expressions is simply a subset of some spanning set.
That is, orthogonality judgments can be defined by the following inference rule alone:
\[
	\inference{\spanning{T}{e_1', \ldots, e_m'} \qquad [e_1, \ldots, e_n] \textup{ is a subsequence of } [e_1', \ldots, e_m']}{\ortho{T}{e_1, \ldots, e_n}}
\]
It is not hard to prove that orthogonality holds for a set of expressions regardless of the order they appear in the judgment (e.g., $\ortho{T}{e_1,  e_2}$ \emph{iff} $\ortho{T}{e_2,  e_1}$).

\section{Semantics}
\label{sec:semantics}

In this section, we define Qunity's denotational semantics in terms of linear operators (for pure expressions) and superoperators (for mixed expressions).
These definitions are designed to naturally generalize from a classical semantics defined on a classical sublanguage of Qunity.
After presenting the semantics, we present some metatheoretical results, most notably that well-typed Qunity programs are well-defined according to the semantics.

\subsection{Classical sublanguage semantics}
\label{sec:classical-semantics}

Qunity's semantics may be more intuitive if we first restrict ourselves to a classical sublanguage.

\begin{definition}[classical sublanguage]
	\label{def:sublanguage}
	Qunity's classical sublanguage is defined by removing the \texttt{u3} and \texttt{rphase} constructs from the language, producing the following:
\begin{align*}
	e &\defeqq \unit \mid x \mid \pair{e}{e} \mid \cntrl{e}{T}{e &\mapsto e \\ &\cdots \\ e &\mapsto e}{T} \mid \trycatch{e}{e} \mid f\: e \\
	f &\defeqq \lef T T \mid \rit T T \mid \lambda e \xmapsto{{\color{gray}{T}}} e
\end{align*}
\end{definition}

We can define a classical denotational semantics for this sublanguage using partial functions over \emph{values} and \emph{valuations}.

\begin{definition}[value and valuation]
	For any type $T$, we write $\V(T)$ to denote the set of expressions that are values of that type.
	\begin{align*}
		\V(\Void) &\defeq \varnothing \\
		\V(\Unit) &\defeq \{ \unit \} \\
		\V(T_0 \oplus T_1) &\defeq \{ \lef{T_0}{T_1}\: v_0 \mid v_0 \in \V(T_0)\} \cup \{ \rit{T_0}{T_1}\: v_1 \mid v_1 \in \V(T_1)\} \\
		\V(T_0 \otimes T_1) &\defeq \{ \pair{v_0}{v_1} \mid v_0 \in \V(T_0), v_1 \in \V(T_1) \}
	\end{align*}
	A valuation, written $\sigma$ or $\tau$, is a list of variable-value pairs: %
	\[
		\sigma \defeqq \; \varnothing \; \mid \; \sigma, x \mapsto v
	\]
	Like with $\Gamma$ and $\Delta$, we will generally use the letter $\sigma$ for classical data and $\tau$ for quantum data, but the two are interchangeable.
	Each typing context has a corresponding set of valuations, defined as follows:
	\[
		\V(x_1 : T_1, \ldots, x_n : T_n) \defeq \{ x_1 \mapsto v_1, \ldots, x_n \mapsto v_n \mid v_1 \in \V(T_1), \ldots, v_n \in \V(T_n) \}
	\]
	Like with typing contexts, we use a comma to denote the concatenation of valuations, sometimes mixing valuations with explicit variable-value pairs.
	For example, if $\tau_0 \in \V(\Delta_0)$ and $v \in \V(T)$ and $\tau_1 \in \V(\Delta_1)$, then $\tau_0, x \mapsto v, \tau_1 \in \V(\Delta_0, x : T, \Delta_1)$.
\end{definition}

\ifshort
In the supplemental report \cite{supplement},
\else
In Appendix~\ref{app:sublanguage},
\fi
we define a classical denotational semantics for this sublanguage.
The semantics of an expression is a partial function from valuations to values, while the semantics of a program is a partial function from values to values.
Given a purely-typed expression or program, this partial function will be injective.
Stripped of quantum constructs, Qunity thus becomes a (classical) \emph{reversible} programming language comparable to other programming languages for reversible computing.

\begin{theorem}[expressiveness of classical sublanguage]
	\label{thm:mlpi}
	Any combinator written in the reversible language $\Pi$ \cite{mlpi} can be translated into a pure program in Qunity's classical sublanguage, and any arrow computation written in the arrow metalanguage $\MLPi$ \cite{mlpi} can be translated into a mixed program in Qunity's classical sublanguage.
	These translations preserve types and semantics. 
\end{theorem}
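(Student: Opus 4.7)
The plan is to proceed by structural induction, first defining a type-directed translation and then translating each combinator case-by-case, verifying type and semantic preservation via the lemmas about the classical denotational semantics developed in the sublanguage appendix.

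First I would fix the type translation: $0 \mapsto \Void$, $1 \mapsto \Unit$, $b_1 + b_2 \mapsto T_1 \oplus T_2$, and $b_1 \times b_2 \mapsto T_1 \otimes T_2$. Since $\V(T)$ for the translated types is in bijection with the values of the source $\Pi$ type, a partial function on $\Pi$ values lifts canonically to a partial function on Qunity values, giving a clear semantic target for the translation. For each $\Pi$ combinator I would exhibit a Qunity pure program: the identity on $b$ becomes $\lambda x \xmapsto{{\color{gray}T}} x$; composition $c_1 \circ c_2$ becomes the derived $\circ$ from Figure~\ref{fig:notation}; the sum constructors become $\lef{T_0}{T_1}$ and $\rit{T_0}{T_1}$; commutativity and associativity of $\oplus$ and $\otimes$ are expressed as lambdas whose left-hand sides pattern-match nested pairs or tagged values, using \textsc{T-PureAbs} together with \textsc{T-PurePair}, \textsc{T-Left}, and \textsc{T-Right}; the distributivity law $b_1 \times (b_2 + b_3) \cong (b_1 \times b_2) + (b_1 \times b_3)$ is handled by a \texttt{ctrl} expression that matches the tag of the sum while sharing the first component via variable reuse. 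The unit laws follow from lambdas that pair/unpair with $\unit$, and the zero laws from vacuous \texttt{ctrl} matches over the empty type $\Void$ (whose spanning set is empty).

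For each case I would then verify two things: (i) that the constructed Qunity expression is well-typed as a pure program $T \rightsquigarrow T'$, invoking the pure typing rules and in particular checking that left-hand patterns satisfy the ortho judgment and that right-hand sides satisfy erases---both of which hold by construction since the translated patterns form a disjoint classical spanning set for $T$ and each variable is used exactly once on the right; and (ii) that the denotational semantics given to this expression by the classical sublanguage semantics (the partial function on valuations/values of the appendix) agrees pointwise with the semantics assigned to the corresponding $\Pi$ combinator. The semantic equality in each case reduces to unfolding the definitions on both sides, which is routine given the value-level bijection on types.

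For the $\MLPi$ extension, I would translate the underlying combinators as above and then handle the arrow operations: \texttt{arr} on a reversible function is just the pure translation reinterpreted via \textsc{T-Channel} as a mixed program $T \Rrightarrow T'$; sequencing of arrows is composition; the first/left arrow combinators are translated by lambdas pairing in a context, reusing the pure translation on one component. The effectful operations---value creation and erasure---are the essential new cases: creation is a mixed abstraction $\lambda \unit \xmapsto{{\color{gray}\Unit}} v$, typeable by \textsc{T-MixedAbs} with empty $\Delta$, while erasure is $\lambda x \xmapsto{{\color{gray}T}} \unit$, which \textsc{T-MixedAbs} permits because the bound variable $x$ may reside in the discarded context $\Delta_0$; for exceptional/failing branches I would use \texttt{try}-\texttt{catch}, exploiting rule \textsc{T-Try}.

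The main obstacle, I expect, is the distributivity combinator and, more generally, any $\Pi$ combinator whose source semantics inspects the tag of a sum while preserving accompanying data. Expressing this in Qunity forces a \texttt{ctrl} whose right-hand sides must simultaneously pattern-introduce the inner sum datum and thread the shared component, and the erasure judgment must be verified for the shared variable. Checking \emph{erases} here is delicate because the variable appears inside a nested pair under each branch; this is exactly the case covered by the \textsc{E-Pair0}/\textsc{E-Pair1} rules, so the verification is essentially bookkeeping, but structuring the induction so that all side-conditions (well-formedness of contexts, disjointness of domains after permutation, and the spanning property needed for ortho) are discharged uniformly will be the most labor-intensive part of the argument.
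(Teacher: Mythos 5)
Your high-level plan (type translation, structural induction over combinators, separate pure/mixed treatment) matches the paper's, but two of your key cases would fail as described. First, for the primitive isomorphisms whose outputs are sum-structured---\texttt{distrib}, $assocl^{+}$, $swap^{+}$---a single \texttt{ctrl} cannot work: the \textsc{T-Ctrl} rule requires every variable consumed by the scrutinee to satisfy the \emph{erases} judgment in every branch output, and the erasure rules (\textsc{E-Var}, \textsc{E-Gphase}, \textsc{E-Ctrl}, \textsc{E-Pair0}, \textsc{E-Pair1}) have no case allowing a variable to sit underneath a \texttt{left}/\texttt{right} injection. So when your branch outputs look like $\lef{T_0}{T_1}\pair{x_1}{x_3}$ versus $\rit{T_0}{T_1}\pair{x_2}{x_3}$, the erases side-condition is simply unprovable; your claim that \textsc{E-Pair0}/\textsc{E-Pair1} covers this is incorrect because the outer constructor is an injection, not a pair. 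The paper avoids this by implementing all the primitive isomorphisms with the derived \texttt{match} construct of Section~\ref{sec:match} (a \texttt{ctrl} that outputs $\pair{x}{e'}$ composed with the adjoint of a second \texttt{ctrl} that reconstructs $x$ from $e'$), which is exactly the ``specialized erasure'' mechanism invented to get around this restriction.

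Second, your treatment of the $\MLPi$ arrow combinator ${\color{teal}\texttt{left}\: a}$ as ``lambdas pairing in a context, reusing the pure translation on one component'' misses the essential difficulty: $a$ is a potentially irreversible (mixed) arrow, and \texttt{ctrl} branches must be \emph{pure} expressions, so you cannot apply $a$ inside the branch that handles the left tag. The paper's proof has to work around this by first using \texttt{match} to flatten the sum into a product $\Bit \otimes (T_1 \otimes T_3)$ with dummy values $\phi(T_3)$, $\phi(T_1)$ filling the absent component, then running the translated arrow \emph{outside} any control as one component of a pair, and finally using a \texttt{ctrl} plus \texttt{fst} to reassemble the sum and discard the scratch data. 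Without some such construction your induction does not go through for \texttt{left}, which is the only genuinely hard case on the mixed side. The remaining cases you describe (identity, composition, \texttt{arr}, \texttt{create}, \texttt{erase}, the semantic verification by unfolding) do coincide with the paper's argument.
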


We prove Theorem~\ref{thm:mlpi}
\ifshort
in the supplemental report.
\else
in Appendix~\ref{app:mlpi}.
\fi
We choose the language $\MLPi$ because its typing and semantics are directly comparable with Qunity's, and because a translation from a more typical \texttt{let}-based language to $\MLPi$ already exists.

Qunity's quantum semantics, defined in the next section, can be viewed as a generalization of the classical sublanguage semantics.
The advantage of this approach is that one does not have to explicitly convert between separate quantum and classical languages, as any program written in the classical sublanguage can be applied to quantum data.
\ifshort
\else
A reader less familiar with quantum computing may wish to read the classical semantics defined in Appendix~\ref{app:classical-semantics} to understand the role of reversibility before proceeding to the quantum generalization.
\fi

Where the classical semantics uses finite sets for input and output, the quantum semantics uses finite-dimensional vector spaces.
As we state more precisely and prove
\ifshort
in the supplemental report,
\else
in Appendix~\ref{app:mlpi},
\fi
the classical semantics is simply the quantum semantics restricted to the standard computational basis.

\begin{theorem}[generalization of classical semantics]
	\label{thm:classical-generalization}
	The classical semantics of any classical Qunity program coincides with its quantum semantics applied to values treated like basis states.
\end{theorem}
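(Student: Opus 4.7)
The plan is to prove Theorem~\ref{thm:classical-generalization} by mutual structural induction on the three typing judgments (pure expression, mixed expression, program), restricted to the classical sublanguage of Definition~\ref{def:sublanguage}. For a pure expression $\Gamma \partition \Delta \vdash e : T$, the induction hypothesis is that for every $\sigma \in \V(\Gamma)$ and $\tau \in \V(\Delta)$, the quantum semantics $\dsem{e}{}{}$ applied to the basis state $\ket{\sigma,\tau}$ either equals $\ket{v}$ for the value $v$ produced by the classical semantics, or equals the zero vector when the classical partial function is undefined. The analogous statements for mixed expressions (on density matrices $\op{\tau}{\tau}$) and programs (on basis inputs $\ket{v}$) are carried along simultaneously.

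The straightforward cases are \textsc{T-Unit}, \textsc{T-Cvar}, \textsc{T-Qvar}, \textsc{T-Left}, \textsc{T-Right}, \textsc{T-PurePair}, \textsc{T-MixedPair}, \textsc{T-PureApp}, \textsc{T-MixedApp}, and the permutation rules. For these, the quantum operator acts on basis states exactly as the classical partial function acts on the corresponding values: pairs become tensor products of basis states, tags embed into the direct-sum basis, and application composes. The \textsc{T-Channel} case reduces one judgment to the other on basis inputs. The \textsc{T-Try} rule is handled by observing that on a classical basis state $\ket{\tau}$, the expression $e_0$ either produces a well-defined basis state (in which case the ``exception subspace'' contribution vanishes and the classical semantics returns the same value) or it projects to zero (in which case the classical semantics falls through to $e_1$ and the quantum semantics yields the $e_1$ branch with unit weight).

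The substantive work lies in \textsc{T-Ctrl} and \textsc{T-PureAbs}/\textsc{T-MixedAbs}. For \textsc{T-Ctrl}, I would use the orthogonality judgment $\ortho{T}{e_1,\ldots,e_n}$ together with the fact that each $e_j$ is a pure classical expression with $\varnothing \partition \Gamma_j \vdash e_j : T$: on a basis input for $e$, at most one pattern $e_j$ matches (this is exactly where orthogonality buys us that the coherent sum collapses to a single term), and the $\erases{T'}$ condition ensures that the implicit uncomputation of $\Delta$ is consistent across branches, so on basis states it reduces to plain substitution. For the abstraction rules, the lambda's quantum denotation is essentially the adjoint pattern-extraction followed by the body; on a basis state in the image of the pattern, the adjoint recovers the unique valuation (by orthogonality of values), and on a basis state not in the image it returns zero, matching exactly the partiality of classical pattern matching.

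The hardest step will be the \textsc{T-Ctrl} case, because verifying that the coherent-control formula reduces to classical case analysis on basis inputs requires simultaneously using (i) orthogonality of the left-hand patterns to get disjointness, (ii) the classical interpretation of each pattern $e_j$ as a well-defined injection $\V(\Gamma_j) \to \V(T)$, and (iii) the erasure judgment from Figure~\ref{fig:erasure} to argue that the uncomputation of shared quantum variables does not contribute extra phases on basis states (since \textsc{E-Gphase} is the only rule introducing phase, and phases are trivial in the classical sublanguage). Once these are in hand, the remaining bookkeeping is mechanical: each inductive step lifts the classical equation from subexpressions to the full construct, and the theorem follows by taking $\Gamma, \Delta = \varnothing$ at the top level.
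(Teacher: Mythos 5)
Your proposal matches the paper's proof: it proceeds by mutual structural induction on the typing derivations, observes that most cases follow immediately because the quantum denotation is defined on the standard basis and extended by linearity, and singles out \textsc{T-Ctrl} as the one substantive case, where orthogonality of the patterns forces the coherent sum to collapse to the single classically-matching term (or to zero when no pattern applies), exactly as in the paper. The only minor divergence is that you invoke the erasure judgment in the \textsc{T-Ctrl} case, which the paper's denotational argument does not need (erasure only plays a role in compilation), but this does not affect correctness.
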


\subsection{Full semantics}
\label{sec:denotation}

\begin{definition}
	\label{def:typed-hilbert}
	We associate Hilbert spaces with types and contexts as follows:
	\begin{align*}
		\Hilb(\Void) &\defeq \{ 0 \} \\
		\Hilb(\Unit) &\defeq \complex \\
		\Hilb(T_0 \oplus T_1) &\defeq \Hilb(T_0) \oplus \Hilb(T_1) \\
		\Hilb(T_0 \otimes T_1) &\defeq \Hilb(T_0) \otimes \Hilb(T_1) \\
		\Hilb(x_1 : T_1, \ldots, x_n : T_n) &\defeq \Hilb(T_1) \otimes \cdots \otimes \Hilb(T_n)
	\end{align*}
	On the right-hand side above, we use the symbols $\oplus$ and $\otimes$ to refer to the usual direct sum and tensor product of Hilbert spaces \cite{advanced-linear-algebra}.
	For those unfamiliar with the direct sum, we give an overview of its important properties
	\ifshort
	in the supplemental report.
	\else
	in Appendix~\ref{app:hilbert}.
	\fi
	Each Hilbert space $\Hilb(T)$ has a canonical orthonormal basis $\{ \ket v : v \in \V(T) \}$, where the meaning of $\ket v \in \Hilb(T)$ is defined as follows:
	\begin{align*}
		\ket{\unit} &\defeq 1 \\
		\ket{\lef{T_0}{T_1}\: v_0} &\defeq \ket{v_0} \oplus 0 \\
		\ket{\rit{T_0}{T_1}\: v_1} &\defeq 0 \oplus \ket{v_1} \\
		\ket{\pair{v_0}{v_1}} &\defeq \ket{v_0} \otimes \ket{v_1}
	\end{align*}
	We can do the same for typing contexts $\Delta$, constructing an orthonormal basis $\{ \ket \tau : \tau \in \V(\Delta) \} \subset \Hilb(\Delta)$ as follows:
	\begin{alignat*}{2}
		\ket{x_1 \mapsto v_1, \ldots, x_n \mapsto v_n} &\defeq \ket{v_1} \otimes \cdots \otimes \ket{v_n}
	\end{alignat*}
\end{definition}

Using notation from linear algebra \cite{linear-algebra}, we write $\linear(\Hilb_0, \Hilb_1)$ to denote the space of linear operators from $\Hilb_0$ to $\Hilb_1$, with $\linear(\Hilb') \defeq \linear(\Hilb', \Hilb')$.

Qunity's semantics is defined by four mutually recursive functions of a valid typing judgment:
\begin{itemize}
	\item
		If $\Gamma\partition \Delta \vdash e : T$ and $\sigma \in \V(\Gamma)$, then $\msem{\sigma : \Gamma \partition \Delta \vdash e : T} \in \linear(\Hilb(\Delta), \Hilb(T))$ defines the pure semantics of expression $e$.
		We give the denotation in Figure~\ref{fig:sem-pure-expr}.
		The $\sigma$ is a sort of ``classical data,'' so the pure expression semantics may be viewed as a two-parameter function $\V(\Gamma) \times \Hilb(\Delta) \to \Hilb(T)$, linear in its second argument.
	\item
		If $\Delta \Vdash e : T$, then $\msem{\Delta \Vdash e : T} \in \linear(\linear(\Hilb(\Delta)), \linear(\Hilb(T)))$ defines the mixed semantics of expression $e$.
		We give the denotation in Figure~\ref{fig:sem-mixed-expr}.
	\item
		If $\vdash f : T \rightsquigarrow T'$, then $\msem{\vdash f : T \rightsquigarrow T'} \in \linear(\Hilb(T), \Hilb(T'))$ defines the pure semantics of program $f$.
		We give the denotation in Figure~\ref{fig:sem-pure-prog}.
	\item
		If $\vdash f : T \Rrightarrow T'$, then $\msem{\vdash f : T \Rrightarrow T'} \in \linear(\linear(\Hilb(T)), \linear(\Hilb(T')))$ defines the mixed semantics of program $f$.
		We give the denotation in Figure~\ref{fig:sem-mixed-prog}.
\end{itemize}

\begin{figure}[t]
\begin{alignat*}{3}
	\msem{\sigma : \Gamma \partition \varnothing \vdash \unit : \Unit} \ket{\varnothing} &\defeq&\;& \ket{\unit} \\
	\msem{\sigma : \Gamma \partition \varnothing \vdash x : T} \ket{\varnothing} &\defeq&& \ket{\sigma(x)} \\
	\msem{\sigma : \Gamma \partition x : T \vdash x : T} \ket{x \mapsto v} &\defeq&& \ket{v} \\
	\msem{\sigma : \Gamma \partition \Delta, \Delta_0, \Delta_1 \vdash \pair{e_0}{e_1} : T_0 \otimes T_1} \ket{\tau, \tau_0, \tau_1} &\defeq&& \msem{\sigma : \Gamma \partition \Delta, \Delta_0 \vdash e_0 : T_0} \ket{\tau, \tau_0} \\
																																																																	 &&&\otimes \msem{\sigma : \Gamma \partition \Delta, \Delta_1 \vdash e_1 : T_1} \ket{\tau, \tau_1} \\
	\msem{\sigma, \sigma' : \Gamma, \Gamma' \partition \Delta, \Delta' \vdash \cntrl{e\hspace{-2mm}}{T}{e_1 &\mapsto e_1' \\ &\cdots \\ e_n &\mapsto e_n'}{T'} \hspace{-2mm}: T'} \ket{\tau,\tau'} &\defeq&& \sum_{v \in \V(T)} \bra{v} \msem{\Gamma, \Delta \Vdash e : T}\left( \op{\sigma, \tau}{\sigma, \tau} \right) \ket v \\
																																														 &&& \cdot \sum_{j=1}^n \sum_{\sigma_j \in \V(\Gamma_j)} \bra{\sigma_j} \msem{\varnothing : \varnothing \partition \Gamma_j \vdash e_j : T}^\dagger \ket v \\
																																														 &&&\cdot \msem{\sigma, \sigma_j : \Gamma, \Gamma_j \partition \Delta,\Delta' \vdash e_j' : T'} \ket{\tau, \tau'} \\
	\msem{\sigma : \Gamma \partition \Delta \vdash f\; e : T'} \ket{\tau} &\defeq&& \msem{\vdash f : T \rightsquigarrow T'} \msem{\sigma : \Gamma \partition \Delta \vdash e : T} \ket{\tau} \\
	\msem{\pi\subcap{g}(\sigma) : \pi\subcap{g}(\Gamma) \partition \pi\subcap{d}(\Delta) \vdash e : T} \ket{\pi\subcap{d}(\tau)} &\defeq&& \msem{\sigma : \Gamma \partition \Delta \vdash e : T} \ket{\tau}
\end{alignat*}
\caption{Pure expression semantics}
\label{fig:sem-pure-expr}
\end{figure}

\begin{figure}[t]
\begin{alignat*}{2}
	\msem{\Delta \Vdash e : T}\left(\op{\tau}{\tau'}\right) &\defeq&\;& \msem{\varnothing : \varnothing \partition \Delta \vdash e : T} \op{\tau}{\tau'} \msem{\varnothing : \varnothing \partition \Delta \vdash e : T}^\dagger \\
	\msem{\Delta, \Delta_0, \Delta_1 \Vdash \pair{e_0}{e_1} : T_0 \otimes T_1}\left(\op{\tau, \tau_0, \tau_1}{\tau', \tau_0', \tau_1'}\right) &\defeq&& \msem{\Delta, \Delta_0 \Vdash e_0 : T_0}\left(\op{\tau, \tau_0}{\tau', \tau_0'}\right) \\
																																					 &&&\otimes \msem{\Delta, \Delta_1 \Vdash e_1 : T_1}\left(\op{\tau, \tau_1}{\tau', \tau_1'}\right) \\
\msem{\Delta_0,\Delta_1 \Vdash \trycatch{e_0}{e_1} : T}\left( \rho_0 \otimes \rho_1 \right) &\defeq&& \tr(\rho_1) \msem{\Delta_0 \Vdash e_0 : T}\left( \rho_0 \right) \\&&&+ (\tr(\rho_0) - \tr(\msem{\Delta_0 \Vdash e_0 : T}\left( \rho_0 \right))) \\&&&\cdot \msem{\Delta_1 \Vdash e_1 : T}\left( \rho_1 \right) \\
\msem{\Delta \Vdash f\;e : T'}(\op{\tau}{\tau'}) &\defeq&& \msem{\vdash f : T \Rrightarrow T'}\left(\msem{\Delta \Vdash e : T}(\op{\tau}{\tau'})\right) \\
\msem{\pi(\Delta) \Vdash e : T} (\op{\pi(\tau)}{\pi(\tau')}) &\defeq&& \msem{\Delta \Vdash e : T} (\op{\tau}{\tau'})
\end{alignat*}
\caption{Mixed expression semantics}
\label{fig:sem-mixed-expr}
\end{figure}

We define Qunity semantics on the standard computational basis,\footnote{The exception is \textsc{T-Try}, which is defined on all product states.} but these should be understood to be linear operators after extending by linearity.
This semantics is compositional by construction.
For example, the pure semantics of $\pair{e_0}{e_1}$ can be computed in terms of the semantics of $e_0$ and $e_1$.
The operator for each subexpression is applied to the relevant part of the input basis state, and a tensor product of the resulting states defines $\pair{e_0}{e_1}$.

The most complicated definition is the one for \texttt{ctrl}, which uses a superoperator to construct an operator.
The definition uses $\bra{v} \msem{\Gamma, \Delta \Vdash e : T}\left( \op{\sigma, \tau}{\sigma, \tau} \right) \ket v$, the probability that expression $e$ outputs $v$ given input $(\sigma, \tau)$, as seen by a classical observer.
This \emph{probability} from the mixed semantics is then used directly as an \emph{amplitude} in the resulting pure semantics. %
Note that this introduces an unavoidable source of error whenever probabilities between 0 and 1 are involved because there is no square root involved---the resulting semantics will be norm-decreasing even if the original superoperator was trace-preserving.

The definition for \texttt{try}-\texttt{catch} uses $(1 - \tr(\msem{\Delta_0 \Vdash e_0 : T}\left( \op{\tau_0}{\tau_0'} \right)))$, the probability that $e_0$ throws an exception.
If $e_0$ succeeds, then its results are used, but otherwise, the results from $e_1$ are used.

\begin{figure}[t]
\begin{align*}
	\msem{\vdash \uthree{r_\theta}{r_\phi}{r_\lambda} : \Bit \rightsquigarrow \Bit} \ket{\zero} &\defeq \cos(r_\theta / 2) \ket{\zero} + e^{i r_\phi} \sin(r_\theta / 2) \ket{\one} \\
	\msem{\vdash \uthree{r_\theta}{r_\phi}{r_\lambda} : \Bit \rightsquigarrow \Bit} \ket{\one} &\defeq - e^{i r_\lambda} \sin(r_\theta / 2) \ket{\zero}  + e^{i(r_\phi + r_\lambda)} \cos(r_\theta / 2) \ket{\one} \\
	\msem{\vdash \lef{T_0}{T_1} : T_0 \rightsquigarrow T_0 \oplus T_1}\ket{v} &\defeq \ket {\lef{T_0}{T_1}\: v} \\
	\msem{\vdash \rit{T_0}{T_1} : T_1 \rightsquigarrow T_0 \oplus T_1}\ket{v} &\defeq \ket {\rit{T_0}{T_1}\: v} \\
	\msem{\vdash \lambda e \xmapsto{T} e' : T \rightsquigarrow T'}\ket{v} &\defeq \msem{\varnothing : \varnothing \partition \Delta \vdash e' : T'} \msem{\varnothing : \varnothing \partition \Delta \vdash e : T}^\dagger \ket{v} \\
	\msem{\vdash \rphase{T}{e}{r}{r'} : T \rightsquigarrow T} \ket{v} &\defeq e^{i r} \msem{\varnothing \partition \Delta \vdash e : T} \msem{\varnothing \partition \Delta \vdash e : T}^\dagger \ket{v}
\\[-0.5em]&\qquad\qquad + e^{i r'} \left(\mathbb{I} - \msem{\varnothing \partition \Delta \vdash e : T} \msem{\varnothing \partition \Delta \vdash e : T}^\dagger\right) \ket{v}
\end{align*}
\caption{Pure program semantics}
\label{fig:sem-pure-prog}
\end{figure}

\begin{figure}[t]
\begin{align*}
	& \msem{\vdash f : T \Rrightarrow T'}(\op{v}{v'}) \defeq \msem{\vdash f : T \rightsquigarrow T'}\op{v}{v'} \msem{\vdash f : T \rightsquigarrow T'}^\dagger \\
	& \msem{\vdash \lambda e \xmapsto{{\color{gray}T}} e' : T \Rrightarrow T'}(\op{v}{v'}) \defeq \\
	&\qquad \msem{\Delta \Vdash e' : T'}\left(\tr_{\Delta_0}\left(\msem{\varnothing : \varnothing \partition \Delta, \Delta_0 \vdash e : T}^\dagger \op{v}{v'} \msem{\varnothing : \varnothing \partition \Delta, \Delta_0 \vdash e : T}\right)\right)
\end{align*}
\caption{Mixed program semantics}
\label{fig:sem-mixed-prog}
\end{figure}

\subsection{Metatheory}

We are defining semantics as recursive functions on typing judgments rather than on programs and expressions directly. Qunity's typing relation is not syntax-directed, which means that there are multiple ways to type---and give semantics to---the same expression. Thus, we must prove that the semantics is truly a function, i.e., that the different proofs of the same judgment lead to the same semantic denotation.

As an example, note the two deduction trees in Figure~\ref{fig:trees} prove the same typing judgment, so they should give rise to the same semantics.
The semantics takes a tensor product of two state vectors for \textsc{T-PurePair}, but a tensor product of two density operators for \textsc{T-MixedPair}.
This is not a problem because a tensor product can equivalently be taken before or after the conversion to density operators: $(\ket{\psi} \otimes \ket{\phi})(\bra{\psi} \otimes \bra{\phi}) = \op{\psi}{\psi} \otimes \op{\phi}{\phi}$ for any states $\ket \psi, \ket\phi$.
We prove that this kind of equivalence always holds.

\begin{figure}[t]
\[
	\inference{\inference{\varnothing \partition \Gamma_0 \vdash e_0 : T_0 \qquad \varnothing \partition \Gamma_1 \vdash e_1 : T_1}{\varnothing \partition \Gamma_0, \Gamma_1 \vdash \pair{e_0}{e_1} : T_0 \otimes T_1}[\textsc{T-PurePair}]}{\Gamma_0, \Gamma_1 \Vdash \pair{e_0}{e_1} : T_0 \otimes T_1}[\textsc{T-Mix}]
	\]
	\vspace{2mm}
	\[
		\inference{
		\inference{\varnothing \partition \Gamma_0 \vdash e_0 : T_0}{\Gamma_0 \Vdash e_0 : T_0}[\textsc{T-Mix}]
		\qquad
	  \inference{\varnothing \partition \Gamma_1 \vdash e_1 : T_1}{\Gamma_1 \Vdash e_1 : T_1}[\textsc{T-Mix}]
  	}{\Gamma_0, \Gamma_1 \Vdash \pair{e_0}{e_1} : T_0 \otimes T_1}[\textsc{T-MixedPair}]
\]
\caption{Two proofs of the same typing judgment}%
\label{fig:trees}
\end{figure}

\begin{theorem}[well-defined semantics]
	\label{thm:well-defined}
	Qunity has a well-defined semantics:
	\begin{itemize}
		\item Whenever $(\Gamma \partition \Delta \vdash e : T)$ is valid and $\sigma \in \V(\Gamma)$, the denotation $\msem{\sigma : \Gamma \partition \Delta \vdash e : T}$ is uniquely defined.
		\item Whenever $(\Delta \Vdash e : T)$ is valid, the denotation $\msem{\Delta \Vdash e : T}$ is uniquely defined.
		\item Whenever $(\vdash f : T \rightsquigarrow T')$ is valid, the denotation $\msem{\vdash f : T \rightsquigarrow T'}$ is uniquely defined.
		\item Whenever $(\vdash f : T \Rrightarrow T')$ is valid, the denotation $\msem{\vdash f : T \Rrightarrow T'}$ is uniquely defined.
	\end{itemize}
	Whenever a typing judgment has more than one proof of validity, the derived semantics is independent of the proof.
\end{theorem}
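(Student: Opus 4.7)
The plan is to prove all four claims simultaneously by mutual induction on the heights of the typing derivations. Two conclusions are carried through the induction: an \emph{existence} conclusion (each equation in Figures~\ref{fig:sem-pure-expr}--\ref{fig:sem-mixed-prog} denotes a well-typed operator or superoperator) and a \emph{uniqueness} conclusion (when a judgment has more than one derivation, the derived semantics agree). The existence statement is mostly routine type-checking, rule by rule. The only cases needing care are \textsc{T-Ctrl}, where the scalars $\bra{v}\msem{\Gamma,\Delta\Vdash e:T}(\op{\sigma,\tau}{\sigma,\tau})\ket{v}$ must be nonnegative reals, and \textsc{T-Try}, where $1-\tr(\msem{\Delta_0\Vdash e_0:T}(\op{\tau_0}{\tau_0'}))$ must lie in $[0,1]$. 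Both require strengthening the induction with the auxiliary claim that mixed semantics is trace non-increasing, which itself follows alongside the main induction.

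For uniqueness, I would catalogue the sources of non-syntax-directedness and verify a small coherence identity for each. The rules \textsc{T-Mix} and \textsc{T-Channel} let a pure judgment be lifted to a mixed one; the key coherence lemma states that when a mixed judgment also admits a direct (non-\textsc{T-Mix}) derivation, the two routes produce the same superoperator. This is checked rule-by-rule on the direct mixed side: the case in Figure~\ref{fig:trees} reduces to the tensor identity $(A\otimes B)(\op{\tau,\tau_0}{\tau',\tau_0'})(A\otimes B)^{\dagger} = (A\op{\tau}{\tau'}A^{\dagger})\otimes(B\op{\tau_0}{\tau_0'}B^{\dagger})$, and \textsc{T-MixedApp} against \textsc{T-PureApp} is analogous. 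For the permutation rules \textsc{T-PurePerm} and \textsc{T-MixedPerm}, which may appear at any depth, I would check that every other equation transforms equivariantly under reordering of contexts; this is immediate from the symmetry of the tensor factors in $\Hilb(\Delta)$. In \textsc{T-MixedAbs} the split of the lambda-bound context into ``kept'' $\Delta$ and ``discarded'' $\Delta_0$ is under-determined when some bound variables do not appear in $e'$; invariance follows from $\tr_B\circ\tr_C = \tr_{BC}$ and a short calculation. Finally, in \textsc{T-Ctrl} the split of the outer classical context into $\Gamma$ and $\Gamma'$, and the classical tag contexts $\Gamma_j$ extracted from $\ortho{T}{e_1,\ldots,e_n}$, introduce further ambiguity; each reduces to the observation that contracting over a purely classical variable contributes a Kronecker delta, leaving the semantic value unchanged.

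The principal obstacle is \textsc{T-Ctrl}. Unlike every other rule, its pure denotation is defined in terms of a mixed denotation, so it ties the two halves of the mutual induction together; it is also the most structurally intricate, invoking both $\ortho{T}{\cdot}$ and $\erases{T'}(\cdot)$ as premises. The crux is showing that the triple sum over $v$, $j$, and $\sigma_j$ depends only on the underlying sets, not on the enumeration chosen by the auxiliary derivations. For the $v$-sum, the spanning judgment underlying $\ortho{T}{\cdot}$ guarantees that the vectors $\{\msem{\varnothing:\varnothing\partition\Gamma_j\vdash e_j:T}\ket{\sigma_j}\}_{j,\sigma_j}$ form an orthonormal family, so the corresponding inner factor is a partial resolution of the identity on their span, independent of how the family is enumerated. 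For the $\erases{T'}$ side, a subsidiary induction on the erases-derivation shows that evaluating each $e_j'$ at the same $\ket{\tau,\tau'}$ is consistent regardless of how \textsc{E-Ctrl}, \textsc{E-Gphase}, \textsc{E-Pair0}, and \textsc{E-Pair1} were applied; this is what justifies the coherent ``erasure'' of the control register. Once these pieces are in place, the coherence lemma for \textsc{T-Mix} versus a direct mixed derivation, applied to the \textsc{T-Ctrl} subexpression, closes the induction, and every remaining rule's coherence is a direct calculation from the inductive hypotheses.
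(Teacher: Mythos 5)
Your overall strategy is the same as the paper's: a mutual induction on derivation depth (the paper uses $\depth(P_1)+\depth(P_2)$, in the style of Newman's lemma), a catalogue of the rule pairs that can prove the same judgment, and a small algebraic coherence identity for each --- the tensor identity for \textsc{T-Mix} versus \textsc{T-MixedPair}, the analogous ones for \textsc{T-MixedApp} and \textsc{T-MixedAbs}/\textsc{T-Channel}, and equivariance under the explicit permutation rules. Those are exactly the cases the paper checks, so the core of your proposal is sound.

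Where you diverge is in identifying \textsc{T-Ctrl} as the principal obstacle. In the paper's proof it is not a special case at all: a \texttt{ctrl} expression can only be typed by \textsc{T-Ctrl} (modulo permutations), so it falls under the ``same rule on both sides'' case, where the subproofs are equivalent by the induction hypothesis and the denotation is a fixed finite sum over $\V(T)$, the branch indices, and $\V(\Gamma_j)$ --- index sets that do not depend on any ``enumeration chosen by the auxiliary derivations.'' The ortho and erases premises do not appear in the denotation, so the subsidiary induction on erases-derivations and the orthonormality of the pattern family are not needed here; they belong to Lemma~\ref{lem:ortho-sem} and to the compilation correctness proof, and invoking Lemma~\ref{lem:ortho-sem} at this point would risk circularity since the paper establishes it only after (and via) compilation. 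The genuine residual ambiguity in \textsc{T-Ctrl} and \textsc{T-MixedAbs} is the choice of the context splits ($\Gamma$ versus $\Gamma'$, $\Delta$ versus $\Delta_0$, and the $\Gamma_j$); the paper dispatches this by observing that relevance forces these contexts up to permutation, which your permutation-equivariance check then covers. Your partial-trace argument for \textsc{T-MixedAbs} is a fine alternative to that observation. So: same method, correct in outline, but the effort you budget for \textsc{T-Ctrl} is largely misdirected toward facts that this theorem does not need.
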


The proof of Theorem~\ref{thm:well-defined},
\ifshort
given in the supplemental report,
\else
given in Appendix~\ref{app:well-defined},
\fi
was largely inspired by the proof of Newman's lemma \cite{newmans-lemma,newmans-lemma-proof}, a standard tool for proving global confluence from local confluence when using (operational) reduction semantics.
Qunity's semantics is denotational rather than operational, but one can imagine the operational procedure of \emph{evaluating} the denotational semantics by repeatedly rewriting denotations in terms of the denotations of subexpressions.
In this view, we have a terminating sequence of rewrite rules, and global confluence is exactly what is needed to prove the semantics well-defined.
We do not use Newman's lemma directly, but the induction strategy is essentially the same, and this allows us to focus on particular cases of equivalence like the one in Figure~\ref{fig:trees}, which are easy to verify algebraically.

Not all pure Qunity programs have a norm-preserving (isometric) semantics.
Rather, programs that may throw an exception are norm-\emph{decreasing} instead.
We can characterize our semantics in terms of \emph{Kraus operators}.

\begin{definition}[Kraus operator]
	A Kraus operator is a linear operator $E$ such that $E^\dagger E \sqsubseteq I$, where ``$\sqsubseteq$'' denotes the L\"owner order \cite[p.~17]{mingsheng-ying}.
\end{definition}

Note that Kraus operators are typically defined as \emph{sets} of operators $E$ whose sum satisfies the property above. For our purposes, a single operator will suffice.

\begin{theorem}
	$\msem{\sigma : \Gamma \partition \Delta \vdash e : T}$ and $\msem{\vdash f : T \rightsquigarrow T'}$ are Kraus operators whenever well-defined.
	$\msem{\Delta \Vdash e : T}$ and $\msem{\vdash f : T \Rrightarrow T'}$ are completely positive, trace non-increasing superoperators whenever well-defined.
\end{theorem}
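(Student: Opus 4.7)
The plan is to prove all four claims simultaneously by a single mutual induction on the structure of the typing derivation. This is sound because Theorem~\ref{thm:well-defined} tells us the denotation is independent of the derivation we pick, so we may work with whichever is convenient. Throughout, I will use four standard facts: (i)~an operator $E$ is Kraus iff it is a contraction ($\|E\|\le 1$ in operator norm), so Kraus operators are closed under composition, tensor product, and adjoint (since $\|E^\dagger\|=\|E\|$); (ii)~for any Kraus $E$, the superoperator $\rho\mapsto E\rho E^\dagger$ is CP and trace non-increasing; (iii)~composition and tensor product preserve ``CP and trace non-increasing''; (iv)~the partial trace is CP and trace-preserving.

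The base and purely structural cases are routine. T-Unit, T-Cvar, T-Qvar, T-Gate, T-Left, and T-Right denote isometries or unitaries, hence Kraus. T-PureApp, T-MixedApp, T-PurePerm, and T-MixedPerm follow from the IH by composition or basis relabelling. T-Mix and T-Channel apply fact~(ii) to lift a pure Kraus operator to a CP trace-non-increasing superoperator. For T-PurePair, the denotation $\ket{\tau,\tau_0,\tau_1}\mapsto \msem{e_0}\ket{\tau,\tau_0}\otimes\msem{e_1}\ket{\tau,\tau_1}$ is block-diagonal over the basis of the shared context $\Delta$: on each block it is a tensor product of Kraus operators, and distinct basis values of $\Delta$ send inputs to orthogonal output subspaces, so the whole map is Kraus; T-MixedPair is analogous at the density-operator level.

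The remaining ``easy'' cases require a bit more care. T-Rphase denotes $e^{ir}P + e^{ir'}(I - P)$, where $P=\msem{e}\msem{e}^\dagger$ is the projector onto the image of the pattern, which is unitary. T-PureAbs denotes $\msem{e'}\,\msem{e}^\dagger$, a composition of two Kraus operators once we observe via fact~(i) that the adjoint of a Kraus operator is Kraus. T-MixedAbs composes three CP trace-non-increasing pieces: the adjoint action of the pattern's pure semantics, the partial trace over the discarded context $\Delta_0$, and the mixed semantics of $e'$. T-Try is the only case that needs a direct trace computation: writing $p=\tr(\msem{e_0}(\rho_0))\in[0,1]$ by IH, the combined state has trace $p + (1-p)\tr(\msem{e_1}(\rho_1))\le 1$, and complete positivity follows because the construction is a convex (probability-weighted) combination of CP maps.

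The main obstacle is \textsc{T-Ctrl}, the one place where the semantics converts a probability from a mixed-expression denotation into an amplitude in a pure-expression denotation. Writing $M$ for the resulting operator, the goal is $M^\dagger M \sqsubseteq I$. The key structural facts that make this work are: (a)~the orthogonality judgment $\ortho{T}{e_1,\ldots,e_n}$ forces the images of the patterns $\msem{e_j}$ to span pairwise-orthogonal subspaces, so for each basis value $v\in\V(T)$ at most one branch contributes a nonzero inner product $\bra{\sigma_j}\msem{e_j}^\dagger\ket v$; (b)~the erasure judgment on $x\in\dom(\Delta)$ forces the body expressions $e_j'$ to produce outputs in orthogonal subspaces of $\Hilb(T')$ as the basis value of the shared variables varies, preventing cross terms from inflating the norm; (c)~$\sum_v\bra v\msem{e}(\op{\sigma,\tau}{\sigma,\tau})\ket v \le 1$ because $\msem{e}$ is trace non-increasing by IH, so the probabilities $p_v$ acting as amplitudes satisfy both $p_v\le 1$ and $\sum_v p_v\le 1$. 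Combining (a)--(c) with the inductive bound $\|\msem{e_j'}\ket{\tau,\tau'}\|\le 1$ gives $\|M\ket{\tau,\tau'}\|^2 \le \sum_v p_v^2 \le \sum_v p_v \le 1$. Verifying rigorously that the orthogonality and erasure side-conditions really do eliminate all of the cross terms between different $v$ is the most delicate step of the proof, and is precisely where the syntactic restrictions imposed by the \textsc{T-Ctrl} rule earn their keep.
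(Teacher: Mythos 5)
Your strategy---a direct mutual induction on the typing derivation---is a genuinely different route from the paper's. The paper does not prove this theorem by induction at all: it derives it as a corollary of the compiler-correctness results of Section~\ref{sec:compilation} and Appendix~\ref{app:compilation}, since every well-defined denotation is realized in the form $\bra{\cdot,\zero^{\otimes\nflag}}U\ket{\cdot,\zero^{\otimes\nprep}}$ for a genuine unitary $U$ (with a garbage register traced out in the mixed case), and any operator of that shape is automatically a Kraus operator (resp.\ CP and trace non-increasing). That route gets the hard cases---\textsc{T-Ctrl}, \textsc{T-Try}, \textsc{T-Rphase}---for free from closure of ``unitary plus prep/flag/garbage wires'' under composition, at the price of first proving the long compilation-correctness lemmas. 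Your induction is more self-contained and most of its cases are fine, but as written it has real holes in exactly the places the compilation argument was designed to avoid.

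Concretely: (1)~For \textsc{T-PurePair}, the claim that distinct basis values of the shared context $\Delta$ produce orthogonal outputs is false. Take $\Delta=(x:\Bit)$, $\Delta_0=\Delta_1=\varnothing$, and $e_0=e_1=x\triangleright(\lambda\,\zero\xmapsto{\Bit}\zero)\triangleright(\lambda\,\texttt{plus}\xmapsto{\Bit}\zero)$, whose pure semantics is $\tfrac{1}{\sqrt2}\op{\zero}{\zero}$; then $\ket{x\mapsto\zero}$ and $\ket{x\mapsto\one}$ both map to (a multiple of) $\ket{\zero,\zero}$. The correct one-line repair is to factor the denotation as $(\msem{e_0}\otimes\msem{e_1})\circ W$ with $W\ket{\tau,\tau_0,\tau_1}=\ket{\tau,\tau_0}\otimes\ket{\tau,\tau_1}$ the share isometry, exactly as in Figure~\ref{fig:compile-ex-purepair}. (2)~For \textsc{T-Rphase}, the same example shows $A=\msem{e}\msem{e}^\dagger$ need not be a projector (here $A=\tfrac12\op{\zero}{\zero}$), so the denotation need not be unitary; it is still a contraction because every eigenvalue of $e^{ir}A+e^{ir'}(\mathbb{I}-A)$ with $0\sqsubseteq A\sqsubseteq\mathbb{I}$ is a convex combination of two unit-modulus scalars, but that is the argument you need to make. (3)~\textsc{T-Try} is not a convex combination of CP maps with fixed weights---the weight $1-\tr(\msem{e_0}(\rho_0))$ depends on the input---so complete positivity there also needs an explicit factorization into CP pieces (cf.\ Lemma~\ref{lem:mixed-error}). (4)~Most importantly, in \textsc{T-Ctrl} your chain of inequalities only bounds the \emph{columns} of $M$, which does not bound its operator norm, and the dangerous cross terms are not between different outcomes $v$ (for a fixed input valuation $\tau$ those just give a sub-convex combination $\sum_v p_v A_v$ of contractions) but between \emph{distinct input valuations} $\tau\neq\tilde\tau$ of the controlled context. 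Showing $\msem{e_j'}\ket{\tau,\tau'}\perp\msem{e_k'}\ket{\tilde\tau,\tilde\tau'}$ for $\tau\neq\tilde\tau$ requires its own induction on the erases derivation---this is precisely the content of Lemma~\ref{lem:erasure}---and your sketch gestures at it without carrying it out. Until that lemma is proved, the \textsc{T-Ctrl} case of your induction is open.
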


We do not prove this theorem directly, but it is a direct consequence of the correctness of our compilation procedure discussed in Section~\ref{sec:compilation}.

We can also be more precise about the semantics of the orthogonality and spanning judgments.
In the absence of variables, the orthogonality judgment describes a set of orthogonal basis states, and the spanning judgment describes a spanning set of orthogonal basis states.
With variables, the picture is a bit more complicated, because we are dealing with projectors onto orthogonal \emph{subspaces} rather than orthogonal states.
An algebraic description of these semantics is given by Lemmas~\ref{lem:ortho-sem} and \ref{lem:spanning-sem}.
\ifshort
\else
We do not have a nice semantic representation of the erases judgment except for the one used by the compiler in Lemma~\ref{lem:erasure}.
\fi

\begin{lemma}[spanning semantics]
	\label{lem:spanning-sem}
	Suppose $\varnothing \partition \Delta_j \vdash e_j : T$ for all $j \in \{1, \ldots, n\}$ and suppose $\spanning{T}{e_1, \ldots, e_n}$ is true.
	Then $\sum_{j = 1}^n \msem{\varnothing \partition \Delta_j \vdash e_j : T}\msem{\varnothing \partition \Delta_j \vdash e_j : T}^\dagger = \mathbb{I}$.
\end{lemma}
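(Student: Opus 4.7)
The plan is to prove Lemma~\ref{lem:spanning-sem} by structural induction on the derivation of the spanning judgment $\spanning{T}{e_1, \ldots, e_n}$. The key invariant I would maintain is a stronger one: that each $E_j \defeq \msem{\varnothing \partition \Delta_j \vdash e_j : T}$ is an isometry from $\Hilb(\Delta_j)$ into $\Hilb(T)$ (so $E_j^\dagger E_j = \mathbb{I}_{\Hilb(\Delta_j)}$), that the images $E_j \Hilb(\Delta_j)$ are pairwise orthogonal subspaces of $\Hilb(T)$, and that these images together span $\Hilb(T)$. Given this, each $E_j E_j^\dagger$ is the orthogonal projector onto the image of $E_j$, and a family of projectors onto pairwise orthogonal subspaces whose union spans the whole space sums to the identity, giving the desired equality.

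The base cases are immediate from the pure-expression semantics in Figure~\ref{fig:sem-pure-expr}. For the variable spanning case, $\Delta_1 = x : T$ and $E_1$ is the identity, so trivially $E_1 E_1^\dagger = \mathbb{I}$. For $\Unit$, the single expression $\unit$ maps $\ket\varnothing$ to $\ket\unit$, a unitary between one-dimensional spaces. For $\Void$, $\Hilb(\Void) = \{0\}$ and the empty sum equals the zero operator on this trivial space. The inductive cases that drive the proof are the sum and product constructors. For $T_0 \oplus T_1$, a spanning derivation combines a spanning set for $T_0$, each wrapped in $\lef{T_0}{T_1}$, with one for $T_1$, wrapped in $\rit{T_0}{T_1}$. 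Since the pure program semantics gives $\msem{\vdash \lef{T_0}{T_1}}$ and $\msem{\vdash \rit{T_0}{T_1}}$ as the canonical isometries into the two summands, their images are orthogonal and together span $\Hilb(T_0)\oplus\Hilb(T_1)$; the inductive hypotheses applied inside each summand then close this case. For $T_0 \otimes T_1$, the pair semantics factors as a tensor product of the component operators, so $\msem{\pair{e_0}{e_1}}\msem{\pair{e_0}{e_1}}^\dagger$ decomposes as $(E_0 E_0^\dagger) \otimes (E_1 E_1^\dagger)$ (assuming disjoint contexts). Summing over the spanning collection and using $\mathbb{I}_{T_0} \otimes \mathbb{I}_{T_1} = \mathbb{I}_{T_0 \otimes T_1}$ finishes this case.

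The main obstacle will be the cases of the spanning judgment in which subexpressions share variables across the left and right components of a pair, as well as any rule that allows combining spanning sets of different shapes (likely inherited from \textsc{Spm}'s orthogonal decomposition). In such cases the tensor-product factorization does not apply verbatim, and one must argue isometry and orthogonality by using the fact that the shared variable is constrained to a basis state via relevant use and appears identically in both components. I would handle this by first proving an auxiliary lemma that reduces pure expression semantics on a shared context to a diagonal embedding followed by separable operators, after which the computation $E_j^\dagger E_k$ for $j \neq k$ reduces on the shared part to the identity and on the disjoint parts to an inner product of orthogonal images (supplied by the companion characterization in Lemma~\ref{lem:ortho-sem}). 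This bookkeeping, rather than any deep algebraic content, is where the bulk of the work lies; once isometry and pairwise orthogonality are established, the identity $\sum_j E_j E_j^\dagger = \mathbb{I}$ is a one-line consequence of resolution of the identity on $\Hilb(T)$.
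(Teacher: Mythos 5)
Your proposal is correct, and it takes a more direct route than the paper does. The paper obtains Lemma~\ref{lem:spanning-sem} as a corollary of its compilation machinery: by induction on the spanning derivation it constructs an explicit operator $S : \Hilb(T) \to \Hilb(T)^{\oplus n}$ (the ``spanning'' circuit of Appendix~\ref{app:compilation}) whose components record the amplitudes $\bra{\sigma_j}\msem{\varnothing\partition\Delta_j\vdash e_j : T}^\dagger\ket{v}$, shows this construction is norm-preserving, and then observes that $S^\dagger S = \sum_j E_j E_j^\dagger$. Your argument dispenses with the intermediate circuit and instead strengthens the induction hypothesis denotationally: each $E_j$ is an isometry, the images are pairwise orthogonal, and they jointly span $\Hilb(T)$, so the sum is a resolution of the identity. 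Both proofs are inductions on the same derivation with the same two interesting cases (\textsc{S-Sum} and \textsc{S-Pair}); yours is self-contained and arguably cleaner, while the paper's buys a reusable compiled operator that it also needs for the \texttt{ctrl} circuit and for the orthogonality variant (Lemma~\ref{lem:ortho-sem}). One remark: the obstacle you budget most of your effort for --- pair patterns whose components share variables --- never arises, because the \textsc{S-Pair} rule explicitly requires $\FV(e_j) \cap \bigcup_k \FV(e_{j,k}')=\varnothing$, so the tensor factorization $E_{j,k} = E_j \otimes E_{j,k}'$ applies verbatim and your auxiliary ``diagonal embedding'' lemma is unnecessary. (Variable sharing does occur in general \texttt{ctrl} patterns via the pair typing rule, but the spanning judgment forbids it precisely so that this lemma goes through.) The only remaining rule, \textsc{S-Perm}, merely reorders the list and leaves the sum unchanged.
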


\begin{lemma}[orthogonality semantics]
	\label{lem:ortho-sem}
	Suppose $\varnothing \partition \Delta_j \vdash e_j : T$ for all $j \in \{1, \ldots, n\}$ and suppose $\ortho{T}{e_1, \ldots, e_n}$ is true.
	Then $\bra{v} \msem{\varnothing \partition \Delta_j \vdash e_j : T} \ket{\tau_j} \in \{0, 1\}$ for all $\tau_j \in \V(\Delta_j), v \in \V(T)$, and $\sum_{j = 1}^n \msem{\varnothing \partition \Delta_j \vdash e_j : T}\msem{\varnothing \partition \Delta_j \vdash e_j : T}^\dagger \sqsubseteq \mathbb{I}$.
\end{lemma}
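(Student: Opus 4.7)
The plan is to bootstrap off Lemma~\ref{lem:spanning-sem} by unfolding the single inference rule defining $\ortho{T}{\cdots}$. That rule tells us that the hypothesis $\ortho{T}{e_1, \ldots, e_n}$ gives us a spanning family $\{e_1', \ldots, e_m'\}$ with $\spanning{T}{e_1', \ldots, e_m'}$, such that $[e_1, \ldots, e_n]$ is a subsequence of $[e_1', \ldots, e_m']$. Writing $E_j \defeq \msem{\varnothing \partition \Delta_j \vdash e_j : T}$ and analogously $E_k'$ for the spanning family (with corresponding contexts $\Delta_k'$ supplied by a weakening or inversion argument on the spanning derivation), Lemma~\ref{lem:spanning-sem} immediately yields $\sum_{k=1}^m E_k' E_k'^\dagger = \mathbb{I}$.

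For the L\"owner bound, I would argue that each summand $E_k' E_k'^\dagger$ is positive semidefinite, being of the form $A A^\dagger$. The subseries $\sum_{j=1}^n E_j E_j^\dagger$ differs from the full sum $\sum_{k=1}^m E_k' E_k'^\dagger$ by a sum of PSD operators (those indexed by the complement of the subsequence), so by monotonicity of the L\"owner order under addition of PSD terms, $\sum_{j=1}^n E_j E_j^\dagger \sqsubseteq \mathbb{I}$. This half is essentially routine.

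For the amplitude claim $\bra{v} E_j \ket{\tau_j} \in \{0,1\}$, I would observe that expressions appearing in a spanning derivation are syntactically restricted to the classical sublanguage of Definition~\ref{def:sublanguage}: they are patterns built from $\unit$, variables, pairs, and $\lef{}{}$/$\rit{}{}$ tags, with no $\texttt{u3}$ or $\texttt{rphase}$. A short structural induction over the spanning inference rules confirms this invariant. Once established, Theorem~\ref{thm:classical-generalization} says that $E_j \ket{\tau_j}$ coincides with the linear extension of the classical semantics evaluated at $\tau_j$. Because $e_j$ has a pure typing derivation and therefore corresponds to an injective partial function on values, $E_j \ket{\tau_j}$ is either a single basis vector $\ket{v}$ or the zero vector, so all amplitudes lie in $\{0,1\}$.

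The main obstacle is the amplitude claim rather than the inequality: it relies on properties of the spanning judgment that are only referenced, not displayed, in the excerpt. A careful write-up would strengthen the induction hypothesis to carry along two invariants simultaneously, namely that each $e_j$ lies in the classical sublanguage and that its denotation sends every basis input to a basis output or to zero. Once both invariants are threaded through every spanning rule, the two conclusions of the lemma fall out together with no further arithmetic.
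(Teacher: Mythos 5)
Your proposal is correct, but it reaches the lemma by a genuinely different route than the paper. The paper treats both Lemma~\ref{lem:spanning-sem} and Lemma~\ref{lem:ortho-sem} as corollaries of its compilation machinery: in Appendix~\ref{app:compilation} it builds, by recursion on the spanning derivation, an explicit Kraus operator $\msem{\ortho{T}{e_1, \ldots, e_n}} : \Hilb(T) \to \Hilb(T)^{\oplus n}$ realized by a unitary circuit with prep and flag wires, notes that $\sum_{j} \msem{\varnothing \partition \Delta_j \vdash e_j : T}\msem{\varnothing \partition \Delta_j \vdash e_j : T}^\dagger$ equals $\msem{\ortho{T}{e_1, \ldots, e_n}}^\dagger\msem{\ortho{T}{e_1, \ldots, e_n}}$, and concludes the L\"owner bound from the fact that any compiled operator is norm-non-increasing (the ortho circuit is the spanning circuit post-composed with a direct sum of identities and zero maps $V_j$, which is exactly where the slack below $\mathbb{I}$ enters). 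You instead take Lemma~\ref{lem:spanning-sem} as a black box and obtain the inequality from positivity of each $E_k'(E_k')^\dagger$ together with monotonicity of the L\"owner order under dropping positive summands; this is more elementary and entirely independent of the compilation story. For the amplitude claim the paper is essentially silent---it is folded into the unstated observation that spanning patterns act as partial injections on basis states---whereas your detour through the classical sublanguage and Theorem~\ref{thm:classical-generalization} makes it explicit and is sound: every expression generated by the spanning rules is built from $\unit$, variables, pairs, and \texttt{left}/\texttt{right} tags, hence lies in the classical sublanguage, and its pure classical semantics is a partial function on valuations, so each $\msem{\varnothing \partition \Delta_j \vdash e_j : T}\ket{\tau_j}$ is a basis ket or zero. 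The only obligations you still owe are (i) that every member of the ambient spanning family is typeable in a uniquely determined context so that Lemma~\ref{lem:spanning-sem} applies to it, and (ii) that the denotation of $e_j$ is the same whether computed from the hypothesis derivation or from the one recovered by inversion on the spanning derivation; both follow from inversion plus Theorem~\ref{thm:well-defined}, and you already flag the first. One typographical caution: write $(E_k')^\dagger$ rather than attaching a dagger directly after a prime, which produces a double-superscript error.
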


Finally, Qunity's semantics uses norm-decreasing operators and trace-decreasing superoperators, which can be interpreted operationally as a sort of ``exception.''
\ifshort
In the supplemental report,
\else
In Appendix~\ref{app:iso},
\fi
we present an additional judgment ``iso'' that can be used to statically determine whether a program belongs to a checkable class of exception-free programs.

\section{Examples}
\label{sec:examples}

We have already shown two examples of programs we can write in Qunity: Deutsch's algorithm (Section~\ref{sec:deutsch}) and Grover's algorithm (end of Section~\ref{sec:syntax}). This section presents two more---the quantum Fourier transform and the quantum walk---aiming to further illustrate Qunity's expressiveness.
The quantum walk depends on \emph{specialized erasure}, a technique that reverses \texttt{ctrl} expressions to implement a general form of reversible pattern-matching.
We give a third example---the Deutsch-Jozsa algorithm---in
\ifshort
the supplemental report.
\else
Appendix~\ref{sec:dj}.
\fi

\subsection{Quantum Fourier transform}

\begin{figure}[ht]
\begin{alignat*}{3}
	\texttt{and} &\defeq&& \lambda x \xmapsto{{\color{gray}\Bit\otimes\Bit}} \\ &&&\cntrl{x}{\Bit\otimes\Bit}{\pair{\zero}{\zero} &\mapsto \pair{x}{\zero} \\ \pair{\zero}{\one} &\mapsto \pair{x}{\zero} \\ \pair{\one}{\zero} &\mapsto \pair{x}{\zero} \\ \pair{\one}{\one} &\mapsto \pair{x}{\one}}{(\Bit\otimes\Bit)\otimes\Bit} \triangleright \snd{(\Bit\otimes\Bit)}{\Bit} \\
	\texttt{couple}(k) &\defeq&\;& \lambda \pair{x_0}{x_1} \xmapsto{{\color{gray}\Bit\otimes\Bit}}  \\
										 &&&\cntrl{\texttt{and} \pair{x_0}{x_1}}{\Bit}{\zero &\mapsto \pair{x_1}{x_0} \\ \one &\mapsto \pair{x_1}{x_0} \triangleright \gphase{\Bit\otimes\Bit}{2 \pi \texttt{ / } 2^\texttt{k}}}{\Bit\otimes\Bit} \\
  \texttt{rotations}(0) &\defeq&& \lambda \unit \xmapsto{{\color{gray}\Bit^{\otimes 0}}} \unit \\
  \texttt{rotations}(1) &\defeq&& \lambda \pair{x}{\unit} \xmapsto{{\color{gray}\Bit^{\otimes 1}}} \pair{\had\:x}{\unit} \\
	\texttt{rotations}(n+2) &\defeq&& \lambda \pair{x_0}{x} \xmapsto{{\color{gray}\Bit^{\otimes (n+2)}}} \\
													&&&\left( \begin{aligned} &\texttt{let } \pair{x_0}{\pair{y_0'}{y}} \texttt{ =}_{\color{gray}\Bit^{\otimes(n+2)}} \pair{x_0}{x \triangleright \texttt{rotations}(n+1)} \texttt{ in} \\ &\texttt{let } \pair{\pair{y_0}{y_1}}{y} \texttt{ =}_{\color{gray}(\Bit \otimes \Bit)\otimes \Bit^{\otimes n}} \pair{\pair{x_0}{y_0'} \triangleright \texttt{couple}(n+2)}{y} \texttt{ in} \\ &\pair{y_0}{\pair{y_1}{y}} \end{aligned} \right) \\
	\texttt{qft}(0) &\defeq&& \lambda \unit \xmapsto{{\color{gray}\Bit^{\otimes 0}}} \unit \\
	\texttt{qft}(n+1) &\defeq&& \lambda x \xmapsto{{\color{gray}\Bit^{\otimes(n+1)}}} \texttt{let } \pair{x_0}{x'} \texttt{ =}_{\color{gray}\Bit^{\otimes n}} x \triangleright \texttt{rotations}(n+1) \texttt{ in } \pair{x_0}{x' \triangleright \texttt{qft}(n)}
\end{alignat*}
\caption{The quantum Fourier transform implemented in Qunity}
\label{fig:qft}
\end{figure}

In Figure~\ref{fig:qft}, we use a presentation of the quantum Fourier transform that has a symmetric circuit diagram \cite{semiclassical-fourier}.
This version uses a two-qubit ``coupling'' gate that swaps two qubits and coherently induces a particular global phase conditional on both qubits being $\ket{\one}$.

\subsection{Specialized erasure}
\label{sec:match}

Quantum control---that is, programming with conditionals while maintaining quantum coherence---is a common feature of quantum algorithms, and Qunity's \texttt{ctrl} construct can be a powerful tool for implementing this pattern.
However, the ``erases'' requirement of the \textsc{T-Ctrl} typing rule can be a frustrating limitation on practical quantum control.
It effectively requires that any part of the input used for quantum control (the $\Delta$ context in the \textsc{T-Ctrl} typing rule) must be present in the output as well.
We show here how programmers can get around this limitation with a general approach for ``erasing'' controlled data in a more customizable way, without adding any new primitives to the Qunity language.

The basic approach is fairly simple, and it is already used in the implementation of existing quantum algorithms \cite[p.~8]{childs-van-dam-review}.
Suppose one has two Hilbert spaces $\Hilb$ and $\Hilb'$ with orthonormal sets $\{\ket{1}, \ldots, \ket{n}\} \subset \Hilb$ and $\{\ket{1'}, \ldots, \ket{n'}\} \subset \Hilb'$, and one would like to implement the operator $E \defeq \sum_{j=1}^n \op{j'}{j} \in \linear(\Hilb, \Hilb')$.
If one can implement operators $E_1 \defeq \sum_{j=1}^n \op{j, j'}{j} \in \linear(\Hilb, \Hilb \otimes \Hilb')$ and $E_2 \defeq \sum_{j=1}^n \op{j, j'}{j'} \in \linear(\Hilb', \Hilb \otimes \Hilb')$, then one can implement $E = E_2^\dagger E_1$.
This pattern can be useful in Qunity, where the operators $E_1$ and $E_2$ can be easy to implement using a \texttt{ctrl} expression, but the erases judgment prevents $E$ from being implemented more directly.
The $E_2^\dagger$ operator serves as a sort of ``specialized erasure,'' erasing the input state $\ket{j}$.

This pattern can be used to implement a ``\texttt{match}'' program, shown in Figure~\ref{fig:match}.
The typing and semantics of this program are very similar to the symmetric pattern matching language \cite{symmetric-pattern-matching}.
In particular, the typing rule and semantics in Figure~\ref{fig:match-type} can be inferred, where ``$\text{classical}(e)$'' means that $e$ includes no \texttt{u3} or \texttt{rphase}.

\begin{figure}[t]
\begin{alignat*}{3}
	\match{T}{e_1 &\mapsto e_1' \\ &\cdots \\ e_n &\mapsto e_n'}{T'} &\defeq&& \lambda x \xmapsto{{\color{gray}T}} \cntrl{x}{T}{e_1 &\mapsto \pair{x}{e_1'} \\ &\cdots \\ e_n &\mapsto \pair{x}{e_n'}}{T \otimes T'} \\ &&&\triangleright \lambda \left( \cntrl{x'}{T'}{e_1' &\mapsto \pair{e_1}{x'} \\ &\cdots \\ e_n' &\mapsto \pair{e_1}{x'}}{T \otimes T'} \right) \xmapsto{{\color{gray}T \otimes T'}} x'
\end{alignat*}
\caption{Reversible pattern matching via specialized erasure}
\label{fig:match}
\end{figure}

\begin{figure}[t]
\[
	\inference{\text{classical}(e_j) \text{ for all } j \qquad \ortho{T}{e_1, \ldots, e_n} \qquad \varnothing\partition \Delta_j \vdash e_j : T \text{ for all } j \\ \text{classical}(e_j') \text{ for all } j \qquad \ortho{T'}{e_1', \ldots, e_n'} \qquad \varnothing \partition \Delta_j \vdash e_j' : T' \text{ for all } j}{\vdash \match{T}{e_1 &\mapsto e_1' \\ &\cdots \\ e_n &\mapsto e_n'}{T'} : T \rightsquigarrow T'}
\]
\[
	\msem{\vdash \match{T}{e_1 &\mapsto e_1' \\ &\cdots \\ e_n &\mapsto e_n'}{T'} : T \rightsquigarrow T'} = \sum_{j=1}^n \msem{\varnothing \partition \Delta_j \vdash e_j' : T'} \msem{\varnothing\partition \Delta_j \vdash e_j : T}^\dagger
\]
\caption{Typing and semantics for the \texttt{match} construct}
\label{fig:match-type}
\end{figure}

As another example, Figure~\ref{fig:dsum} shows how one can use this pattern to implement the direct sum of linear operators, defined so that
$\msem{\vdash f_0 \oplus f_1 : T_0 \oplus T_1 \rightsquigarrow T_0' \oplus T_1'} = \msem{\vdash f_0 : T_0 \rightsquigarrow T_0'} \oplus \msem{\vdash f_1 : T_1 \rightsquigarrow T_1'}$.

\begin{figure}[t]
\begin{alignat*}{3}
	\tagsum{T_0}{T_1} &\defeq&& \lambda x \xmapsto{{\color{gray}T_0 \oplus T_1}} \cntrl{x}{T_0 \oplus T_1}{\lef{T_0}{T_1}\: x_0 &\mapsto \pair{\zero}{x} \\ \rit{T_0}{T_1}\:x_1 &\mapsto \pair{\one}{x}}{\Bit \otimes (T_0 \oplus T_1)} \\
	f_0 \oplus f_1 &\defeq&& \lambda x \xmapsto{{\color{gray}T_0 \oplus T_1}} \\
								 &&&\texttt{let } \pair{x_i}{x} \texttt{ =}_{{\color{gray}\Bit\otimes(T_0\oplus T_1)}} \tagsum{T_0}{T_1} x \;\texttt{ in} \\
								 &&& \cntrl{x_i}{\Bit}{\zero &\mapsto \pair{x_i}{x \triangleright \lef{T_0}{T_1}^\dagger \triangleright f_0} \\ \one &\mapsto \pair{x_i}{x \triangleright \rit{T_0}{T_1}^\dagger \triangleright f_1}}{\Bit \otimes (T_0' \oplus T_1')} \\
								 &&& \triangleright \tagsum{T_0'}{T_1'}^\dagger
\end{alignat*}
\caption{A Qunity implementation of the direct sum of linear operators}
\label{fig:dsum}
\end{figure}

These examples are admittedly fairly lengthy, and would not lead to an efficient implementation if compiled using the compilation procedure described in Section~\ref{sec:compilation} without optimizations.
Our purpose here is to demonstrate the expressiveness of Qunity despite its small number of language primitives, but additional language primitives may make efficient compilation easier.

\subsection{Quantum walk}
\label{sec:quantum-walk}

We use Qunity to implement a quantum walk algorithm for boolean formula evaluation \cite{booleanformula2007,booleanformula2010}.
The simplest version of this algorithm treats a \textsc{nand} formula as a balanced binary tree where each leaf corresponds to a variable in the formula and each vertex corresponds to a \textsc{nand} application.
Given black-box oracle access to a function $f : \texttt{Variable} \to \{\zero,\one\}$, the task is to evaluate the formula.

This algorithm performs a quantum walk, repeatedly applying a \emph{diffusion step} and a \emph{walk step} to two quantum registers: a vertex index and a qutrit ``three-sided coin.'' %
The diffusion step uses coherent control to apply a different operator to the coin depending on the vertex index.
If the vertex index is a leaf $v$, then the oracle is used to induce a conditional phase flip of $(-1)^{f(v)}$.
Otherwise, a reflection operator $(2\op{\texttt{u}}{\texttt{u}} - I)$ is applied to the coin register, where $\ket{\texttt{u}}$ is a coin state whose value depends on whether the vertex index is one of two special root nodes.
The walk step then performs a coherent permutation on the two registers, ``walking'' the vertex index to the direction specified by the coin and setting the coin to be the direction traveled from.

This algorithm presents some interesting challenges for implementation, and Quipper's \cite{quipper} implementation of this algorithm is quite long.\footnote{See \href{https://www.mathstat.dal.ca/~selinger/quipper/doc/src/Quipper/Algorithms/BF/BooleanFormula.html}{\texttt{Quipper.Algorithms.BF.BooleanFormula}} for reference.} %
One challenge is the representation of the vertex index.
The traversed graph is a tree, so a convenient representation is the path taken to get to this vertex from the root of the tree, a list like [\texttt{vleft}, \texttt{vright}, \ldots] indicating the sequence of child directions.
The problem is that this list has an unknown length, so one cannot simply use an array of qubits where each qubit corresponds to a direction taken.
Quipper has no sum types, and every (quantum) Quipper type is effectively a fixed-length array of qubits, so Quipper's implementation must directly manipulate an encoding of variable-length lists into bitstrings, which requires developers to work at a lower level of abstraction than they would prefer.
Qunity's sum types make it convenient to coherently manipulate variable-length lists by managing the qubit encoding automatically. %

First, we must define the types and values used in this program in terms of existing ones.
We use a recursively-defined \texttt{Vertex} type to represent a vertex in the tree as a variable-length list.
This type and others are defined in Figure~\ref{fig:walk-sugar} in terms of Qunity's base types.
Here, the parameter $n$ is the height of the tree, a bound on the depth.
A $\texttt{Vertex}_{n+1}$ is thus either empty or a $\texttt{Vertex}_n$ with an additional $\Child$ appended and can store a superposition of lists of different lengths.
The algorithm augments the tree with two special vertices at the root of the tree, $\rootone{n}$ and $\rooot{n+1}$, both of type $\Vertex_{n+2}$, where the ``$+2$'' comes from the extra depth incurred by these vertices.
The leaves of the traversed tree are those where the path from the root is maximal, so we can use the fixed-length \texttt{Leaf} type to describe leaves separately from arbitrary vertices.

\begin{figure}[ht]
    \centering
  \begin{minipage}{.35\textwidth}

\begin{align*}
	\Child &\defeq \Bit \\
	\texttt{vleft} &\defeq \zero \\
	\texttt{vright} &\defeq \one \\
	\Coin &\defeq \Maybe(\Child) \\
	\texttt{cdown} &\defeq \nothing{\Child} \\
	\texttt{cleft} &\defeq \just{\Child} \vleft \\
	\texttt{cright} &\defeq \just{\Child} \vright 
\end{align*}
  \end{minipage}
  \hspace{.1in}
  \begin{minipage}{.4\textwidth}
	\begin{align*}
  \Vertex_0 &\defeq \Void \\
  \Vertex_{n+1} &\defeq \Unit \oplus (\Vertex_n \otimes \Child) \\
	\rooot{n} &\defeq \texttt{left}_{\color{gray}\Vertex_{n+1}} \unit \\
	(e \consarrow{n} e_0) &\defeq \texttt{right}_{\color{gray}\Vertex_{n+1}} \pair{e}{e_0} \\
	\rootone{n} &\defeq \rooot{n} \consarrow{n+1} \vleft \\
	\Leaf_n &\defeq \Child^{\otimes n} \\
\end{align*}
  \end{minipage}
	\caption{Types and values used in the quantum walk algorithm (also see Fig.~\ref{fig:notation})}%
\label{fig:walk-sugar}
\end{figure}

\begin{figure}[ht]
\begin{alignat*}{3}
	\asleaf{0} &\defeq&\;& \lambda \; \rootone{0} \xmapsto{{\color{gray}{\Vertex_2}}} \unit \\
	\asleaf{n+1} &\defeq&& \lambda (x \consarrow{n+2} x_0) \xmapsto{{\color{gray}{\Vertex_{n+3}}}} \pair{x_0}{x \triangleright \asleaf{n}} \\
	\downcast{0} &\defeq&& \lambda v \xmapsto{{\color{gray}{\Vertex_1}}} \cntrl{v}{\Vertex_1}{}{\Vertex_0} \\
	\downcast{n+1} &\defeq&& \lambda v \xmapsto{{\color{gray}{\Vertex_{n+2}}}} \\
								 &&& \cntrl{v\hspace{-1cm}}{\Vertex_{n+2}}{\rooot{n+1} &\mapsto \pair{v}{\rooot{n}} \\ v' \consarrow{n+1} x &\mapsto \pair{v}{\downcast{n} v' \consarrow{n} x}}{\Vertex_{n+2} \otimes \Vertex_{n+1}} \\
								 &&& \triangleright \lambda \quad \cntrl{v\hspace{-1cm}}{\Vertex_{n+1}}{\rooot{n} &\mapsto \pair{\rooot{n+1}}{v} \\ v' \consarrow{n} x &\mapsto \pair{\downcast{n}^\dagger v' \consarrow{n+1} x}{v}}{\Vertex_{n+2} \otimes \Vertex_{n+1}} \hspace{-23mm} \xmapsto{{\color{gray}{\Vertex_{n+2} \otimes \Vertex_{n+1}}}} v \\
	\leftchild{n} &\defeq&& \lambda v \xmapsto{{\color{gray}{\Vertex_{n}}}} \downcast{n} ( v \consarrow{n} \vleft ) \\
	\rightchild{n} &\defeq&& \lambda v \xmapsto{{\color{gray}{\Vertex_{n}}}} \downcast{n} ( v \consarrow{n} \vright ) \\
\end{alignat*}
\caption{Programs and expressions used in the quantum walk implementation}
\label{fig:walk-subexpressions}
\end{figure}

Figure~\ref{fig:walk-subexpressions} defines some additional programs used in this algorithm.
The \texttt{asleaf} program is essentially a projector onto the subspace of vertices spanned by leaves; 
$\asleaf{n}$ has type $(\Vertex_{n+2} \rightsquigarrow \Leaf_n)$. %
This program converts the variable-length \texttt{Vertex} type into the fixed-length \texttt{Leaf} type, terminating exceptionally if the length is insufficient.
The $\downcast{n}$ program has type $(\Vertex_{n+1} \rightsquigarrow \Vertex_n)$, and is used to convert a vertex to a type with a smaller maximum length, terminating exceptionally if the length is maximal.
This program, defined using the specialized erasure technique described in the previous section, is used by the $\leftchild{n}$ and $\rightchild{n}$ programs, both of type $(\Vertex_n \rightsquigarrow \Vertex_n)$, which reversibly ``step'' a vertex to a child in the tree.

We implement the diffusion step of the quantum walk in Figure~\ref{fig:diffusion}.
This implementation resembles the original algorithm description \cite[p.~4]{booleanformula2007} much more closely than Quipper's implementation, which involves more complicated qubit encodings and explicit uncomputation.
The diffusion program has the following type: %
\[
	\inference{\vdash f : \Leaf_n \Rrightarrow \Bit}{\vdash \diffusion{n}(f) : \Coin \otimes \Vertex_{n+2} \rightsquigarrow \Coin \otimes \Vertex_{n+2}}
\]

\begin{figure}[ht]
\begin{align*}
	&\diffusion{n}(f) \defeq \\
	&\lambda \pair{c}{v} \xmapsto{{\color{gray}\Coin \otimes \Vertex_{n+2}}} \\
	&\texttt{ctrl } v \\
	&\left\{ \begin{aligned} v' \consarrow{n} x \consarrow{n+1} x' \mapsto \; &\texttt{ctrl} \left( \begin{aligned} & \texttt{try}\; \just{\Leaf(n)} ((v' \consarrow{n} x \consarrow{n+1} x') & \triangleright \asleaf{n}) \\ & \texttt{catch}\; \nothing{\Leaf(n)} \end{aligned} \right) \\ &\left\{ \begin{aligned} \nothing{\Leaf(n)} &\mapsto \pair{c \triangleright \reflect{\Coin}(\texttt u)}{v} \\ \just{\Leaf(n)} \ell &\mapsto \texttt{ctrl } (f\;\ell) \left\{ \begin{aligned} \zero &\mapsto \pair{c}{v} \\ \one &\mapsto \pair{c}{v} \triangleright \gphase{}{\pi} \end{aligned} \right\} \end{aligned} \right\} \\ \rootone{n} \mapsto \; &\pair{c \triangleright \reflect{\Coin}(\texttt{u\textquotesingle}_{\color{gray}n})}{v} \\
\rooot{n+1} \mapsto \; &\pair{c}{v} \end{aligned} \right\}
	\end{align*}
	\caption{The diffusion step of the boolean formula algorithm. For brevity, we omit some type annotations and the implementation of the expressions \texttt{u} and $\texttt{u\textquotesingle}_{\color{gray}n}$. See the Coq implementation \cite{typechecker} for details.}
	\label{fig:diffusion}
	\end{figure}

We implement the walk step in Figure~\ref{fig:walk}.
This program must update both the vertex and the coin while maintaining quantum coherence.
To make this work, we use a program $\nextcoin{n}$ of type $(\Coin \otimes \Vertex_{n+1} \rightsquigarrow (\Coin \otimes \Vertex_{n+1}) \otimes \Coin)$ to both compute the updated coin and to uncompute the previous coin.
The walk program $\walk{n}$ then has type $(\Coin \otimes \Vertex_{n+1} \rightsquigarrow \Coin \otimes \Vertex_{n+1})$.

\begin{figure}[ht]
	\begin{alignat*}{2}
		\nextcoin{n} &\defeq&\;& \lambda x \mapsto \cntrl{x\hspace{-17mm}}{\Coin \otimes \Vertex_{n+1}}{\pair{\cdown}{v \consarrow{n} \vleft} &\mapsto \pair{x}{\cleft} \\ \pair{\cdown}{v \consarrow{n} \vright} &\mapsto \pair{x}{\cleft} \\ \pair{\cright}{v} &\mapsto \pair{x}{\cdown} \\ \pair{\cleft}{v} &\mapsto \pair{x}{\cdown}}{(\Coin \otimes \Vertex_{n+1}) \otimes \Coin} \\[2mm]
	\walk{n} &\defeq&& \lambda x \xmapsto{{\color{gray} \Coin \otimes \Vertex_{n+1}}} \\
					 &&& \nextcoin{n} x \\
					 &&& \triangleright \lambda \pair{\pair{c}{v}}{c'} \xmapsto{{\color{gray} (\Coin \otimes \Vertex_{n+1}) \otimes \Coin}} \\
					 &&& \cntrl{\pair{c}{c'}}{}{\pair{\cdown}{\cleft} &\mapsto \pair{\pair{c'}{\leftchild{n+1}^\dagger v}}{c} \\ \pair{\cdown}{\cright} &\mapsto \pair{\pair{c'}{\rightchild{n+1}^\dagger v}}{c} \\ \pair{\cleft}{\cdown} &\mapsto \pair{\pair{c'}{\leftchild{n+1} v}}{c} \\ \pair{\cright}{\cdown} &\mapsto \pair{\pair{c'}{\rightchild{n+1} v}}{c}}{} \\
					 &&& \triangleright \nextcoin{n}^\dagger
	\end{alignat*}
		\caption{The walk step of the boolean formula algorithm}
		\label{fig:walk}
\end{figure}

\section{Compilation}
\label{sec:compilation}

We have developed an algorithm for compiling Qunity to a low-level qubit circuit language such as Open\textsc{Qasm} \cite{openqasm3}. This section provides an overview of the algorithm; the details are given in
\ifshort
the supplemental report \cite{supplement}.
\else
Appendix~\ref{app:compilation}.
\fi

Our compilation algorithm serves two purposes.
First, it makes clear that Qunity is realizable on a quantum computer. Realizability is not immediately clear from the definitions of Qunity's semantics; some quantum languages like Lineal \cite{lineal} and the \textsc{zx}-calculus \cite{zx} allow for programs to be written with norm-\emph{increasing} semantics and thus have no physical interpretation.
Second, it gives an intuition on what is operationally happening step-by-step in the physical computer and answers questions about data representation, automatic uncomputation, and the interplay between Qunity's pure and mixed modes of computation.

There are some obvious challenges in compiling Qunity to low-level circuits.
In particular, the \texttt{ctrl} construct can make irreversible programs reversible, and the \texttt{try}-\texttt{catch} construct can make trace-decreasing programs trace-preserving.
We solve these problems by augmenting the circuit with ancillary ``garbage'' and ``flag'' qubits, and then compiling it into a unitary circuit.
The garbage qubits are used to store discarded information, taking advantage of the deferred measurement principle \cite[p.~186]{mike-and-ike}, and the \texttt{ctrl} expression performs a reverse computation to ``uncompute'' this garbage and produce a reversible circuit. 
The flag qubits are used as a kind of ``assertion,'' where a nonzero flag qubit corresponds to an error that can be caught by the \texttt{try}-\texttt{catch} expression.
Our main result,
\ifshort
proven in the supplemental report,
\else
proven in Appendix~\ref{app:compilation},
\fi
is summarized by the following theorem:
\begin{theorem}
	There is a recursive procedure that compiles any well-typed Qunity program to a qubit-based circuit consisting only of single-qubit gates and controlled operators.
	The unitary semantics of this low-level circuit will correspond closely to the (potentially norm-decreasing) semantics of the Qunity program, as made precise by Definitions~\ref{def:implement-kraus} and \ref{def:implement-superop} in Section~\ref{sec:hl2ll} below.
\end{theorem}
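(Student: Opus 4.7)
The plan is to proceed by structural induction on typing derivations, not on raw syntax, since Qunity's semantics is itself defined on derivations. Concretely, I would define four mutually recursive compilation functions, one per judgment form: $\mathcal{C}\psem{\Gamma \partition \Delta \vdash e : T}$, $\mathcal{C}\psem{\Delta \Vdash e : T}$, $\mathcal{C}\psem{\vdash f : T \rightsquigarrow T'}$, and $\mathcal{C}\psem{\vdash f : T \Rrightarrow T'}$. Each produces a unitary circuit over a standard qubit encoding of the types, augmented with three classes of ancillary wires: classical-context wires (which stay in the computational basis and are uncomputed when they go out of scope), \emph{garbage} wires (which store information discarded by the Qunity semantics and are either left as output or traced out in the mixed case), and \emph{flag} wires (which start in $\ket{0}$ and become nonzero to signal exceptional states).

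First I would fix an encoding of each finite type $T$ as a fixed-width qubit register, using one tag qubit per sum and padding the smaller summand with ancillas, so that $\Hilb(T)$ embeds canonically into the qubit Hilbert space. Then the base cases are straightforward: $\texttt{u}_{\texttt{3}}$ maps directly to a single-qubit gate, $\texttt{left}$/$\texttt{right}$ allocate a tag qubit and pad, pairing is parallel composition of registers, variables are wire reroutings, and application is sequential composition. The $\texttt{rphase}$ construct compiles by first building the circuit for its pattern expression $e$, then applying a controlled phase on whether the auxiliary output indicates membership in the span, then uncomputing. The $\texttt{try}$-$\texttt{catch}$ rule is where flag qubits earn their keep: compile $e_0$, read its flag qubit, and use it as a control to overwrite the output register with a freshly compiled copy of $e_1$; the flag is then cleared.

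The hard cases are $\texttt{ctrl}$ and the pure lambda. For $\texttt{ctrl}$ we compile the scrutinee $e$ in its \emph{mixed} form (so its garbage is made explicit), then use the $\ortho{T}{e_1,\ldots,e_n}$ proof to produce a classically-controlled branch table: each proof of $\spanning{T}{\cdots}$ corresponds to a sequence of computational-basis tests whose circuitization is essentially a decision tree of controls. Running branch $j$ produces $e_j'$, and then---critically---we use the $\erases{T'}(x; e_1',\ldots,e_n')$ derivation as a \emph{recipe for uncomputation}: the erasure rules in Figure~\ref{fig:erasure} are precisely the structural hooks that let us re-apply the appropriate inverses of $e_j$ targeting the output register to clear the scrutinee's quantum variables without disturbing classical ones. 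The $\texttt{PureAbs}$ rule is handled similarly: to implement the operator $\msem{e'}\msem{e}^\dagger$, I would compile $e$, run its inverse on the incoming argument to recover a valuation on $\Delta$, then compile and run $e'$ forward.

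The main obstacle will be the correctness proof for $\texttt{ctrl}$, which must reconcile two things: the denotational rule mixes amplitudes using probabilities of the scrutinee's \emph{mixed} semantics (the $\bra{v}\msem{\Delta \Vdash e}(\op{\tau}{\tau})\ket{v}$ factor in Figure~\ref{fig:sem-pure-expr}), while the compiled circuit manipulates pure states with garbage still entangled in. The argument has to show that tracing out the garbage register at the end of the compiled branch-and-uncompute block yields exactly that probability-weighted sum, and that the uncomputation recipe derived from $\erases{}$ in fact inverts the scrutinee against \emph{every} branch's output consistently---this is a lemma on $\erases{}$ analogous to Lemmas~\ref{lem:spanning-sem} and \ref{lem:ortho-sem}. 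I would prove well-definedness of the compilation function in the same inductive pass (using the same Newman-style argument as Theorem~\ref{thm:well-defined} to handle non-uniqueness of derivations), and then establish correctness by a simultaneous induction whose statement is exactly Definitions~\ref{def:implement-kraus} and \ref{def:implement-superop}, discharging each typing rule's case by matching the circuit's action on the encoded basis $\{\ket{v} : v \in \V(T)\}$ against the corresponding clause of the denotational semantics.
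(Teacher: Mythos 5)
Your plan matches the paper's proof essentially point for point: a mutually recursive compiler defined on typing derivations, a fixed-width qubit encoding with a tag qubit per sum, prep/flag/garbage ancilla discipline realizing Definitions~\ref{def:implement-kraus} and \ref{def:implement-superop}, purification of the \texttt{ctrl} scrutinee followed by an ortho-derived branch selector, the erases derivation compiled into an uncomputation operator (the paper's Lemma~\ref{lem:erasure} is exactly the lemma you anticipate), and the adjoint-then-forward scheme for abstractions, all verified by simultaneous induction on the encoded basis. The only inessential divergence is your proposed Newman-style well-definedness pass, which the paper avoids by treating the compiler as a function of the derivation itself rather than of the judgment.
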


\subsection{Overview}

Compiling Qunity programs to qubit circuits happens in two stages: from Qunity programs to high-level circuits, and from high-level circuits to low-level qubit circuits.
More precisely:

The \textbf{input} is a valid typing judgment, as defined in Section~\ref{sec:types}, for either an expression or program, pure or mixed.
		We assume that the compiler has access to the proof of validity for the typing judgment, and our compiler (like our semantics) is defined as a recursive function of this proof.

		An \textbf{intermediate representation} uses quantum circuits of the sort found in most quantum computing textbooks, with two notable exceptions:
		\begin{enumerate}
			\item The wires in the circuit correspond to the Hilbert spaces defined in Definition~\ref{def:typed-hilbert}, rather than single qubits.
				In our circuit diagrams, we will label wires with the Hilbert space they represent and a ``slash'' pointing to the labeled wire.
				This is useful because it is easier to analyze circuit semantics if we can algebraically manipulate vectors in the direct sum of Hilbert spaces rather than the corresponding qubit encodings of these vectors.
				Using wires to represent the direct sum of Hilbert spaces is unconventional, but not unheard of in graphical quantum languages \cite{many-worlds-calculus}.
			\item
				The boxes in the circuit correspond to norm-non-increasing operators and trace-non-increasing superoperators rather than unitary operators and trace-preserving superoperators (\textsc{cptp} maps).
				Again, this is not unheard of: Fault-tolerant quantum circuits are often drawn and analyzed with norm-decreasing ``$r$-filter'' projectors \cite{qecc}.
		\end{enumerate}
		Some of our circuits will be ``pure,'' involving no measurement and described by norm-non-increasing linear operators.
		Others will be impure, potentially involving measurement and described by trace-non-increasing superoperators.
		In both cases, a vertical stacking of boxes represents a tensor product, a horizontal stacking of boxes represents function composition, and a bare wire represents the identity.
		Whenever we use a pure component described by the operator $E$ within an impure circuit, it has superoperator semantics $\rho \mapsto E\rho E^\dagger$.

		The \textbf{output} is a low-level qubit-based unitary quantum circuit of the sort standard in quantum computing literature.
		We provide circuit diagrams, and it should be obvious how to implement these circuits in a runnable quantum assembly language such as Open\textsc{Qasm} \cite{openqasm3}.
		Here, wire labels indicate the number of qubits in a particular register, and unlabelled wires can be assumed to represent a single qubit.
		All of these circuits will be ``pure'' (unitary and measurement-free), and we assume that the controlled gate $\op{\zero}{\zero} \otimes \mathbb{I} + \op{\one}{\one} \otimes U$ is implementable whenever the unitary $U$ can be, as this sort of quantum control is a primitive ``gate modifier'' in Open\textsc{Qasm}.

In what follows, we will refer to our intermediate circuit representation as ``high-level circuits'' and the target qubit language as ``low-level circuits.''
High-level circuits semantically described by superoperators will be referred to as ``decoherence-based,'' while those that are measurement-free will be referred to as ``pure.''
Section~\ref{sec:qunity2hl} outlines the compilation from Qunity programs to high-level circuits, and Section~\ref{sec:hl2ll} outlines the compilation of these high-level circuits to low-level circuits.

\subsection{Qunity Programs to High-Level Circuits}
\label{sec:qunity2hl}

Our compiler, like our denotational semantics, is a recursive function of a valid typing judgment.
Sometimes we do not depend on the compiled subcircuit directly, but rather a transformed version of this circuit; we make it clear when we do this and show how to implement these circuit transformers in
\ifshort
the supplemental report.
\else
Appendix~\ref{app:ll-hl}.
\fi
For each kind of judgment, we describe below the correspondence between the Qunity semantics and the high-level circuit semantics and give an example of a compiled circuit. 

\paragraph*{Pure expressions}
		Given a judgment $(\Gamma \partition \Delta \vdash e : T)$, the compiler produces a pure circuit with input space $\Hilb(\Gamma) \otimes \Hilb(\Delta)$ and output space $\Hilb(\Gamma) \otimes \Hilb(T)$.
		The semantics of this circuit $C$ corresponds to Qunity expression semantics in the following way:
		\[
			\bra{\sigma, v} C \ket{\sigma, \tau} = \bra{v} \msem{\sigma : \Gamma \partition \Delta \vdash e : T} \ket{\tau}
		\]
		Recall that the typing context $\Gamma \partition \Delta$ is partitioned in two: left of the $\partition$ is the ``classical'' context $\Gamma$ and right of it is the ``quantum'' context $\Delta$.
		The semantics makes the distinction clear -- $\Gamma$ determines the classical parameter to the denotation, while $\Delta$ determines input Hilbert space.
		In the compilation setting, the story is slightly different.
		Because even classical computation may need to be done reversibly on quantum data, our compiler uses $\Hilb(\Gamma)$ as an auxiliary Hilbert space holding the variables used in a classical way.
		Practically, ``used in a classical way'' means that the wires in this ancilla register are used only as ``control'' wires, so any data on this register remains unchanged in the standard basis.
		For this reason, we depict these gates in circuit diagrams with a control on the classical $\Gamma$ register:
		\[
		\begin{quantikz}
			\lstick{$\Hilb(\Gamma)$} & \ctrl{1} & \rstick{$\Hilb(\Gamma)$} \qw \\
			\lstick{$\Hilb(\Delta)$} & \gate{{\Gamma \partition \Delta \vdash e : T}} & \rstick{$\Hilb(T)$} \qw
		\end{quantikz}
		\]

		Seven inference rules define the pure expression typing relation, and in
\ifshort
                the supplemental report
\else
                Appendix~\ref{app:compilation}
\fi
                we give a compiled circuit for each of these cases.
		As an example, Figure~\ref{fig:compile-ex-purepair} shows the circuit for the \textsc{T-PurePair} typing rule, which includes as subcircuits the compiled subexpressions.
		It uses a ``share'' gate graphically represented by the ``controlled cloud'' component that copies quantum data in the standard basis, mapping $\ket{\tau} \mapsto \ket{\tau, \tau}$.
		We show how to implement this gate later, in Figure~\ref{fig:compile-share}. %

\begin{figure}[ht]
\[
	\inference{\Gamma\partition \Delta, \Delta_0 \vdash e_0 : T_0 \qquad \Gamma\partition \Delta, \Delta_1 \vdash e_1 : T_1}{\Gamma\partition \Delta, \Delta_0, \Delta_1 \vdash \pair {e_0} {e_1} : T_0 \otimes T_1}[\textsc{T-PurePair}]
\]
\medskip
\[
\begin{quantikz}
	\lstick{$\Hilb(\Gamma)$} & \qw & \qw & \ctrl{1} & \ctrl{3} & \qw \rstick{$\Hilb(\Gamma)$} \\
	\lstick{$\Hilb(\Delta)$} & \ctrl{2} & \qwbundle{\Hilb(\Delta)} & \gate[2]{{\Gamma \partition \Delta, \Delta_0 \vdash e_0 : T_0}} & \qw & \qw \rstick{$\Hilb(T_0)$} \\
	\lstick{$\Hilb(\Delta_0)$} & \qw & \qw & & \\
														 & \gate[style={cloud},nwires=1]{} & \qwbundle{\Hilb(\Delta)} & \qw & \gate[2]{{\Gamma \partition \Delta, \Delta_1 \vdash e_1 : T_1}} & \qw \rstick{$\Hilb(T_1)$} \\
	\lstick{$\Hilb(\Delta_1)$} & \qw & \qw & \qw & & 
\end{quantikz}
\]
\caption{\textsc{T-PurePair} compilation}
	\label{fig:compile-ex-purepair}
\end{figure}

\paragraph*{Pure programs}
		Given a judgment $(\vdash f : T \rightsquigarrow T')$, the compiler produces a pure circuit with input space $\Hilb(T)$ and output space $\Hilb(T')$.
		This circuit's semantics is the same as $\msem{\vdash f : T \rightsquigarrow T'}$.
		As an example, Figure~\ref{fig:compile-ex-pureabs} shows the circuit for the \textsc{T-PureAbs} typing rule.
		Note that this circuit does not use the compiled circuit for $(\varnothing \partition \Delta \vdash e : T)$ directly; rather, we use a transformed version: its \emph{adjoint}. %
		As described in Section~\ref{sec:hl2ll}, norm-decreasing operators in our high-level circuits are implemented by unitary operators in our low-level circuits by treating some of the wires as ``prep wires'' (initialized to zero) and some of the wires as ``flag wires'' (asserted to be zero upon termination).
		Whenever a norm-decreasing operator can be implemented, its adjoint can also be implemented by taking the adjoint of the underlying unitary and swapping the prep and flag wires, reversing the circuit and swapping the processes of qubit initialization and qubit termination.
		We describe this circuit transformer (and others) in more detail in
\ifshort
                the supplemental report.
\else
                Appendix~\ref{app:ll-hl}.
\fi

                                \begin{figure}[ht]
\[
	\inference{\varnothing\partition \Delta \vdash e : T \qquad \varnothing\partition \Delta \vdash e' : T'}{\vdash \lambda e \xmapsto{{\color{gray}T}} e' : T \rightsquigarrow T'}[\textsc{T-PureAbs}]
\]
\medskip
\[
\begin{quantikz}
	\lstick{$\Hilb(T)$} & \gate{{\msem{\varnothing \partition \Delta \vdash e : T}^\dagger}} & \qwbundle{\Hilb(\Delta)} & \gate{{\varnothing \partition \Delta \vdash e' : T'}} &\rstick{$\Hilb(T')$} \qw
\end{quantikz}
\]
\caption{\textsc{T-PureAbs} compilation}
	\label{fig:compile-ex-pureabs}
\end{figure}

\paragraph*{Mixed expressions}                
		Given a judgment $(\Delta \Vdash e : T)$, the compiler produces a decoherence-based circuit with input space $\Hilb(\Delta)$ and output space $\Hilb(T)$.
		This circuit's semantics is the same as $\msem{\Delta \Vdash e : T}$.
		As an example, Figure~\ref{fig:compile-ex-mixedswap} shows the circuit for the \textsc{T-MixedPerm} typing rule.
		The $\pi$ gate here is a series of \textsc{swap} gates that permutes the data in $\Delta$ according to the permutation function $\pi$.
		This example demonstrates the benefit of the explicit exchange rules---exchange of variables corresponds to \textsc{swap} gates in the quantum circuit.

\begin{figure}[ht]
	\[
	\inference{\Delta \Vdash e : T}{\pi(\Delta) \Vdash e : T}[\textsc{T-MixedPerm}]
	\qquad
\begin{quantikz}
	\lstick{$\Hilb(\pi(\Delta))$} & \gate{\pi^{-1}} & \qwbundle{\Hilb(\Delta)} & \qw & \gate{e} & \rstick{$\Hilb(T)$} \qw \\
\end{quantikz}
\]
\caption{\textsc{T-MixedPerm} compilation}
	\label{fig:compile-ex-mixedswap}
\end{figure}

\paragraph*{Mixed programs}                
		Given a judgment $(\vdash f : T \Rrightarrow T')$, the compiler produces a decoherence-based circuit with input space $\Hilb(T)$ and output space $\Hilb(T')$.
		This circuit's semantics is the same as $\msem{\vdash f : T \Rrightarrow T'}$.
		As an example, Figure~\ref{fig:compile-ex-mixedabs} shows the circuit for the \textsc{T-MixedAbs} typing rule.
		This is the same as the \textsc{T-PureAbs} circuit, except that there is an extra context $\Delta_0$ for unused variables, which are discarded.

\begin{figure}[ht]
	\[
	\inference{\varnothing\partition \Delta, \Delta_0 \vdash e : T \qquad \Delta \Vdash e' : T'}{\vdash \lambda e \xmapsto{{\color{gray}T}} e' : T \Rrightarrow T'}[\textsc{T-MixedAbs}]
	\]
\medskip
	\[
\begin{quantikz}
	\lstick{$\Hilb(T)$} & \gate[2,nwires=2]{{\msem{\varnothing\partition \Delta, \Delta_0 \vdash e : T}^\dagger}} & \qwbundle{\Hilb(\Delta)} & \gate{{\Delta \Vdash e' : T'}} & \rstick{$\Hilb(T')$} \qw \\
											&& \trash{\Hilb(\Delta_0)}
\end{quantikz}
\]
\caption{\textsc{T-MixedAbs} compilation}
	\label{fig:compile-ex-mixedabs}
\end{figure}

\subsection{High-Level Circuits to Low-Level Circuits}
\label{sec:hl2ll}

The high-level circuits constructed in the previous section are convenient for analysis, in particular for our proofs of 
\ifshort
correctness.
\else
correctness (in Appendix~\ref{app:compilation}).
\fi
For these circuits to be runnable on quantum hardware, however, an additional compilation stage is necessary, transforming these high-level circuits into low-level ones.
In particular, our high-level circuits manipulate values in $\V(T)$ in Hilbert space $\Hilb(T)$ with $\dim(\Hilb(T)) = |\V(T)|$, but low-level circuits use only qubits, working in the Hilbert space $\complex^{2^n}$ where $n$ is the number of qubits on the quantum computer.
Our compiler must then translate Qunity programs written for the Hilbert space $\Hilb(T)$ into Open\textsc{Qasm} programs that use some $|\V(T)|$-dimensional subspace of $\complex^{2^n}$, and values in $\V(T)$ will be encoded into bitstrings in $\{\zero, \one\}^n$.
This number of qubits $n$ is determined from $T$ by a function $\size()$, shown below. Values in $\V(T)$ will end up encoded into bitstrings whose length is $\size(T)$ using the function $\encode()$, where ``$\doubleplus$'' denotes bitstring concatenation, and we omit right-zero-padding for brevity.

\smallskip
\begin{align*}
	\size(\Void) &\defeq 0 & \encode(\unit) &= \texttt{""} \\
	\size(\Unit) &\defeq 0 & \encode(\lef{T_0}{T_1}\: v) &= \texttt{"0"} \doubleplus \encode(v) \\
	\size(T_0 \oplus T_1) &\defeq 1 + \max \{ \size(T_0), \size(T_1) \} & \encode(\rit{T_0}{T_1}\: v) &= \texttt{"1"} \doubleplus \encode(v) \\
	\size(T_0 \otimes T_1) &\defeq \size(T_0) + \size(T_1) & \encode(\pair{v_0}{v_1}) &= \encode(v_0) \doubleplus \encode(v_1)
\end{align*}
\smallskip

We can now be precise about what it means to implement a norm-non-increasing operator with a low-level circuit.
\begin{definition}
	\label{def:implement-kraus}
	We say that it is possible to implement a norm-non-increasing operator $E : \Hilb(T) \to \Hilb(T')$ if there is a low-level circuit implementing a unitary operator $U : \complex^{2^{\size(T) + \nprep}} \to \complex^{2^{\size(T') + \nflag}}$ for some integers $\nprep$ and $\nflag$ such that for all $v \in \V(T), v' \in \V(T')$:
	\[
		\bra{\encode(v'), \zero^{\otimes \nflag}} U \ket{\encode(v), \zero^{\otimes \nprep}} = \bra{v'} E \ket{v}
	\]
	Here, $\nprep$ is the number of ``prep'' qubits initialized to $\ket{\zero}$, and $\nflag$ is the number of ``flag'' qubits asserted to be in the $\ket{\zero}$ state upon termination.
	This definition requires that $\size(T) + \nprep = \size(T') + \nflag$.
\end{definition}

For example, consider the ``share'' gate graphically represented by the ``controlled cloud'' in Figure~\ref{fig:compile-ex-purepair}.
This gate can be defined for any type $T$, where it is mathematically represented by the isometry $\sum_{v \in \V(T)} \op{v, v}{v}$, copying a value in the standard basis.
By Definition~\ref{def:implement-kraus}, one can implement this operator with $\nprep = \size(T)$ prep qubits and $\nflag = 0$ flag qubits, using the series of \textsc{cnot} gates shown in Figure~\ref{fig:compile-share}.

\begin{figure}
\[
\begin{quantikz}
	\lstick{$\Hilb(T)$} & \ctrl{1} & \rstick{$\Hilb(T)$} \qw \\
										& \gate[style={cloud},nwires=1]{} & \rstick{$\Hilb(T)$} \qw
\end{quantikz}
\quad\mapsto
\begin{quantikz}
	\lstick[wires=4]{$\size(T)$} & \ctrl{4} & \qw & \ \ldots\ \qw & \qw & \rstick[wires=4]{$\size(T)$} \qw\\
															 & \qw & \ctrl{4} &  \ \ldots\ \qw & \qw & \qw \\
	\wave&&&&& \\
			 & \qw & \qw & \ \ldots\ \qw & \ctrl{4} & \qw \\
	\lstick[wires=4]{$\nprep = \size(T)$} & \targ{} & \qw & \ \ldots\ \qw & \qw & \rstick[wires=4]{$\size(T)$} \qw \\
																				& \qw & \targ{} & \ \ldots\ \qw & \qw & \qw \\
	\wave&&&&& \\
	& \qw & \qw & \ \ldots\ \qw & \targ{} & \qw
\end{quantikz}
\]
\caption{A (high-level) ``share'' gate implemented by (low-level) \textsc{cnot} gates}
	\label{fig:compile-share}
\end{figure}

Definition~\ref{def:implement-kraus} can be adapted to the setting of decoherence and superoperators by including an additional ``garbage'' register.
This is essentially the Stinespring dilation \cite[p.~186]{stinespring-dilation}, using the principle of deferred measurement to purify our computation and organize all of our measurements onto one segment of our output.

\begin{definition}
	\label{def:implement-superop}
	We say that it is possible to implement a trace-non-increasing superoperator $\mathcal{E} : \linear(\Hilb(T)) \to \linear(\Hilb(T'))$ if there is a qubit circuit implementing a unitary operator $U : \complex^{2^{\size(T) + \nprep}} \to \complex^{2^{\size(T') + \nflag + \ngarb}}$ for some integers $\nprep$, $\nflag$, and $\ngarb$ such that for all $v_1, v_2 \in \V(T)$, $v_1', v_2' \in \V(T')$:
	\begin{DIFnomarkup}
	\begin{alignat*}{3}
		&&& \bra{v_1'} \mathcal{E}(\op{v_1}{v_2}) \ket{v_2'} \\
		&=&\;& \sum_{b \in \{\zero, \one\}^{\ngarb}} \bra{\encode(v_1'), \zero^{\otimes \nflag}, b} U \left(\op{\encode{v_1}}{\encode{v_2}} \otimes \op{\zero}{\zero}^{\otimes \nprep}\right) U^\dagger \ket{\encode(v_2'), \zero^{\otimes \nflag}, b}
	\end{alignat*}
	\end{DIFnomarkup}
	Like before, $\nprep$ is the number of ``prep'' qubits initialized to $\ket{\zero}$, and $\nflag$ is the number of ``flag'' qubits asserted to be in the $\ket{\zero}$ state upon termination.
	The new parameter $\ngarb$ is the number of ``garbage'' qubits discarded after use.
	This definition requires that $\size(T) + \nprep = \size(T') + \nflag + \ngarb$.
\end{definition}

For example, under this definition the ``discard'' used for the $\Hilb(\Delta_0)$ space in Figure~\ref{fig:compile-ex-mixedabs} can be implemented by an empty (identity) circuit by setting $\ngarb = \size(T)$.
Any implementable pure circuit is also implementable as a decoherence-based circuit by setting $\ngarb=0$.

\subsection{Example}

As a concrete example of compiling a simple program, consider a ``coin flip'' expression defined below.
This program is defined in terms of a \texttt{meas} program that measures its argument in the standard basis, by copying in the standard basis and then discarding the copy.

\begin{align*}
	\meas{T} &\defeq \lambda x \xmapsto{{\color{gray}T}} \pair{x}{x} \triangleright \fst{\Bit}{\Bit} \\
	\texttt{coin} &\defeq \meas{\Bit} \left( \had\: \zero \right) \\
								&= \unit \triangleright \texttt{left}_{\color{gray}\Bit} \triangleright \had \triangleright \lambda x \xmapsto{{\color{gray}\Bit}} \pair{x}{x} \triangleright \lambda \pair{x_0}{x_1} \xmapsto{{\color{gray}T\otimes T}} x_0
\end{align*}

The typing judgment $(\varnothing \Vdash \texttt{coin} : \Bit)$ is compiled into the following high-level circuit.
Here you can see that the $\pair{x}{x}$ expression implements the isometry $(\op{\zero, \zero}{\zero} + \op{\one, \one}{\one})$, and the $\texttt{fst}$ program implements a partial trace.

\[
	\begin{quantikz}[row sep=-30]
		\lstick{$\Hilb(\varnothing)$} & \gate{\texttt{left}_{\color{gray}\Bit}} & \qwbundle{\Hilb(\Bit)} &[0.5cm] \gate{\had} & \qw \gategroup[wires=2,steps=3,style={dashed,rounded corners},label style={label position=below,anchor=north,yshift=-0.2cm}]{$\pair{x}{x}$} & \qwbundle{\Hilb(x:\Bit)} &[0.5cm] \ctrl{1} & \qw &[0.5cm] \qw \gategroup[wires=2,steps=2,style={dashed,rounded corners},label style={label position=below,anchor=north,yshift=-2mm}]{$\lambda \pair{x_0}{x_1} \xmapsto{{\color{gray}\Bit \otimes \Bit}} x_0$} & \hphantomgate{quite wide} & \qw & \rstick{$\Hilb(\Bit)$} \qw & \\[1cm]
																	&&&&&& \gate[style={cloud},nwires=1]{} & \qw & \ghost{\Bigg\{} \qw & \trash{\Hilb(x_1 : \Bit)} & & &
	\end{quantikz}
\]

This is then compiled into the following low-level circuit, with $n\subcap{prep}=2$, $n\subcap{flag} = 0$, $n\subcap{garb} = 1$, $\size(\varnothing) = 0$ contextual input wires, and $\size(\Bit) = 1$ data output wire.
\[
	\begin{quantikz}
		\lstick[2]{$n\subcap{prep} = 2$} & \gate{\had} & \ctrl{1} & \rstick{$\size(\Bit) = 1$} \qw \\
																		 & \qw & \targ{} & \rstick{$n\subcap{garb} = 1$} \qw
	\end{quantikz}
\]

\section{Conclusion}

Qunity is designed to unify classical and quantum computing through an expressive generalization of classical programming constructs.
Its syntax allows programmers to write quantum algorithms using familiar classical programming constructs, like exception handling and pattern matching.
Our type system leverages algebraic data types and relevant (substructural) types, differentiating between unitary maps and quantum channels but allowing them to be usefully nested.
Qunity's semantics brings constructions commonly used in algorithm \emph{analysis}---such as bounded-error quantum subroutines, projectors, and direct sums---into the realm of algorithm \emph{implementation}.

We have formally defined the Qunity programming language, proven that its semantics is well-defined, and shown how it can be used to implement some complicated quantum algorithms.
We have demonstrated a strategy for compiling Qunity programs to low-level qubit-based unitary circuits and proven that our procedure preserves the semantics, demonstrating that this language does indeed have a physical interpretation and could be run on quantum hardware.
While classical computers can be modeled by logic gates acting on classical bits, it is far more convenient to use higher-level programming constructs for most tasks.
Qunity's design similarly abstracts away low-level qubit-based gates using techniques from quantum algorithms that are overlooked in existing languages.
We hope that Qunity's features can ease the implementation and analysis of complicated quantum algorithms written at a high level of abstraction.

\begin{acks}
We thank Andrew Childs for helping us to understand the quantum algorithm designer's perspective and for originally pointing us toward the \textsc{bqp} subroutine theorem, and we thank the anonymous referees for their helpful comments on a draft of this paper.
We thank Mikhail Mints for helping us with numerous corrections to the published version of this paper.
This material is based upon work supported by the U.S. Department of Energy, Office of Science, Office of \grantsponsor{ascr}{Advanced Scientific Computing Research}{https://doi.org/10.13039/100006192}, Quantum Testbed Pathfinder Program under Award Number
\grantnum{ascr}{DE-SC0019040}, and the \grantsponsor{afosr}{Air Force Office of Scientific Research}{https://doi.org/10.13039/100000181} under Grant No.
\grantnum{afosr}{FA95502110051}.
This work is funded in part by EPiQC, an \grantsponsor{ccf}{NSF}{10.13039/100000143} Expedition in Computing, under award \grantnum{ccf}{CCF-1730449}.
\end{acks}

\bibliographystyle{ACM-Reference-Format}
\begin{DIFnomarkup}
\bibliography{references.bib}
\end{DIFnomarkup}

\ifshort
\else
\appendix

\section{Quantum computing background}

Quantum computing is an emerging field at the intersection of theoretical computer science and quantum physics.
Quantum computers are designed to exploit physical phenomena that are uniquely quantum mechanical, such as superposition and entanglement.
Modern quantum computers are largely impractical and outperformed by ``classical'' (non-quantum) computers, but some quantum algorithms scale exponentially better than their best known classical counterparts, meaning that a practical quantum advantage may be possible in the near future if the hardware can be made scalable and precise.
There are plenty of textbooks that provide a good introduction to the field \cite{mike-and-ike,klm,mermin} and survey its current state \cite{2019-progress-prospects}; our purpose in this appendix is to provide a more streamlined overview of some of the lesser-known mathematical formalisms used throughout this work.

The denotational semantics of a deterministic program run on a finite classical computer can often be described by a function $f : S \to T$, where $S$ is some finite set of possible input states and $T$ is some finite set of possible output states, perhaps using a \emph{partial} function to model non-termination or error.
The size of the sets $S$ and $T$ scales exponentially with the physical size of the device: 64 physical bits of input means $2^{64}$ possible inputs.
The main difference in the quantum setting is that it is easier to model the input and output as finite-dimensional vector spaces rather than finite sets, and the function $f$ as a linear operator on these vector spaces.
Basis vectors correspond to classical states, and non-basis vectors represent a \emph{superposition}.
The exponential advantage of quantum computing comes from the ability to leverage these superpositions to enact useful computations.

\subsection{Hilbert spaces}
\label{app:hilbert}

We assume that the reader is familiar with \emph{inner product spaces}, vector spaces equipped with an inner product \cite[p.~167]{linear-algebra}.
A finite-dimensional \emph{Hilbert} space is simply a finite-dimensional inner product space \cite[p.~66]{mike-and-ike}.
(The definition of Hilbert space is a bit more restricted in the infinite-dimensional case, but we are only concerned with finite dimensions in this work.)
All of our Hilbert spaces will be denoted with the letter $\Hilb$ and will be over the field $\complex$ of complex numbers.

Bra-ket notation is a convenient notation for describing vectors and operators in Hilbert spaces.
A vector is written using ket notation, writing ``$\ket{\zero}$'' rather than ``$\vec{x}_\zero$'' used in other areas of mathematics.
The inner product of any vectors $\ket{\phi}$ and $\ket{\psi}$ is written $\ip{\phi}{\psi}$, and is linear in the second argument $\ket{\psi}$.
The outer product, written $\op{\phi}{\psi}$, is the linear operator defined such that $\ket{\phi}\!\!\ip{\psi}{\psi'} = \ip{\psi}{\psi'} \cdot \bra{\phi}$, where ``$\cdot$'' denotes scalar multiplication.
It is often useful to treat a ``ket'' $\ket{\psi}$ as a \emph{column vector} and a ``bra'' $\bra{\psi} = \ket{\psi}^\dagger$ as a \emph{row vector} (where ``$\dagger$'' denotes the conjugate transpose), and then both the inner and outer product are simply matrix multiplication.

The trivial Hilbert space is the set $\{0\}$, a zero-dimensional Hilbert space with only one vector, the zero vector.
The set $\complex$ forms a one-dimensional Hilbert space over the complex numbers, spanned by the one-element orthonormal basis $\{1\}$.
The inner product of two complex numbers $z_1$ and $z_2$ is $z_1^{*} z_2$.

Our algebraic data types in this paper rely on two ways of composing Hilbert spaces: (external) direct sums $\oplus$ and tensor products $\otimes$.
Given two Hilbert spaces $\Hilb\subcap{m}$ and $\Hilb\subcap{n}$ with $\dim(\Hilb\subcap{m}) = M$ and $\dim(\Hilb\subcap{n}) = N$, the direct sum and tensor product are defined so that $\dim(\Hilb\subcap{m} \oplus \Hilb\subcap{n}) = M + N$ and $\dim(\Hilb\subcap{m} \otimes \Hilb\subcap{n}) = M \cdot N$.
These operators apply to Hilbert spaces, as well as vectors and operators within these spaces.
For example, if $\ket{\phi} \in \Hilb\subcap{m}$ and $\ket{\psi} \in \Hilb\subcap{n}$, then $\ket{\phi} \oplus \ket{\psi} \in \Hilb\subcap{m} \oplus \Hilb\subcap{n}$ and $\ket{\phi} \otimes \ket{\psi} \in \Hilb\subcap{m} \otimes \Hilb\subcap{n}$, and if $E\subcap{m} : \Hilb\subcap{m,1} \to \Hilb\subcap{m,2}$ and $E\subcap{n} : \Hilb\subcap{n,1} \to \Hilb\subcap{n,2}$ are linear operators, then $E\subcap{m} \oplus E\subcap{n} : \Hilb\subcap{m,1} \oplus \Hilb\subcap{n,1} \to \Hilb\subcap{m,2} \oplus \Hilb\subcap{n,2}$ is also a linear operator, as well as $E\subcap{m} \otimes E\subcap{n} : \Hilb\subcap{m,1} \otimes \Hilb\subcap{n,1} \to \Hilb\subcap{m,2} \otimes \Hilb\subcap{n,2}$.
Following standard notational conventions, we sometimes write $\ket{\phi, \psi}$ for $\ket{\phi} \otimes \ket{\psi}$.

The tensor product will be familiar to anyone with a basic understanding of quantum computing.
It is used to describe \emph{joint states}.
If a qubit inhabits a two-dimensional Hilbert space $\Hilb_2$, and a qutrit inhabits a three-dimensional Hilbert space $\Hilb_3$, then the six-dimensional Hilbert space $\Hilb_2 \otimes \Hilb_3$ describes a physical system containing a qubit and a qutrit, possibly entangled.
In the context of programming languages, tensor products correspond nicely to \emph{product types}.

Qubit arrays serve as a sufficient physical model for most of the interesting results in quantum computing.
An $n$-qubit system is described by a $2^n$-dimensional Hilbert space, which is what allows quantum computing to achieve its famed exponential advantages.
For a programmer, however, this can be fairly limiting for expressiveness, effectively restricting one's data types to those whose cardinality is a power of two.
To allow for arbitrary finite data types, the direct sum is useful, serving as an ``\textsc{or}'' where the tensor product serves as ``\textsc{and}.''
The five-dimensional Hilbert space $\Hilb_2 \oplus \Hilb_3$ describes quantum states that are \emph{either} a qubit \emph{or} a qutrit, or a superposition of the two.

These constructions may be more intuitive given matrix representations.
The direct sum of two vectors is their concatenation, while the direct sum of two matrices is their block diagonalization, as shown by the examples in Figure~\ref{fig:dsum-matrix}.

\begin{figure}[ht]
\begin{align*}
	\ket{\phi} &\defeq \begin{bmatrix} a \\ b \end{bmatrix} & \ket{\psi} &\defeq \begin{bmatrix} c \\ d \\ e \end{bmatrix} \\
	\ket{\phi} \oplus \ket{\psi} &= \begin{bmatrix} a \\ b \\ c \\ d \\ e \end{bmatrix} & \ket{\phi} \otimes \ket{\psi} &= \begin{bmatrix} a c \\ a d \\ a e \\ b c \\ b d \\ b e \end{bmatrix} \\
\end{align*}
\[
\begin{aligned}
\begin{aligned}
	E\subcap{a} &\defeq \begin{bmatrix} f & g & h & i \\ j & k & l & m \\ n & o & p & q \end{bmatrix} \\ E\subcap{b} &\defeq \begin{bmatrix} r & s \\ t & u \\ v & w \end{bmatrix} \\
\end{aligned}
\qquad
	E\subcap{a} \oplus E\subcap{b} = \begin{bmatrix} f & g & h & i & 0 & 0 \\ j & k & l & m & 0 & 0 \\ n & o & p & q & 0 & 0 \\ 0 & 0 & 0 & 0 & r & s \\ 0 & 0 & 0 & 0 & t & u \\ 0 & 0 & 0 & 0 & v & w \end{bmatrix} \\
\end{aligned}
\]
\vspace{5mm}
\[
	E\subcap{a} \otimes E\subcap{b} = \begin{bmatrix} f r & f s & g r & g  s & h  r & h  s & i  r & i  s \\ f  t & f  u & g  t & g  u & h  t & h  u & i  t & i  u \\ f  v & f  w & g  v & g  w & h  v & h  w & i  v & i  w \\
j r & j s & k r & k  s & l  r & l  s & m  r & m  s \\ j  t & j  u & k  t & k  u & l  t & l  u & m  t & m  u \\ j  v & j  w & k  v & k  w & l  v & l  w & m  v & m  w \\
n r & n s & o r & o  s & p  r & p  s & q  r & q  s \\ n  t & n  u & o  t & o  u & p  t & p  u & q  t & q  u \\ n  v & n  w & o  v & o  w & p  v & p  w & q  v & q  w
	\end{bmatrix}
	\]
\caption{Matrix examples for direct sums and tensor products}
\label{fig:dsum-matrix}
\end{figure}

Suppose Hilbert space $\Hilb\subcap{m}$ is spanned by an orthonormal basis $B\subcap{m}$, and Hilbert space $\Hilb\subcap{n}$ is spanned by an orthonormal basis $B\subcap{n}$.
Then $\Hilb\subcap{m} \oplus \Hilb\subcap{n}$ is spanned by an orthonormal basis $\{\ket{u} \oplus 0 : \ket{u} \in B\subcap{m}\} \cup \{ 0 \oplus \ket{v} : \ket{v} \in B\subcap{n}\}$, and $\Hilb\subcap{m} \otimes \Hilb\subcap{n}$ is spanned by an orthonormal basis $\{\ket{u} \otimes \ket{v} : \ket{u} \in B\subcap{m}, \ket{v} \in B\subcap{n} \}$.
The following identities may be useful for understanding operations on these spaces.

\begin{align*}
	\left( \ket{\phi_\zero} \oplus \ket{\psi_\zero} \right) + \left( \ket{\phi_\one} \oplus \ket{\psi_\one} \right) &= \left( \ket{\phi_\zero} + \ket{\phi_\one} \right) \oplus \left( \ket{\psi_\zero} + \ket{\psi_\one} \right) \\
	z \left( \ket{\phi} \oplus \ket{\psi} \right) &= z \ket{\phi} \oplus z \ket{\psi} \\
	\left( \ket{\phi} \oplus \ket{\psi} \right)^\dagger &= \bra{\phi} \oplus \bra{\psi} \\
	\left( \bra{\phi_\zero} \oplus \bra{\psi_\zero} \right) \left( \ket{\phi_\one} \oplus \ket{\psi_\one} \right) &= \ip{\phi_\zero}{\phi_\one} + \ip{\psi_\zero}{\psi_\one} \\
	(E\subcap{m} \oplus E\subcap{n}) \left(\ket{\phi} \oplus \ket{\psi} \right) &= E\subcap{m} \ket{\phi} \oplus E\subcap{n} \ket{\psi} \\
	z \ket{\phi} \otimes \ket{\psi} &= \ket{\phi} \otimes z \ket{\psi} \\
																	&= z \left( \ket{\phi} \otimes \ket{\psi} \right) \\
	\left( \ket{\phi} \otimes \ket{\psi} \right)^\dagger &= \bra{\phi} \otimes \bra{\psi} \\
	\left( \bra{\phi_\zero} \otimes \bra{\psi_\zero} \right) \left( \ket{\phi_\one} \otimes \ket{\psi_\one} \right) &= \ip{\phi_\zero}{\phi_\one} \cdot \ip{\psi_\zero}{\psi_\one} \\
	(E\subcap{m} \otimes E\subcap{n}) \left(\ket{\phi} \otimes \ket{\psi} \right) &= E\subcap{m} \ket{\phi} \otimes E\subcap{n} \ket{\psi}
\end{align*}

Category theory can help describe the relationship between the direct sum and tensor product.
Together, they make the category of finite-dimensional vector spaces and linear operators into a \emph{symmetric bimonoidal category} \cite[p.~130]{symmetric-bimonoidal}, also known as a ``rig category.''
This means that the category of finite-dimensional vector spaces is a monoidal category with respect to both the direct sum and the tensor product, and there are isomorphisms with some well-characterized properties similar to what one would expect from addition and multiplication.
In particular, there is a distributivity isomorphism from $\Hilb\subcap{a} \otimes \left(\Hilb\subcap{b} \oplus \Hilb\subcap{c}\right)$ to $\Hilb\subcap{a} \otimes \Hilb\subcap{b} \oplus \Hilb\subcap{a} \otimes \Hilb\subcap{c}$, mapping $\ket{a} \otimes \left( \ket{b} \oplus \ket{c} \right)$ to $\ket{a} \otimes \ket{b} \oplus \ket{a} \otimes \ket{c}$.

Qunity's semantics is defined in terms of norm-non-increasing operators, operators $E$ such that $\mathbb{I} - E^\dagger E$ is a positive operator.
It is not hard to see that these operators are closed under composition as well as the direct sum and tensor product, meaning that there is also a symmetric bimonoidal category of finite-dimensional Hilbert spaces and norm-non-increasing operators.
This is the category on which Qunity's pure denotational (categorical) semantics is defined.

\subsection{Operators and superoperators}
\label{app:superop}

Quantum physicists have two main ways to mathematically represent a quantum state: a \emph{pure} state is represented by a state \emph{vector}, while a \emph{mixed} state is represented by a density \emph{operator} (or density \emph{matrix}).
A computation on a pure state is described by a linear operator from the input Hilbert space to the output Hilbert space, while a computation on mixed states is described by a \emph{superoperator}, a sort of higher-order function that acts as a linear operator from density operators to density operators.\footnote{Superoperators are technically operators by definition, but throughout this work we often contrast them with the ``pure'' operators used to describe actions on pure states, describing the latter simply as ``operators.''}

Existing textbooks \cite[Section~3.5]{klm} have a good overview of the mathematical formalisms involved; our purpose here is to justify the need for both of these kinds of computations in quantum programming languages.
Some readers may be concerned that while Qunity's type system breaks down the distinction between quantum and classical types, it creates a new distinction between pure and mixed programs.
In Appendix~\ref{app:classical-semantics}, we illustrate that the pure/mixed distinction is fairly different from the classical/quantum one, and in this section we hope to justify that the expressiveness gained from including the pure/mixed distinction is worth the more complicated typing rules.

One main distinction between operators and superoperators is \emph{reversibility}: operators are reversible, while superoperators are not.
By ``reversible,'' we do not mean strictly \emph{invertible}, as Qunity's semantics allows for non-invertible norm-decreasing operators.
Rather, we mean that a linear operator $E : \Hilb_1 \to \Hilb_2$ has an \emph{adjoint} operator $E^\dagger : \Hilb_2 \to \Hilb_1$ \cite[p.~204]{linear-algebra}, and the adjoint is implementable as a circuit whenever the original is implementable (as shown later in Lemma~\ref{lem:compile-adj}).
If one is focused only on unitary operators, then the adjoint is the inverse.

It is useful to draw an analogy with classical reversible computing.
For Qunity's classical sublanguage (Appendix~\ref{app:sublanguage}), the semantics of a program can be described by a partial function over values.
For a purely-typed program, this partial function is injective.
Any injective partial function is invertible, and this inverse partial function is implementable in Qunity as $f^{\dagger {\color{gray}{T}}}$ in Figure~\ref{fig:notation}.
By encoding reversibility into the typing judgment, Qunity informs the programmer which programs are reversible and allows only those programs to be reversed.
Many quantum algorithms depend on the ability to implement adjoint operators, and Qunity allows this to be relatively seamless.

The other main advantage of operators over superoperators has no classical analog: global phase.
In particular, the identity operator $\mathbb{I}$ is distinct from the negative identity operator $-\mathbb{I}$, even though both correspond to the same (identity) superoperator.
This is true for any two operators that differ by a global phase, and it raises an important question for the semantics of any quantum programming language: should these two programs be semantically identical (using superoperator semantics), or should they be distinct (using operator semantics)?
Qunity solves this problem in a somewhat unusual way, defining two distinct (but interrelated) semantics: one of operators, and one of superoperators.
Ying's Qu\textsc{Gcl} \cite[Chapter~6]{mingsheng-ying} takes the superoperator semantics approach, but this is a problem for a language with quantum control.
While the operators $\mathbb{I}$ and $-\mathbb{I}$ correspond to the same superoperator, the controlled operators $\mathbb{I} \oplus \mathbb{I}$ and $\mathbb{I} \oplus (-\mathbb{I})$ (or equivalently, $\ket{\zero} \otimes \mathbb{I} + \ket{\one} \otimes \mathbb{I}$ and $\ket{\zero} \otimes \mathbb{I} - \ket{\one} \otimes \mathbb{I}$) do not.
This leads to an unfortunate problem where Qu\textsc{Gcl}'s semantics is non-compositional, and ``equivalent'' expressions are not substitutable.\footnote{Qu\textsc{Gcl}'s non-compositional superoperator semantics is defined in terms of a compositional semantics based on lists of Kraus operators. If one restricts attention solely to the Kraus operator semantics, then compositionality is not an issue, but one loses the benefits of the superoperator formalism. Somewhat confusingly, Ying uses equals notation ``='' to denote equivalence of the non-compositional superoperator semantics.}

Some languages have taken the opposite approach, ignoring superoperators and restricting attention to pure computation only.
For example, semantics of the symmetric pattern matching language \cite{symmetric-pattern-matching} consists solely of unitary operators.
In one sense, no computational power is lost by taking this route.
Any irreversible computation can be made reversible by keeping track of all intermediate ``garbage'' data.
As a classical example, the irreversible function $\textsc{and} : \{\zero,\one\}^2 \to \{\zero,\one\}$ can be converted into a bijective function $\textsc{reversible-and} : \{\zero,\one\}^3 \to \{\zero, \one\}^3$ defined as $\textsc{reversible-and}(a, b, c) \defeq \left(\textsc{xor}\left(c, \textsc{and}(a, b)\right), a, b\right)$.
This function is bijective, and the behavior of the original \textsc{and} function can be recovered by inputting $c=0$ and discarding the garbage outputs $a$ and $b$.
The quantum generalization of this idea is the \emph{Stinespring dilation} \cite[p.~186]{stinespring-dilation}, allowing any superoperator to be represented as an operator, using extra inputs and discarded outputs.
Some interpretations of quantum mechanics assert that \emph{all} interactions are unitary, describing even seemingly nonunitary decoherence as really unitary dynamics in some larger Hilbert space \cite{kinematics-dynamics}.

In a programming language, the main problem with requiring all programs to be reversible is that it forces the programmer to perform a lot of unnecessary bookkeeping. 
Classical programmers are used to composing programs without regard to reversibility, and requiring programs like \textsc{reversible-and} is cumbersome and unintuitive.
Even without Qunity's goal of unifying quantum and classical computing, mixed states are often the most convenient way for quantum algorithm designers to describe intermediate states in quantum programs \cite[p.~25]{childs-van-dam-review}, and this is impossible in a language that only describes reversible operations on pure states.

One other reason for including superoperator semantics is the fact that it does a good job of capturing the aspects of a computation that are physically distinguishable.
Though the $\mathbb{I}$ and $-\mathbb{I}$ operators described previously have different behavior when subject to quantum control, they are indistinguishable when uncontrolled.
Given black-box access, it is impossible to determine which is which by simply feeding states in and measuring the output.
The Stinespring dilation is not unique in general, and radically different unitary operators can give rise to the same superoperator.
The superoperator formalism allows one to treat more programs as semantically identical, as long as they are not subject to quantum control (which in Qunity is only allowed for pure programs).
This canonicity can have practical advantages, like a wider range of valid compiler optimizations.

It is for these reasons that we made the unusual choice of giving Qunity two different (but interrelated) semantics, one of operators and one of superoperators.
An operator semantics is necessary for expressing reversibility and control, but insufficient for implementing any function that is not bijective.
Both pure states and mixed states are widely used in the analysis of quantum algorithms, and one or the other may be more convenient depending on the context.
Quantum algorithm designers are used to switching between the two representations, and Qunity is designed to accommodate this fluidity.

\section{Classical Sublanguage}
\label{app:sublanguage}

In this appendix, we consider the sublanguage of Qunity defined by removing the two uniquely quantum constructs: \texttt{u3} and \texttt{rphase}.
\begin{align*}
	e &\defeqq \unit \mid x \mid \pair{e}{e} \mid \cntrl{e}{T}{e &\mapsto e \\ &\cdots \\ e &\mapsto e}{T} \mid \trycatch{e}{e} \mid f\: e \\
	f &\defeqq \lef T T \mid \rit T T \mid \lambda e \xmapsto{{\color{gray}{T}}} e
\end{align*}

We will show that this sublanguage is reasonably expressive, has a classical semantics that closely corresponds to full Qunity's quantum semantics, and can be executed on classical hardware, proving Theorems~\ref{thm:mlpi} and \ref{thm:classical-generalization}.

\subsection{Classical semantics}
\label{app:classical-semantics}

First, we will show how the semantics of this classical sublanguage can be classically interpreted.
We will use the same typing judgment as in full Qunity but without the \textsc{T-Gate} and \textsc{T-Rphase} rules.
Like Qunity's full semantics, this classical semantics is a recursive function of the typing judgment, but we will use $\psem{\cdot}$ to denote these partial functions on values rather than $\msem{\cdot}$ to represent linear operators on Hilbert spaces.
The use of \emph{partial} functions corresponds to our use of \emph{norm-decreasing} operators and \emph{trace-decreasing} superoperators, where an input ouside the domain of definition corresponds to a failure, a norm or trace of zero.
(The \texttt{iso} judgment defined in Appendix~\ref{app:iso} would guarantee a total function.)
The distinction between ``pure'' and ``mixed'' typing relations in the quantum setting corresponds to the distinction between coherent operators and superoperators (which may discard information), but in our restricted classical setting, the corresponding notion will be whether a partial function is \emph{injective} (that is, reversible).
Though we are working in a classical setting, we will continue to refer to the ``pure semantics'' and ``mixed semantics'' to draw the connection to Qunity's existing typing judgment and to avoid the rather wordy ``not necessarily injective semantics.''
Precisely:
\begin{itemize}
	\item
		If $\Gamma\partition \Delta \vdash e : T$ and $\sigma \in \V(\Gamma)$, then $\psem{\sigma : \Gamma \partition \Delta \vdash e : T} : \V(\Delta) \rightharpoonup \V(T)$ is an injective partial function that defines the pure classical semantics of expression $e$.
		We give the denotation in Figure~\ref{fig:sem-pure-c-expr}.
		The $\sigma$ is a sort of ``optional data,'' so the pure classical expression semantics may be viewed as a two-parameter partial function $\V(\Gamma) \times \V(\Delta) \rightharpoonup \V(T)$, injective in its second argument.  \item If $\Delta \Vdash e : T$, then $\psem{\Delta \Vdash e : T} : \V(\Delta) \rightharpoonup \V(T)$ is a (not necessarily injective) partial function that defines the mixed classical semantics of expression $e$.
		We give the denotation in Figure~\ref{fig:sem-mixed-c-expr}.
	\item
		If $\vdash f : T \rightsquigarrow T'$, then $\psem{\vdash f : T \rightsquigarrow T'} : \V(T) \rightharpoonup \V(T')$ is an injective partial function that defines the pure classical semantics of program $f$.
		We give the denotation in Figure~\ref{fig:sem-pure-c-prog}.
	\item
		If $\vdash f : T \Rrightarrow T'$, then $\psem{\vdash f : T \Rrightarrow T'} : \V(T) \rightharpoonup \V(T')$ is a (not necessarily injective) partial function that defines the mixed semantics of program $f$.
		We give the denotation in Figure~\ref{fig:sem-mixed-c-prog}.
\end{itemize}

\begin{figure}[th]
\begin{alignat*}{2}
	\psem{\sigma : \Gamma \partition \varnothing \vdash \unit : \Unit} (\varnothing) &\defeq&\;& \unit \\
	\psem{\sigma : \Gamma \partition \varnothing \vdash x : T} (\varnothing) &\defeq&& \sigma(x) \\
	\psem{\sigma : \Gamma \partition x : T \vdash x : T} (x \mapsto v) &\defeq&& v \\
 \psem{\sigma : \Gamma \partition \Delta, \Delta_0, \Delta_1 \vdash \pair{e_0}{e_1} : T_0 \otimes T_1} (\tau, \tau_0, \tau_1) &\defeq&& { \texttt{\LARGE(}} \psem{\sigma : \Gamma \partition \Delta, \Delta_0 \vdash e_0 : T_0} (\tau, \tau_0) {{\texttt{\LARGE,}}} \\
																																																																&&&\psem{\sigma : \Gamma \partition \Delta, \Delta_1 \vdash e_1 : T_1}(\tau, \tau_1) { \texttt{\LARGE)}} \\
 \psem{\sigma, \sigma' : \Gamma, \Gamma' \partition \Delta, \Delta' \vdash \cntrl{e}{T}{e_1 &\mapsto e_1' \\ &\cdots \\ e_n &\mapsto e_n'}{T'} : T'} (\tau,\tau') &\defeq&&  \psem{\sigma, \sigma_j : \Gamma, \Gamma_j \partition \Delta,\Delta' \vdash e_j' : T'} (\tau, \tau') \\
 \text{\ldots for } j, \sigma_j \text{ such that } \psem{\Gamma, \Delta \Vdash e : T} (\sigma, \tau) &=&& \psem{\varnothing : \varnothing \partition \Gamma_j \vdash e_j : T}(\sigma_j) \\
 \psem{\sigma : \Gamma \partition \Delta \vdash f\; e : T} (\tau) &\defeq&& \psem{\vdash f : T \rightsquigarrow T'} \left( \psem{\sigma : \Gamma \partition \Delta \vdash e : T} (\tau) \right) \\
 \psem{\pi\subcap{g}(\sigma) : \pi\subcap{g}(\Gamma) \partition \pi\subcap{d}(\Delta) \vdash e : T} \left(\pi\subcap{d}(\tau)\right) &\defeq&& \psem{\sigma : \Gamma \partition \Delta \vdash e : T} (\tau)
\end{alignat*}
\caption{Pure classical expression semantics}
\label{fig:sem-pure-c-expr}
\end{figure}

\begin{figure}[th]
\begin{alignat*}{2}
	\psem{\Delta, \Delta_{0} \Vdash e : T}(\tau,\tau_0) &\defeq&\;& \psem{\varnothing : \varnothing \partition \Delta \vdash e : T} (\tau) \\
	\psem{\Delta, \Delta_0, \Delta_1 \Vdash \pair{e_0}{e_1} : T_0 \otimes T_1}(\tau, \tau_0, \tau_1) &\defeq&& { \texttt{\LARGE(}} \psem{\Delta, \Delta_0 \Vdash e_0 : T_0}(\tau, \tau_0) {{\texttt{\LARGE,}}} \\
																																					 &&&\psem{\Delta, \Delta_1 \Vdash e_1 : T_1}(\tau, \tau_1) { \texttt{\LARGE)}} \\
\psem{\Delta_0,\Delta_1 \Vdash \trycatch{e_0}{e_1} : T}( \tau_0,\tau_1 ) &\defeq&& \begin{cases} \psem{\Delta_0 \Vdash e_0 : T}( \tau_0 ) &\text{ if defined} \\ \psem{\Delta_1 \Vdash e_1 : T}( \tau_1 ) &\text{ otherwise} \end{cases} \\
\psem{\Delta \Vdash f\;e : T'}(\tau) &\defeq&& \psem{\vdash f : T \Rrightarrow T'}\left(\psem{\Delta \Vdash e : T}(\tau)\right) \\
 \psem{\pi(\Delta) \Vdash e : T} (\pi(\tau) &\defeq&& \psem{\Delta \Vdash e : T} (\tau)
\end{alignat*}
\caption{Mixed classical expression semantics}
\label{fig:sem-mixed-c-expr}
\end{figure}

\begin{figure}[th]
\begin{align*}
	\psem{\vdash \lef{T_0}{T_1} : T_0 \rightsquigarrow T_0 \oplus T_1}(v) &\defeq \lef{T_0}{T_1}\: v \\
	\psem{\vdash \rit{T_0}{T_1} : T_1 \rightsquigarrow T_0 \oplus T_1}(v) &\defeq \rit{T_0}{T_1}\: v \\
	\psem{\vdash \lambda e \xmapsto{{\color{gray}T}} e' : T \rightsquigarrow T'}(v) &\defeq \psem{\varnothing : \varnothing \partition \Delta \vdash e' : T'} (\tau) \\
	\text{\ldots for } \tau \text{ such that } \psem{\varnothing : \varnothing \partition \Delta \vdash e : T} (\tau) &= v
\end{align*}
\caption{Pure classical program semantics}
\label{fig:sem-pure-c-prog}
\end{figure}

\begin{figure}[th]
\begin{align*}
	\psem{\vdash f : T \Rrightarrow T'}(v) &\defeq \psem{\vdash f : T \rightsquigarrow T'}(v) \\
	\psem{\Vdash \lambda e \xmapsto{{\color{gray}T}} e' : T \Rrightarrow T'}(v) &\defeq \psem{\Delta \Vdash e' : T'} (\tau) \\
	\text{\ldots for } \tau, \tau_0 \text{ such that } \psem{\varnothing : \varnothing \partition \Delta, \Delta_0 \vdash e : T} (\tau, \tau_0) &= v
\end{align*}
\caption{Mixed classical program semantics}
\label{fig:sem-mixed-c-prog}
\end{figure}

It is not immediately obvious that each of these definitions corresponds to a well-defined partial function.
In particular, the classical denotation of lambdas are defined ``for $\tau$ such that \ldots,'' and the classical denotation of \texttt{ctrl} similarly depends on some $j$ and $\sigma_j$, and these partial functions will be ill-defined if there is no satisfying assignment.
The uniqueness of the $\tau$ and $\sigma_j$ comes from the fact that the pure classical semantics is injective by definition.
We could have equivalently defined the pure classical denotation of lambdas in terms of inverse (partial) functions, which always exist for injective partial functions:
\[
\psem{\vdash \lambda e \xmapsto{{\color{gray}T}} e' : T \rightsquigarrow T'} = \psem{\varnothing : \varnothing \partition \Delta \vdash e' : T'} \circ \psem{\varnothing : \varnothing \partition \Delta \vdash e : T}^{-1}
\]
The fact that the $j$ used in \texttt{ctrl} semantics can be chosen uniquely comes from the \texttt{ortho} judgment, which ensures that all of the patterns $e_j$ are non-overlapping.

To be clear, these are \emph{partial} functions and may be undefined on some inputs.
In some cases, there may be no satisfying $\tau$.
For the partial function to be well-defined, it is only required for there to be \emph{at most} one output for every unique input.
Following existing conventions \cite[p.~16]{tapl}, we use the symbol $\uparrow$ to indicate when a partial function is undefined.
For example, $\vdash \lambda \zero \xmapsto{{\color{gray}\Bit}} \zero : \Bit \rightsquigarrow \Bit$ is a valid typing judgment because $\varnothing \partition \varnothing \vdash \zero : \Bit$ is valid, but $\psem{\vdash \lambda \zero \xmapsto{{\color{gray}\Bit}} \zero : \Bit \rightsquigarrow \Bit}(\one) \uparrow$ because there is no valuation $\tau$ such that $\psem{\varnothing : \varnothing \partition \varnothing \vdash \zero : \Bit}(\tau) = \one$.
Our definitions above assume implicitly that the semantic function is undefined if any of its subexpressions is undefined; for example, the semantics of a pair depends on the semantics of the subexpressions, and if either of these is undefined, then the pair semantics is undefined as well.

Because the denotational semantics is defined as a function of the typing judgment, we do not have an explicit type soundness theorem, as a valid semantics is already defined for any well-typed program.
Sometimes, one may desire a stronger notion of type soundness, ensuring not only that the semantics exists, but that it is defined for all possible inputs; that is, it is a \emph{total} function.
In these cases, one can use the \texttt{iso} judgment defined in Appendix~\ref{app:iso} to check for a class of error-free programs.
Conjecture~\ref{thm:iso} asserts that each classical program matched by the \texttt{iso} judgment has a total function as its classical semantics.

These definitions should make it clear that Qunity's pure/mixed distinction is very different from the classical/quantum distinction that it is designed to eliminate.
The pure/mixed distinction describes \emph{reversibility}, which can matter in both classical and quantum settings.

\subsection{Classical Sublanguage Proofs}
\label{app:mlpi}

In this appendix, we prove the theorems from Section~\ref{sec:classical-semantics}.
Using the definitions from Section~\ref{sec:denotation} and Appendix~\ref{app:classical-semantics}, we can restate Theorem~\ref{thm:classical-generalization} a bit more precisely:

\begingroup
\def\thetheorem{\ref{thm:classical-generalization}}
\begin{theorem}
	The semantics $\psem{\cdot}$ of the classical sublanguage of Qunity corresponds to the quantum semantics $\msem{\cdot}$ of the full language in the following way, where ``$\Longleftrightarrow$'' denotes ``if and only if'':
	\begin{align*}
		\psem{\sigma : \Gamma \partition \Delta \vdash e : T}(\tau) = v &\iff \msem{\sigma : \Gamma \partition \Delta \vdash e : T}\ket{\tau} = \ket{v} \\
		\psem{\sigma : \Gamma \partition \Delta \vdash e : T}(\tau) \uparrow &\iff \msem{\sigma : \Gamma \partition \Delta \vdash e : T}\ket{\tau} = 0 \\
		\psem{\Delta \Vdash e : T}(\tau) = v &\iff \msem{\Delta \Vdash e : T}\left(\op{\tau}{\tau}\right) = \op{v}{v} \\
		\psem{\Delta \Vdash e : T}(\tau) \uparrow &\iff \msem{\Delta \Vdash e : T}\left(\op{\tau}{\tau}\right) = 0 \\
		\psem{\vdash f : T \rightsquigarrow T'}(v) = v' &\iff \msem{\vdash f : T \rightsquigarrow T'}\ket{v} = \ket{v'} \\
		\psem{\vdash f : T \rightsquigarrow T'}(v) \uparrow &\iff \msem{\vdash f : T \rightsquigarrow T'}\ket{v} = 0 \\
		\psem{\vdash f : T \Rrightarrow T'}(v) = v' &\iff \msem{\vdash f : T \Rrightarrow T'}\left(\op{v}{v}\right) = \op{v'}{v'} \\
		\psem{\vdash f : T \Rrightarrow T'}(v) \uparrow &\iff \msem{\vdash f : T \Rrightarrow T'}\left(\op{v}{v}\right) = 0 \\
	\end{align*}
\end{theorem}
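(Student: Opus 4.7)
The plan is to proceed by mutual induction on the structure of the typing derivation for all four judgments simultaneously; by Theorem~\ref{thm:well-defined} the denotation does not depend on the choice of derivation, so any convenient one may be fixed. The induction invariants are the four ``defined'' equivalences of the statement, and the four ``undefined'' equivalences will follow automatically once I establish that, restricted to the classical sublanguage, every quantum denotation sends standard-basis inputs either to a single basis output (or basis outer product) or to zero.

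First, I would dispatch the base cases \textsc{T-Unit}, \textsc{T-CVar}, \textsc{T-QVar}, \textsc{T-Left}, and \textsc{T-Right} by direct inspection of Figures~\ref{fig:sem-pure-expr} and \ref{fig:sem-pure-prog}, where the quantum denotations literally map $\ket{v}$ to $\ket{v'}$ for the same $v'$ computed classically. I would then handle the ``structural'' inductive cases \textsc{T-PurePair}, \textsc{T-MixedPair}, \textsc{T-PureApp}, \textsc{T-MixedApp}, \textsc{T-PurePerm}, \textsc{T-MixedPerm}, \textsc{T-Mix}, and \textsc{T-Channel} by unfolding their denotations to tensor products, compositions, or permutations of sub-denotations and invoking the identities $\ket{v_0} \otimes \ket{v_1} = \ket{\pair{v_0}{v_1}}$ and $(\op{v_0}{v_0}) \otimes (\op{v_1}{v_1}) = \op{\pair{v_0}{v_1}}{\pair{v_0}{v_1}}$; each combinator plainly preserves the basis-to-basis-or-zero property, so the induction hypothesis gives the result. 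The \textsc{T-Try} case then reduces to observing that, by the induction hypothesis, $\tr(\msem{\Delta_0 \Vdash e_0 : T}(\op{\tau_0}{\tau_0}))$ equals $1$ exactly when $\psem{\Delta_0 \Vdash e_0 : T}(\tau_0)$ is defined, so the $(1 - \tr(\cdot))$ factor acts as the classical ``otherwise'' switch.

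The main obstacle will be \textsc{T-Ctrl}, whose denotation is a double sum $\sum_v \sum_{j, \sigma_j}$ of products of three factors. On a basis input $(\sigma, \tau)$, the induction hypothesis on $e$ should collapse the first factor to the indicator of $v = \psem{\Gamma, \Delta \Vdash e : T}(\sigma, \tau)$, and Lemma~\ref{lem:ortho-sem} combined with the induction hypothesis on each pattern $e_j$ should force $\bra{\sigma_j} \msem{\varnothing \partition \Gamma_j \vdash e_j : T}^\dagger \ket{v}$ to be a $\{0,1\}$-valued indicator of $\psem{\varnothing \partition \Gamma_j \vdash e_j : T}(\sigma_j) = v$. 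The orthogonality of $\{e_1, \ldots, e_n\}$ then guarantees that at most one pair $(j, \sigma_j)$ contributes; the surviving term (if any) is exactly the classical branch, with its third factor handled by the induction hypothesis on $e_j'$. The delicate part is the bookkeeping between the two sums: I need to confirm that no cross-terms from distinct branches conspire to produce nonzero off-basis amplitudes, which should reduce to the fact that the image spaces of the $\msem{e_j}$ are mutually orthogonal by Lemma~\ref{lem:ortho-sem}.

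Finally, the lambda rules \textsc{T-PureAbs} and \textsc{T-MixedAbs} reduce to the observation that the induction hypothesis on the pattern $e$ says that $\msem{e}$ maps basis vectors to basis vectors or to zero, so its adjoint on a basis $\ket{v}$ yields the unique basis preimage $\ket{\tau}$ with $\psem{e}(\tau) = v$ (or zero if none exists); the partial trace in \textsc{T-MixedAbs} collapses the discarded context $\Delta_0$ in the standard basis to a single outer product, matching the classical discarding of $\tau_0$. Composing with the induction hypothesis on $e'$ then closes both cases and completes the mutual induction.
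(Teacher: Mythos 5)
Your proposal matches the paper's own proof: a mutual structural induction on the typing judgment in which almost every case follows directly from the definitions (the denotations being defined on the standard basis and extended by linearity), with \textsc{T-Ctrl} singled out as the one nontrivial case and resolved exactly as you describe—the orthogonality of the patterns forces at most one pair $(j, \sigma_j)$ to contribute a nonzero $\{0,1\}$-valued factor, so the double sum collapses to the single classical branch or to zero. The strengthened invariant you state (basis-to-basis-or-zero, with injectivity for the pure judgments) is precisely what the paper relies on implicitly, so the argument goes through.
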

\addtocounter{theorem}{-1}
\endgroup
\begin{proof}
	We use (mutual) structural induction on the typing judgment.
	Because of the way that denotation $\msem{\cdot}$ is defined on the standard basis and extended by linearity, in most cases it is straightforward to see that this follows directly from the definitions.
	Here, we will consider the one case that is slightly more complicated: \textsc{T-Ctrl}.

			First, consider the case where $\psem{\sigma, \sigma' : \Gamma, \Gamma' \partition \Delta, \Delta' \vdash \cntrl{e}{T}{ \cdots }{T'} : T'} (\tau,\tau')$ is undefined.
			In this case, there exists no unique $j, \sigma_j$ such that $\psem{\Gamma, \Delta \Vdash e : T} (\sigma, \tau) = \psem{\varnothing : \varnothing \partition \Gamma_j \vdash e_j : T}(\sigma_j)$, so the summation in the \texttt{ctrl} denotation produces zero, as required.
			We can then assume that there exists a unique $j, \sigma_j, v$ such that $\psem{\Gamma, \Delta \Vdash e : T} (\sigma, \tau) = \psem{\varnothing : \varnothing \partition \Gamma_j \vdash e_j : T}(\sigma_j) = v$.
			Then:
			\begin{alignat*}{2}
				\bra{v} \msem{\Gamma, \Delta \Vdash e : T}\left( \op{\sigma, \tau}{\sigma, \tau} \right) \ket{v} &=&& 1 \\
				\bra{v'} \msem{\Gamma, \Delta \Vdash e : T}\left( \op{\sigma, \tau}{\sigma, \tau} \right) \ket{v'} &=&& 0 \text{ for all } v' \neq v \\ 
				\bra{\sigma_j} \msem{\varnothing : \varnothing \partition \Gamma_j \vdash e_j : T}^\dagger \ket v &=&& \bra{v} \msem{\varnothing : \varnothing \partition \Gamma_j \vdash e_j : T} \ket{\sigma_j} \\
																																																					&=&& 1 \\
				\bra{\sigma_k} \msem{\varnothing : \varnothing \partition \Gamma_k \vdash e_k : T}^\dagger \ket v &=&& 0 \text{ for all other } k, \sigma_k \\
				\msem{\sigma, \sigma_j : \Gamma, \Gamma_j \partition \Delta,\Delta' \vdash e_j' : T'} \ket{\tau, \tau'} &=&& \psem{\sigma, \sigma_j : \Gamma, \Gamma_j \partition \Delta,\Delta' \vdash e_j' : T'} (\tau, \tau') \\
	\msem{\sigma, \sigma' : \Gamma, \Gamma' \partition \Delta, \Delta' \vdash \cntrl{e}{T}{\cdots}{T'} : T'} \ket{\tau,\tau'} &=&& \sum_{v \in \V(T)} \bra{v} \msem{\Gamma, \Delta \Vdash e : T}\left( \op{\sigma, \tau}{\sigma, \tau} \right) \ket v \\
																																														 &&& \cdot \sum_{j=1}^n \sum_{\sigma_j \in \V(\Gamma_j)} \bra{\sigma_j} \msem{\varnothing : \varnothing \partition \Gamma_j \vdash e_j : T}^\dagger \ket v \\
																																														 &&&\cdot \msem{\sigma, \sigma_j : \Gamma, \Gamma_j \partition \Delta,\Delta' \vdash e_j' : T'} \ket{\tau, \tau'} \\
																																														 &=&& \psem{\sigma, \sigma_j : \Gamma, \Gamma_j \partition \Delta,\Delta' \vdash e_j' : T'} (\tau, \tau') \\
																																														 &=&& \psem{\sigma, \sigma' : \Gamma, \Gamma' \partition \Delta, \Delta' \vdash \cntrl{e}{}{\cdots}{} : T'} (\tau,\tau')
		 \end{alignat*}
\end{proof}

This classical sublanguage thus has a reasonable semantics that naturally generalizes to the standard basis in the quantum case.
The compilation procedure described in Section~\ref{sec:compilation} and Appendix~\ref{app:compilation} can similarly be adapted to compiling this sublanguage to classical hardware.
By applying the compilation procedure to this sublanguage, the resulting circuit consists entirely of classical gates: \textsc{swap}, \textsc{not}, and controlled gates.
The only exceptions are the \texttt{rphase} and \texttt{u3} constructs, which are excluded from the sublanguage.
The resulting classical logic circuit can be run (or simulated) on classical hardware.

One may be concerned that this classical sublanguage lacks the expressiveness expected from a typical classical programming language, especially with the strict \emph{relevant} typing requirements of the \texttt{ctrl} expression, the main way to perform computation conditional on some intermediate state.
To make it clear that this is not an issue, we now show that any program written in an existing classical language can be translated into Qunity.
We use the existing $\Pi$ and $\MLPi$ languages \cite{mlpi}, as the (ir)reversibility of these languages corresponds well to Qunity's pure and mixed typing, and a procedure already exists for translating a more typical classical functional language into these.
A combinator {\color{teal} $c$} in the $\Pi$ language is a (reversible) program that can be written as a pure classical Qunity program, while an \emph{arrow computation} {\color{teal} $a$} in the $\MLPi$ language is a (potentially irreversible) program that can be written as a mixed classical Qunity program.
Figure~\ref{fig:mlpi-syntax} shows the syntax of these languages.
Throughout this section, we color the text of these terms to make them more visually distinct from Qunity's syntax, and we use $\MLPi$ types {\color{teal} $b$} interchangeably with Qunity types $T$.

\begin{figure}[ht]
	\begin{align*}
		{\color{teal} b} &\defeqq {\color{teal} 1} \mid {\color{teal} b + b} \mid {\color{teal} b \times b} \\
		{\color{teal} iso} &\defeqq {\color{teal} swap^{+}} \mid {\color{teal} assocl^{+}} \mid {\color{teal} assocr^{+}} \mid {\color{teal} unite} \mid {\color{teal} uniti} \mid {\color{teal} swap^{\times}} \mid {\color{teal} assocl^{\times}} \mid {\color{teal} assocr^{\times}} \mid {\color{teal} distrib} \mid {\color{teal} factor} \\
		{\color{teal} c} &\defeqq {\color{teal} iso} \mid {\color{teal} id} \mid {\color{teal} sym\: c} \mid {\color{teal} c; c} \mid {\color{teal} c + c} \mid {\color{teal} c \times c} \\
		{\color{teal} a} &\defeqq {\color{teal} iso} \mid {\color{teal} a + a} \mid {\color{teal} a \times a} \mid {\color{teal} a; a} \mid {\color{teal} \texttt{arr}\: a} \mid {\color{teal} a >\!>\!> a} \mid {\color{teal} \texttt{left}\: a} \mid {\color{teal} \texttt{first}\: a} \mid {\color{teal} \texttt{create}_b} \mid {\color{teal} \texttt{erase}} \\
	\end{align*}
	\caption{$\Pi$ and $\MLPi$ languages \cite{mlpi}}
	\label{fig:mlpi-syntax}
\end{figure}

First, we show how a $\Pi$ combinator is translated into a pure classical Qunity program.
We will write $\qiso{T}{T'}({\color{teal}c})$ for the Qunity translation of a combinator ${\color{teal}c}$.
For all of the primitive isomorphisms ${\color{teal}iso}$, the \texttt{match} construct described in Section~\ref{sec:match} can be used to implement $\qiso{T}{T'}({\color{teal}iso})$ directly from the semantic definitions \cite[p.~76]{mlpi}.
See Figure~\ref{fig:qiso} for some representative examples.
\begin{figure}[ht]
\[
	\match{T_1 \oplus (T_2 \oplus T_3)}{
	\lef{T_1}{(T_2 \oplus T_3)}\: v_1 &\mapsto \lef{(T_1 \oplus T_2)}{T_3}\left( \lef{T_1}{T_2}\: v_1 \right) \\
  \rit{T_1}{(T_2 \oplus T_3)}\left(\lef{T_2}{T_3}\: v_2\right) &\mapsto \lef{(T_1 \oplus T_2)}{T_3}\left( \rit{T_1}{T_2}\: v_2 \right) \\
  \rit{T_1}{(T_2 \oplus T_3)}\left(\rit{T_2}{T_3}\: v_3\right) &\mapsto \rit{(T_1 \oplus T_2)}{T_3}\: v_3}
	{(T_1 \oplus T_2) \oplus T_3}
\]
\vspace{5mm}
\[
	\match{(T_1 \oplus T_2) \otimes T_3}{
	\pair{\lef{T_1}{T_2}\: v_1}{v_3} &\mapsto \lef{(T_1 \otimes T_3)}{(T_2 \otimes T_3)} \pair{v_1}{v_3} \\
	\pair{\rit{T_1}{T_2}\: v_2}{v_3} &\mapsto \rit{(T_1 \otimes T_3)}{(T_2 \otimes T_3)} \pair{v_2}{v_3}}
	{(T_1 \otimes T_3) \oplus (T_2 \otimes T_3)}
\]
\caption{Definitions of $\qiso{T_1 \oplus (T_2 \oplus T_3)}{(T_1 \oplus T_2) \oplus T_3}({\color{teal}assocl^{+}})$ and $\qiso{(T_1 \oplus T_2) \otimes T_3}{(T_1 \otimes T_3) \oplus (T_2 \otimes T_3)}({\color{teal}distrib})$}
	\label{fig:qiso}
\end{figure}

The other combinators are also straightforward to implement in Qunity, using the direct sum defined in Section~\ref{sec:match}:
\begin{align*}
	\qiso{T}{T}({\color{teal}id}) &\defeq \lambda x \xmapsto{{\color{gray}T}} x \\
	\qiso{T}{T'}({\color{teal}sym\: c}) &\defeq \qiso{T'}{T}({\color{teal}c})^\dagger \\
	\qiso{T_1}{T_2}({\color{teal}c_1 ; c_2}) &\defeq \lambda x \xmapsto{{\color{gray}T_1}} x \triangleright \qiso{T_1}{T}({\color{teal}c_1}) \triangleright \qiso{T}{T_2}({\color{teal}c_2}) \\
	\qiso{T_0 \oplus T_1}{T_0' \oplus T_1'}({\color{teal}c_0 + c_1}) &\defeq \qiso{T_0}{T_0'}({\color{teal}c_0}) \oplus \qiso{T_1}{T_1'}({\color{teal}c_1}) \\
	\qiso{T_0 \otimes T_1}{T_0' \otimes T_1'}({\color{teal}c_0 \times c_1}) &\defeq \lambda \pair{x_0}{x_1} \xmapsto{{\color{gray}T_0 \otimes T_1}} \pair{x_0 \triangleright \qiso{T_0}{T_0'}({\color{teal}c_0})}{x_1 \triangleright \qiso{T_1}{T_1'}({\color{teal}c_1})}
\end{align*}

These definitions show that purely-typed programs in Qunity's classical sublanguage are powerful enough to express any of $\Pi$'s reversible programs:

\begin{lemma}[pure classical sublanguage expressiveness]
	\label{lem:pi}
	Let ${\color{teal}c}$ be a $\Pi$ combinator and let $f = \qiso{T}{T'}({\color{teal}c})$.
	This transformation preserves typing and semantics:
	\begin{itemize}
		\item Using the $\Pi$ typing, ${\color{teal}c} : T \leftrightarrow T'$ implies $\vdash f : T \rightsquigarrow T'$
		\item Using the $\Pi$ evaluation rules, ${\color{teal}c}\: v \mapsto v'$ implies that $\psem{\vdash f : T \rightsquigarrow T'}(v) = v'$.
	\end{itemize}
\end{lemma}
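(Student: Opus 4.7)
The plan is to proceed by structural induction on the syntax of the $\Pi$ combinator ${\color{teal}c}$, proving the typing and semantic preservation claims simultaneously. Since $\Pi$'s type system is syntax-directed and has a unique typing derivation for each combinator, and our translation $\qiso{\cdot}{\cdot}(\cdot)$ is likewise syntax-directed, the induction has one case per combinator constructor. In each case, the typing claim reduces to applying an appropriate Qunity typing rule (the \texttt{match} rule from Figure~\ref{fig:match-type}, \textsc{T-PureApp}, \textsc{T-PureAbs}, or \textsc{T-PurePair}) to the subderivations produced by the inductive hypothesis, while the semantic claim reduces to showing that $\psem{\cdot}$ applied to the translated program coincides with the $\Pi$ evaluation relation on values.

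For the base cases involving primitive isomorphisms ${\color{teal}iso}$, I would observe that each translation is a \texttt{match} in which the left-hand patterns form a complete set of orthogonal representatives of $\V(T)$ and the right-hand patterns form the corresponding representatives of $\V(T')$. Typing then follows from verifying the \texttt{ortho} premises (immediate, because the patterns are disjoint value constructors), and semantic correctness follows by reading off the case-by-case action of ${\color{teal}iso}$ from its $\Pi$ definition and matching it against the \texttt{match} semantics in Figure~\ref{fig:match-type}. The \qiso{T}{T}({\color{teal}id}) case is immediate from \textsc{T-PureAbs} and the \textsc{T-Qvar} rule.

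For the inductive cases, ${\color{teal}c_1; c_2}$ is handled by reducing to composition of the translated subprograms, using the inductive hypothesis and the fact that $\psem{\cdot}$ commutes with functional composition of injective partial functions; ${\color{teal}c_0 \times c_1}$ is handled analogously using \textsc{T-PurePair} on independent subvaluations; and ${\color{teal}c_0 + c_1}$ is handled by invoking the typing and semantic properties of the $\oplus$ construction from Figure~\ref{fig:dsum}, which in turn reduces to the inductive hypothesis on ${\color{teal}c_0}$ and ${\color{teal}c_1}$.

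The hardest step will be the ${\color{teal}sym\: c}$ case, since ${\color{teal}sym\: c}$ is interpreted as the inverse partial function whereas the translation uses Qunity's adjoint sugar $f^{\dagger{\color{gray}T}} \defeq (\lambda (f\: x) \xmapsto{{\color{gray}T}} x)$. I will need a separate auxiliary lemma stating that for any pure classical Qunity program $f$ (in particular, any translated $\Pi$ combinator, which is injective by an easy sub-induction), the typing judgment $\vdash f^{\dagger{\color{gray}T}} : T' \rightsquigarrow T$ is derivable via \textsc{T-PureAbs} with the pattern $(f\: x)$ typed through \textsc{T-PureApp}, and moreover $\psem{\vdash f^{\dagger{\color{gray}T}} : T' \rightsquigarrow T} = \psem{\vdash f : T \rightsquigarrow T'}^{-1}$. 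The semantic part of this auxiliary lemma follows directly from the inverse-valuation clause in Figure~\ref{fig:sem-pure-c-prog}: the classical denotation of a lambda is defined by finding the unique valuation $\tau$ such that $\psem{\varnothing : \varnothing \partition \Delta \vdash e : T}(\tau) = v$, which in the adjoint case means finding the unique $x$ with $\psem{\vdash f : T \rightsquigarrow T'}(x) = v$, i.e., inverting $f$. Once this auxiliary lemma is established, the ${\color{teal}sym\: c}$ case reduces to matching $\Pi$'s inverse evaluation with $\psem{\cdot}$ on the adjoint, and the entire induction closes.
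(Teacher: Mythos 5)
Your proposal is correct and follows essentially the same route as the paper's (very terse) proof: structural induction on the combinator, with the primitive isomorphisms discharged via the \texttt{match} typing and semantics of Figure~\ref{fig:match-type} and the remaining constructors reduced to the definitions of the translation. Your explicit auxiliary lemma for the ${\color{teal}sym}$ case---that the classical denotation of $f^{\dagger}$ is the inverse partial function of an injective $\psem{f}$---is a detail the paper leaves implicit under ``follows directly from the definitions,'' but it is fully supported by the inverse-valuation clause of the classical lambda semantics and the stated injectivity of purely-typed classical programs, so it is a welcome elaboration rather than a divergence.
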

\begin{proof}
	For the primitive isomorphisms, the typing and semantics described in Section~\ref{sec:match} should make both of these fairly clear.
	All of the patterns on the left and right sides of the \texttt{match} expressions are fairly simple, and it is easy to verify that they have the required types and meet the orthogonality requirements for the typing to be preserved.
	Our transformation comes directly from the $\Pi$ isomorphism semantics, so the \texttt{match} semantics preserve this behavior.
	The typing and semantics of the other combinators also follows directly from the definitions.
\end{proof}

Next, we can show that irreversible programs from $\MLPi$ can also be translated into Qunity, this time as \emph{mixed} programs.
We write $\qarr{T}{T'}({\color{teal}a})$ for the translation of $\MLPi$ arrow computation ${\color{teal}a}$ into a Qunity program of type $T \Rrightarrow T'$, defined as follows on all $\MLPi$ terms for which the semantics is defined:
\begin{alignat*}{2}
	\qarr{T}{T'}(\texttt{arr}\: {\color{teal}a}) &\defeq&\;& \qiso{T}{T'}({\color{teal}a}) \\
	\qarr{T_1}{T_2}({\color{teal}a_1 >\!>\!> a_2}) &\defeq&& \lambda x \xmapsto{{\color{gray}T_1}} x \triangleright \qarr{T_1}{T}({\color{teal}a_1}) \triangleright \qarr{T}{T_2}({\color{teal}a_2}) \\
	\qarr{T_1 \otimes T_3}{T_2 \otimes T_3}({\color{teal}\texttt{first}\: a}) &\defeq&& \lambda \pair{x_1}{x_3} \xmapsto{{\color{gray}T_1 \otimes T_3}} \pair{x_1 \triangleright \qarr{T_1}{T_2}({\color{teal}a})}{x_3} \\
	\qarr{T_1 \oplus T_3}{T_2 \oplus T_3}({\color{teal}\texttt{left}\: a}) &\defeq&& \lambda x \xmapsto{{\color{gray}T_1 \oplus T_3}} \\
																													 &&&x \triangleright \match{T_1 \oplus T_3}{\lef{T_1}{T_3}\: x_1 &\mapsto \pair{\zero}{\pair{x_1}{\phi(T_3)}} \\ \rit{T_1}{T_3}\: x_3 &\mapsto \pair{\one}{\pair{\phi(T_1)}{x_3}}}{\Bit \otimes (T_1 \otimes T_3)} \\
																													 &&&\triangleright \lambda \pair{x_b}{\pair{x_1}{x_3}} \xmapsto{{\color{gray}\Bit \otimes (T_1 \otimes T_3)}} \\
																													 &&&\pair{x_b}{\pair{x_1 \triangleright \qarr{T_1}{T_2}({\color{teal}a})}{x_3}} \\
																													 &&&\triangleright \lambda \pair{x_b}{\pair{x_2}{x_3}} \xmapsto{{\color{gray}\Bit \otimes (T_2 \otimes T_3)}} \\
																													 &&& \cntrl{x_b}{}{\zero &\mapsto \pair{\lef{T_2}{T_3}\:x_2}{\pair{x_b}{\rit{T_2}{T_3}\: x_3}} \\ \one &\mapsto \pair{\rit{T_2}{T_3}\: x_3}{\pair{x_b}{\lef{T_2}{T_3}\: x_2}}}{} \\
																													 &&&\triangleright \fst{(T_2 \oplus T_3)}{(\Bit \otimes (T_2 \oplus T_3))} \\
\qarr{\Unit}{T}({\color{teal}\texttt{create}_T}) &\defeq&& \lambda \unit \xmapsto{{\color{gray}\Unit}} \phi(T) \\
\qarr{T}{\Unit}({\color{teal}\texttt{erase}}) &\defeq&& \lambda x \xmapsto{{\color{gray}T}} \unit
\end{alignat*}

Because the \texttt{ctrl} expression disallows any discarding of information, the {\color{teal}\texttt{left}} definition has to be a bit complicated.
The program $\qarr{T_1}{T_2}({\color{teal}a})$ would be ill-typed if placed directly in the branch of a \texttt{ctrl} expression, so instead we use a \texttt{match} expression to convert the sum type into a product type, using James and Sabry's $\phi$ function to create dummy values for the absent type.
This allows us to then run the $\qarr{T_1}{T_2}({\color{teal}a})$ program within a pair construct instead.
We then use a \texttt{ctrl} expression and the \texttt{fst} function to rearrange these values, discarding the unneeded ones.

We can finally prove the other theorem from Section~\ref{sec:classical-semantics}, demonstrating Qunity's power for implementing \emph{irreversible} classical programs as well:

\begingroup
\def\thetheorem{\ref{thm:mlpi}}
\begin{theorem}
	Let ${\color{teal}a}$ be an $\MLPi$ arrow computation and let $f = \qarr{T}{T'}({\color{teal}a})$.
	This transformation preserves typing and semantics:
	\begin{itemize}
		\item Using the $\MLPi$ typing, ${\color{teal}a} : T \rightsquigarrow T'$ implies $\vdash f : T \Rrightarrow T'$
		\item Using the $\MLPi$ evaluation rules, ${\color{teal}a}\: v \mapsto_{\textsc{ml}} v'$ implies $\psem{\vdash f : T \Rrightarrow T'}(v) = v'$.
	\end{itemize}
\end{theorem}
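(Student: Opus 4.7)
The proof proceeds by structural induction on the arrow computation ${\color{teal}a}$, establishing type preservation and semantic preservation simultaneously. The base case ${\color{teal}\texttt{arr}\: a'}$ reduces directly to Lemma~\ref{lem:pi} together with the rule \textsc{T-Channel}, which lifts any pure program type $T \rightsquigarrow T'$ to its mixed counterpart $T \Rrightarrow T'$ without changing input-output behavior on basis elements per Figure~\ref{fig:sem-mixed-c-prog}.

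The cases for ${\color{teal}a_1 >\!>\!> a_2}$, ${\color{teal}\texttt{first}\: a}$, ${\color{teal}\texttt{create}_T}$, and ${\color{teal}\texttt{erase}}$ are routine. Sequencing translates to a lambda threading its argument through the translated subcomputations; \textsc{T-MixedAbs} and \textsc{T-MixedApp} furnish the typing, and partial function composition on the Qunity side matches $\MLPi$'s semantics of $>\!>\!>$. For ${\color{teal}\texttt{first}\: a}$, a lambda destructuring the pair, applying the inductive hypothesis on the first component, and repacking via \textsc{T-MixedPair} yields the required denotation. For ${\color{teal}\texttt{create}_T}$ I rely on the fact that James and Sabry's $\phi$ produces a canonical value of every inhabited type $T$ which is typeable in the empty context, so $\lambda \unit \xmapsto{{\color{gray}\Unit}} \phi(T)$ has the required type by \textsc{T-PureAbs} followed by \textsc{T-Channel}; ${\color{teal}\texttt{erase}}$ is immediate from \textsc{T-Unit}.

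The principal difficulty is the ${\color{teal}\texttt{left}\: a}$ case, whose translation is intentionally convoluted to work around the \texttt{erases} restriction of \textsc{T-Ctrl}: one cannot directly pattern-match on a sum and drop the unused variant inside \texttt{ctrl}, because its body expressions must use the scrutinee's variables consistently. The translation therefore first uses a \texttt{match} (as defined in Section~\ref{sec:match}) to convert $T_1 \oplus T_3$ into a tagged product $\Bit \otimes (T_1 \otimes T_3)$, with $\phi$ supplying a dummy value for the absent component; it then applies $\qarr{T_1}{T_2}({\color{teal}a})$ inside the first factor of the product while carrying along the tag and the possibly-dummy companion; finally a \texttt{ctrl} rearranges the components and $\fst{T_2 \oplus T_3}{\Bit \otimes (T_2 \oplus T_3)}$ discards the auxiliary register. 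To verify this case I will check type preservation clause-by-clause against Figures~\ref{fig:t-pure-exp}, \ref{fig:t-mixed-exp}, and \ref{fig:t-prog}, noting that the discarding step is legitimized by \textsc{T-MixedAbs}, and then compute the denotation on each of the two possible input shapes $\lef{T_1}{T_3}\: v_1$ and $\rit{T_1}{T_3}\: v_3$. The key algebraic check is that the dummy $\phi$-value introduced on one side by \texttt{match} is exactly the value ultimately discarded by $\fst{}{}$ on the other side, so the composite partial function agrees with ${\color{teal}\texttt{left}\: a}$'s $\MLPi$ semantics of applying ${\color{teal}a}$ in the left branch and acting as the identity in the right.
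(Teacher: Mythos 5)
Your proposal matches the paper's proof: structural induction with every case except ${\color{teal}\texttt{left}\:a}$ dispatched directly from the definitions (the ${\color{teal}\texttt{arr}}$ case via the $\Pi$ translation lemma, Lemma~\ref{lem:pi}), and the ${\color{teal}\texttt{left}}$ case verified by evaluating the translated program separately on the two input shapes $\lef{T_1}{T_3}\:v_1$ and $\rit{T_1}{T_3}\:v_3$. One detail to make explicit in the right-injection case: the translation applies $\qarr{T_1}{T_2}({\color{teal}a})$ to the dummy $\phi(T_1)$ \emph{before} anything is discarded, so you must argue that this application is defined --- otherwise the pair's denotation, and hence the entire right branch, would be undefined rather than the identity; the paper obtains this from the fact that the $\MLPi$ combinator semantics is total on all inputs.
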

\addtocounter{theorem}{-1}
\endgroup
\begin{proof}
	Using structural induction, in most of these cases, it is easy to verify that the required properties follow directly from the definitions.
	We will show  the evaluation of the one case that is more complicated: {\color{teal}\texttt{left}}.
	The input to ${\color{teal}\texttt{left}\:a}$ is a sum type, so we will consider the two input cases separately.
	First, suppose ${\color{teal}a}\:v_1 \mapsto_{\textsc{ml}} v_2$ by $\MLPi$ evaluation rules and by induction hypothesis that $\psem{\vdash \qarr{T_1}{T_2}({\color{teal}a}) : T_1 \Rrightarrow T_2}(v_1) = v_2$, and we will show that $\psem{\vdash \qarr{T_1 \oplus T_3}{T_2 \oplus T_3}({\color{teal}\texttt{left}\:a}) : T_1 \oplus T_3 \Rrightarrow T_2 \oplus T_3}(\lef{T_1}{T_3}\: v_1) = \lef{T_2}{T_3}\: v_2$.
	This program is effectively sequential, and we will use the $\Rightarrow$ symbol to describe evaluating one step of the program, reserving the $\mapsto$ symbol for bindings within valuations.
	\begin{align*}
		\lef{T_1}{T_3}\:v_1
		&\Rightarrow (x \mapsto \lef{T_1}{T_3}\:v_1) \\
		&\Rightarrow \lef{T_1}{T_3}\:v_1 \\
		&\Rightarrow \pair{\zero}{\pair{v_1}{\phi(T_3)}} \\
		&\Rightarrow (x_b \mapsto \zero, x_1 \mapsto v_1, x_3 \mapsto \phi(T_3)) \\
		&\Rightarrow \pair{\zero}{\pair{v_2}{\phi(T_3)}} \\
		&\Rightarrow (x_b \mapsto \zero, x_2 \mapsto v_2, x_3 \mapsto \phi(T_3)) \\
		&\Rightarrow \pair{\lef{T_2}{T_3}\:v_2}{\pair{\zero}{\rit{T_2}{T_3}\:v_3}} \\
		&\Rightarrow \lef{T_2}{T_3}\:v_2
	\end{align*}

	Next, we show that $\psem{\vdash \qarr{T_1 \oplus T_3}{T_2 \oplus T_3}({\color{teal}\texttt{left}\:a}) : T_1 \oplus T_3 \Rrightarrow T_2 \oplus T_3}(\rit{T_1}{T_3}\: v_3) = \rit{T_2}{T_3}\: v_3$.
	For clarity, let $v_2 = \psem{\vdash \qarr{T_1}{T_2}({\color{teal}a}) : T_1 \Rrightarrow T_2}(\phi(T_1))$, which must be defined because the $\MLPi$ combinator semantics is defined on all inputs.
	\begin{align*}
		\rit{T_1}{T_3}\:v_3
		&\Rightarrow (x \mapsto \rit{T_1}{T_3}\:v_3) \\
		&\Rightarrow \rit{T_1}{T_3}\:v_3 \\
		&\Rightarrow \pair{\one}{\pair{\phi(T_1)}{v_3}} \\
		&\Rightarrow (x_b \mapsto \one, x_1 \mapsto \phi(T_1), x_3 \mapsto v_3) \\
		&\Rightarrow \pair{\one}{\pair{v_2}{v_3}} \\
		&\Rightarrow (x_b \mapsto \one, x_2 \mapsto v_2, x_3 \mapsto v_3) \\
		&\Rightarrow \pair{\rit{T_2}{T_3}\:v_3}{\pair{\one}{\lef{T_2}{T_3}\:v_2}} \\
		&\Rightarrow \rit{T_2}{T_3}\:v_3
	\end{align*}
\end{proof}

There is already a way of translating programs from a more typical \texttt{let}-based first-order classical functional programming language into their $\MLPi$ language \cite{mlpi}, so programs from this \texttt{let}-based language can be translated into Qunity as well.
Because Qunity is restricted to types of finite cardinality, it cannot implement the arbitrarily-sized natural numbers and lists one is used to in classical programming languages, but we have shown throughout this paper how a thin layer of metaprogramming can be used to create types parameterized by a compile-time constant, like the bounded-but-variable-length list used in Section~\ref{sec:quantum-walk}.

\section{Spanning judgments}
\label{app:spanning}

Qunity's orthogonality judgment, used to type \texttt{ctrl} expressions, depends on a ``spanning'' judgment, defined by the inference rules in Figure~\ref{fig:spanning}.
In these rules we write $\FV(e)$ to compute the set of free variables in $e$, which is straightforward since expressions $e$ have no binders.
The semantics of a spanning judgment is described by Lemma~\ref{lem:spanning-sem}.

\begin{figure}[t]
\[
	\inference{}{\spanning{\Void}{}}[\textsc{S-Void}]
	\quad
	\inference{}{\spanning{\Unit}{\unit}}[\textsc{S-Unit}]
	\quad
	\inference{}{\spanning{T}{x}}[\textsc{S-Var}]
	\]
\vspace{2mm}
	\[
	\inference{\spanning{T}{e_1, \ldots, e_n} \qquad \spanning{T'}{e_1', \ldots, e_{n'}'}}{\spanning{T \oplus T'}{\lef{T}{T'}e_1, &\ldots, \lef{T}{T'}e_n, \\ \rit{T}{T'}e_1', &\ldots, \rit{T}{T'}e_{n'}' }}[\textsc{S-Sum}]
\]
\vspace{2mm}
\[
	\inference{\spanning{T}{e_1, \ldots, e_m} \qquad \spanning{T'}{e_{j,1}', \ldots, e_{j,n_j}'} \text{ for all } j \\ \FV(e_j) \cap \bigcup_{k=1}^{n_j} \FV(e_{j,k}') = \varnothing \text{ for all } j}{\spanning{T\otimes T'}{ \pair{e_1}{e_{1,1}'}, &\ldots, \pair{e_1}{e_{1,n_1}'}, \\ &\ldots, \\ \pair{e_m}{e_{m,1}'}, &\ldots, \pair{e_m}{e_{m,n_m}'} }}[\textsc{S-Pair}]
\]
\vspace{2mm}
\[
	\inference{\spanning{T}{ e_1, \ldots, e_n } }{\spanning{T}{ \pi(e_1, \ldots, e_n) }}[\textsc{S-Perm}]
\]
\caption{Spanning inference rules}
\label{fig:spanning}
\end{figure}

\section{Isometry Checking and Type Soundness}
\label{app:iso}

Qunity's semantics uses norm-decreasing operators and trace-decreasing superoperators, which can be interpreted operationally as a sort of ``error.'' 
This means that even well-typed Qunity programs are not necessarily error-free, which seems to indicate an unsound type system.
To assuage these concerns, we use this appendix to discuss an additional judgment ``iso'' that can be used to statically determine whether a program belongs to a checkable class of error-free programs.

We write $\iso(e)$ or $\iso(f)$ to denote that expression $e$ or program $f$ is isometric.
As will be made precise by Theorem~\ref{thm:iso}, this indicates that an expression or program has norm-preserving pure semantics or trace-preserving mixed semantics.
This efficiently checkable property is defined by the inference rules in Figure~\ref{fig:iso}.

Our isometry judgment depends on another: we write ``$\classical(e)$'' to denote that expression $e$ is an element of the classical sublanguage of Qunity defined in the previous appendix.

\begin{figure}[t]
\[
	\inference{}{\iso(\unit)}[\textsc{I-Unit}]
	\quad
	\inference{}{\iso(x)}[\textsc{I-Var}]
	\quad
	\inference{\iso(e_0) \qquad \iso(e_1)}{\iso(\pair{e_0}{e_1})}[\textsc{I-Pair}]
\]
\vspace{2mm}
\[
\inference{\classical(e) \qquad \iso(e) \qquad \spanning{T}{e_1, \ldots, e_n} \qquad \iso(e_1') \quad \cdots \quad \iso(e_n')}{\iso\left(\cntrl{e}{T}{e_1 &\mapsto e_1' \\ &\cdots \\ e_n &\mapsto e_n'}{T'}\right)}[\textsc{I-Ctrl}]
\]
\vspace{2mm}
\[
	\inference{\iso(e)}{\iso(\trycatch{e}{e'})}[\textsc{I-Try}]
	\quad
	\inference{\iso(e')}{\iso(\trycatch{e}{e'})}[\textsc{I-Catch}]
	\quad
	\inference{\iso(f) \qquad \iso(e)}{\iso(f \: e)}[\textsc{I-App}]
\]
\vspace{2mm}
\[
	\inference{}{\iso(\uthree{r_\theta}{r_\phi}{r_\lambda})}[\textsc{I-Gate}]
	\quad
	\inference{}{\iso(\lef{T_0}{T_1})}[\textsc{I-Left}]
	\quad
	\inference{}{\iso(\rit{T_0}{T_1})}[\textsc{I-Right}]
\]
\vspace{2mm}
\[
	\inference{\spanning{T}{e} \qquad \iso(e')}{\iso(\lambda e \xmapsto{{\color{gray}T}} e')}[\textsc{I-Abs}]
	\quad
	\inference{}{\iso\left(\rphase{T}{e}{r}{r'}\right)}[\textsc{I-Rphase}]
\]
\caption{Isometry inference rules}
\label{fig:iso}
\end{figure}

\begin{conjecture}[norm- and trace-preserving semantics]
	\label{thm:iso}
	The iso judgment corresponds to norm- or trace-preserving semantics:
	\begin{itemize}
		\item Whenever $(\Gamma \partition \Delta \vdash e : T)$ and $\iso(e)$ are valid, $\msem{\sigma : \Gamma \partition \Delta \vdash e : T}^\dagger \msem{\sigma : \Gamma \partition \Delta \vdash e : T} = \mathbb{I}$.
			If $e$ is classical, then its classical semantics (Appendix~\ref{app:sublanguage}) is a total function rather than a partial function; that is, $\psem{\sigma : \Gamma \partition \Delta \vdash e : T}(\tau) \downarrow$ for all $\tau \in \V(\Delta)$.
		\item Whenever $(\vdash f : T \rightsquigarrow T')$ and $\iso(f)$ are valid, $\msem{\vdash f : T \rightsquigarrow T'}^\dagger \msem{\vdash f : T \rightsquigarrow T'} = \mathbb{I}$.
			If $f$ is classical, then $\psem{\vdash f : T \rightsquigarrow T'}$ is a total function.
		\item Whenever $(\Delta \Vdash e : T)$ and $\iso(e)$ are valid, $\tr(\msem{\Delta \Vdash e : T}(\rho)) = \tr(\rho)$.
			If $e$ is classical, then $\psem{\Delta \Vdash e : T}$ is a total function.
		\item Whenever $(\vdash f : T \Rrightarrow T')$ and $\iso(f)$ are valid, $\tr(\msem{\vdash f : T \Rrightarrow T'}(\rho)) = \tr(\rho)$.
			If $f$ is classical, then $\psem{\vdash f : T \Rrightarrow T'}$ is a total function.
	\end{itemize}
\end{conjecture}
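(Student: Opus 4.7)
The plan is to prove Conjecture~\ref{thm:iso} by simultaneous structural induction on the derivation of the $\iso(\cdot)$ judgment, coupled with the typing derivation that guarantees the semantics is well-defined (Theorem~\ref{thm:well-defined}). For each iso-inference rule, I would verify two properties: (i) the quantum denotation satisfies $E^\dagger E = \mathbb{I}$ in the pure case, or $\tr \circ \mathcal{E} = \tr$ in the mixed case; and (ii) if the expression additionally lies in the classical sublanguage (Definition~\ref{def:sublanguage}), its classical denotation is a total function. The two halves share a case split, and totality reduces to property (i) via Theorem~\ref{thm:classical-generalization}: if the quantum denotation sends every basis state to a unit-norm basis vector, then the corresponding classical partial function must be defined everywhere.

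The base rules \textsc{I-Unit}, \textsc{I-Var}, \textsc{I-Gate}, \textsc{I-Left}, \textsc{I-Right}, and \textsc{I-Rphase} follow immediately from the defining denotations: each produces an identity, a basis-preserving injection, a parameterized $\Ux_3$ unitary, or a diagonal unitary of unit-modulus phases. The compositional cases \textsc{I-Pair} and \textsc{I-App} reduce, via the relevant semantic equations, to the standard facts that tensor products and composites of norm- or trace-preserving maps inherit the property, and the induction hypotheses supply the needed premises.

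Two less trivial cases are \textsc{I-Try}/\textsc{I-Catch} and \textsc{I-Abs}. For try/catch I would exploit the shape of the mixed denotation, whose output is $\msem{e_0}(\rho_0) + (1 - \tr(\msem{e_0}(\rho_0))) \cdot \msem{e_1}(\rho_1)$: if $e_0$ is iso then $\tr(\msem{e_0}(\rho_0)) = 1$ and the second summand vanishes, while if $e_1$ is iso then the two-term sum telescopes to total trace one. For \textsc{I-Abs}, the key is that $\spanning{T}{e}$ on a single pattern combined with Lemma~\ref{lem:spanning-sem} gives $\msem{e}\msem{e}^\dagger = \mathbb{I}$, so $\msem{e}^\dagger$ is a right inverse of $\msem{e}$; composing with an isometric $\msem{e'}$ then produces a new isometry whose adjoint-squared is the identity.

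The main obstacle will be \textsc{I-Ctrl}. One must combine four ingredients: first, $\classical(e)$ together with $\iso(e)$ forces $\msem{e}$ to permute basis states deterministically, so that the factor $\bra{v}\,\msem{e}(\op{\sigma,\tau}{\sigma,\tau})\,\ket{v}$ appearing in the \texttt{ctrl} denotation is a $0/1$-valued indicator selecting exactly one $v$ per input; second, $\spanning{T}{e_1,\ldots,e_n}$ via Lemma~\ref{lem:spanning-sem} gives $\sum_j \msem{e_j}\msem{e_j}^\dagger = \mathbb{I}$; third, orthogonality inherited from \textsc{T-Ctrl} together with Lemma~\ref{lem:ortho-sem} ensures at most one branch and one $\sigma_j$ contributes to each basis input; and fourth, $\iso(e_j')$ provides norm preservation inside each branch. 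Assembling these into a direct computation that the induced operator on $\Hilb(\Delta,\Delta')$ is an isometry is the technical heart of the argument, and precisely the reason the authors label the statement a conjecture: the interaction between classicality of the condition, the multi-premise spanning and orthogonality conditions, and the pure/mixed mediation built into \texttt{ctrl}'s semantics requires careful bookkeeping to close the induction on the nested contexts $\Gamma, \Gamma', \Gamma_j$ and the doubled quantum context $\Delta,\Delta'$.
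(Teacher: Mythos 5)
First, note that the paper does not prove this statement: it is explicitly labelled a \emph{conjecture} and no argument for it appears anywhere in the text or appendices, so there is no ``paper proof'' to compare your proposal against. Judged on its own terms, your outline (induction on the $\iso$ derivation, with totality of the classical semantics recovered from norm preservation via Theorem~\ref{thm:classical-generalization}) is the natural strategy, and your treatments of \textsc{I-Try}/\textsc{I-Catch} and of the pure \textsc{I-Abs} case are correct as sketched. But two of your steps have genuine gaps.

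The first is \textsc{I-Rphase}, which you dismiss as an immediate ``diagonal unitary of unit-modulus phases.'' The denotation is $e^{ir}EE^\dagger + e^{ir'}(\mathbb{I}-EE^\dagger)$ with $E=\msem{\varnothing\partition\Delta\vdash e:T}$, and this is unitary only when $EE^\dagger$ is a projector. The rule \textsc{I-Rphase} has \emph{no premises}, and \textsc{T-Rphase} only requires $e$ to be well typed; a pure expression can have a norm-decreasing, non-partial-isometric denotation (the \texttt{ctrl} semantics injects probabilities as amplitudes), in which case $EE^\dagger$ is not idempotent and the rphase denotation is not an isometry. So either this case needs an argument that $EE^\dagger$ is always a projector for such $e$, or the inference rule needs a side condition --- this is plausibly one reason the authors left the statement as a conjecture. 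The second gap is in \textsc{I-Ctrl}: your four ingredients show that each basis input $\ket{\tau,\tau'}$ is sent to a \emph{unit} vector (one $v$, one branch $j$, one $\sigma_j$ survives, and $\msem{e_j'}$ is an isometry by IH), but that alone does not make the operator an isometry --- you also need the images of \emph{distinct} basis inputs to be orthogonal. When $\tau_1\neq\tau_2$ select different values $v$ and hence different branches $j_1\neq j_2$, orthogonality of $\msem{e_{j_1}'}\ket{\tau_1,\tau_1'}$ and $\msem{e_{j_2}'}\ket{\tau_2,\tau_2'}$ is not automatic; it is exactly the ``erases'' premise of \textsc{T-Ctrl} (via Lemma~\ref{lem:erasure}, which exhibits a single norm-non-increasing map recovering $\ket{\tau}$ from each branch's output) that forces these cross-branch images to be orthogonal. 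Your ingredient list omits the erases judgment entirely, so the ``direct computation'' you defer cannot close without it.
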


\clearpage

\section{Additional Example: The Deutsch-Jozsa algorithm}
\label{sec:dj}

Given an oracle computing the (constant or balanced) function $f : \{\zero, \one\}^n \to \{\zero, \one\}$ for some $n$, the Deutsch-Jozsa algorithm can be framed as three steps:
\begin{enumerate}
	\item
		Prepare the state $\ket{+}^{\otimes n}$.
	\item
		Coherently perform the map:
		\[
			\ket{x} \mapsto (-1)^{f(x)} \ket{x}
		\]
	\item
		Perform a measurement that determines whether the state is still $\ket{+}^{\otimes n}$.
		If $\ket{+}^{\otimes n}$ is measured, then output ``\one'' to indicate that $f$ is constant.
		If some orthogonal state is measured, then output ``\zero'' to indicate that $f$ is balanced.
\end{enumerate}

\begin{figure}[th]
\begin{alignat*}{3}
	\texttt{deutsch-jozsa}_n(f) &\defeq&\;& \left( \begin{aligned} &\texttt{let } x \texttt{ =}_{\color{gray} \Bit^{\otimes n}}\; \texttt{plus}^{\otimes n} \texttt{ in } \\
															 &\cntrl{(f\: x)}{\Bit^{\otimes n}}{\zero &\mapsto x \\ \one &\mapsto x \triangleright \gphase{\Bit^{\otimes n}}{\pi}}{\Bit^{\otimes n}} \end{aligned} \right) \\
															 &&& \triangleright \equals{\Bit^{\otimes n}}(\texttt{plus}^{\otimes n}) \\
\end{alignat*}
\caption{The Deutsch-Jozsa algorithm implemented in Qunity}
\label{fig:dj}
\end{figure}

Like in Deutsch's algorithm in Section~\ref{sec:deutsch}, this succinct interpretation is straightforward to implement in Qunity even if $f$ is computed via an arbitrary quantum algorithm.
Figure~\ref{fig:dj} shows our Qunity implementation, and Qunity's type system should assign the following type to this program:
\begin{conjecture}
\[
	\inference{\vdash f : \Bit \Rrightarrow \Bit^{\otimes n}}{\varnothing \Vdash \texttt{deutsch-jozsa}_n(f) : \Bit}
\]
\end{conjecture}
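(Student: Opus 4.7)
The conjecture's hypothesis appears transposed: the body applies $f$ to $x : \Bit^{\otimes n}$ and matches the result against $\zero,\one$, so the intended premise must be $\vdash f : \Bit^{\otimes n} \Rrightarrow \Bit$ (matching the informal description $f : \{\zero,\one\}^n \to \{\zero,\one\}$ above Figure~\ref{fig:dj}); likewise, the first type annotation of the \texttt{ctrl} must be $\Bit$ rather than $\Bit^{\otimes n}$, so that the patterns have the same type as the condition. Under these corrections, the plan is to derive the judgment outside-in, structurally mirroring the Deutsch derivation sketched in Section~\ref{sec:deutsch}.

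First I would type the inner \texttt{ctrl} in the context $\varnothing \partition x : \Bit^{\otimes n}$. The condition $(f\:x)$ receives mixed type $\Bit$ via \textsc{T-MixedApp} on the hypothesis for $f$ and on $x$, itself typed by \textsc{T-QVar} then promoted by \textsc{T-Mix}. The patterns $\zero,\one$ span $\Bit = \Unit \oplus \Unit$ by \textsc{S-Sum}/\textsc{S-Unit}, hence form an orthogonal set. Both branches are typed as pure $\Bit^{\otimes n}$ in $\varnothing \partition x : \Bit^{\otimes n}$: the first directly by \textsc{T-QVar}; the second by \textsc{T-PureApp} using that $\vdash \gphase{\Bit^{\otimes n}}{\pi} : \Bit^{\otimes n} \rightsquigarrow \Bit^{\otimes n}$, itself obtained from \textsc{T-Rphase}. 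The erasure premise $\erases{\Bit^{\otimes n}}(x;\,x,\,x \triangleright \gphase{\Bit^{\otimes n}}{\pi})$ reduces by \textsc{E-Gphase} to $\erases{\Bit^{\otimes n}}(x;x,x)$, which is \textsc{E-Var}. Applying \textsc{T-Ctrl} then assigns the \texttt{ctrl} expression pure type $\Bit^{\otimes n}$, even though its condition was a superoperator.

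Next I would discharge the surrounding \texttt{let} and the outer \texttt{equals} application. Desugaring the \texttt{let} as $\texttt{plus}^{\otimes n} \triangleright (\lambda x \xmapsto{{\color{gray}\Bit^{\otimes n}}}\cdots)$, the lambda has pure program type $\Bit^{\otimes n} \rightsquigarrow \Bit^{\otimes n}$ by \textsc{T-PureAbs}, and $\texttt{plus}^{\otimes n}$ is a closed pure $\Bit^{\otimes n}$ expression built by iterated \textsc{T-PurePair} from $\texttt{plus} = \had\:\zero$; \textsc{T-PureApp} then types the entire \texttt{let} as pure $\Bit^{\otimes n}$ in the empty context. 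Unfolding $\equals{\Bit^{\otimes n}}(\texttt{plus}^{\otimes n})$, its outer $\lambda$ is typed by \textsc{T-MixedAbs} whose body is typed by \textsc{T-Try} over the inner pure application (lifted by \textsc{T-Mix}), yielding $\vdash \equals{\Bit^{\otimes n}}(\texttt{plus}^{\otimes n}) : \Bit^{\otimes n} \Rrightarrow \Bit$. A final \textsc{T-MixedApp}, preceded by \textsc{T-Mix} on the \texttt{let}-subexpression, closes the derivation as $\varnothing \Vdash \texttt{deutsch-jozsa}_n(f) : \Bit$. The main obstacle is the \textsc{T-Ctrl} step: one must simultaneously place $(f\:x)$ in the mixed slot with $\Delta = (x : \Bit^{\otimes n})$ while the same $\Delta$ reappears in each branch, so the erases judgment is what certifies that $f\:x$ is coherently uncomputed and thus transforms an irreversible subroutine into a pure quantum control; once this is pinned down by the single \textsc{E-Gphase}/\textsc{E-Var} chain above, every remaining rule application is mechanical.
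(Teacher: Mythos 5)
The paper states this typing judgment only as a conjecture and gives no derivation at all, so there is no ``paper proof'' to compare against; judged on its own terms, your derivation is correct, and your preliminary diagnosis is the right call. As printed, the premise $\vdash f : \Bit \Rrightarrow \Bit^{\otimes n}$ cannot type the body, since $f$ is applied to $x : \Bit^{\otimes n}$ and its output is matched against $\zero$ and $\one$; the prose above Figure~\ref{fig:dj} and the \texttt{grover} iterator (whose \texttt{ctrl} carries condition annotation $\Bit$ and output annotation $\Bit^{\otimes n}$) confirm that the intended premise is $\vdash f : \Bit^{\otimes n} \Rrightarrow \Bit$ and that the first annotation on the \texttt{ctrl} in Figure~\ref{fig:dj} should be $\Bit$. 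With those corrections, every step you list discharges: the \textsc{T-Ctrl} instance with $\Gamma = \Gamma' = \Delta' = \varnothing$ and $\Delta = (x : \Bit^{\otimes n})$ has premises $x : \Bit^{\otimes n} \Vdash f\:x : \Bit$ (\textsc{T-MixedApp} over \textsc{T-Mix}/\textsc{T-QVar}), $\ortho{\Bit}{\zero, \one}$ via \textsc{S-Sum}/\textsc{S-Unit} with $\Gamma_1 = \Gamma_2 = \varnothing$, the erasure chain \textsc{E-Gphase} then \textsc{E-Var}, and branches typed in $\varnothing \partition x : \Bit^{\otimes n}$ by \textsc{T-QVar} and \textsc{T-PureApp}/\textsc{T-Rphase}; the desugared \texttt{let} is pure by \textsc{T-PureAbs}/\textsc{T-PurePair}/\textsc{T-PureApp}; and \texttt{equals} receives type $\Bit^{\otimes n} \Rrightarrow \Bit$ by \textsc{T-MixedAbs} over \textsc{T-Try}, where the inner lambda's pattern $\texttt{plus}^{\otimes n}$ is a legal non-exhaustive, program-applying left-hand side exactly as the paper discusses for \texttt{equals} in Section~\ref{sec:syntax}. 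The only step deserving one more sentence of care is \textsc{T-Try}: its two subjudgments must live in disjoint concatenated contexts $\Delta_0, \Delta_1$, which works here only because the catch branch $\zero$ is closed, forcing $\Delta_1 = \varnothing$; you rely on this implicitly but correctly.
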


\clearpage

\section{Real constants}
\label{app:real}

Qunity syntax, defined in Figure~\ref{fig:syntax}, depends on some representation of real constants $r$.
These constants may involve computation using standard (classical) scientific calculator operations, but their values cannot depend on the results of quantum computation because they are compile-time constants.
In Figure~\ref{fig:real}, we present one possible definition for these real constants.
This definition is largely inspired by Open\textsc{Qasm}'s \cite{openqasm3} real constants, making compilation to Open\textsc{Qasm} more straightforward.
Like Open\textsc{Qasm}, we use the letter ``$\varepsilon$'' to stand for Euler's number $e \approx 2.7$, in our case to avoid conflict with the letter ``$e$'' used to represent expressions.

\begin{figure}[ht]
\begin{alignat*}{3}
						n \in&&& \{0, 1, \ldots \} &\quad \textit{(natural)} \\
	r \defeqq &\quad&&&\textit{(real)} \\
						|&&& \pi &\quad \textit{(pi)} \\
						|&&& \varepsilon &\quad \textit{(Euler's number)} \\
						|&&& n &\quad \textit{(natural)} \\
	|&&& \texttt{-} r &\quad\textit{(negation)} \\
	|&&& r \texttt{ + } r &\quad\textit{(addition)} \\
	|&&& r \texttt{ * } r &\quad\textit{(multiplication)} \\
	|&&& r \texttt{ / } r &\quad\textit{(division)} \\
	|&&& \texttt{sin(} r \texttt{)} &\quad\textit{(sine)} \\
	|&&& \texttt{cos(} r \texttt{)} &\quad\textit{(cosine)} \\
	|&&& \texttt{tan(} r \texttt{)} &\quad\textit{(tangent)} \\
	|&&& \texttt{arcsin(} r \texttt{)} &\quad\textit{(inverse sine)} \\
	|&&& \texttt{arccos(} r \texttt{)} &\quad\textit{(inverse cosine)} \\
	|&&& \texttt{arctan(} r \texttt{)} &\quad\textit{(inverse tangent)} \\
	|&&& \texttt{exp(} r \texttt{)} &\quad\textit{(exponential function)} \\
	|&&& \texttt{ln(} r \texttt{)} &\quad\textit{(natural logarithm)} \\
	|&&& \texttt{sqrt(} r \texttt{)} &\quad\textit{(square root)} \\
\end{alignat*}
\caption{Real constants}
\label{fig:real}
\end{figure}

This syntax does allow for invalid constants to be constructed, like ``$1 / 0$.''
It may seem problematic that our semantic proofs ignore this possibility and assume that any real constant $r$ is a valid real number.
However, Qunity's design ensures that these real computations never must be done at runtime, so in practice any invalid constants can be caught before running on quantum hardware.

\clearpage

\section{Well-Defined Semantics}
\label{app:well-defined}

Qunity's semantics $\msem{\cdot}$ is defined as a function of a valid typing judgment.
More precisely, it is defined as a recursive function of the \emph{proof of validity} for a typing judgment.
However, our notation treats it as a function of the typing judgment itself.
For this notation to be well-defined, we must show that the derived semantics is independent of the particular proof used to prove a typing judgment.

The induction strategy used in our proof depends on the notion of \emph{proof depth}, and on semantics being well-defined up to a particular proof depth.

\begin{definition}[proof depth]
	Suppose $P$ is a proof of any typing judgment.
	If $P$ is not proven in terms of any other typing judgments, then $\depth(P) = 0$.
	If $P$ is proven in terms of typing proofs $\{P_1, \ldots, P_n\}$, then $\depth(P) = 1 + \max\{\depth(P_1), \ldots, \depth(P_n)\}$.
\end{definition}

\begin{definition}[equivalent proofs]
	Suppose $P_1$ and $P_2$ are two typing proofs.
	We say that $P_1$ and $P_2$ are equivalent proofs if any of the following are true:
	\begin{itemize}
		\item $P_1$ is a proof of $(\pi\subcap{g1}(\Gamma) \partition \pi\subcap{d1}(\Delta) \vdash e : T)$ and $P_2$ is a proof of $(\pi\subcap{g2}(\Gamma) \partition \pi\subcap{d2}(\Delta) \vdash e : T)$ for some list permutations $\pi\subcap{g1}, \pi\subcap{d1}, \pi\subcap{g2}, \pi\subcap{d2}$.
		\item $P_1$ is a proof of $(\pi_1(\Delta) \Vdash e : T)$ and $P_2$ is a proof of $(\pi_2(\Delta) \Vdash e : T)$ for some list permutations $\pi_1$ and $\pi_2$.
		\item $P_1$ and $P_2$ are two proofs of the same program typing judgment $(\vdash f : F)$.
	\end{itemize}
\end{definition}

\begin{definition}[equivalent semantics]
	Suppose $P_1$ and $P_2$ are equivalent proofs.
	We say that $P_1$ and $P_2$ have equivalent semantics if the following are true:
	\begin{itemize}
		\item If $P_1$ is a proof of $(\pi\subcap{g1}(\Gamma) \partition \pi\subcap{d1}(\Delta) \vdash e : T)$ and $P_2$ is a proof of $(\pi\subcap{g2}(\Gamma) \partition \pi\subcap{d2}(\Delta) \vdash e : T)$, then $\msem{\pi\subcap{g1}(\sigma) : P_1}\ket{\pi\subcap{d1}(\tau)} = \msem{\pi\subcap{g2}(\sigma) : P_2}\ket{\pi\subcap{d2}(\tau)}$ for all $\sigma \in \V(\Gamma), \tau \in \V(\Delta)$.
		\item If $P_1$ is a proof of $(\pi_1(\Delta) \Vdash e : T)$ and $P_2$ is a proof of $(\pi_2(\Delta) \Vdash e : T)$, then $\msem{P_1}\left(\op{\pi_1(\tau')}{\pi_1(\tau)}\right) = \msem{P_2}\left(\op{\pi_2(\tau')}{\pi_2(\tau)}\right)$ for all $\tau, \tau' \in \V(\Delta)$.
		\item If $P_1$ and $P_2$ are two proofs of $(\vdash f : F)$, then $\msem{P_1} = \msem{P_2}$.
	\end{itemize}
	We say that Qunity semantics is well-defined up to depth $n$ if $P_1$ and $P_2$ have equivalent semantics for any equivalent proofs with $\depth(P_1) + \depth(P_2) \leq n$.
\end{definition}

We now use these definitions to prove that Qunity's semantics is well-defined.

\begin{proof}[Proof of Theorem~\ref{thm:well-defined}]
	We will prove by induction that Qunity semantics is well-defined up to depth $n$ for any natural number $n$.
	\begin{itemize}
		\item Base case: Qunity semantics is well-defined up to depth 0.
			Consider two typing proofs $P_1$ and $P_2$ such that $\depth(P_1) = \depth(P_2) = 0$.
			In this case, both $P_1$ and $P_2$ are proven using a single inference rule.
			The possible cases for the rules used to prove $P_1$ and $P_2$ are: \textsc{T-Unit}, \textsc{T-Cvar}, \textsc{T-Qvar}, \textsc{T-Gate}, \textsc{T-Left}, \textsc{T-Right}.
			We can safely assume that both $P_1$ and $P_2$ were proven using the same inference rule, as each rule applies to a different syntax, except for \textsc{T-Cvar} and \textsc{T-Qvar}, which apply to different contexts.
			$P_1$ and $P_2$ are thus identical proofs (up to permutation of contexts) and thus have equivalent semantics.
		\item Qunity semantics is well-defined up to depth $(n+1)$. Assume by inductive hypothesis that Qunity semantics is well-defined up to depth $n$.
			Suppose $P_1$ and $P_2$ are two typing proofs such that $\depth(P_1) + \depth(P_2) \leq n + 1$, and consider all of the cases for the rules used to prove $P_1$ and $P_2$:
			\begin{itemize}
				\item The same rule is used to prove both proofs.
					Both proofs will also use equivalent subproofs because our typing relation is deterministic and fixes the typing context.
					Our inductive hypothesis says that these subproofs have equivalent semantics, and thus $P_1$ and $P_2$ have equivalent semantics.
				\item Either proof uses \textsc{T-PurePerm}. Without loss of generality, assume it is $P_1$ that proves $(\pi\subcap{g1}(\Gamma) \partition \pi\subcap{d1}(\Delta) \vdash e : T)$ using the \textsc{T-PurePerm} rule.
					$P_1$ is then proven in terms of another proof $P_0$ of $(\pi\subcap{g0}(\Gamma) \partition \pi\subcap{d0}(\pi\subcap{d1}(\Delta)) \vdash e : T)$ for some list permutations $\pi\subcap{g0}$ and $\pi\subcap{d0}$.
					The semantics of \textsc{T-PurePerm} is defined so that $\msem{\pi\subcap{g0}(\sigma) : P_0}\ket{\pi\subcap{d0}(\pi\subcap{d1}(\tau))} = \msem{P_1}\ket{\pi\subcap{d1}(\tau)}$.
					Our inductive hypothesis lets us assume that $\msem{\pi \subcap{g0}(\sigma) : P_0}\ket{\pi\subcap{d0}(\pi\subcap{d1}(\tau))} = \msem{\pi\subcap{g2}(\sigma) : P_2}\ket{\pi\subcap{d2}(\tau)}$, which then implies that $\msem{\pi\subcap{g1}(\sigma) : P_1}\ket{\pi\subcap{d1}(\tau)} = \msem{\pi\subcap{g2}(\sigma) : P_2}\ket{\pi\subcap{d2}(\tau)}$.
				\item Either proof uses \textsc{T-MixedPerm}. The denotations are then equivalent for the same reason as the \textsc{T-PurePerm} case.
				\item \textsc{T-Mix} and \textsc{T-MixedPair}.
					The judgment being proven is of the form $(\pi(\Delta, \Delta_0, \Delta_1) \Vdash \pair{e_0}{e_1} : T_0 \otimes T_1)$.
					Without loss of generality, assume $P_1$ uses \textsc{T-Mix} to prove $(\pi_1(\Delta, \Delta_0, \Delta_1) \Vdash \pair{e_0}{e_1} : T_0 \otimes T_1)$ and $P_2$ uses \textsc{T-MixedPair} to prove $(\pi_2(\Delta), \pi_{20}(\Delta_0), \pi_{21}(\Delta_1)) \Vdash \pair{e_0}{e_1} : T_0 \otimes T_1)$.
					Then $P_1$ is proven in terms of some proof $P_3$ of $(\varnothing \partition \pi_1(\Delta, \Delta_0, \Delta_1) \vdash \pair{e_0}{e_1} : T_0 \otimes T_1)$, and $P_2$ is proven in terms of some proofs $P_{40}$ of $(\pi_2(\Delta), \pi_{20}(\Delta_0) \Vdash e_0 : T_0)$ and $P_{41}$ of $(\pi_2(\Delta), \pi_{21}(\Delta_1) \Vdash e_1 : T_1)$.
					It is not hard to see that both the $P_3$ proof and the $P_4$ proofs can be proven in terms of some proofs $P_{50}$ of $(\varnothing \partition \pi_5(\Delta), \pi_{50}(\Delta_0) \vdash e_0 : T_0)$ and $P_{51}$ of $(\varnothing \partition \pi_5(\Delta), \pi_{51}(\Delta_1) \vdash e_1 : T_1)$, as the type system guarantees that $\pair{e_0}{e_1}$ can be purely typed if and only if both $e_0$ and $e_1$ are.
					Further, these proofs are necessarily of smaller depth, meaning that our inductive hypothesis can be applied, and we can conclude that:
					\begin{alignat*}{2}
						&&\;& \msem{P_1}\left(\op{\pi_1(\tau', \tau_0', \tau_1')}{\pi_1(\tau, \tau_0, \tau_1)}\right) \\
						&=&& \msem{P_3} \op{\pi_1(\tau', \tau_0', \tau_1')}{\pi_1(\tau, \tau_0, \tau_1)} \msem{P_3}^\dagger \\
						&=&& \left( \msem{P_{50}} \ket{\pi_5(\tau'), \pi_{50}(\tau_0')} \otimes \msem{P_{51}} \ket{\pi_5(\tau'), \pi_{51}(\tau_1')} \right) \left( \bra{\pi_5(\tau), \pi_{50}(\tau_0)} \msem{P_{50}}^\dagger \otimes \bra{\pi_5(\tau), \pi_{51}(\tau_1)} \msem{P_{51}}^\dagger \right) \\
						&=&& \msem{P_{50}} \op{\pi_5(\tau'), \pi_{50}(\tau_0')}{\pi_5(\tau), \pi_{50}(\tau_0)} \msem{P_{50}}^\dagger \otimes \msem{P_{51}} \op{\pi_5(\tau'), \pi_{51}(\tau_1')}{\pi_5(\tau), \pi_{51}(\tau_1)} \msem{P_{51}}^\dagger \\
						&=&& \msem{P_{40}} \left( \op{\pi_2(\tau'), \pi_{20}(\tau_0')}{\pi_2(\tau), \pi_{20}(\tau_0)}\right)  \otimes \msem{P_{41}} \left( \op{\pi_2(\tau'), \pi_{21}(\tau_1')}{\pi_2(\tau), \pi_{21}(\tau_1)} \right) \\
						&=&& \msem{P_2}\left(\op{\pi_2(\tau'), \pi_{20}(\tau_0'), \pi_{21}(\tau_1')}{\pi_2(\tau), \pi_{20}(\tau_0), \pi_{21}(\tau_1)}\right) \\
					\end{alignat*}
				\item \textsc{T-Mix} and \textsc{T-MixedApp}.
					This case is very similar.
					The judgment being proven is of the form $(\pi(\Delta) \Vdash f\: e : T)$.
					Assume without loss of generality that $P_1$ proves $(\pi_1(\Delta) \Vdash f\: e : T)$ using rule \textsc{T-Mix} in terms of a proof $P_3$ of $(\varnothing \partition \pi_1(\Delta) \vdash f\: e : T)$, and that $P_2$ proves $(\pi_2(\Delta) \Vdash f\: e : T)$ using rule \textsc{T-MixedApp} in terms of a proof $P_{4e}$ of $(\pi_2(\Delta) \Vdash e : T')$ and a proof $P_{4f}$ of $(\vdash f : T' \Rrightarrow T)$.
					It is again not hard to see that both the $P_3$ proof and the $P_4$ proofs can be proven in terms of a proof $P_{5e}$ of $(\varnothing \partition \pi_2(\Delta) \vdash e : T')$ and a proof $P_{5f}$ of $(\vdash f : T' \rightsquigarrow T)$.
					These proofs are of smaller depth, so we can use our inductive hypothesis to conclude:
					\begin{alignat*}{2}
						&&\;& \msem{P_1}\left(\op{\pi_1(\tau')}{\pi_1(\tau)}\right) \\
						&=&& \msem{P_3} \op{\pi_1(\tau')}{\pi_1(\tau)} \msem{P_3}^\dagger \\
						&=&& \msem{P_{5f}} \msem{P_{5e}} \op{\pi_5(\tau')}{\pi_5(\tau)} \msem{P_{5e}}^\dagger \msem{P_{5f}}^\dagger \\
						&=&& \msem{P_{4f}}\left( \msem{P_{4e}} \left( \op{\pi_2(\tau')}{\pi_2(\tau)} \right) \right) \\
						&=&& \msem{P_2}\left(\op{\pi_2(\tau')}{\pi_2(\tau)}\right) \\
					\end{alignat*}
				\item \textsc{T-Channel} and \textsc{T-MixedAbs}.
					The judgment being proven is $(\vdash \lambda e \xmapsto{{\color{gray}T}} e' : T \Rrightarrow T')$
					Assume without loss of generality that $P_1$ is proven using rule \textsc{T-Channel} in terms of a proof $P_3$ of $(\vdash \lambda e \xmapsto{{\color{gray}T}} e' : T \rightsquigarrow T')$, and that $P_2$ is proven using rule \textsc{T-MixedAbs} in terms of a proof $P_4$ of $(\varnothing \partition \Delta \vdash e : T)$ and a proof $P_4'$ of $(\Delta \Vdash e' : T')$.
					(We don't have to include a discarded context $\Delta_0$ because we know this function is also purely typed.)
					Then these subproofs can be proven in terms of $P_4$ and a proof $P_5'$ of $(\varnothing \partition \Delta e' : T')$, and we can use our inductive hypothesis to prove $P_1$ and $P_2$ semantics equivalent:
					\begin{alignat*}{2}
						&&\;& \msem{P_1}\left(\op{v'}{v}\right) \\
						&=&& \msem{P_3} \op{v'}{v} \msem{P_3}^\dagger \\
						&=&& \msem{P_{5}'} \msem{P_{4}}^\dagger \op{v'}{v} \msem{P_{4}} \msem{P_{5}'}^\dagger \\
						&=&& \msem{P_{4}'} \left( \msem{P_{4}}^\dagger \op{v'}{v} \msem{P_{4}} \right) \\
						&=&& \msem{P_2}\left(\op{v'}{v}\right) \\
					\end{alignat*}
				\item Any other cases are impossible because the typing rules target conflicting syntax or conflicting contexts. The cases above are the only ones where the same judgment can be proven using different rules.
			\end{itemize}
	\end{itemize}
\end{proof}

\clearpage

\section{Full Compilation}
\label{app:compilation}

In Section~\ref{sec:compilation}, we outline our procedure for compiling Qunity programs to low-level quantum circuits.
In this appendix, we assume the reader has already read that section and cover the details we previously omitted, justifying that this strategy does indeed output correct circuits.
This appendix is organized in the reverse order of Section~\ref{sec:compilation}: we construct Qunity semantics from the bottom-up, starting from low-level circuits.
Note that the validity checking components (Lemmas~\ref{lem:validate}, \ref{lem:invalidflag}, and \ref{lem:invalidflagmix}) were not present at the time of original publication.
These additions correct a mistake in the original version of this appendix.

\subsection{Low-level circuits and higher-level circuits}
\label{app:ll-hl}

First, we construct a circuit component used for flagging invalid value encodings.
We use the following notation for the projectors onto valid and invalid encodings:
\[
	\projv{T} \defeq \sum_{v \in \V(T)} \op{\enc{v}}{\enc{v}} \in \linear(\complex^{2^{\size(T)}})
	\qquad
	\projvx{T} \defeq \mathbb{I} - \projv{T}
\]
\begin{lemma}
	\label{lem:validate}
	For any type $T$, we can construct a low-level circuit implementing the self-adjoint unitary operator $\validate{T}: \linear(\complex^{2^{\size(T) + n(T)}})$ for some $n(T) \in \mathbb{N}$ with the following properties:
	\begin{align*}
		\left(\mathbb{I} \otimes \op{\zero}{\zero}^{\otimes n(T)}\right) \validate{T} \left(\projv{T} \otimes \op{\zero}{\zero}^{\otimes n(T)}\right) &= \projv{T} \otimes \op{\zero}{\zero}^{\otimes n(T)} \\
		\left(\mathbb{I} \otimes \op{\zero}{\zero}^{\otimes n(T)}\right) \validate{T} \left(\projvx{T} \otimes \op{\zero}{\zero}^{\otimes n(T)}\right) &= 0
	\end{align*}
	In other words, the $\validate{T}$ circuit uses $n(T)$ qubits to flag invalid encodings.
\end{lemma}
\begin{proof}
	Our construction is recursive on $T$.
	In the \Void{} case, there is \emph{no} valid encoding, so $n(\Void)=1$ and the circuit is a single Pauli-X gate.
	In the \Unit{} case, there is no \emph{invalid} encoding, so $n(\Unit)=0$ and the empty circuit enacts an identity operator.

	For the $T_0 \oplus T_1$ case, suppose that $\size(T_0) \leq \size(T_1)$.
	We can make this assumption without loss of generality because of the direct sum swap map isomorphism $T_0 \oplus T_1 \cong T_1 \oplus T_0$, which can be implemented with a single Pauli-X gate on the first qubit.
	Set $n(T_0 \oplus T_1) = 1 + \max\{n(T_0), n(T_1)\}$, and then the following circuit suffices:
\begin{DIFnomarkup}
	\[
	\begin{quantikz}
		\lstick{1}                         & \octrl{2} & \octrl{1}                                & \ctrl{1}                                 & \rstick{1} \qw \\
		\lstick{$\size(T_0)$}              & \qw       & \gate{\validate{T_0}} \vqw{3}            & \gate[2]{\validate{T_1}}                 & \rstick{$\size(T_1)$} \qw \\
		\lstick{$\size(T_1) - \size(T_0)$} & \ctrl{1}  & \qw                                      & \vqw{2} \\
		\lstick{1}                         & \targ{}   & \qw                                      & \qw                                      & \rstick{1} \qw \\
		\lstick{$\max\{n(T_0), n(T_1)\}$}  & \qw       & \gate{\validate{T_0} \otimes \mathbb{I}} & \gate{\validate{T_1} \otimes \mathbb{I}} & \rstick{$\max\{n(T_0), n(T_1)\}$} \qw
	\end{quantikz}
	\]
\end{DIFnomarkup}
	The first control on the $\size(T_1) - \size(T_0)$ wires should be controlled on the condition that \emph{any} of the wires are in the $\ket{\one}$ state, detecting whether the sum type encoding is improperly zero-padded.
	The control wires are designed so that they leave valid encodings unchanged, but flag invalid encodings.

	In the $T_0 \otimes T_1$ case, we use $n(T_0 \otimes T_1) = n(T_0) + n(T_1)$, with the following circuit:
\begin{DIFnomarkup}
	\[
	\begin{quantikz}
		\lstick{$\size(T_0)$} & \gate{\validate{T_0}} \vqw{2} & \qw                           & \qw \rstick{$\size(T_0)$} \\
		\lstick{$\size(T_1)$} & \qw                           & \gate{\validate{T_1}} \vqw{2} & \qw \rstick{$\size(T_1)$} \\
		\lstick{$n(T_0)$}     & \gate{\validate{T_0}}         & \qw                           & \qw \rstick{$n(T_0)$} \\
		\lstick{$n(T_1)$}     & \qw                           & \gate{\validate{T_1}}         & \qw \rstick{$n(T_1)$}
	\end{quantikz}
	\]
\end{DIFnomarkup}
A pair encoding is valid if an only if each of its components is valid, and so this circuit flags the two components independently.
\end{proof}

Definition~\ref{def:implement-kraus} requires that a low-level unitary circuit $U$ implementing a Kraus operator $E$ satisfy $\bra{\encode{v'}, \zero} U \ket{\encode{v}, \zero} = \bra{v'}U\ket{v}$.
Some constructions below will depend on implementations with a slightly stricter requirement, formalized in the following lemma.
\begin{lemma}
	\label{lem:invalidflag}
	Suppose that it is possible to implement a Kraus operator $E : \Hilb(T) \to \Hilb(T')$ with a unitary $U : \complex^{2^{\size(T) + \nprep}} \to \complex^{2^{\size(T') + \nflag}}$ that meets the requirements of Defintion~\ref{def:implement-kraus}.
	Then, it is also possible to implement $E$ with an operator $U' : \complex^{2^{\size(T) + \nprep'}} \to \complex^{2^{\size(T') + \nflag'}}$ for integers $\nprep'$ and $\nflag'$ such that:
	\[
		(\mathbb{I} \otimes \op{\zero}{\zero}^{\otimes \nflag'}) U' (\mathbb{I} \otimes \op{\zero}{\zero}^{\otimes \nprep'})	= (\projv{T'} \otimes \op{\zero}{\zero}^{\otimes \nflag'}) U' (\mathbb{I} \otimes \op{\zero}{\zero}^{\otimes \nprep'})
	\]
\end{lemma}
\begin{proof}
	We construct $U'$ simply by appending the validation circuit to $U$, using $\nprep' = \nprep + n(T)$ and $\nflag' = \nflag + n(T)$, using $n(T)$ from Lemma~\ref{lem:validate}.
\begin{DIFnomarkup}
	\[
	\begin{quantikz}
		\lstick{$\size(T)$} & \gate[2]{U} & \gate{\validate{T}} \vqw{2} & \qw \rstick{$\size(T)$} \\
		\lstick{$\nprep$} &  & \qw & \qw \rstick{$\nflag$} \\
		\lstick{$n(T)$} & \qw & \gate{\validate{T}} & \qw \rstick{$n(T)$}
	\end{quantikz}
	\]
\end{DIFnomarkup}
Using the second condition of Lemma~\ref{lem:validate}, we have:
\begin{align*}
		(\mathbb{I} \otimes \op{\zero}{\zero}^{\otimes \nflag'}) U' (\mathbb{I} \otimes \op{\zero}{\zero}^{\otimes \nprep'}	&= ((\projv{T'} + \projvx{T'}) \otimes \op{\zero}{\zero}^{\otimes \nflag'}) U' (\mathbb{I} \otimes \op{\zero}{\zero}^{\otimes \nprep'}) \\
																																																												&= (\projv{T'} \otimes \op{\zero}{\zero}^{\otimes \nflag'}) U' (\mathbb{I} \otimes \op{\zero}{\zero}^{\otimes \nprep'})
	\end{align*}

	To show that it still meets the requirements of Definition~\ref{def:implement-kraus}:
\begin{align*}
	\bra{\enc(v'), \zero^{\otimes \nflag'}} U' \ket{\enc(v), \zero^{\otimes \nprep'}}
	&= \bra{\enc(v'), \zero^{\otimes \nflag'}} \left(\projv{T'} \otimes \op{\zero}{\zero}^{\otimes \nflag'}\right) U' \ket{\enc(v), \zero^{\otimes \nprep'}} \\
	&= \bra{\enc(v'), \zero^{\otimes \nflag}} \left(\projv{T'} \otimes \op{\zero}{\zero}^{\otimes \nflag}\right) U \ket{\enc(v), \zero^{\otimes \nprep}} \\
	&= \bra{\enc(v'), \zero^{\otimes \nflag}} U \ket{\enc(v), \zero^{\otimes \nprep}}
	\end{align*}
	Thus, it behaves the same way on valid inputs as the original $U$.
\end{proof}

The same idea can also be used when implementing superoperators to ensure that valid inputs lead to valid outputs.
\begin{lemma}
	\label{lem:invalidflagmix}
	Suppose that it is possible to implement a completely positive trace-non-increasing linear superoperator $\mathcal{E} : \linear(\Hilb(T)) \to \linear(\Hilb(T'))$ with a unitary $U : \complex^{2^{\size(T) + \nprep}} \to \complex^{2^{\size(T') + \nflag + \ngarb}}$ that meets the requirements of Defintion~\ref{def:implement-superop}.
	Then, it is also possible to implement $E$ with an operator $U' : \complex^{2^{\size(T) + \nprep'}} \to \complex^{2^{\size(T') + \nflag' + \ngarb}}$ for integers $\nprep', \nflag'$ with the additional requirement that that for all $v \in \V(T)$,
	\[
		\left(\projvx{T'} \otimes \op{\zero}{\zero}^{\otimes \nflag'} \otimes \mathbb{I}\right) U' \ket{\encode(v), \zero^{\otimes \nprep'}} = 0
	\]
\end{lemma}
\begin{proof}
	As in Lemma~\ref{lem:invalidflag}, we append the validation circuit to $U$, using $\nprep' = \nprep + n(T)$ and $\nflag' = \nflag + n(T)$, with $n(T)$ from Lemma~\ref{lem:validate}.
\begin{DIFnomarkup}
	\[
	\begin{quantikz}
		\lstick{$\size(T)$}  & \gate[3,nwires=3]{U} & \qw & \gate{\validate{T}} \vqw{2} & \qw \rstick{$\size(T)$} \\
		\lstick{$\nprep$}  & & \qw & \qw & \qw \rstick{$\nflag$} \\
		& & \gate[swap]{} & \gate{\validate{T}} & \qw \rstick{$n(T)$} \\
		\lstick{$n(T)$} & \qw &  & \qw & \qw \rstick{$\ngarb$}
	\end{quantikz}
	\]
\end{DIFnomarkup}
The first condition of Lemma~\ref{lem:validate} ensures that the requirements of Definition~\ref{def:implement-superop} still hold.
The second condition of Lemma~\ref{lem:validate} ensures the new requirement.
\end{proof}

Next, we construct the direct sum injections, used in several of the higher-level circuits.
Throughout the rest of this section, we write the number of prep, flag, and garbage wires as $n\subcap{p}$, $n\subcap{f}$, and $n\subcap{g}$, respectively, and we use $C(E)$ to denote the low-level unitary qubit operator that implements the high-level Kraus operator $E$.

\begin{lemma}
	\label{lem:compile-inj}
	It is always possible to implement the direct sum injections $\msem{\lef{T_0}{T_1}}$ and $\msem{\rit{T_0}{T_1}}$.
\end{lemma}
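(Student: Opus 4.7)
The plan is to give explicit unitary circuits for each injection and verify Definition~\ref{def:implement-kraus} by directly comparing bitstrings. By the definitions of $\size$ and $\encode$, the left injection acts on encoded values by prepending a $\zero$ tag bit (with trailing zero padding out to length $\size(T_0\oplus T_1)$), and the right injection prepends a $\one$ tag bit. So no genuine computation is required; it suffices to introduce a fresh tag wire and, in the right case, flip it.

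For $\lef{T_0}{T_1}$, I will set $\nflag = 0$ and $\nprep = 1 + \max\{\size(T_0),\size(T_1)\} - \size(T_0)$, which satisfies $\size(T_0)+\nprep = \size(T_0\oplus T_1)+\nflag$. The low-level circuit takes the $\size(T_0)$ data wires together with $\nprep$ prep wires (all initialized to $\ket\zero$) and simply permutes them with \textsc{swap} gates so that one prep wire becomes the leading tag qubit, the $\size(T_0)$ data wires occupy the next block, and the remaining $\nprep - 1$ prep wires trail as padding. No non-trivial gate is applied, so the tag stays in $\ket\zero$. For $\rit{T_0}{T_1}$, I will use the mirror construction with $\nprep = 1 + \max\{\size(T_0),\size(T_1)\} - \size(T_1)$ prep wires, applying a single $X$ gate to the new tag wire to produce the leading $\one$.

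Verification of Definition~\ref{def:implement-kraus} is then a direct computation. For the left circuit, applied to $\ket{\encode(v),\zero^{\otimes\nprep}}$ with $v\in\V(T_0)$, the output is the basis state whose bitstring is $\texttt{"0"}\doubleplus\encode(v)$ followed by the remaining prep zeros, which is exactly $\ket{\encode(\lef{T_0}{T_1}\:v)}$ in the output register. Hence $\bra{\encode(v'),\zero^{\otimes\nflag}}U\ket{\encode(v),\zero^{\otimes\nprep}}$ is $1$ when $v' = \lef{T_0}{T_1}\:v$ and $0$ otherwise, matching $\bra{v'}\msem{\vdash\lef{T_0}{T_1}:T_0\rightsquigarrow T_0\oplus T_1}\ket{v}$ from Figure~\ref{fig:sem-pure-prog}. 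The right case is identical after accounting for the $X$ flip of the tag qubit.

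The only real obstacle is bookkeeping around the padding: when $\size(T_0)\neq\size(T_1)$, the shorter summand's encoding leaves trailing wires that must stay in $\ket\zero$ throughout for the inner product in Definition~\ref{def:implement-kraus} to match the intended basis state of $\Hilb(T_0\oplus T_1)$. Once the arithmetic $\size(T_j)+\nprep = 1+\max\{\size(T_0),\size(T_1)\}$ is checked in both cases and the padding convention is fixed consistently with the $\encode$ definition, the rest of the argument is immediate.
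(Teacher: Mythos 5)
Your construction is correct and is essentially the paper's own proof: the same prep/flag counts ($\nflag=0$, $\nprep = 1+\max\{\size(T_0),\size(T_1)\}-\size(T_j)$), the same \textsc{swap}-based wire permutation placing a fresh prep qubit as the leading tag, the same single $X$ gate for the right injection, and the same direct verification of Definition~\ref{def:implement-kraus} against the $\encode$ convention. The paper just spells out the matrix-element check for both the matching and mismatching tag cases explicitly, which your argument covers implicitly.
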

\begin{proof}
First, one can implement the operator $\lef{T_0}{T_1} : \Hilb(T_0) \to \Hilb(T_0) \oplus \Hilb(T_1)$ using $n\subcap{p} = 1 + \max\{\size(T_0), \size(T_1)\} - \size(T_0)$ and $n\subcap{f} = 0$ with this circuit:
\[
\begin{quantikz}
	\lstick{$\size(T_0)$} & \gate[swap]{} & \qw & \rstick{1} \qw \\
	\lstick{1} & & \qwbundle{\size(T_0)} & \rstick[wires=2]{$\max\{\size(T_0),\size(T_1)\}$} \qw \\
	\lstick{$\max\{\size(T_0),\size(T_1)\} - \size(T_0)$} & \qw & \qw & \qw
\end{quantikz}
\]
Using $U$ to represent the qubit-based unitary implemented by this circuit, we can show that it meets the needed criteria:
	\begin{alignat*}{2}
		&&\;& \bra{\enc(\lef{T_0}{T_1} v'), 0} U \ket{\enc(v), 0, 0} \\
		&=&& \bra{0, \enc(v'), 0} U \ket{\enc(v), 0, 0} \\
		&=&& \ip{0, \enc(v'), 0}{0, \enc(v), 0} \\
		&=&& \ip{\enc(v')}{\enc(v)} \\
		&=&& \ip{v'}{v} \\
		&=&& \left(\bra{v'} \oplus 0\right)\left(\ket{v} \oplus 0\right) \\
		&=&& \ip{\lef{T_0}{T_1} v'}{\lef{T_0}{T_1} v} \\
		&&\;& \bra{\enc(\rit{T_0}{T_1} v'), 0} U \ket{\enc(v), 0, 0} \\
		&=&& \bra{1, \enc(v'), 0} U \ket{\enc(v), 0, 0} \\
		&=&& \ip{1, \enc(v'), 0}{0, \enc(v), 0} \\
		&=&& 0 \\
		&=&& \left(0 \oplus \bra{v'}\right)\left(\ket{v} \oplus 0\right) \\
		&=&& \ip{\rit{T_0}{T_1} v'}{\lef{T_0}{T_1} v} \\
	\end{alignat*}

Similarly, one can implement the operator $\rit{T_0}{T_1} : \Hilb(T_1) \to \Hilb(T_0) \oplus \Hilb(T_1)$ using $n\subcap{p} = 1 + \max\{\size(T_0), \size(T_1)\} - \size(T_1)$ and $n\subcap{f} = 0$ like this:
\[
\begin{quantikz}
	\lstick{$\size(T_1)$} & \gate[swap]{} & \gate{X} & \rstick{1} \qw \\
	\lstick{1} & & \qwbundle{\size(T_1)} & \rstick[wires=2]{$\max\{\size(T_0),\size(T_1)\}$} \qw \\
	\lstick{$\max\{\size(T_0),\size(T_1)\} - \size(T_1)$} & \qw & \qw & \qw
\end{quantikz}
\]
	\begin{alignat*}{2}
		&&\;& \bra{\enc(\lef{T_0}{T_1} v'), 0} U \ket{\enc(v), 0, 0} \\
		&=&& \bra{0, \enc(v'), 0} U \ket{\enc(v), 0, 0} \\
		&=&& \ip{0, \enc(v'), 0}{1, \enc(v), 0} \\
		&=&& \ip{\enc(v')}{\enc(v)} \\
		&=&& 0 \\
		&=&& \left(\bra{v'} \oplus 0\right)\left(0 \oplus \ket{v}\right) \\
		&=&& \ip{\lef{T_0}{T_1} v'}{\rit{T_0}{T_1} v} \\
		&&\;& \bra{\enc(0 \oplus v),0} U \ket{\enc(v), 0, 0} \\
		&=&& \bra{1, \enc(v'), 0} U \ket{\enc(v), 0, 0} \\
		&=&& \ip{1, \enc(v'), 0}{1, \enc(v), 0} \\
		&=&& \ip{\enc(v')}{\enc(v)} \\
		&=&& \ip{v'}{v} \\
		&=&& \left(0 \oplus \bra{v'}\right)\left(0 \oplus \ket{v}\right) \\
		&=&& \ip{\rit{T_0}{T_1} v'}{\rit{T_0}{T_1} v}
	\end{alignat*}
\end{proof}

When we construct circuits of these Kraus operators, we are implicitly using three things: identity operators (represented by bare wires), tensor products (represented by vertically-stacked gates), and operator composition (represented by horizontally-stacked gates).
To justify this form of circuit diagram, we must show that these constructions are always possible.
It is always possible to implement the identity operator $\mathbb{I} : \Hilb(T) \to \Hilb(T)$, by using $n\subcap{p} = n\subcap{f} = 0$ and an identity circuit.
The next two lemmas demonstrate that tensor products and compositions are possible.

\begin{lemma}
	Suppose it is possible to implement the operators $E_0 : \Hilb_0 \to \Hilb_0'$ and $E_1 : \Hilb_1 \to \Hilb_1'$.
	Then it is possible to implement the operator $E_0 \otimes E_1 : \Hilb_0 \otimes \Hilb_1 \to \Hilb_0' \otimes \Hilb_1'$.
\end{lemma}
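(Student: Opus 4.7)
The plan is to construct the desired circuit by running the two given implementations in parallel on disjoint sets of wires, then permuting the wires so that the output qubits are grouped correctly into data, flag, and (trivially) prep registers. Concretely, let $U_0 : \complex^{2^{\size(T_0) + \nprep[0]}} \to \complex^{2^{\size(T_0') + \nflag[0]}}$ be the given unitary implementing $E_0$ and $U_1 : \complex^{2^{\size(T_1) + \nprep[1]}} \to \complex^{2^{\size(T_1') + \nflag[1]}}$ the given unitary implementing $E_1$. I will define the combined circuit $U$ on $\size(T_0 \otimes T_1) + (\nprep[0] + \nprep[1])$ input qubits by: (i) permuting the input wires to group them as $[\,\size(T_0)\,|\,\nprep[0]\,|\,\size(T_1)\,|\,\nprep[1]\,]$, (ii) applying $U_0 \otimes U_1$ on these two disjoint blocks, and (iii) permuting the output wires from $[\,\size(T_0')\,|\,\nflag[0]\,|\,\size(T_1')\,|\,\nflag[1]\,]$ into the target layout $[\,\size(T_0')\,|\,\size(T_1')\,|\,\nflag[0]\,|\,\nflag[1]\,]$, which matches $\size(T_0' \otimes T_1') + (\nflag[0] + \nflag[1])$.

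The correctness check uses the crucial fact, immediate from the definition of $\encode$ on $T_0 \otimes T_1$, that $\encode(\pair{v_0}{v_1})$ is exactly $\encode(v_0) \doubleplus \encode(v_1)$. Similarly, $\ket{\pair{v_0'}{v_1'}} = \ket{v_0'} \otimes \ket{v_1'}$ by Definition~\ref{def:typed-hilbert}. After applying the input and output permutations (which act as identities on the relevant computational basis states once the bitstring segments are regrouped), the matrix element I must verify factors as a product of independent matrix elements of $U_0$ and $U_1$, and hence equals
\[
\bra{v_0'} E_0 \ket{v_0} \cdot \bra{v_1'} E_1 \ket{v_1} = \bra{\pair{v_0'}{v_1'}} (E_0 \otimes E_1) \ket{\pair{v_0}{v_1}},
\]
using the hypothesis that $U_0$ and $U_1$ implement $E_0$ and $E_1$ in the sense of Definition~\ref{def:implement-kraus}, together with the standard identity $(\bra{a} \otimes \bra{b})(E_0 \otimes E_1)(\ket{c} \otimes \ket{d}) = \bra{a} E_0 \ket{c} \cdot \bra{b} E_1 \ket{d}$.

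The routine checks are that $\nprep[0] + \nprep[1]$ and $\nflag[0] + \nflag[1]$ satisfy the dimensional constraint of Definition~\ref{def:implement-kraus}, which follows by adding the constraints $\size(T_i) + \nprep[i] = \size(T_i') + \nflag[i]$ for $i \in \{0,1\}$ together with additivity of $\size$ on $\otimes$; and that the wire permutations are realizable by a sequence of \textsc{swap} gates, which is standard. The only nontrivial part of the argument is verifying that the output-side permutation and the implicit zero-padding on flag registers interact correctly with the bra on the left-hand side of Definition~\ref{def:implement-kraus}; I expect this to be a straightforward index-tracking computation and the main obstacle will simply be pinning down the bookkeeping cleanly rather than any conceptual issue. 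I would handle it by naming the permutations explicitly as bijections on qubit positions and observing that both sides of the defining equation evaluate to the same product of inner products between computational basis states.
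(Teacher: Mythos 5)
Your proposal is correct and matches the paper's own proof essentially step for step: both construct the combined circuit by \textsc{swap}-permuting the wires so that $U_0 \otimes U_1$ acts on disjoint blocks, then regrouping the outputs into data and flag registers, and both verify correctness by factoring the matrix element $\bra{\enc(v_0'),\enc(v_1'),0,0}U\ket{\enc(v_0),\enc(v_1),0,0}$ into $\bra{v_0'}E_0\ket{v_0}\cdot\bra{v_1'}E_1\ket{v_1}$ using the concatenation property of $\encode$. The only difference is that you spell out the dimensional bookkeeping ($\nprep$ and $\nflag$ adding correctly) slightly more explicitly than the paper does.
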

\begin{proof}
	Assume $E_0$ is implemented by $U_0$ with $n_{\zero,\textsc{p}}$ prep wires and $n_{\zero,\textsc{f}}$ flag wires, and assume that $E_1$ is implemented by $U_1$ with $n_{\one,\textsc{p}}$ prep wires and $n_{\one,\textsc{f}}$ flag wires.
	The following qubit circuit $U$ then implements $E_0 \otimes E_1$ with $(n_{\zero,\textsc{p}} + n_{\one,\textsc{p}})$ prep wires and $(n_{\zero,\textsc{f}} + n_{\one,\textsc{f}})$ flag wires: 

	\[
	\begin{quantikz}
		\lstick{$\size(\Hilb_0)$} & \qw & \gate[2]{U_0} & \qw & \rstick{$\size(\Hilb_0')$} \qw \\
		\lstick{$\size(\Hilb_1)$} & \gate[swap]{} & & \gate[swap]{} & \rstick{$\size(\Hilb_1')$} \qw \\
		\lstick{$n_{\zero,\textsc{p}}$} & & \gate[2]{U_1} & & \rstick{$n_{\zero,\textsc{f}}$} \qw \\
		\lstick{$n_{\one,\textsc{p}}$} & \qw & & \qw & \rstick{$n_{\one,\textsc{f}}$} \qw \\
	\end{quantikz}
	\]

	\begin{alignat*}{2}
		&&\;& \bra{\enc(v_0'), \enc(v_1'), 0, 0} U \ket{\enc(v_0), \enc(v_1), 0, 0} \\
		&=&& \bra{\enc(v_0'), 0, \enc(v_1'), 0} (U_0 \otimes U_1) \ket{\enc(v_0), 0, \enc(v_1), 0} \\
		&=&& \bra{\enc(v_0'), 0} U_0 \ket{\enc(v_0), 0} \cdot \bra{\enc(v_1'), 0} U_1 \ket{\enc(v_1), 0} \\
		&=&& \bra{v_0'} E_0 \ket{v_0} \cdot \bra{v_1'} E_1 \ket{v_1} \\
		&=&& \bra{v_0', v_1'} (E_0 \otimes E_1) \ket{v_0, v_1}
	\end{alignat*}
\end{proof}

\begin{lemma}
	\label{lem:compose}
	Suppose it is possible to implement the operators $E_0 : \Hilb_0 \to \Hilb'$ and $E_1 : \Hilb' \to \Hilb_1$.
	Then it is possible to implement the operator $E_1 E_0 : \Hilb_0 \to \Hilb_1$.
\end{lemma}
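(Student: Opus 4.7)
The plan is to implement $E_1 E_0$ by serial composition of the two given circuits. Let $U_0$ implement $E_0$ with $n_{\zero,\textsc{p}}$ prep wires and $n_{\zero,\textsc{f}}$ flag wires, and let $U_1$ implement $E_1$ with $n_{\one,\textsc{p}}$ prep wires and $n_{\one,\textsc{f}}$ flag wires. I would build the composed circuit $U$ on $\size(\Hilb_0) + n_{\zero,\textsc{p}} + n_{\one,\textsc{p}}$ qubits: apply $U_0$ to the $\size(\Hilb_0)$ input wires together with the first $n_{\zero,\textsc{p}}$ prep wires (leaving the remaining $n_{\one,\textsc{p}}$ prep wires as bare identity), and then, after any needed \textsc{swap} routing, apply $U_1$ to the resulting $\size(\Hilb')$ output wires of $U_0$ together with the reserved $n_{\one,\textsc{p}}$ prep wires. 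The $n_{\zero,\textsc{f}}$ flag wires produced by $U_0$ are passed through to the end untouched, and together with the $n_{\one,\textsc{f}}$ flag wires produced by $U_1$ form the final flag register of size $n_\textsc{f} = n_{\zero,\textsc{f}} + n_{\one,\textsc{f}}$. The arithmetic identity $\size(\Hilb_0) + n_\textsc{p} = \size(\Hilb_1) + n_\textsc{f}$ required by Definition~\ref{def:implement-kraus} follows by summing the corresponding identities for $U_0$ and $U_1$.

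The next step is to verify the inner-product condition of Definition~\ref{def:implement-kraus}. Since $U_1$ acts as identity on the $n_{\zero,\textsc{f}}$ flag wires and on the original position of those wires, inserting a complete computational basis on the intermediate state at the cut between $U_0$ and $U_1$ and using $\langle 0^{n_{\one,\textsc{p}}}|0^{n_{\one,\textsc{p}}}\rangle = 1$ and $\langle 0^{n_{\zero,\textsc{f}}}|b\rangle = \delta_{b,0}$ collapses the sum to
\[
	\bra{\encode(v_1), 0^{n_\textsc{f}}} U \ket{\encode(v_0), 0^{n_\textsc{p}}} = \sum_{w} \bra{\encode(v_1), 0^{n_{\one,\textsc{f}}}} U_1 \ket{w, 0^{n_{\one,\textsc{p}}}} \cdot \bra{w, 0^{n_{\zero,\textsc{f}}}} U_0 \ket{\encode(v_0), 0^{n_{\zero,\textsc{p}}}}
\]
over all bitstrings $w \in \{\zero,\one\}^{\size(\Hilb')}$. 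For $w = \encode(v')$ with $v' \in \V(\Hilb')$, the two factors are, by the implementation assumptions on $U_0$ and $U_1$, exactly $\bra{v_1} E_1 \ket{v'}$ and $\bra{v'} E_0 \ket{v_0}$; summing and inserting the resolution of identity $\sum_{v'} \op{v'}{v'} = \mathbb{I}$ on $\Hilb'$ (which holds because $\{\ket{v'}\}$ is an orthonormal basis of $\Hilb'$) yields the desired value $\bra{v_1} E_1 E_0 \ket{v_0}$.

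The main obstacle is the residual contributions from bitstrings $w$ that are not of the form $\encode(v')$, which exist in general since $2^{\size(\Hilb')}$ can strictly exceed $|\V(\Hilb')|$ (e.g.\ whenever $\Hilb'$ involves an unbalanced direct sum). Definition~\ref{def:implement-kraus} places no direct constraint on the matrix elements of $U_0$ or $U_1$ at such $w$, so in principle these terms could spoil the equality above. The plan to address this is to strengthen the inductive invariant maintained by the compiler: every implementation produced by our construction---the injections of Lemma~\ref{lem:compile-inj}, the identity, the tensor product, and (now) the composition---additionally satisfies that $U$ sends $\ket{\encode(v), 0^{n_\textsc{p}}}$ into the span of $\{\ket{\encode(v'), 0^{n_\textsc{f}}} : v' \in \V\} \cup \{\text{states with a nonzero flag wire}\}$, so that ``non-encoding with zero flags'' amplitudes vanish. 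Under this invariant, each $w \notin \encode(\V(\Hilb'))$ contributes zero to the sum (the $U_0$ factor is zero), and the composition is itself shown to preserve the invariant, closing the induction. Once this is recorded, the remainder of the verification is bookkeeping on the wire layout.
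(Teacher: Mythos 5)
Your construction and verification coincide with the paper's: serial composition with pooled prep and flag registers, followed by inserting a complete computational basis at the cut. The ``main obstacle'' you identify, however, is not a defect of your write-up but a genuine gap in the paper's own proof, which silently replaces the sum over all $b \in \{\zero,\one\}^{\size(\Hilb')}$ by a sum over the encoded basis states only. Definition~\ref{def:implement-kraus} constrains $U_0$ solely on matrix elements between encoded zero-flag states, so for a strictly norm-decreasing $E_0$ a perfectly legitimate implementation may place amplitude on a non-encoding zero-flag bitstring (e.g.\ for the zero operator into $\Bit \oplus \Unit$, where only three of the four two-bit strings are encodings, take $U_0\ket{\zero\zero} = \ket{\one\one}$), and a matching $U_1$ can pick that amplitude back up, so the naively composed circuit would not implement $E_1 E_0$. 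Your strengthened invariant --- every compiled $U$ sends $\ket{\enc(v), \zero^{\otimes \nprep}}$ into the span of the encoded zero-flag states together with the states having a nonzero flag --- kills exactly these cross terms and is the right repair. Two loose ends remain: the adjoint transformer of Lemma~\ref{lem:compile-adj} swaps prep and flag wires, so your one-sided invariant is not closed under it; you need the symmetric statement (impose the condition on $U$ and $U^\dagger$ simultaneously) for the induction to close, since adjoints are used pervasively in the compiler. And the invariant must also be verified for the remaining primitives and transformers (the $\texttt{u}_{\texttt 3}$ gates, the associators and distributors, and the constructions of Lemmas~\ref{lem:mixed-error} and~\ref{lem:pure-error}), not only the injections, identity, tensor product, and composition. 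With those additions your argument is complete and is strictly more careful than the one in the paper.
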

\begin{proof}
	Assume $E_0$ is implemented by $C(E_0) = U_0$ with $n_{\zero,\textsc{p}}$ prep wires and $n_{\zero,\textsc{f}}$ flag wires, and assume that $E_1$ is implemented by $C(E_1) = U_1$ with $n_{\one,\textsc{p}}$ prep wires and $n_{\one,\textsc{f}}$ flag wires.
	Assume that $U_0$ adheres to the requirements of Lemma~\ref{lem:invalidflag}, and let $B'$ be a basis for $\Hilb'$.
	The following qubit circuit $U$ then implements $E_1 E_0$ with $(n_{\zero,\textsc{p}} + n_{\one,\textsc{p}})$ prep wires and $(n_{\zero,\textsc{f}} + n_{\one,\textsc{f}})$ flag wires: 

	\[
	\begin{quantikz}
		\lstick{$\size(\Hilb_0)$} & \gate[2]{U_0} & \qwbundle{\size(\Hilb')} & \gate[2]{U_1} & \rstick{$\size(\Hilb_1)$} \qw \\
		\lstick{$n_{\zero,\textsc{p}}$} & & \gate[swap]{} & & \rstick{$n_{\one,\textsc{f}}$} \qw \\
		\lstick{$n_{\one,\textsc{p}}$} & \qw & & \qw & \rstick{$n_{\zero,\textsc{f}}$} \qw
	\end{quantikz}
	\]

	\begin{alignat*}{2}
		&&\;& \bra{\enc(v_1), 0, 0} U \ket{\enc(v_0), 0, 0} \\
		&=&& \bra{\enc(v_1), 0, 0} (U_1 \otimes \mathbb{I}) (\mathbb{I} \otimes \textsc{swap}) (U_0 \otimes \mathbb{I}) \ket{\enc(v_0), 0, 0} \\
		&=&& \left(\bra{\enc(v_1), 0} U_1 \otimes \bra{0}\right) (\mathbb{I} \otimes \textsc{swap}) \left(U_0\ket{\enc(v_0),0} \otimes \ket{0}\right) \\
		&=&& \left(\bra{\enc(v_1), 0} U_1\right) (\mathbb{I} \otimes \op{0}{0}) \left(U_0\ket{\enc(v_0),0}\right) \\
		&=&& \sum_{b \in \{\zero, \one\}^{\size(\Hilb')}} \bra{\enc(v_1), 0} U_1 \op{b,0}{b,0} U_0\ket{\enc(v_0), 0} \\
		&=&& \sum_{\ket{v'} \in B'} \bra{\enc(v_1), 0} C(E_1) \op{\enc(v'), 0}{\enc(v'), 0} C(E_0) \ket{\enc(v_0), 0} \\
		&=&& \sum_{\ket{v'} \in B'} \bra{v_1} E_1 \op{v'}{v'} E_0 \ket{v_0} \\
		&=&& \bra{v_1} E_1 \left( \sum_{v' \in B'} \op{v'}{v'} \right) E_0 \ket{v_0} \\
		&=&& \bra{v_1} E_1 \mathbb{I} E_0 \ket{v_0} \\
		&=&& \bra{v_1} E_1 E_0 \ket{v_0} \\
	\end{alignat*}

\end{proof}

The adjoint is another useful construction that we will need in Appendix~\ref{app:hl-qunity}, motivating the following lemma:
\begin{lemma}
	\label{lem:compile-adj}
	Suppose it is possible to implement the operator $E : \Hilb \to \Hilb'$.
	Then it is possible to implement the operator $E^\dagger : \Hilb' \to \Hilb$.
\end{lemma}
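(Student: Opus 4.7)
The plan is to construct the circuit for $E^\dagger$ directly from the hypothesized circuit for $E$ by taking the adjoint of the underlying unitary and swapping the roles of prep and flag wires, as foreshadowed in the discussion preceding Figure~\ref{fig:compile-ex-pureabs}. Concretely, suppose $U : \complex^{2^{\size(\Hilb) + \nprep}} \to \complex^{2^{\size(\Hilb') + \nflag}}$ implements $E$ per Definition~\ref{def:implement-kraus}. I would propose $U^\dagger : \complex^{2^{\size(\Hilb') + \nflag}} \to \complex^{2^{\size(\Hilb) + \nprep}}$ as the new unitary, now using $\nflag$ prep wires and $\nprep$ flag wires. The size constraint $\size(\Hilb) + \nprep = \size(\Hilb') + \nflag$ inherited from the hypothesis is exactly what Definition~\ref{def:implement-kraus} requires for an implementation of $E^\dagger$ with these roles exchanged.

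Verification amounts to the one-line calculation
\[
\bra{\encode(v), \zero^{\otimes \nprep}} U^\dagger \ket{\encode(v'), \zero^{\otimes \nflag}} = \overline{\bra{\encode(v'), \zero^{\otimes \nflag}} U \ket{\encode(v), \zero^{\otimes \nprep}}} = \overline{\bra{v'} E \ket{v}} = \bra{v} E^\dagger \ket{v'},
\]
which is exactly the defining equation of Definition~\ref{def:implement-kraus} for $E^\dagger$.

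It remains to check that $U^\dagger$ is realizable in the low-level gate set of single-qubit gates and controlled operators. Since $U$ is given as a composition of such primitives, $U^\dagger$ can be obtained by reversing the gate order and replacing each primitive by its own adjoint; the adjoint of a single-qubit gate is again a single-qubit gate, and the adjoint of a controlled-$V$ is the controlled-$V^\dagger$, so this transformation stays within the permitted gate set. The only bookkeeping concern is that the prep and flag registers may be in different positions within the input and output bitstrings; this can be resolved, if needed, by conjugating $U^\dagger$ with a fixed permutation of wires (itself implementable by \textsc{swap} gates).

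The main ``obstacle'' is really just keeping the conventions straight: one must be careful that swapping prep and flag registers indeed corresponds to swapping the initialized and asserted wires in the circuit picture, and that taking the complex conjugate of an inner product agrees with the adjoint of an operator. Neither is deep, but both must be stated cleanly so that the construction composes correctly with the other circuit transformers described in the appendix.
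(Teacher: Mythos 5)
Your proposal is correct and matches the paper's own proof: both take the adjoint circuit $U^\dagger$ (reversing gate order and replacing each primitive by its adjoint), swap the roles of the $\nprep$ prep wires and $\nflag$ flag wires, and verify the defining equation of Definition~\ref{def:implement-kraus} via the same conjugate-transpose computation $\bra{\encode(v),\zero^{\otimes\nprep}}U^\dagger\ket{\encode(v'),\zero^{\otimes\nflag}} = \overline{\bra{v'}E\ket{v}} = \bra{v}E^\dagger\ket{v'}$. No gaps.
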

\begin{proof}
	Assume $E$ is implemented by $U$ with $n\subcap{p}$ prep wires and $n\subcap{f}$ flag wires.
	Given the circuit for $U$, one can construct a circuit for $U^\dagger$ in the standard way: by taking the adjoint of each gate in the circuit and reversing the order.
	This circuit $U^\dagger$ then implements $E^\dagger$ with $n\subcap{f}$ prep wires and $n\subcap{p}$ flag wires.
	\[
	\begin{quantikz}
		\lstick{$\size(\Hilb')$} & \gate[2]{U^\dagger} & \rstick{$\size(\Hilb)$} \qw \\
		\lstick{$n\subcap{f}$} & & \rstick{$n\subcap{p}$} \qw \\
	\end{quantikz}
	\]
	\begin{alignat*}{2}
		&&\;& \bra{\enc(v'), 0} U^\dagger \ket{\enc(v), 0} \\
		&=&& \left(\bra{\enc(v), 0} U \ket{\enc(v'), 0}\right)^{*} \\
		&=&& \left(\bra{v} E \ket{v'}\right)^{*} \\
		&=&& \bra{v'} E^\dagger \ket{v}
	\end{alignat*}
\end{proof}

\begin{lemma}
	Suppose it is possible to implement the operators $E_0 : \Hilb_0 \to \Hilb_0'$ and $E_1 : \Hilb_1 \to \Hilb_1'$.
	Then it is possible to implement the operator $E_0 \oplus E_1 : \Hilb_0 \oplus \Hilb_1 \to \Hilb_0' \oplus \Hilb_1'$.
\end{lemma}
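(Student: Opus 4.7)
The plan is to build the combined circuit by using the tag qubit of the direct-sum encoding to coherently control whether $U_0$ or $U_1$ is applied, exploiting the zero padding already present in the encoding to serve as additional prep wires. Recall that a value of $\Hilb_0 \oplus \Hilb_1$ is encoded as a tag bit followed by either $\encode(v_0)$ or $\encode(v_1)$, implicitly padded with zeros up to length $\max\{\size(\Hilb_0), \size(\Hilb_1)\}$ so that the total register has size $\size(\Hilb_0 \oplus \Hilb_1)$. Let $U_0$ implement $E_0$ with $n_{0,\textsc{p}}$ prep and $n_{0,\textsc{f}}$ flag wires, and similarly $U_1$ with counts $n_{1,\textsc{p}}, n_{1,\textsc{f}}$.

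For the combined circuit, I would choose the global prep count $n_{\textsc{p}}$ large enough that on either branch the input padding plus the global prep provides at least $n_{j,\textsc{p}}$ zero wires past the $\size(\Hilb_j)$ data wires of $U_j$, and set $n_{\textsc{f}} = n_{\textsc{p}} + \max\{\size(\Hilb_0), \size(\Hilb_1)\} - \max\{\size(\Hilb_0'), \size(\Hilb_1')\}$ so that the size equation of Definition~\ref{def:implement-kraus} holds for the combined circuit. The circuit itself then consists of two controlled unitaries acting on a register laid out as (tag, payload, global prep): conditioned on the tag being $\ket{0}$, apply $U_0$ to the $\size(\Hilb_0)$ data wires together with the first $n_{0,\textsc{p}}$ subsequent zero wires; conditioned on the tag being $\ket{1}$, apply $U_1$ analogously on its $\size(\Hilb_1)$ data wires and $n_{1,\textsc{p}}$ subsequent zero wires. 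Both are standard controlled gates, supported as primitives of the target language (Section~\ref{sec:hl2ll}).

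Finally, I would verify Definition~\ref{def:implement-kraus}'s matrix-element condition by case analysis on the input value. For an input of the form $\lef{T_0}{T_1}\: v_0$, the $\ket{1}$-controlled branch acts trivially on zero qubits; the $\ket{0}$-controlled branch reduces the desired amplitude for a target output $\ket{\lef{T_0'}{T_1'}\: v_0'}$ to a single factor $\bra{\encode(v_0'), 0^{n_{0,\textsc{f}}}} U_0 \ket{\encode(v_0), 0^{n_{0,\textsc{p}}}}$ multiplied by inner products of zero-strings, which by hypothesis equals $\bra{v_0'} E_0 \ket{v_0} = \bra{\lef{T_0'}{T_1'}\: v_0'} (E_0 \oplus E_1) \ket{\lef{T_0}{T_1}\: v_0}$. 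Amplitudes on target outputs in the $\ket{1}$ tag subspace vanish by orthogonality of the tag qubit, which is untouched. The case $\rit{T_0}{T_1}\: v_1$ is symmetric. The main technical obstacle is the qubit-layout bookkeeping: the flag wires emitted by $U_j$ sit in different absolute positions depending on $j$, and one must verify that together with the untouched zero wires they exactly cover the output-padding region plus the global flag region. This alignment works out because the hypothesis $\size(\Hilb_j) + n_{j,\textsc{p}} = \size(\Hilb_j') + n_{j,\textsc{f}}$ forces each branch's total number of ``zero wires past the data'' to be the same $m + n_{\textsc{p}} - \size(\Hilb_j') - n_{j,\textsc{f}}$ independently of $j$, so the output padding and global flag are correctly left in $\ket{0}$ on both branches.
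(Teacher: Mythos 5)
Your proposal is correct and matches the paper's own construction: both use the tag qubit of the direct-sum encoding as a (open/closed) control for applying $U_0$ or $U_1$ to the payload plus reused zero padding, and both verify Definition~\ref{def:implement-kraus} by case analysis on the tag. Your explicit accounting of why the flag/padding positions align across the two branches is a detail the paper glosses over by writing $E_j \otimes \mathbb{I}$, but it is the same argument.
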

\begin{proof}
	Assume $E_0$ is implemented with $\preps_0$ prep wires and $\flags_0$ flag wires, and assume that $E_1$ is implemented with $\preps_1$ prep wires and $\flags_1$ flag wires.
	Let $n = \max\{\size(\Hilb_0) + \preps_0, \size(\Hilb_1) + \preps_1\}$.
	The following qubit circuit then implements $E_0 \oplus E_1$ with $1 + n$ total wires; that is, $n - \max\{\size(\Hilb_0), \size(\Hilb_1)\}$ prep wires and $n - \max\{\size(\Hilb_0'), \size(\Hilb_1')\}$ flag wires: 

		\begin{DIFnomarkup}
	\[
	\begin{quantikz}
		\lstick{1} & \octrl{1} & \ctrl{1} & \rstick{1} \qw \\
		\lstick{$n$} & \gate{E_0 \otimes \mathbb{I}} & \gate{E_1 \otimes \mathbb{I}} & \rstick{$n$} \qw
	\end{quantikz}
\]
		\end{DIFnomarkup}

	\begin{alignat*}{2}
		\bra{0, \enc(v_0'), 0} C(E_0 \oplus E_1) \ket{0, \enc(v_0), 0} &=&\;& \bra{\enc(v_0'), 0} C(E_0) \ket{\enc(v_0), 0} \\
		&=&& \bra{v_0'} E_0 \ket{v_0} \\
		&=&& (\bra{v_0'} \oplus 0) (E_0 \oplus E_1) (\ket{v_0} \oplus 0) \\
		\bra{1, \enc(v_1'), 0} C(E_0 \oplus E_1) \ket{0, \enc(v_0), 0} &=&& 0 \\
		&=&& (0 \oplus \bra{v_1'}) (E_0 \oplus E_1) (\ket{v_0} \oplus 0) \\
		\bra{1, \enc(v_1'), 0} C(E_0 \oplus E_1) \ket{1, \enc(v_1), 0} &=&& \bra{\enc(v_1'), 0} C(E_1) \ket{\enc(v_1), 0} \\
		&=&& \bra{v_1'} E_1 \ket{v_1} \\
		&=&& (0 \oplus \bra{v_1'}) (E_0 \oplus E_1) (0 \oplus \ket{v_1}) \\
		\bra{0, \enc(v_0'), 0} C(E_0 \oplus E_1) \ket{1, \enc(v_1), 0} &=&& 0 \\
		&=&& (\bra{v_0'} \oplus 0) (E_0 \oplus E_1) (0 \oplus \ket{v_1})
	\end{alignat*}
\end{proof}

It is standard to use the tensor product monoidally, implicitly using vector space isomorphisms $(\Hilb_1 \otimes \Hilb_2) \otimes \Hilb_3 \approx \Hilb_1 \otimes (\Hilb_2 \otimes \Hilb_3)$ and $\complex \otimes \Hilb \approx \Hilb \approx \Hilb \otimes \complex$.
Under the most general definition of the tensor product, these are isomorphisms rather than strict equality \cite[Chapter 14]{advanced-linear-algebra}, but these isomorphisms can typically be used implicitly.
Our encoding ensures that these implicit isomorphisms do not require any low-level gates because $\enc(\pair{\pair{v_1}{v_2}}{v_3}) = \enc(\pair{v_1}{\pair{v_2}{v_3}})$ and $\enc(\pair{\unit}{v}) = \enc(v) = \enc(\pair{v}{\unit})$.

Similarly, it will be convenient for our analysis in Appendix~\ref{app:hl-qunity} to treat the direct sum monoidally, implicitly using isomorphisms $(\Hilb_1 \oplus \Hilb_2) \oplus \Hilb_3 \approx \Hilb_1 \oplus (\Hilb_2 \oplus \Hilb_3)$ and $\{0\} \oplus \Hilb \approx \Hilb \approx \Hilb \oplus \{0\}$.
However, these isomorphisms \emph{do} correspond to different encodings, so we must show how to implement them.
Note that the additive unit isomorphisms were already implemented in Lemma~\ref{lem:compile-inj}.
They are $\lef{T}{\Void}$ and $\rit{\Void}{T}$, and their inverses are their adjoints as compiled in Lemma~\ref{lem:compile-adj}.
The following lemma demonstrates how to compile the associativity isomorphism:

\begin{lemma}
	Let $T_1$, $T_2$, and $T_3$ be arbitrary types.
	Then it is possible to implement the direct sum's associativity isomorphism $\textsc{assoc} : (\Hilb(T_1) \oplus \Hilb(T_2)) \oplus \Hilb(T_3) \to \Hilb(T_1) \oplus (\Hilb(T_2) \oplus \Hilb(T_3))$.
\end{lemma}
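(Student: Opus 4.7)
The plan is to decompose $\textsc{assoc}$ using the direct-sum structure of its domain and then uncompute a redundant outer tag qubit with a Toffoli gate. Concretely, I will first construct two intermediate operators $U_A \defeq \mathbb{I}_{\Hilb(T_1)} \oplus \lef{T_2}{T_3}$, going from $\Hilb(T_1) \oplus \Hilb(T_2)$ into $\Hilb(T_1 \oplus (T_2 \oplus T_3))$, and $U_B \defeq \rit{T_1}{T_2 \oplus T_3} \circ \rit{T_2}{T_3}$, going from $\Hilb(T_3)$ into $\Hilb(T_1 \oplus (T_2 \oplus T_3))$. Each is implementable: $U_A$ by combining the trivially implementable identity with Lemma~\ref{lem:compile-inj} and the already-proved direct-sum-of-operators lemma, and $U_B$ by composing two instances of Lemma~\ref{lem:compile-inj}.

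Then I will apply the direct-sum-of-operators lemma once more, this time to $U_A$ and $U_B$ themselves, yielding a circuit with domain $(\Hilb(T_1) \oplus \Hilb(T_2)) \oplus \Hilb(T_3)$ and codomain $\Hilb(T_1 \oplus (T_2 \oplus T_3)) \oplus \Hilb(T_1 \oplus (T_2 \oplus T_3))$ --- i.e., with a single extra outer tag qubit beyond what is wanted. Tracing through the encoding, the three input subspaces $\Hilb(T_1)$, $\Hilb(T_2)$, $\Hilb(T_3)$ produce outputs whose leading three qubits follow the bit patterns ``$00{\cdot}$'', ``$010$'', and ``$111$'' respectively, so the outer tag (the first wire) always equals the logical \textsc{and} of the next two wires. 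A single Toffoli gate controlled on those two wires and targeting the outer tag therefore resets the outer tag to $\ket{0}$ on every valid input, after which that qubit can be reinterpreted as a flag wire in the sense of Definition~\ref{def:implement-kraus}, and the output width collapses to $\size(T_1 \oplus (T_2 \oplus T_3))$ as required.

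The main obstacle will be the bookkeeping step: verifying the matrix-element identity in Definition~\ref{def:implement-kraus} by case analysis over the three input subspaces, and confirming the arithmetic constraint $\size((T_1 \oplus T_2) \oplus T_3) + n\subcap{prep} = \size(T_1 \oplus (T_2 \oplus T_3)) + n\subcap{flag}$ once the Toffoli has absorbed the outer tag into the flag count. Because the two sizes can differ by up to one depending on the relative magnitudes of $\size(T_1)$, $\size(T_2)$, and $\size(T_3)$, the associator will contribute either a single prep or a single flag wire on top of those inherited from $U_A$, $U_B$, and the direct-sum construction, and this asymmetry has to be tracked carefully to close the counting equation.
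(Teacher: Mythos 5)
Your construction is correct, but it takes a genuinely different route from the paper's. The paper builds the associator as a single explicit low-level circuit: a controlled cyclic shift (\textsc{rsh}), a controlled-\textsc{not} to rewrite the tag bits, a swap, and a controlled \textsc{lsh}, together with hand-computed prep/flag counts $n\subcap{p}$ and $n\subcap{f}$ expressed as differences of maxima; correctness is then checked by chasing the three encodings directly. You instead build $\textsc{assoc}$ compositionally as $\nabla \circ \bigl((\mathbb{I} \oplus \lef{T_2}{T_3}) \oplus (\rit{T_1}{T_2 \oplus T_3} \circ \rit{T_2}{T_3})\bigr)$, reusing Lemma~\ref{lem:compile-inj}, the composition lemma, and the direct-sum-of-operators lemma, and then realizing the fold $\nabla$ by uncomputing the redundant outer tag. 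Your key observation --- that on the image of $U_A \oplus U_B$ the outer tag is exactly the \textsc{and} of the two following qubits (the patterns $00{\cdot}$, $010$, $111$ are right, and $\size(T_1 \oplus (T_2 \oplus T_3)) \geq 2$ guarantees those two control qubits always exist) --- is what licenses the Toffoli-then-flag step, and it is really an instance of the paper's own ``specialized erasure'' idea applied at the circuit level. The trade-off: the paper's version is tighter (at most one prep and one flag wire total) and fully explicit, while yours is more modular and pushes most of the case analysis into lemmas already proved, at the cost of an extra tag qubit, a Toffoli, and whatever prep/flag overhead the generic direct-sum construction accumulates. On the bookkeeping you flag as the main obstacle: it is less delicate than you fear, because each invoked lemma preserves the invariant $\size(\mathrm{in}) + \nprep = \size(\mathrm{out}) + \nflag$, and your final step merely reclassifies one output qubit as a flag (decrementing the output size and incrementing $\nflag$ by one each), so the counting equation closes automatically without any case analysis on the relative magnitudes of $\size(T_1)$, $\size(T_2)$, $\size(T_3)$.
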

\begin{proof}
	Define the following integers:
	\begin{align*}
		n\subcap{max} &= \max\{\size(T_1), \size(T_2), \size(T_3)\} \\
		n\subcap{p} &= n\subcap{max} - \max\{\max\{\size(T_1), \size(T_2)\}, \size(T_3) - 1\} \\
		n\subcap{f} &= n\subcap{max} - \max\{\size(T_1) - 1, \max\{\size(T_2), \size(T_3)\}
	\end{align*}
	We can implement ``shifting'' operations \textsc{rsh} and \textsc{lsh} via a series of swap gates that enacts a rotation permutation, so that $\textsc{rsh} \ket{\psi_1, \psi_2, \ldots, \psi_{n-1}, \psi_n} = \ket{\psi_n, \psi_1, \psi_2, \ldots, \psi_{n-1}}$ and $\textsc{lsh} = \textsc{rsh}^{-1}$.
	The following qubit circuit $U$ then implements the associator with the defined $n\subcap{p}$ and $n\subcap{f}$.
	\[
	\begin{quantikz}
		\lstick{1} & \ctrl{1} & \qw & \ctrl{1} & \gate[swap]{} & \qw & \octrl{1} & \rstick{1} \qw \\
		\lstick{$1 + n\subcap{max}$} & \gate[2,nwires=2]{\textsc{rsh}} & \qwbundle{1} & \targ{} & & \qwbundle{1} & \gate[2,nwires=2]{\textsc{lsh}} & \rstick{$1 + n\subcap{max}$} \qw \\
																 & & \qwbundle{n\subcap{max}} & \qw & \qw & \qw & \qw & &
	\end{quantikz}
	\]
	We can show that this circuit has the desired behavior:
	\begin{alignat*}{2}
		&&\;& \ket{\enc\left(\lef{(T_1 \oplus T_2)}{T_3} (\lef{T_1}{T_2} v) \right), 0} \\
		&=&& \ket{0, 0, \enc(v), 0} \\
		&\mapsto^{*}&& \ket{0, \enc(v), 0, 0} \\
		&=&& \ket{\enc(\lef{T_1}{(T_2 \oplus T_3)} v), 0} \\
		&&\;& \ket{\enc\left(\lef{(T_1 \oplus T_2)}{T_3} (\rit{T_1}{T_2} v) \right), 0} \\
		&=&& \ket{0, 1, \enc(v), 0} \\
		&\mapsto^{*}&& \ket{1, 0, \enc(v), 0} \\
		&=&& \ket{\enc\left(\rit{T_1}{(T_2 \oplus T_3)} (\lef{T_2}{T_3} v)\right), 0} \\
		&&\;& \ket{\enc\left(\rit{(T_1 \oplus T_2)}{T_3} v\right), 0} \\
		&=&& \ket{1, \enc(v), 0} \\
		&\mapsto&& \ket{1, 0, \enc(v), 0} \\
		&\mapsto&& \ket{1, 1, \enc(v), 0} \\
		&=&& \ket{\enc\left(\rit{T_1}{(T_2 \oplus T_3)} (\rit{T_2}{T_3} v)\right), 0}
	\end{alignat*}
\end{proof}

We can now implement the monoidal isomorphisms for the tensor product and direct sum.
Both of these are \emph{symmetric} monoidal categories, and the swap maps are straightforward to implement, using swap gates to implement $T_0 \otimes T_1 \cong T_1 \otimes T_0$, and using a single Pauli-X gate on the indicator qubit to implement $T_0 \oplus T_1 \cong T_1 \oplus T_0$.
We will also need \emph{distributivity} of the tensor product over the direct sum, part of the definition of a \emph{bimonoidal} category \cite{symmetric-bimonoidal} (sometimes known as a ``rig category'' \cite{rig-programming}).
\begin{lemma}
	Let $T$, $T_0$, and $T_1$ be arbitrary types.
	It is possible to implement the distributivity isomorphism $\textsc{distr} : \Hilb(T) \otimes (\Hilb(T_0) \oplus \Hilb(T_1)) \cong (\Hilb(T) \otimes \Hilb(T_0)) \oplus (\Hilb(T) \otimes \Hilb(T_1))$, which acts like $\ket{v} \otimes \left( \ket{v_0} \oplus \ket{v_1} \right) \mapsto \ket{v, v_0} \oplus \ket{v, v_1}$.
\end{lemma}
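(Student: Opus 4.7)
The plan is to exhibit \textsc{distr} by a pure qubit permutation, with no prep or flag wires at all. First I would compute the two relevant sizes:
\[
\size(T \otimes (T_0 \oplus T_1)) = \size(T) + 1 + \max\{\size(T_0),\size(T_1)\},
\]
\[
\size((T \otimes T_0) \oplus (T \otimes T_1)) = 1 + \size(T) + \max\{\size(T_0),\size(T_1)\},
\]
so the two registers have identical width and I can set $n\subcap{prep} = n\subcap{flag} = 0$. This is the crucial numerical observation; without it we would need ancillas.

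Next I would compare the two encodings on basis vectors. For $j\in\{0,1\}$, the encoding of $\pair{v}{\texttt{left/right}_{T_0 \oplus T_1}\: v_j}$ on the input side is $\enc(v) \doubleplus \texttt{"}j\texttt{"} \doubleplus \enc(v_j) \doubleplus 0^{k_j}$, while the encoding of the image $\texttt{left/right}_{(T\otimes T_0)\oplus(T\otimes T_1)}\pair{v}{v_j}$ on the output side is $\texttt{"}j\texttt{"} \doubleplus \enc(v) \doubleplus \enc(v_j) \doubleplus 0^{k_j}$, where the padding width $k_j = \max\{\size(T_0),\size(T_1)\} - \size(T_j)$ is the same on both sides. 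Thus the two strings differ only by the position of the tag bit, which must be moved from index $\size(T)+1$ to index $1$. This is realized by the cyclic right-shift $\textsc{rsh}$ on the first $\size(T)+1$ qubits, implementable as a linear sequence of $\size(T)$ \textsc{swap} gates.

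The circuit I would present is then simply
\[
\begin{quantikz}
  \lstick{$\size(T)$} & \gate[2,nwires=2]{\textsc{rsh}} & \qwbundle{1} & \rstick{$1$} \qw \\
  \lstick{$1$} & & \qwbundle{\size(T)} & \rstick{$\size(T)$} \qw \\
  \lstick{$\max\{\size(T_0),\size(T_1)\}$} & \qw & \qw & \rstick{$\max\{\size(T_0),\size(T_1)\}$} \qw
\end{quantikz}
\]
followed by a verification of Definition~\ref{def:implement-kraus}. The computation splits into two cases on the tag; in each case the shift rearranges $\enc(v)\doubleplus \texttt{"}j\texttt{"}\doubleplus \text{(rest)}$ into $\texttt{"}j\texttt{"}\doubleplus \enc(v)\doubleplus \text{(rest)}$, and pairing with the corresponding output basis vector yields $\delta$-function inner products matching $\bra{\cdot}\,\textsc{distr}\,\ket{\cdot}$ on the standard basis.

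The only subtlety I anticipate is bookkeeping the padding: one has to be careful that the $\max$-padding of the sum type on the input is literally the same bitstring pattern as the $\max$-padding of the sum-of-products on the output, which requires the size identity above. Once that matches, extending by linearity finishes the argument, and the inverse is obtained by Lemma~\ref{lem:compile-adj} (or directly by the left-shift $\textsc{lsh} = \textsc{rsh}^{-1}$), confirming this is indeed an isomorphism.
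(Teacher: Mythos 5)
Your proposal is correct and is essentially the paper's own proof: the paper likewise uses zero prep and flag wires and a single block-swap of the $\size(T)$-qubit register with the one-qubit tag (the same permutation as your $\textsc{rsh}$ on the first $\size(T)+1$ qubits), relying on the encoding to make the rest of the bits line up. Your version just spells out the size identity and the padding bookkeeping that the paper dismisses with ``it should be clear from the value encoding.''
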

\begin{proof}
	This can be done with no prep or flag wires:
	\[
	\begin{quantikz}
		\lstick{$\size(T)$} & \gate[swap]{} & \rstick{1} \qw \\
		\lstick{1} & & \rstick{$\size(T)$} \qw \\
		\lstick{$\max\{\size(T_0), \size(T_1)\}$} & \qw & \rstick{$\max\{\size(T_0), \size(T_1)\}$} \qw
	\end{quantikz}
	\]
	It should be clear from the value encoding that this circuit has the correct behavior.
\end{proof}
In the high-level circuits, we will often use some transformation of the \textsc{distr} construction, for example:
\begin{itemize}
	\item its adjoint $(\Hilb(T) \otimes \Hilb(T_0)) \oplus (\Hilb(T) \otimes \Hilb(T_1)) \cong \Hilb(T) \otimes (\Hilb(T_0) \oplus \Hilb(T_1))$,
	\item its composition with swaps $(\Hilb(T_0) \oplus \Hilb(T_1)) \otimes \Hilb(T) \cong (\Hilb(T_0) \otimes \Hilb(T)) \oplus (\Hilb(T_1) \otimes \Hilb(T))$,
	\item compositions of distributions $(\Hilb(T_{00}) \oplus \Hilb(T_{01})) \otimes (\Hilb(T_{10}) \oplus \Hilb(T_{11})) \cong (\Hilb(T_{00}) \otimes \Hilb(T_{10})) \oplus (\Hilb(T_{00}) \otimes \Hilb(T_{11})) \oplus (\Hilb(T_{01}) \otimes \Hilb(T_{10})) \oplus (\Hilb(T_{01}) \otimes \Hilb(T_{11}))$.
\end{itemize}
We will denote all of these as ``\textsc{distr},'' but the transformation being applied should always be clear from context.

\begin{lemma}[pure error handling]
	\label{lem:pure-error}
	Suppose it is possible to implement a Kraus operator $E : \Hilb \to \Hilb'$.
	Then it is possible to implement a norm-preserving operator $E\subcap{f} : \Hilb \to \Hilb' \oplus \Hilb\subcap{f}$ for some ``flag space'' $\Hilb\subcap{f}$ such that $\msem{\lef{T}{T'}}^\dagger E\subcap{f} = E$.
\end{lemma}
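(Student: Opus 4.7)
The plan is to reuse the very low-level unitary that witnesses implementability of $E$, and simply reinterpret its flag wires as encoding output in the direct-sum summand $\Hilb\subcap{f}$; the ``error subspace'' at the output of the original circuit then \emph{is} the flag space, and nothing new has to be built. Concretely, I would first invoke the hypothesis to obtain a unitary $U : \complex^{2^{\size(T) + \nprep}} \to \complex^{2^{\size(T') + \nflag}}$ such that $\bra{\enc(v'), 0^{\otimes \nflag}} U \ket{\enc(v), 0^{\otimes \nprep}} = \bra{v'} E \ket{v}$ for all $v \in \V(T)$ and $v' \in \V(T')$. Then partition the codomain of $U$ as $\Hilb(T') \oplus \Hilb\subcap{f}$, embedding $\Hilb(T')$ via $\ket{v'} \mapsto \ket{\enc(v'), 0^{\otimes \nflag}}$ and defining $\Hilb\subcap{f}$ to be the orthogonal complement (spanned by kets whose final $\nflag$ qubits are not all zero); a dimension count confirms this decomposition is exhaustive.

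The second step is to define $E\subcap{f} : \Hilb \to \Hilb' \oplus \Hilb\subcap{f}$ as the operator obtained from exactly the same circuit $U$, but with the $\nflag$ former flag wires reclassified as output wires, so $E\subcap{f}$ is implemented with $\nprep$ prep wires and zero flag wires. Since $U$ is unitary and each input $\ket{\enc(v), 0^{\otimes \nprep}}$ has unit norm, $E\subcap{f}$ is automatically norm-preserving. Under the identification above, the canonical left-injection $\msem{\lef{T'}{T\subcap{f}}}$ becomes the map $\ket{v'} \mapsto \ket{\enc(v'), 0^{\otimes \nflag}}$, so its adjoint projects onto the ``flag-zero'' subspace and kills everything in $\Hilb\subcap{f}$. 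Evaluating on basis states,
\[
\bra{v'} \msem{\lef{T'}{T\subcap{f}}}^\dagger E\subcap{f} \ket{v} = \bra{\enc(v'), 0^{\otimes \nflag}} U \ket{\enc(v), 0^{\otimes \nprep}} = \bra{v'} E \ket{v},
\]
and extending by linearity gives $\msem{\lef{T'}{T\subcap{f}}}^\dagger E\subcap{f} = E$, which is the identity required by the lemma.

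The main obstacle, as I see it, is bookkeeping rather than mathematics: Qunity types realise only those Hilbert spaces obtainable through iterated sums and products of $\Unit$, so for $E\subcap{f}$ to fit Definition~\ref{def:implement-kraus} verbatim one must actually exhibit $\Hilb\subcap{f}$ as $\Hilb(T\subcap{f})$ for some type $T\subcap{f}$ and reconcile the bitstring encoding of $T' \oplus T\subcap{f}$ with the raw layout at the output of $U$. This is handled by choosing $T\subcap{f}$ with $\dim(\Hilb(T\subcap{f})) = 2^{\size(T') + \nflag} - \dim(\Hilb(T'))$ (achievable via iterated sums of $\Unit$) and composing $U$ with a short re-encoding circuit built from the direct-sum injections of Lemma~\ref{lem:compile-inj} and the associators constructed earlier in this appendix; because re-encoding is unitary it preserves both the norm-preservation of $E\subcap{f}$ and the identity above, so the conceptual content of the lemma---``promoting flags to outputs''---is unaffected by these adjustments.
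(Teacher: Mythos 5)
Your core idea is the same as the paper's: reuse the unitary $U$ witnessing the implementability of $E$, promote its $\nflag$ flag wires to output wires so that nothing is projected away (whence norm preservation, since $U$ is unitary), and observe that the adjoint of the left injection restricted to the flag-zero subspace recovers $E$'s matrix elements. Where you diverge is in how the output is packaged as a direct sum, and this is where your argument has a gap. The paper adds \emph{one extra prep qubit} and a multiply-controlled \textsc{not} that computes the \textsc{or} of the $\nflag$ flag qubits into it; that single qubit then plays the role of the sum-type tag bit, so the success branch literally comes out as $\ket{\texttt{0}} \otimes \ket{\enc(v'), \zero^{\otimes\nflag}} = \ket{\encode(\lef{T'}{\cdot}\,v')}$ under the paper's encoding, and one may simply take $\Hilb\subcap{f} = \complex^{2^{\size(\Hilb') + \nflag}}$ with no further work. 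Note the lemma only asks for ``some flag space'' $\Hilb\subcap{f}$; it need not be $\Hilb(T\subcap{f})$ for a Qunity type, so your entire third paragraph is solving a problem the statement does not pose.

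The gap is that, without the tag qubit, your two summands are distinguished only by the predicate ``the last $\nflag$ qubits are all zero \emph{and} the first $\size(T')$ qubits encode an element of $\V(T')$,'' not by the value of a single qubit. The re-encoding isometry you defer to---carrying $\ket{\enc(v'),\zero^{\otimes\nflag}}$ to the tagged encoding of $\lef{T'}{T\subcap{f}}\,v'$ and the orthogonal complement into the right summand---therefore has to recognize valid $T'$-encodings among arbitrary bitstrings of length $\size(T')$, which cannot be assembled from the direct-sum injections of Lemma~\ref{lem:compile-inj} and associators alone; it requires a genuine classical reversible circuit recursing on the structure of $T'$ (essentially the decomposition $\complex^{2^{\size(T')}} \cong \Hilb(T') \oplus \Hilb(R)$ for the ``junk'' complement $R$). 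This matters downstream: the \textsc{T-Rphase} compilation must apply $e^{ir}\mathbb{I} \oplus e^{ir'}\mathbb{I}$ to the output of $E\subcap{f}$, which is a trivial controlled-phase on the tag qubit in the paper's layout but would require exactly that unimplemented recognizer in yours. The fix is cheap: replace your re-encoding step with the \textsc{or}-into-a-fresh-ancilla trick, and the rest of your argument goes through verbatim.
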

\begin{proof}
	Assume that $E$ is implemented by the unitary $U$ with $n\subcap{p}$ prep wires and $n\subcap{f}$ flag wires.
	Then using Lemma~\ref{lem:invalidflag}, we can assume that
	\[
		U\ket{\enc{v}, \zero} = \left(\sum_{v' \in \V(T')} \bra{v'}E\ket{v} \ket{\enc{v'}, \zero}\right) + \ket{\psi_{v}}
	\]
	for some $\ket{\psi_v} \in \Hilb\subcap{f}$ with $(\mathbb{I}\otimes \op{\zero}{\zero})\ket{\psi_v} = 0$.
	We use $E\subcap{f}$ defined such that 
	\[
		E\subcap{f}\ket{v} = \left(\sum_{v' \in \V(T')} \bra{v'}E\ket{v} \ket{v'}\right) \oplus \ket{\psi_{v}} = E\ket{v} \oplus \ket{\psi_{v}}.
	\]

	This definition ensures that $\msem{\lef{T}{T'}}^\dagger E\subcap{f}\ket{v} = \msem{\lef{T}{T'}}^\dagger \left(E\ket{v} \oplus \ket{\psi_{v}}\right) = E\ket{v}$.
	See that $\norm{E\ket{v}} = \norm{U\ket{\enc{v}, \zero}} = 1$, so $E\subcap{f}$ is norm-preserving.

	The following circuit implements $E\subcap{f}$ with $n\subcap{p}+1$ prep wires and no flag wires.
	\[
		\begin{DIFnomarkup}
	\begin{quantikz}
		\lstick{$\size(\Hilb)$} & \gate[swap]{} & \qw & \hphantomgate{} & \targ{} & \rstick{1} \qw \\
		\lstick{1} & & \gate[2]{U} & \qwbundle{\size(\Hilb')} & \qw &  \rstick{$\size(\Hilb')$} \qw \\
		\lstick{$n\subcap{p}$} & \qw & & \qwbundle{n\subcap{f}} & \ctrl{-2} & \rstick{$n\subcap{f}$} \qw
	\end{quantikz}
		\end{DIFnomarkup}
	\]
	\begin{align*}
		\ket{\enc(v), 0, 0} &\mapsto \ket{0, \enc(v), 0} \\
												&\mapsto \ket{0} \otimes U\ket{\enc(v), 0} \\
												&= \left(\sum_{v' \in \V(T')} \bra{v'}E\ket{v} \ket{\zero, \enc{v'}, \zero}\right) + \ket{\zero, \psi_{v}} \\
												&\mapsto \left(\sum_{v' \in \V(T')} \bra{v'}E\ket{v} \ket{\zero, \enc{v'}, \zero}\right) + \ket{\one, \psi_{v}} \\
		\bra{0, v', 0} C(E\subcap{f}) \ket{\enc(v), 0, 0} &= \bra{0, v', 0} \left(\sum_{v' \in \V(T')} \bra{v'}E\ket{v} \ket{\zero, \enc{v'}, \zero}\right) \\
																											&= (\bra{v'} \oplus 0) E\subcap{f}\ket{v} \\
		\bra{1, v'} C(E\subcap{f}) \ket{\enc(v), 0, 0} &= \ip{\one, v'}{\one, \psi_{v}} \\
																									 &= (0 \oplus \bra{v'}) E\subcap{f}\ket{v}
	\end{align*}
\end{proof}

\begin{lemma}[mixed error handling]
	\label{lem:mixed-error}
Suppose it is possible to implement the completely positive trace-non-increasing linear superoperator $\mathcal E \in \linear(\linear(\Hilb), \linear(\Hilb'))$.
Then, it is possible to implement a completely positive trace-preserving linear superoperator $\cptp(\mathcal{E}) \in \linear(\linear(\Hilb), \linear(\Hilb' \oplus \complex))$ such that for all $\rho \in \linear(\Hilb)$,
\[
	\cptp(\mathcal{E})(\rho) = \mathcal{E}(\rho) \oplus (\tr(\rho) - \tr(\mathcal{E}(\rho))).
\]
\end{lemma}
\begin{proof}
	Assuming $\mathcal{E}$ is implemented by the unitary $U$ with $n\subcap{p}$ prep wires, $n\subcap{f}$ flag wires, and $n\subcap{g}$ garbage wires, the following circuit achieves the desired result with $n\subcap{p}+\size(\Hilb')+1$ prep wires, no flag wires, and $n\subcap{f} + \size(\Hilb') + n\subcap{g}$ garbage wires: 
	\[
	\begin{quantikz}
		\lstick{$\size(\Hilb)$} & \gate[swap]{} & \qw & \hphantomgate{} & \targ{} & \qw & \rstick{1} \qw \\
		\lstick{1} & & \gate[3,nwires=3]{U} & \qwbundle{\size(\Hilb')} & \qw &  \swap{1} & \rstick{$\size(\Hilb')$} \qw \\
		\lstick{$n\subcap{p}$} & \qw & & \qwbundle{n\subcap{f}} & \ctrl{-2} & \control{} & \rstick{$n\subcap{f}$} \qw \\
													 &&& \qw & \qw & \qw & \rstick{$n\subcap{g}$} \qw \\
		\lstick{$\size(\Hilb')$} & \qw & \qw & \qw & \qw & \swap{-2} & \rstick{$\size(\Hilb')$} \qw 
	\end{quantikz}
	\]
	This circuit works by turning flag wires from $\mathcal{E}$ into garbage wires for $\cptp(\mathcal{E})$.
	The first qubit is used as an indicator of failure, and in the event of failure, the output of $U$ is treated as garbage and replaced with a fresh set of $\ket\zero$ qubits from the prep register, as required by the bitstring encoding of sum types.
	Here, the ``control'' on the flag wires should be understood to apply the gate conditioned on \emph{any} of the qubits on the control register being in the $\ket\one$ state; this may be easiest to implement in a language like Open\textsc{Qasm} using a control construct conditioned on \emph{all} of the qubits on the control register being in the $\ket\zero$ state, properly sandwiched by Pauli-X gates.

	To prove that this circuit superoperator $\Ccptp$ correctly implements $\cptp(\E)$, we must show that
	\[
		\bra{v_1'}\Ccptp(\op{v_1}{v_2})\ket{v_2'} = \sum_b \bra{\enc(v_1'), b} \mathcal{E}(\op{\enc(v_1), \zero}{\enc(v_2), \zero}) \ket{\enc(v_2'), b}
	\]
	for all $v_1, v_2 \in \V(T), v_1', v_2' \in \V(T' \oplus \Unit)$.

	We will consider three cases for $v_1'$ and $v_2'$.
	\begin{itemize}
		\item First, suppose both are in the ``success'' (non-error) subspace, and see that $(\bra{v_1'} \oplus 0)\cptp(\E)(\op{v_1}{v_2})(\ket{v_2'} \oplus 0) = \bra{v_1'}\E(\op{v_1}{v_2})\ket{v_2'}$.
			To see that this equals $\sum_b \bra{\zero, \enc(v_1'), b} \Ccptp(\op{\enc(v_1), \zero}{\enc(v_2), \zero}) \ket{\zero, \enc(v_2'), b}$, see from the circuit that a $\zero$ output on the first wire also implies a $\zero$ output on the $n\subcap{f}$ segment of the output wires, as well as the final $\size(\Hilb')$ section, so we really care about
			\[
				\sum_{b \in \{\zero,\one\}^n\subcap{g}} \bra{\zero, \enc(v_1'), \zero, b, \zero} \Ccptp(\op{\enc(v_1), \zero}{\enc(v_2), \zero}) \ket{\zero, \enc(v_2'), \zero, b, \zero}.
			\]
			The requirements placed on $U$ ensure that this equality holds.
		\item Consider the case where exactly one of the two values is in the error subspace, for example $(\bra{v_1'} \oplus 0)\cptp(\E)(\op{v_1}{v_2})(0 \oplus \ket{v_2'}) = 0$.
			Our circuit works correctly in this case because $\bra{\zero, \enc(v_1'), b} \Ccptp(\op{\enc(v_1), \zero}{\enc(v_2), \zero}) \ket{\one, \enc(v_2'), b}$ is always zero, regardless of $b$.
			To see this, see that the $n\subcap{f}$ garbage bits are all zero if and only if the first indicator output bit is zero, so any setting of $b$ would cause one of the two sides of the expression to vanish. In other words, there is no superposition between the error and non-error subspaces because the discarded garbage collapses the state into one of the two.
		\item In the final case, both $v_1'$ and $v_2'$ are in the error subspace, so we must show that
			$\sum_b \bra{\one, \zero, b} \Ccptp(\op{\enc(v_1), \zero}{\enc(v_2), \zero}) \ket{\one, \zero, b} = \tr(\op{v_1}{v_2}) - \tr(\mathcal{E}(\op{v_1}{v_2}))$, as we are encoding an error value as $\texttt{"1"} \doubleplus \texttt{"0"}^{\size{\Hilb'}}$.
			This case is constrained by the first two, as it is the only possible value that would ensure that $\cptp(\E)$ is trace-preserving.
			To verify that $\cptp(\E)$ is trace-preserving, see that there are no flag wires and $U$ cannot output invalid encodings without setting flag qubits (due to Lemma~\ref{lem:invalidflagmix}), so there is no way for the trace to decrease.
	\end{itemize}
\end{proof}

\begin{lemma}[purification]
	\label{lem:purify}
	Suppose it is possible to implement the completely positive trace-non-increasing linear superoperator $\mathcal E \in \linear(\linear(\Hilb), \linear(\Hilb'))$.
	Then, it is possible to implement a Kraus operator $E \in \linear(\Hilb, \Hilb' \otimes \Hilb_{\textsc{g}})$ for some ``garbage'' Hilbert space $\Hilb_{\textsc{g}}$ with the following property: for any $\rho \in \linear(\Hilb), \ket{\psi} \in \Hilb'$, there is some $\ket{g_{\rho,\psi}} \in \Hilb_{\textsc{g}}$ such that
	\[
		\bra{\psi} \mathcal{E}(\rho) \ket{\psi} = \bra{g_{\rho,\psi}, \psi} E\rho E^\dagger \ket{g_{\rho,\psi}, \psi}
	\]
\end{lemma}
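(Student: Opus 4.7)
The plan is to apply the principle of deferred measurement: given an implementation of $\mathcal E$ per Definition~\ref{def:implement-superop}, we have a unitary $U : \complex^{2^{\size(T) + \nprep}} \to \complex^{2^{\size(T') + \nflag + \ngarb}}$ together with prep/flag/garbage wire counts. We reuse the very same low-level unitary $U$, but reinterpret the garbage register as belonging to the \emph{output} Hilbert space of a Kraus operator rather than being traced out.

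Concretely, set $\Hilb_{\textsc{g}} \defeq \complex^{2^{\ngarb}}$, equivalently $\Hilb(\Bit^{\otimes \ngarb})$, so that values in this space are encoded by exactly $\ngarb$ qubits. Because the encoding of a tensor product type is bitstring concatenation, the encoding of values in $T' \otimes \Bit^{\otimes \ngarb}$ occupies $\size(T') + \ngarb$ qubits, matching the number of non-flag output wires of $U$. Viewed this way, $U$ satisfies Definition~\ref{def:implement-kraus} with $\nprep' = \nprep$ prep wires and $\nflag' = \nflag$ flag wires, implementing a Kraus operator $E \in \linear(\Hilb, \Hilb' \otimes \Hilb_{\textsc{g}})$.

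To establish the semantic property, I would unfold both definitions on matrix elements. Comparing the Kraus-implementation identity $\bra{v', g} E \ket v = \bra{\enc(v'), \enc(g), 0^{\nflag}} U \ket{\enc(v), 0^{\nprep}}$ with the superoperator-implementation identity for $\mathcal E$ and summing over a computational basis of $\Hilb_{\textsc g}$ yields the familiar Stinespring relation $\mathcal E(\rho) = \tr_{\Hilb_{\textsc g}}(E \rho E^\dagger)$. Then for any $\rho$ and $\ket{\psi}$, expanding the partial trace gives
\[
\bra{\psi} \mathcal E(\rho) \ket{\psi} \;=\; \sum_{g \in B_{\textsc g}} \bra{g, \psi} E \rho E^\dagger \ket{g, \psi},
\]
where each summand is non-negative. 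The map $\ket g \mapsto \bra{g,\psi} E\rho E^\dagger \ket{g,\psi}$ is a positive-semidefinite quadratic form in $\ket g$, so by scaling an arbitrary vector with a non-zero value of this form (or taking $\ket g = 0$ if the left-hand side vanishes) one obtains the required (generally unnormalized) $\ket{g_{\rho,\psi}} \in \Hilb_{\textsc g}$.

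The main obstacle is purely bookkeeping: checking that the qubit count and bitstring encoding line up so that the unitary $U$ valid for the superoperator definition is \emph{literally} valid for the Kraus operator definition when the garbage register is folded into the output tensor factor. Once this reinterpretation is justified, the semantic claim reduces to the standard observation that the Stinespring dilation recovers $\mathcal E$ by partial tracing, plus an elementary scaling argument to exhibit a single witnessing $\ket{g_{\rho,\psi}}$ rather than a sum over a basis.
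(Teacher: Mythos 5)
Your proposal matches the paper's own proof: the paper likewise reuses the unitary $U$ implementing $\mathcal E$ unchanged, sets $\Hilb_{\textsc{g}} = \complex^{2^{\ngarb}}$, and simply reroutes the garbage wires into the Kraus operator's output register, with the prep and flag counts carrying over. If anything you are more explicit than the paper, which presents only the circuit and omits both the unfolding of Definitions~\ref{def:implement-kraus} and~\ref{def:implement-superop} into the partial-trace identity and the scaling argument producing the single witness $\ket{g_{\rho,\psi}}$; both of those steps are correct as you describe them.
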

\begin{proof}
	Assuming $\mathcal{E}$ is implemented by the unitary $U$ with $n\subcap{p}$ prep wires, $n\subcap{f}$ flag wires, and $n\subcap{g}$ garbage wires, the following circuit achieves the desired result with $n\subcap{p}$ prep wires and $n\subcap{f}$ flag wires by setting $\Hilb\subcap{g} = \complex^{2^{n\subcap{g}}}$:
	\[
	\begin{quantikz}
		\lstick{$\size(\Hilb)$} & \gate[3,nwires=3]{U} & \qw & \rstick{$\size(\Hilb')$} \qw \\
		\lstick{$n\subcap{p}$} & & \gate[swap]{} & \rstick{$n\subcap{g}$} \qw \\
													 &&& \rstick{$n\subcap{f}$} \qw
	\end{quantikz}
	\]
	This circuit simply feeds the existing garbage wires into an additional output.
\end{proof}

\begin{definition}[indexed injection]
	For notational convenience, for any Hilbert spaces $\{\Hilb_1, \ldots, \Hilb_n\}$ we define $\inj_j : \Hilb_j \to \bigoplus_{k=1}^n \Hilb_k$ to be the indexed injection into the $j$\textsuperscript{th} subspace.
	Precisely,
	\[
		\inj_j \ket{v} = 0^{\oplus(j-1)} \oplus \ket{v} \oplus 0^{\oplus(n-j)}
		\]
\end{definition}

Next, we consider some results about how to implement trace-non-increasing superoperators with low-level qubit-based unitaries.

\begin{lemma}
	For any type $T$, it is possible to implement a superoperator $\mathcal{E} : \linear(\Hilb(T)) \to \linear(\Hilb(\Unit))$ that computes the trace of its input, effectively discarding it.
\end{lemma}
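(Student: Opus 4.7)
The plan is to appeal directly to Definition~\ref{def:implement-superop} and exhibit the trivial circuit. Specifically, I would set $\nprep = 0$, $\nflag = 0$, and $\ngarb = \size(T)$, and take $U$ to be the identity unitary on $\complex^{2^{\size(T)}}$. Since $\size(\Unit) = 0$ and $\V(\Unit) = \{\unit\}$ with $\encode(\unit)$ equal to the empty bitstring, the dimensional constraint $\size(T) + \nprep = \size(T') + \nflag + \ngarb$ is satisfied: both sides equal $\size(T)$.

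Next I would verify that this circuit implements the intended trace superoperator. Using Definition~\ref{def:implement-superop} with $v_1' = v_2' = \unit$, the right-hand side becomes
\[
	\sum_{b \in \{\zero,\one\}^{\size(T)}} \bra{b} \rho \ket{b} = \tr(\rho),
\]
since the $U$-conjugation collapses to the identity and the prep and flag registers are empty. Identifying $\linear(\Hilb(\Unit)) = \linear(\complex) \cong \complex$ via the unique basis element $\ket{\unit} = 1$, this shows $\mathcal{E}(\rho) = \tr(\rho)$, which is exactly the desired discard superoperator. Complete positivity and trace-preservation (hence trace-non-increasingness) of the trace map are standard.

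There is really no obstacle here; the statement is essentially just recording that our definition of implementability accommodates partial traces by allowing garbage wires. The circuit is:
\[
\begin{quantikz}
	\lstick{$\Hilb(T)$} & \qw & \rstick{$\ngarb = \size(T)$} \qw
\end{quantikz}
\]
where the single input wire is relabeled as a garbage output, with no gates applied. This same construction is what we implicitly invoked earlier (e.g.\ for the discarded $\Hilb(\Delta_0)$ in Figure~\ref{fig:compile-ex-mixedabs}), so establishing it here as a lemma simply makes explicit a primitive that the rest of the compilation procedure relies on.
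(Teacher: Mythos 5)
Your proposal is correct and is essentially identical to the paper's own proof: both take $U$ to be the identity circuit with $\nprep = \nflag = 0$ and $\ngarb = \size(T)$, and verify via Definition~\ref{def:implement-superop} that the sum over garbage bitstrings yields $\tr(\rho)$.
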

\begin{proof}
	This gate is implemented with an empty (identity) circuit by setting $n\subcap{p} = n\subcap{f} = 0$, $n\subcap{g} = \size(T)$.
	\[
	\begin{quantikz}
		\lstick{$\size(T)$} & \rstick{$n\subcap{g}$} \qw
	\end{quantikz}
	\]
	\[
	\sum_{g \in \{\zero, \one\}^{\size(T)}} \bra{\encode(\unit), g} \mathbb{I} \rho \mathbb{I}^\dagger \ket{\encode(\unit), g} = \tr(\rho)
	\]
\end{proof}

	In Appendix~\ref{app:hl-qunity}, we will also be constructing circuits from these trace-non-increasing superoperators.
	Again, we must justify this by demonstrating that tensor products and function composition are possible.

\begin{lemma}
	Suppose it is possible to implement the superoperators $\mathcal{E}_0 : \linear(\Hilb_0) \to \linear(\Hilb_0')$ and $\mathcal{E}_1 : \linear(\Hilb_1) \to \linear(\Hilb_1')$.
	Then it is possible to implement the operator $\mathcal{E}_0 \otimes \mathcal{E}_1 : \linear(\Hilb_0 \otimes \Hilb_1) \to \linear(\Hilb_0' \otimes \Hilb_1')$.
\end{lemma}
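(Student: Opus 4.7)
The plan is to parallel-compose the two given low-level unitaries $U_0$ (implementing $\mathcal{E}_0$ with parameters $n_{0,\textsc{p}}, n_{0,\textsc{f}}, n_{0,\textsc{g}}$) and $U_1$ (implementing $\mathcal{E}_1$ with parameters $n_{1,\textsc{p}}, n_{1,\textsc{f}}, n_{1,\textsc{g}}$), inserting \textsc{swap} gates at the boundaries to regroup the wires. I would set $n\subcap{p} = n_{0,\textsc{p}} + n_{1,\textsc{p}}$, $n\subcap{f} = n_{0,\textsc{f}} + n_{1,\textsc{f}}$, and $n\subcap{g} = n_{0,\textsc{g}} + n_{1,\textsc{g}}$; adding the two given wire-count identities yields the required size identity for $\mathcal{E}_0 \otimes \mathcal{E}_1$.

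First I would describe the circuit: the input holds the $\size(\Hilb_0) + \size(\Hilb_1)$ data qubits followed by the $n\subcap{p}$ prep qubits. A wire permutation at the left groups the first $\size(\Hilb_0)$ data qubits with the first $n_{0,\textsc{p}}$ prep qubits to feed into $U_0$, while the remainder feeds $U_1$; $U_0 \otimes U_1$ is then applied, and a second wire permutation rearranges the outputs of both subcircuits into the canonical ordering (data outputs, flag outputs, garbage outputs). Because the two individual circuits already satisfy Definition~\ref{def:implement-superop} and wire permutations are unitary, the result is a unitary circuit.

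Second I would verify correctness against Definition~\ref{def:implement-superop}. By linearity of the condition in $\rho$, it suffices to check elementary density operators $\rho = \op{v_0,v_1}{v_0'',v_1''} = \op{v_0}{v_0''} \otimes \op{v_1}{v_1''}$. The key observation is that, up to the wire permutation, $\rho \otimes \op{0}{0}^{\otimes n\subcap{p}}$ factors as $(\op{v_0}{v_0''} \otimes \op{0}{0}^{\otimes n_{0,\textsc{p}}}) \otimes (\op{v_1}{v_1''} \otimes \op{0}{0}^{\otimes n_{1,\textsc{p}}})$. Since $U_0 \otimes U_1$ respects this factorization and the garbage sum $\sum_{b \in \{0,1\}^{n\subcap{g}}}$ splits as $\sum_{b_0} \sum_{b_1}$, the target amplitude
\[
\sum_b \bra{\encode(v_0',v_1'), 0^{n\subcap{f}}, b} U (\rho \otimes \op{0}{0}^{\otimes n\subcap{p}}) U^\dagger \ket{\encode(v_0'',v_1''), 0^{n\subcap{f}}, b}
\]
factors into the product of the two hypothesized implementations, which by definition equals $\bra{v_0'} \mathcal{E}_0(\op{v_0}{v_0''}) \ket{v_0''} \cdot \bra{v_1'} \mathcal{E}_1(\op{v_1}{v_1''}) \ket{v_1''}$, and this matches $\bra{v_0',v_1'}(\mathcal{E}_0 \otimes \mathcal{E}_1)(\rho)\ket{v_0'',v_1''}$.

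The main obstacle is purely bookkeeping: the encoding $\encode(\pair{v_0}{v_1}) = \encode(v_0) \doubleplus \encode(v_1)$ must line up correctly with the data-output ordering produced by the wire permutation, and the garbage bits must likewise be indexed so the factorized sum is over all of $\{0,1\}^{n_{0,\textsc{g}}} \times \{0,1\}^{n_{1,\textsc{g}}}$. This is the same bookkeeping already carried out for the analogous tensor-product lemma on Kraus operators earlier in this appendix, so no new techniques are needed.
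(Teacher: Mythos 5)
Your proposal is correct and follows essentially the same route as the paper: parallel-compose $U_0 \otimes U_1$ with boundary wire permutations, add the prep/flag/garbage counts, and observe that the garbage sum factors so the matrix elements of Definition~\ref{def:implement-superop} multiply. Your explicit appeal to linearity in $\rho$ to reduce to elementary operators is a small point of extra care (the paper verifies on product states $\rho_0 \otimes \rho_1$ and leaves the linear extension implicit), but the argument is the same.
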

\begin{proof}
	Assume $E_0$ is implemented by $U_0$ with $n_{\zero,\textsc{p}}$ prep wires, $n_{\zero,\textsc{f}}$ flag wires, and $n\subcap{g}$ garbage wires.
	Assume $E_1$ is implemented by $U_1$ with $n_{\one,\textsc{p}}$ prep wires, $n_{\one,\textsc{f}}$ flag wires, and $n\subcap{g}$ garbage wires.
	The following qubit circuit $U$ then implements $E_0 \otimes E_1$ with $n\subcap{p} = (n_{\zero,\textsc{p}} + n_{\one,\textsc{p}})$ prep wires, $n\subcap{f} = (n_{\zero,\textsc{f}} + n_{\one,\textsc{f}})$ flag wires, and $n\subcap{g} = (n\subcap{\zero,g} + n\subcap{\one,g})$ garbage wires:

	\[
	\begin{quantikz}
		\lstick{$\size(\Hilb_0)$} & \qw & \gate[3,nwires=2]{U_0} & \qw & \qw & \rstick{$\size(\Hilb_0')$} \qw \\
															&&& \qw & \gate[swap]{} & \rstick{$\size(\Hilb_1')$} \qw \\
		\lstick{$\size(\Hilb_1)$} & \gate[swap]{} & & \gate[swap]{} & & \rstick{$n\subcap{\zero,f}$} \qw \\
		\lstick{$n_{\zero,\textsc{p}}$} & & \gate[3,nwires=2]{U_1} & & \gate[swap]{} & \rstick{$n\subcap{\one,f}$} \qw \\
																		&&& \qw & & \rstick{$n\subcap{\zero,g}$} \qw \\
		\lstick{$n_{\one,\textsc{p}}$} & \qw & & \qw & \qw & \rstick{$n\subcap{\one,g}$} \qw
	\end{quantikz}
	\]

	\begin{alignat*}{2}
		&&\;& \sum_{g_0 \in \{\zero, \one\}^{n\subcap{\zero,g}}} \sum_{g_1 \in \{\zero, \one\}^{n\subcap{\one,g}}} \bra{\encode(v_{0,1}'), \encode(v_{1,1}'), 0, g_0, g_1} U \left(\rho_0 \otimes \rho_1 \otimes \op{\zero}{\zero}\right) \\ &&&\qquad \cdot U^\dagger \ket{\encode(v_{0,2}'), \encode(v_{1,2}'), \zero, g_0, g_1} \\
		&=&& \sum_{g_0 \in \{\zero, \one\}^{n\subcap{\zero,g}}} \sum_{g_1 \in \{\zero, \one\}^{n\subcap{\one,g}}} \bra{\encode(v_{0,1}'), \zero, g_0, \encode(v_{1,1}'), \zero, g_1} (U_0 \otimes U_1) \left(\rho_0 \otimes \op{\zero}{\zero} \otimes \rho_1 \otimes \op{\zero}{\zero}\right) \\ &&&\qquad \cdot (U_0 \otimes U_1)^\dagger \ket{\encode(v_{0,2}'), \zero, g_0, \encode(v_{1,2}'), \zero, g_1} \\
		&=&& \sum_{g_0 \in \{\zero, \one\}^{n\subcap{\zero,g}}} \bra{\encode(v_{0,1}'), \zero, g_0} U_0 \left(\rho_0 \otimes \op{\zero}{\zero}\right) U_0^\dagger \ket{\encode(v_{0,2}'), \zero, g_0} \\ &&&\qquad \cdot \sum_{g_1 \in \{\zero, \one\}^{n\subcap{\one,g}}} \bra{\encode(v_{1,1}'), \zero, g_1} U_1 (\rho_1 \otimes \op{\zero}{\zero}) U_1^\dagger \ket{\encode(v_{1,2}'), \zero, g_1} \\
		&=&& \bra{v_{0,1}'} \mathcal{E}_0(\rho_0) \ket{v_{0,2}'} \cdot \bra{v_{1,1}'} \mathcal{E}_1(\rho_1) \ket{v_{1,2}'} \\
		&=&& \bra{v_{0,1}', v_{1,1}'} (\mathcal{E}_0 \otimes \mathcal{E}_1)(\rho_0 \otimes \rho_1) \ket{v_{0,2}', v_{1,2}'} \\
	\end{alignat*}
\end{proof}

\begin{lemma}
	Suppose it is possible to implement the superoperators $\mathcal{E}_0 : \linear(\Hilb_0) \to \linear(\Hilb')$ and $\mathcal{E}_1 : \linear(\Hilb') \to \linear(\Hilb_1)$.
	Then it is possible to implement the operator $\mathcal{E}_1 \circ \mathcal{E}_0 : \linear(\Hilb_0) \to \linear(\Hilb_1)$.
\end{lemma}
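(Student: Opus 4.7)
The plan is to mirror the proof of the Kraus operator composition lemma, but with careful bookkeeping for the extra garbage register that comes with the superoperator setting. Assume $\mathcal{E}_0$ is implemented by unitary $U_0$ with $n_{0,\textsc{p}}$ prep, $n_{0,\textsc{f}}$ flag, and $n_{0,\textsc{g}}$ garbage wires, and analogously $\mathcal{E}_1$ by $U_1$ with parameters $n_{1,\textsc{p}}, n_{1,\textsc{f}}, n_{1,\textsc{g}}$. I will build a composite circuit $U$ that takes $\size(\Hilb_0) + (n_{0,\textsc{p}} + n_{1,\textsc{p}})$ input wires, produces $\size(\Hilb_1) + (n_{0,\textsc{f}} + n_{1,\textsc{f}}) + (n_{0,\textsc{g}} + n_{1,\textsc{g}})$ output wires, and is structured as: apply $U_0$ on the first block (with its $n_{0,\textsc{p}}$ prep wires), then \textsc{swap}-route the $\size(\Hilb')$-qubit subregister representing the intermediate value into position to feed $U_1$ along with its $n_{1,\textsc{p}}$ prep wires, finally collecting $U_0$'s flag and garbage wires alongside $U_1$'s flag and garbage wires into the appropriate output positions. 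This is essentially the same picture as the Kraus-operator composition diagram, extended downward with extra garbage output wires from both $U_0$ and $U_1$.

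Next I would verify correctness by expanding $\bra{\encode(v_1'), 0^{n_{0,\textsc{f}} + n_{1,\textsc{f}}}, g_0, g_1} U (\rho \otimes \op{\zero}{\zero}^{\otimes(n_{0,\textsc{p}} + n_{1,\textsc{p}})}) U^\dagger \ket{\encode(v_2'), 0^{n_{0,\textsc{f}} + n_{1,\textsc{f}}}, g_0, g_1}$, summed over $g_0 \in \{\zero,\one\}^{n_{0,\textsc{g}}}$ and $g_1 \in \{\zero,\one\}^{n_{1,\textsc{g}}}$ as required by Definition~\ref{def:implement-superop}. The sum over $g_1$ directly matches the garbage sum for the $U_1$ implementation of $\mathcal{E}_1$ applied to whatever density operator is handed to it. The crucial step is then to insert a resolution of the identity $\sum_{b \in \{\zero,\one\}^{\size(\Hilb')}} \op{b}{b} = \mathbb{I}$ on the intermediate $\Hilb'$ wires between the two unitaries, just as in the pure composition case, so that the expression factors into two independent pieces: one involving $U_1$ with its own garbage sum (computing $\bra{v_1'} \mathcal{E}_1(\cdot) \ket{v_2'}$ for each basis input), and the other involving $U_0$ with its garbage sum over $g_0$ (computing $\bra{b_1} \mathcal{E}_0(\rho) \ket{b_2}$ for basis vectors $\ket{b_1}, \ket{b_2}$). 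Linearity and the definition of composition $(\mathcal{E}_1 \circ \mathcal{E}_0)(\rho) = \mathcal{E}_1(\mathcal{E}_0(\rho))$ then yield the desired equality.

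The main obstacle I anticipate is the bookkeeping: unlike the pure case, the intermediate register now carries not just $\size(\Hilb')$ wires but also $n_{0,\textsc{f}}$ flag wires (asserted to be $\ket{\zero}$) and $n_{0,\textsc{g}}$ garbage wires (summed over), all of which must be routed past $U_1$ without interaction. I would handle this by choosing a wire ordering in which $U_1$ acts only on the $\size(\Hilb')$ output of $U_0$ and the fresh $n_{1,\textsc{p}}$ prep wires, while $U_0$'s flag and garbage wires sit on idle lines that $U_1$ ignores. Because these idle wires simply factor through via a tensor product with the identity, the factoring step in the algebra goes through cleanly, and the identity $\size(\Hilb_0) + (n_{0,\textsc{p}} + n_{1,\textsc{p}}) = \size(\Hilb_1) + (n_{0,\textsc{f}} + n_{1,\textsc{f}}) + (n_{0,\textsc{g}} + n_{1,\textsc{g}})$ required by Definition~\ref{def:implement-superop} follows from adding the corresponding identities for $U_0$ and $U_1$.
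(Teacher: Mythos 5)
Your construction is exactly the one the paper uses: compose $U_0$ and $U_1$ with the intermediate $\size(\Hilb')$ register routed into $U_1$, collect both circuits' flag wires together and both garbage registers together, and add the prep/flag/garbage counts. The paper's own proof of this lemma actually stops at the circuit diagram with a one-line remark, so your resolution-of-the-identity verification on the intermediate wires (mirroring the Kraus-operator composition proof, at the same level of rigor regarding the encoding subspace) supplies detail the paper omits rather than diverging from it.
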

\begin{proof}
	Assume $E_0$ is implemented by $U_0$ with $n_{\zero,\textsc{p}}$ prep wires, $n_{\zero,\textsc{f}}$ flag wires, and $n\subcap{g}$ garbage wires.
	Assume $E_1$ is implemented by $U_1$ with $n_{\one,\textsc{p}}$ prep wires, $n_{\one,\textsc{f}}$ flag wires, and $n\subcap{g}$ garbage wires.
	The following qubit circuit $U$ then implements $E_0 \otimes E_1$ with $n\subcap{p} = (n_{\zero,\textsc{p}} + n_{\one,\textsc{p}})$ prep wires, $n\subcap{f} = (n_{\zero,\textsc{f}} + n_{\one,\textsc{f}})$ flag wires, and $n\subcap{g} = (n\subcap{\zero,g} + n\subcap{\one,g})$ garbage wires:

	\[
		\begin{quantikz}
			\lstick{$\size(\Hilb_0)$} & \gate[3,nwires=2]{U_0} & \qwbundle{\size(\Hilb')} & \qw & \qw & \gate[3,nwires=3]{U_1} & \qw & \rstick{$\size(\Hilb_1)$} \qw \\
																& & \qw & \gate[swap]{} & \qw & & \qw & \rstick{$n\subcap{\one,f}$} \qw \\
			\lstick{$n\subcap{\zero,p}$} & & \gate[swap]{} & & \gate[swap,nwires=2]{} & & \gate[swap]{} & \rstick{$n\subcap{\zero,f}$} \qw \\
			\lstick{$n\subcap{\one,p}$} & \qw & & \gate[swap,nwires=2]{} & & \qw & & \rstick{$n\subcap{\one,g}$} \qw \\
																	& & & & \qw & \qw & \qw & \rstick{$n\subcap{\zero,g}$} \qw
		\end{quantikz}
	\]
	As with the other compositions, we're grouping the flags from the two together and grouping the garbage from the two together.
	This is essentially the same as Lemma~\ref{lem:compose}, but relying on Lemma~\ref{lem:invalidflagmix} rather than Lemma~\ref{lem:invalidflag}.
	This diagram uses a couple of \textsc{swap} gates that do not correspond to physical gates but are just there to ensure wires in the diagram don't collide.
	This is necessary because of the way that we are using single wires to represent different numbers of qubits, for example the unitaries $U_0$ and $U_1$ would have the same number of input qubits as output qubits.
\end{proof}

\subsection{Higher-level circuits from Qunity}
\label{app:hl-qunity}

Here we give the compiled circuits for all of the typing judgment cases and demonstrate algebraically that they are correct.

\paragraph{\textsc{T-Unit}}
\begin{quantikz}
	\lstick{$\Hilb(\Gamma)$} & \rstick{$\Hilb(\Gamma)$} \qw \\
	\lstick{$\Hilb(\varnothing)$} & \rstick{$\Hilb(\Unit)$} \\
\end{quantikz}
\begin{alignat*}{2}
	&&\;& \ket{\sigma, \varnothing} \in \Hilb(\Gamma) \otimes \Hilb(\varnothing) \\
	&=&& \ket{\sigma, \unit} \in \Hilb(\Gamma) \otimes \Hilb(\Unit)
\end{alignat*}

\paragraph{\textsc{T-Cvar}}
\begin{quantikz}
	\lstick{$\Hilb(\Gamma)$} & \qw & \rstick{$\Hilb(\Gamma)$} \qw \\
	\lstick{$\Hilb(x : T)$} & \ctrl{2} & \rstick{$\Hilb(x : T)$} \qw \\
	\lstick{$\Hilb(\Gamma')$} & \qw & \rstick{$\Hilb(\Gamma')$} \qw \\
	\lstick{$\Hilb(\varnothing)$} & \gate[style={cloud},nwires=1]{} & \rstick{$\Hilb(T)$} \qw
\end{quantikz}
\begin{alignat*}{2}
	&&\;& \ket{\sigma, x \mapsto v, \sigma'} \in \Hilb(\Gamma, x : T, \Gamma') \\
	&\mapsto&& \ket{\sigma, x \mapsto v, \sigma', v} \in \Hilb(\Gamma, x : T, \Gamma') \otimes \Hilb(T)
\end{alignat*}

\paragraph{\textsc{T-Qvar}}
\begin{quantikz}
	\lstick{$\Hilb(\Gamma)$} & \rstick{$\Hilb(\Gamma)$} \qw \\
	\lstick{$\Hilb(x : T)$} & \rstick{$\Hilb(T)$} \qw
\end{quantikz}
\begin{alignat*}{2}
	&&\;& \ket{\sigma, x \mapsto v} \in \Hilb(\Gamma, x : T) \\
	&=&& \ket{\sigma, v} \in \Hilb(\Gamma) \otimes \Hilb(T)
\end{alignat*}

\paragraph{\textsc{T-PurePair}}
\begin{quantikz}
	\lstick{$\Hilb(\Gamma)$} & \qw & \qw & \ctrl{1} & \ctrl{3} & \qw \rstick{$\Hilb(\Gamma)$} \\
	\lstick{$\Hilb(\Delta)$} & \ctrl{2} & \qwbundle{\Hilb(\Delta)} & \gate[2]{{\Gamma \partition \Delta, \Delta_0 \vdash e_0 : T_0}} & \qw & \qw \rstick{$\Hilb(T_0)$} \\
	\lstick{$\Hilb(\Delta_0)$} & \qw & \qw & & \\
														 & \gate[style={cloud},nwires=1]{} & \qwbundle{\Hilb(\Delta)} & \qw & \gate[2]{{\Gamma \partition \Delta, \Delta_1 \vdash e_1 : T_1}} & \qw \rstick{$\Hilb(T_1)$} \\
	\lstick{$\Hilb(\Delta_1)$} & \qw & \qw & \qw & & 
\end{quantikz}
\begin{alignat*}{3}
	&&& \ket{\sigma, \tau, \tau_0, \tau_1} &&\in \Hilb(\Gamma, \Delta, \Delta_0, \Delta_1) \\
	&\mapsto&\;& \ket{\sigma, \tau, \tau_0, \tau, \tau_1} &&\in \Hilb(\Gamma, \Delta, \Delta_0, \Delta, \Delta_1) \\
	&\mapsto&\;& \ket{\sigma} \otimes \msem{\sigma : \Gamma \partition \Delta, \Delta_0 \vdash e_0 : T_0} \ket{\tau, \tau_0} \otimes \ket{\tau, \tau_1} &&\in \Hilb(\Gamma) \otimes \Hilb(T_0) \otimes \Hilb(\Delta, \Delta_1) \\
	&\mapsto&\;& \ket{\sigma} \otimes \msem{\sigma : \Gamma \partition \Delta, \Delta_0 \vdash e_0 : T_0} \ket{\tau, \tau_0} \otimes \msem{\sigma : \Gamma \partition \Delta, \Delta_1 \vdash e_1 : T_1} \ket{\tau, \tau_1} &&\in \Hilb(\Gamma) \otimes \Hilb(T_0 \otimes T_1)
\end{alignat*}

\paragraph{\textsc{T-Ctrl}}
The compiled circuit for \textsc{T-Ctrl} uses modified versions of its subcircuits.
We use Lemma~\ref{lem:purify} to get a purified version of the circuit for $e$ with semantics $\msem{e} : \Hilb(\Gamma, \Delta) \to \Hilb(T) \otimes \Hilb_{\textsc{g}}$.
Here, $\Hilb_{\textsc{g}}$ is some ``garbage'' Hilbert space containing vectors $\{\ket{g_{\sigma, \tau, v}} : \sigma \in \V(\Gamma), \tau \in \V(\Delta), v \in \V(T)\} \subset \Hilb_{\textsc{g}}$ such that $\msem{e}\ket{\sigma, \tau} = \sum_{v \in \V(T)} \bra{g_{\sigma,\tau,v}, v} \msem{e}\ket{\sigma, \tau} \cdot \ket{g_{\sigma, \tau,v}, v}$ for all $\sigma, \tau, v$.

The circuit below is too large to fit on a single page, so the dots denote where the two pieces must fit together.
The ``$\perp$'' circuit is derived from the orthogonality judgment and defined on page~\pageref{sec:compile-ortho}.

\[
\begin{quantikz}
	\lstick{$\Hilb(\Gamma)$} & \ctrl{2} & \qwbundle{\Hilb(\Gamma)} & \qw & \qw & \qw & \qw & \qw & \qw & \qw & \qw & \qw & \qwbundle{\Hilb(\Gamma)} \rstick{$\cdots$} \\
	\lstick{$\Hilb(\Gamma')$} & \qw & \hphantomgate{} & \qw & \qw & \qw & \qw & \qw & \qw & \qw & \qw & \qw & \qwbundle{\Hilb(\Gamma')} \rstick{$\cdots$} \\
														& \gate[style={cloud},nwires=1]{} & \qwbundle{\Hilb(\Gamma)} & \gate[2]{e} & \qwbundle{\Hilb_{\textsc{g}}} & \qw & \qw & \qw & \qw & \qw & \qw & \qw & \qwbundle{\Hilb_{\textsc{g}}} \rstick{$\cdots$} \\
														& \gate[style={cloud},nwires=1]{} & \qwbundle{\Hilb(\Delta)} & & \qwbundle{\Hilb(T)} & \gate{\perp} & \qwbundle{\substack{\Hilb(T)^{\oplus n} \\ \hfill}} & \gate{\bigoplus_{j} \msem{e_j^\dagger}} & \qwbundle{\substack{\bigoplus_j \Hilb(\Gamma_j) \\ \hfill}} & \gate[3]{\textsc{distr}} \\
	\lstick{$\Hilb(\Delta)$} & \ctrl{-1} & \qwbundle{\Hilb(\Delta)} & \qw & \qw & \qw & \qw & \qw & \qw & & \qwbundle{\substack{\bigoplus_j \Hilb(\Gamma_j, \Delta, \Delta') \\ \hfill}} & \gate{\bigoplus_j \msem{e_j'}} & \qwbundle{\bigoplus_j (\Hilb(\Gamma_j) \otimes \Hilb(T'))} \rstick{$\cdots$} \\
	\lstick{$\Hilb(\Delta')$} & \qw & \qw & \qw & \qw & \qw & \qw & \qw & \qw & \\
\end{quantikz}
\]
\[
\begin{quantikz}
	\lstick{$\cdots$} & \qwbundle{\Hilb(\Gamma)} & \qw & \qw & \ctrl{1} & \qw & \qw & \qw & \qw & \qw & \ctrl{2} & \qw \rstick{$\Hilb(\Gamma)$} \\
	\lstick{$\cdots$} & \qwbundle{\Hilb(\Gamma')} & \qw & \qw & \ctrl{2} & \qw & \qw & \qw & \qw & \qw & \qw & \qw \rstick{$\Hilb(\Gamma')$} \\
	\lstick{$\cdots$} & \qwbundle{\Hilb_{\textsc{g}}} & \qw & \qw & \qw & \qw & \qw & \qw & \gate[2]{e^\dagger} & \qwbundle{\Hilb(\Gamma)} & \gate[style=cloud]{} \\
										& \hphantom{ext} & \gate[2,nwires=1]{\textsc{distr}} & \qwbundle{\substack{\bigoplus_j \Hilb(\Gamma_j) \\ \\ \hfill}} & \gate{\bigoplus_j \msem{e_j}} & \qwbundle{\substack{\Hilb(T)^{\oplus n} \\ \hfill}} & \gate{\perp^\dagger} & \qwbundle{\Hilb(T)} & & \qwbundle{\Hilb(\Delta)} & \gate[2]{\textsc{erase}} \\
	\lstick{$\cdots$} & \qwbundle{\bigoplus_j \cdots} & & \qwbundle{\Hilb(T')} & \qw & \qw & \qw & \qw & \qw & \qw & & \qw \rstick{$\Hilb(T')$} \\
\end{quantikz}
\]

\begin{alignat*}{2}
	&&& \ket{\sigma, \sigma', \tau, \tau'} \\ &&&\in \Hilb(\Gamma, \Gamma', \Delta, \Delta') \\
	&\mapsto&\;& \ket{\sigma, \sigma', \sigma, \tau, \tau, \tau'} \\ &&&\in \Hilb(\Gamma, \Gamma', \Gamma, \Delta, \Delta, \Delta') \\
	&\mapsto&\;& \ket{\sigma, \sigma'} \otimes \msem{e} \ket{\sigma, \tau} \otimes \ket{\tau, \tau'} \\ %
	&=&& \ket{\sigma, \sigma'} \otimes \sum_{v \in \V(T)} \bra{g_{\sigma,\tau,v}, v} \msem{e} \ket{\sigma, \tau} \cdot \ket{g_{\sigma,\tau,v}, v} \otimes \ket{\tau, \tau'} \\ &&&\in \Hilb(\Gamma, \Gamma') \otimes \Hilb_{\textsc{g}} \otimes \Hilb(T) \otimes \Hilb(\Delta, \Delta') \\
	&\mapsto&& \ket{\sigma, \sigma'} \otimes \sum_{v \in \V(T)} \bra{g_{\sigma,\tau,v}, v} \msem{e} \ket{\sigma, \tau} \cdot \ket{g_{\sigma,\tau,v}} \otimes \left(\bigoplus_{j=1}^n \sum_{\sigma_j \in \V(\Gamma_j)} \bra{\sigma_j} \msem{e_j^\dagger} \ket{v} \cdot \ket{v} \right) \otimes \ket{\tau, \tau'} \\ &&&\in \Hilb(\Gamma, \Gamma') \otimes \Hilb_{\textsc{g}} \otimes \Hilb(T)^{\oplus n} \otimes \Hilb(\Delta, \Delta') \\
	&\mapsto&& \ket{\sigma, \sigma'} \otimes \sum_{v \in \V(T)} \bra{g_{\sigma,\tau,v}, v} \msem{e} \ket{\sigma, \tau} \cdot \ket{g_{\sigma,\tau,v}} \otimes \left(\bigoplus_{j=1}^n \sum_{\sigma_j \in \V(\Gamma_j)} \bra{\sigma_j} \msem{e_j^\dagger} \ket{v} \cdot \msem{e_j^\dagger} \ket{v} \right) \otimes \ket{\tau, \tau'} \\
	&=&& \ket{\sigma, \sigma'} \otimes \sum_{v \in \V(T)} \bra{g_{\sigma,\tau,v}, v} \msem{e} \ket{\sigma, \tau} \cdot \ket{g_{\sigma,\tau,v}} \otimes \left(\bigoplus_{j=1}^n \sum_{\sigma_j \in \V(\Gamma_j)} \bra{\sigma_j} \msem{e_j^\dagger} \ket{v} \cdot \bra{\sigma_j} \msem{e_j^\dagger} \ket{v} \cdot \ket{\sigma_j} \right) \otimes \ket{\tau, \tau'} \\
	&=&& \ket{\sigma, \sigma'} \otimes \sum_{v \in \V(T)} \bra{g_{\sigma,\tau,v}, v} \msem{e} \ket{\sigma, \tau} \cdot \ket{g_{\sigma,\tau,v}} \otimes \left(\bigoplus_{j=1}^n \sum_{\sigma_j \in \V(\Gamma_j)} \bra{\sigma_j} \msem{e_j^\dagger} \ket{v} \cdot \ket{\sigma_j} \right) \otimes \ket{\tau, \tau'} \\ &&&\in \Hilb(\Gamma, \Gamma') \otimes \Hilb_{\textsc{g}} \otimes \bigoplus_{j=1}^n \Hilb(\Gamma_j) \otimes \Hilb(\Delta, \Delta') \\
	&\mapsto&& \ket{\sigma, \sigma'} \otimes \sum_{v \in \V(T)} \bra{g_{\sigma,\tau,v}, v} \msem{e} \ket{\sigma, \tau} \cdot \ket{g_{\sigma,\tau,v}} \otimes \left(\bigoplus_{j=1}^n \sum_{\sigma_j \in \V(\Gamma_j)} \bra{\sigma_j} \msem{e_j^\dagger} \ket{v} \cdot \ket{\sigma_j, \tau, \tau'} \right) \\ &&&\in \Hilb(\Gamma, \Gamma') \otimes \Hilb_{\textsc{g}} \otimes \bigoplus_{j=1}^n \left(\Hilb(\Gamma_j, \Delta, \Delta')\right) \\
	&\mapsto&& \ket{\sigma, \sigma'} \otimes \sum_{v \in \V(T)} \bra{g_{\sigma,\tau,v}, v} \msem{e} \ket{\sigma, \tau} \cdot \ket{g_{\sigma,\tau,v}} \otimes \left(\bigoplus_{j=1}^n \sum_{\sigma_j \in \V(\Gamma_j)} \bra{\sigma_j} \msem{e_j^\dagger} \ket{v} \cdot \msem{e_j'} \ket{\sigma_j, \tau, \tau'} \right)
\end{alignat*}
\begin{alignat*}{3}
	&&&\ket{\sigma, \sigma'} \otimes \sum_{v \in \V(T)} \bra{g_{\sigma,\tau,v}, v} \msem{e} \ket{\sigma, \tau} \cdot \ket{g_{\sigma,\tau,v}} \\ &&&\qquad\otimes \left(\bigoplus_{j=1}^n \sum_{\sigma_j \in \V(\Gamma_j)} \bra{\sigma_j} \msem{e_j^\dagger} \ket{v} \cdot \sum_{v' \in \V(T')} \bra{\sigma_j, v'} \msem{e_j'} \ket{\sigma_j, \tau, \tau'} \cdot \ket{\sigma_j, v'} \right) \\ %
	&=&&\ket{\sigma, \sigma'} \otimes \sum_{v \in \V(T)} \bra{g_{\sigma,\tau,v}, v} \msem{e} \ket{\sigma, \tau} \cdot \ket{g_{\sigma,\tau,v}} \\ &&&\qquad\otimes \left(\sum_{j=1}^n \sum_{\sigma_j \in \V(\Gamma_j)} \bra{\sigma_j} \msem{e_j^\dagger} \ket{v} \cdot \sum_{v' \in \V(T')} \bra{\sigma_j, v'} \msem{e_j'} \ket{\sigma_j, \tau, \tau'} \cdot \inj_j \ket{\sigma_j, v'} \right) \\ &&&\in \Hilb(\Gamma, \Gamma') \otimes \Hilb_{\textsc{g}} \otimes \bigoplus_{j=1}^n \left(\Hilb(\Gamma_j) \otimes \Hilb(T')\right) \\
	&\mapsto&&\ket{\sigma, \sigma'} \otimes \sum_{v \in \V(T)} \bra{g_{\sigma,\tau,v}, v} \msem{e} \ket{\sigma, \tau} \cdot \ket{g_{\sigma,\tau,v}} \\ &&&\qquad\otimes \left(\sum_{j=1}^n \sum_{\sigma_j \in \V(\Gamma_j)} \bra{\sigma_j} \msem{e_j^\dagger} \ket{v} \cdot \inj_j \ket{\sigma_j} \otimes \sum_{v' \in \V(T')} \bra{\sigma_j, v'} \msem{e_j'} \ket{\sigma_j, \tau, \tau'} \cdot \ket{v'} \right) \\ &&&\in \Hilb(\Gamma, \Gamma') \otimes \Hilb_{\textsc{g}} \otimes \bigoplus_{j=1}^n \Hilb(\Gamma_j) \otimes \Hilb(T') \\
	&\mapsto&&\ket{\sigma, \sigma'} \otimes \sum_{v \in \V(T)} \bra{g_{\sigma,\tau,v}, v} \msem{e} \ket{\sigma, \tau} \cdot \ket{g_{\sigma,\tau,v}} \\ &&&\qquad\otimes \left(\sum_{j=1}^n \sum_{\sigma_j \in \V(\Gamma_j)} \bra{\sigma_j} \msem{e_j^\dagger} \ket{v} \cdot \inj_j \msem{e_j} \ket{\sigma_j} \otimes \sum_{v' \in \V(T')} \bra{\sigma_j, v'} \msem{e_j'} \ket{\sigma_j, \tau, \tau'} \cdot \ket{v'} \right) \\ %
	&=&&\ket{\sigma, \sigma'} \otimes \sum_{v \in \V(T)} \bra{g_{\sigma,\tau,v}, v} \msem{e} \ket{\sigma, \tau} \cdot \ket{g_{\sigma,\tau,v}} \\ &&&\qquad\otimes \left(\sum_{j=1}^n \sum_{\sigma_j \in \V(\Gamma_j)} \bra{\sigma_j} \msem{e_j^\dagger} \ket{v} \cdot \inj_j \ket{v} \otimes \sum_{v' \in \V(T')} \bra{\sigma_j, v'} \msem{e_j'} \ket{\sigma_j, \tau, \tau'} \cdot \ket{v'} \right) \\ &&&\in \Hilb(\Gamma, \Gamma') \otimes \Hilb_{\textsc{g}} \otimes \bigoplus_{j=1}^n \Hilb(T) \otimes \Hilb(T') \\
	&\mapsto&&\ket{\sigma, \sigma'} \otimes \sum_{v \in \V(T)} \bra{g_{\sigma,\tau,v}, v} \msem{e} \ket{\sigma, \tau} \cdot \ket{g_{\sigma,\tau,v}} \\ &&&\qquad\otimes \left(\sum_{j=1}^n \sum_{\sigma_j \in \V(\Gamma_j)} \bra{\sigma_j} \msem{e_j^\dagger} \ket{v} \cdot \ket{v} \otimes \sum_{v' \in \V(T')} \bra{\sigma_j, v'} \msem{e_j'} \ket{\sigma_j, \tau, \tau'} \cdot \ket{v'} \right) \\ &&&\in \Hilb(\Gamma, \Gamma') \otimes \Hilb_{\textsc{g}} \otimes \Hilb(T) \otimes \Hilb(T') \\
	&\mapsto&&\ket{\sigma, \sigma'} \otimes \sum_{v \in \V(T)} \bra{g_{\sigma,\tau,v}, v} \msem{e} \ket{\sigma, \tau} \\ &&&\qquad\otimes \left(\sum_{j=1}^n \sum_{\sigma_j \in \V(\Gamma_j)} \bra{\sigma_j} \msem{e_j^\dagger} \ket{v} \cdot \msem{e^\dagger} \ket{g_{\sigma,\tau,v},v} \otimes \sum_{v' \in \V(T')} \bra{\sigma_j, v'} \msem{e_j'} \ket{\sigma_j, \tau, \tau'} \cdot \ket{v'} \right) \\ &&&\in \Hilb(\Gamma, \Gamma', \Gamma, \Delta) \otimes \Hilb(T') \\
\end{alignat*}
\begin{alignat*}{3}
	&&&\ket{\sigma, \sigma'} \otimes \sum_{v \in \V(T)} \bra{g_{\sigma,\tau,v}, v} \msem{e} \ket{\sigma, \tau} \\ &&&\qquad\otimes \left(\sum_{j=1}^n \sum_{\sigma_j \in \V(\Gamma_j)} \bra{\sigma_j} \msem{e_j^\dagger} \ket{v} \cdot \sum_{\sigma_\star, \tau_\star} \bra{\sigma_\star,\tau_\star}\msem{e^\dagger} \ket{g_{\sigma,\tau,v},v} \cdot \ket{\sigma_\star,\tau_\star} \otimes \sum_{v' \in \V(T')} \bra{\sigma_j, v'} \msem{e_j'} \ket{\sigma_j, \tau, \tau'} \cdot \ket{v'} \right) \\ &&&\in \Hilb(\Gamma, \Gamma', \Gamma, \Delta) \otimes \Hilb(T') \\
	&\mapsto&&\ket{\sigma, \sigma'} \otimes \sum_{v \in \V(T)} \bra{g_{\sigma,\tau,v}, v} \msem{e} \ket{\sigma, \tau} \\ &&&\qquad\otimes \left(\sum_{j=1}^n \sum_{\sigma_j \in \V(\Gamma_j)} \bra{\sigma_j} \msem{e_j^\dagger} \ket{v} \cdot \sum_{\tau_\star} \bra{\sigma,\tau_\star}\msem{e^\dagger} \ket{g_{\sigma,\tau,v},v} \cdot \ket{\tau_\star} \otimes \sum_{v' \in \V(T')} \bra{\sigma_j, v'} \msem{e_j'} \ket{\sigma_j, \tau, \tau'} \cdot \ket{v'} \right) \\ &&&\in \Hilb(\Gamma, \Gamma', \Delta) \otimes \Hilb(T') \\
	&=&&\ket{\sigma, \sigma'} \otimes \sum_{v \in \V(T)} \bra{g_{\sigma,\tau,v}, v} \msem{e} \ket{\sigma, \tau} \bra{\sigma,\tau}\msem{e^\dagger} \ket{g_{\sigma,\tau,v},v}\\ &&&\qquad\otimes \left(\sum_{j=1}^n \sum_{\sigma_j \in \V(\Gamma_j)} \bra{\sigma_j} \msem{e_j^\dagger} \ket{v} \cdot \ket{\tau} \otimes \sum_{v' \in \V(T')} \bra{\sigma_j, v'} \msem{e_j'} \ket{\sigma_j, \tau, \tau'} \cdot \ket{v'} \right) \\ &&&\in \Hilb(\Gamma, \Gamma', \Delta) \otimes \Hilb(T') \\
	&\mapsto&&\ket{\sigma, \sigma'} \otimes \sum_{v \in \V(T)} \bra{g_{\sigma,\tau,v}, v} \msem{e} \ket{\sigma, \tau} \bra{\sigma,\tau}\msem{e^\dagger} \ket{g_{\sigma,\tau,v},v}\\ &&&\qquad\otimes \left(\sum_{j=1}^n \sum_{\sigma_j \in \V(\Gamma_j)} \bra{\sigma_j} \msem{e_j^\dagger} \ket{v} \otimes \sum_{v' \in \V(T')} \bra{\sigma_j, v'} \msem{e_j'} \ket{\sigma_j, \tau, \tau'} \cdot \ket{v'} \right) \\ &&&\in \Hilb(\Gamma, \Gamma') \otimes \Hilb(T') \\
\end{alignat*}

\paragraph{\textsc{T-PureApp}}
\begin{quantikz}
	\lstick{$\Hilb(\Gamma)$} & \ctrl{1} & \qw & \qw & \rstick{$\Hilb(\Gamma)$} \qw \\
	\lstick{$\Hilb(\Delta)$} & \gate{e} & \qwbundle{\Hilb(T)} & \gate{f} & \rstick{$\Hilb(T')$} \qw
\end{quantikz}
\begin{alignat*}{2}
	&&\;& \ket{\sigma, \tau} \in \Hilb(\Gamma, \Delta) \\
	&\mapsto&& \ket{\sigma} \otimes \msem{e} \ket{\tau} \in \Hilb(\Gamma) \otimes \Hilb(T) \\
	&\mapsto&& \ket{\sigma} \otimes \msem{f} \msem{e} \ket{\tau} \in \Hilb(\Gamma) \otimes \Hilb(T') \\
\end{alignat*}

\paragraph{\textsc{T-PurePerm}}
\[
\begin{quantikz}
	\lstick{$\Hilb(\pi\subcap{g}(\Gamma))$} & \gate{\pi\subcap{g}^{-1}} & \qwbundle{\Hilb(\Gamma)} & \ctrl{1} & \gate{\pi\subcap{g}} & \rstick{$\Hilb(\pi\subcap{g}(\Gamma))$} \qw \\
	\lstick{$\Hilb(\pi\subcap{d}(\Delta))$} & \gate{\pi\subcap{d}^{-1}} & \qwbundle{\Hilb(\Delta)} & \gate{e} & \qw & \rstick{$\Hilb(T)$} \qw
\end{quantikz}
\]
\begin{alignat*}{2}
	&&\;& \ket{\pi\subcap{g}(\sigma), \pi\subcap{d}(\tau)} \in \Hilb(\pi\subcap{g}(\Gamma), \pi\subcap{d}(\Delta)) \\
	&\mapsto&& \ket{\sigma, \tau} \in \Hilb(\Gamma, \Delta) \\
	&\mapsto&& \ket{\sigma} \otimes \msem{e} \ket{\tau} \in \Hilb(\Gamma) \otimes \Hilb(T) \\
	&\mapsto&& \ket{\pi\subcap{g}(\sigma)} \otimes \msem{e} \ket{\tau} \in \Hilb(\pi\subcap{g}(\Gamma)) \otimes \Hilb(T)
\end{alignat*}

\paragraph{\textsc{T-Gate}} (We assume that our low-level circuits contain these gates as primitives.)
\[
\begin{quantikz}
	\lstick{$\Hilb(\Bit)$} & \gate{\begin{matrix} \cos(r_\theta / 2) & -e^{i r_\lambda} \sin(r_\theta / 2) \\ e^{i r_\phi} \sin(r_\theta / 2) & e^{i(r_\phi + r_\lambda)} \cos(r_\theta / 2) \end{matrix}} & \rstick{$\Hilb(\Bit)$} \qw
\end{quantikz}
\]
\begin{alignat*}{3}
	&&\;& \ket{\zero} &&\in \Hilb(\Bit) \\
	&\mapsto&& \cos(r_\theta / 2) \ket{\zero} + e^{i r_\phi} \sin(r_\theta / 2) \ket{\one} &&\in \Hilb(\Bit) \\
	&&\;& \ket{\one} &&\in \Hilb(\Bit) \\
	&\mapsto&& -e^{i r_\lambda} \sin(r_\theta / 2) \ket{\zero} + e^{i(r_\phi + r_\lambda)} \cos(r_\theta / 2) \ket{\one} &&\in \Hilb(\Bit)
\end{alignat*}

\paragraph{\textsc{T-Left}} (This was already implemented in the previous section.)
\[
\begin{quantikz}
	\lstick{$\Hilb(T_0)$} & \gate{\lef{T_0}{T_1}} & \rstick{$\Hilb(T_0 \oplus T_1)$} \qw
\end{quantikz}
\]
\begin{alignat*}{3}
	&&\;& \ket{v} &&\in \Hilb(T_0) \\
	&\mapsto&& \ket{v} \oplus 0 \\
\end{alignat*}

\paragraph{\textsc{T-Right}} (This was already implemented in the previous section.)
\[
\begin{quantikz}
	\lstick{$\Hilb(T_1)$} & \gate{\rit{T_0}{T_1}} & \rstick{$\Hilb(T_0 \oplus T_1)$} \qw
\end{quantikz}
\]
\begin{alignat*}{3}
	&&\;& \ket{v} &&\in \Hilb(T_1) \\
	&\mapsto&& 0 \oplus \ket{v} \\
\end{alignat*}

\paragraph{\textsc{T-PureAbs}}
\begin{quantikz}
	\lstick{$\Hilb(T)$} & \gate{e^\dagger} & \qwbundle{\Hilb(\Delta)} & \gate{e'} &\rstick{$\Hilb(T')$} \qw
\end{quantikz}
\begin{alignat*}{3}
	&&\;& \ket{v} &&\in \Hilb(T) \\
	&\mapsto&& \msem{e}^\dagger \ket{v} &&\in \Hilb(\Delta) \\
	&\mapsto&& \msem{e'} \msem{e}^\dagger \ket{v} &&\in \Hilb(T') \\
\end{alignat*}

\paragraph{\textsc{T-Rphase}}
Let $E\subcap{f} : \Hilb(T) \to \Hilb(\Delta) \oplus \Hilb\subcap{f}$ be the norm-preserving operator constructed from $\msem{\varnothing \partition \Delta \vdash e : T}^\dagger$ using Lemma~\ref{lem:pure-error}.
The compiled \textsc{T-Rphase} circuit then looks like this:
\[
	\begin{DIFnomarkup}
\begin{quantikz}
	\lstick{$\Hilb(T)$} & \gate{E\subcap{f}} & \qwbundle{\Hilb(\Delta) \oplus \Hilb\subcap{f}} & \hphantomgate{ex} & \gate{e^{i r} \mathbb{I}_\Delta \oplus e^{i r'} \mathbb{I}\subcap{f}} & \qwbundle{\Hilb(\Delta) \oplus \Hilb\subcap{f}} & \hphantomgate{ex} & \gate{E\subcap{f}^\dagger} &\rstick{$\Hilb(T)$} \qw
\end{quantikz}
\end{DIFnomarkup}
\]
Using the fact that $\msem{\varnothing \partition \Delta \vdash e : T}^\dagger = \msem{\lef{\Delta}{\textsc{G}}}^\dagger E\subcap{f}$,
\begin{align*}
	& E\subcap{f}^\dagger \left(e^{i r} \mathbb{I}_\Delta \oplus e^{i r'} \mathbb{I}\subcap{f}\right) E\subcap{f} \\
	=&\; E\subcap{f}^\dagger \left(e^{i r} \msem{\lef{\Delta}{\textsc{G}}} \msem{\lef{\Delta}{\textsc{G}}}^\dagger + e^{i r'} \msem{\rit{\Delta}{\textsc{G}}}\msem{\rit{\Delta}{\textsc{G}}}^\dagger\right) E\subcap{f} \\
	=&\; E\subcap{f}^\dagger \left(e^{i r} \msem{\lef{\Delta}{\textsc{G}}} \msem{\lef{\Delta}{\textsc{G}}}^\dagger + e^{i r'} (\mathbb{I} - \msem{\lef{\Delta}{\textsc{G}}} \msem{\lef{\Delta}{\textsc{G}}}^\dagger)\right) E\subcap{f} \\
	=&\; e^{i r} E\subcap{f}^\dagger \msem{\lef{\Delta}{\textsc{G}}} \msem{\lef{\Delta}{\textsc{G}}}^\dagger E\subcap{f} + e^{i r'} \left(\mathbb{I} - E\subcap{f}^\dagger \msem{\lef{\Delta}{\textsc{G}}} \msem{\lef{\Delta}{\textsc{G}}}^\dagger E\subcap{f}\right) \\
	=&\; e^{i r} \msem{\varnothing \partition \Delta \vdash e : T} \msem{\varnothing \partition \Delta \vdash e : T}^\dagger + e^{i r'} \left(\mathbb{I} - \msem{\varnothing \partition \Delta \vdash e : T} \msem{\varnothing \partition \Delta \vdash e : T}^\dagger\right) \\
\end{align*}

\paragraph{\textsc{T-Mix}}
\begin{quantikz}
	\lstick{$\Hilb(\Delta)$} & \gate{e} & \rstick{$\Hilb(T)$} \qw
\end{quantikz}
\begin{alignat*}{2}
	&&\;& \op{\tau}{\tau'} \in \linear(\Hilb(\Delta)) \\
	&\mapsto&& \msem{e} \op{\tau}{\tau'} \msem{e}^\dagger \in \linear(\Hilb(T)) \\
\end{alignat*}

\paragraph{\textsc{T-MixedPair}}
\begin{quantikz}
	\lstick{$\Hilb(\Delta)$} & \ctrl{2} & \qwbundle{\Hilb(\Delta)} & \gate[2]{{\Gamma \partition \Delta, \Delta_0 \vdash e_0 : T_0}} & \qw \rstick{$\Hilb(T_0)$} \\
	\lstick{$\Hilb(\Delta_0)$} & \qw & \qw & & \\
														 & \gate[style={cloud},nwires=1]{} & \qwbundle{\Hilb(\Delta)} & \gate[2]{{\Gamma \partition \Delta, \Delta_1 \vdash e_1 : T_1}} & \qw \rstick{$\Hilb(T_1)$} \\
	\lstick{$\Hilb(\Delta_1)$} & \qw & \qw & & 
\end{quantikz}
\begin{alignat*}{3}
	&&\;& \op{\tau, \tau_0, \tau_1}{\tau', \tau_0', \tau_1'} &&\in \linear(\Hilb(\Delta, \Delta_0, \Delta_1)) \\
	&\mapsto&& \op{\tau, \tau_0, \tau, \tau_1}{\tau', \tau_0', \tau', \tau_1'} \\
	&=&& \op{\tau,\tau_0}{\tau',\tau_0'} \otimes \op{\tau,\tau_1}{\tau',\tau_1'} &&\in \linear(\Hilb(\Delta, \Delta_0)) \otimes \linear(\Hilb(\Delta, \Delta_1)) \\
	&\mapsto&& \msem{e_0} \left( \op{\tau,\tau_0}{\tau',\tau_0'} \right) \otimes \msem{e_1} \left( \op{\tau,\tau_1}{\tau',\tau_1'} \right) &&\in \linear(\Hilb(T_0 \otimes T_1)) \\
\end{alignat*}

\paragraph{\textsc{T-Try}}
This circuit relies on circuits transformed by Lemma~\ref{lem:mixed-error}.
	\[
\begin{quantikz}
	\lstick{$\Hilb(\Delta_0)$} & \gate{{\cptp(\Delta_0 \Vdash e_0 : T)}} & \qwbundle{\Hilb(T) \oplus \complex} &[1cm] \gate[2]{\textsc{distr}} & \qwbundle{\substack{\Hilb(T) \otimes (\Hilb(T) \oplus \complex) \oplus \Hilb(T) \oplus \complex \\ \\ \hfill}} & \gate{\texttt{left}^\dagger} & \qwbundle{\substack{\Hilb(T) \otimes (\Hilb(T) \oplus \complex) \oplus \Hilb(T) \\ \hfill}} & \gate[2,nwires=2]{\textsc{distr}} & \rstick{$\Hilb(T)$} \qw \\
	\lstick{$\Hilb(\Delta_1)$} & \gate{{\cptp(\Delta_1 \Vdash e_1 : T)}} & \qwbundle{\Hilb(T) \oplus \complex} & &&&&& \trash{\Hilb(T) \oplus \complex \oplus \complex}
\end{quantikz}
\]
For brevity, define:
\begin{align*}
	\rho_0' &\defeq \msem{\Delta_0 \Vdash e_0 : T}(\rho_0) \\
	\rho_1' &\defeq \msem{\Delta_1 \Vdash e_1 : T}(\rho_1)
\end{align*}
Then, the circuit acts as follows:
\begin{alignat*}{3}
	&&\;& \rho_0 \otimes \rho_1 \\
	&\mapsto&& \left(\rho_0' \oplus (\tr\left(\rho_0\right) - \tr(\rho_0'))\right) \otimes \left(\rho_1' \oplus (\tr\left(\rho_1\right) - \tr(\rho_1')) \right) \\
	&\mapsto&& \rho_0' \otimes \left( \rho_1' \oplus (\tr\left(\rho_1\right) - \tr(\rho_1'))\right) \oplus (\tr\left(\rho_0\right) - \tr(\rho_0')) \left( \rho_1' \oplus (\tr\left(\rho_1\right) - \tr(\rho_1')) \right) \\
	&\mapsto&& \rho_0' \otimes \left( \rho_1' \oplus (\tr\left(\rho_1\right) - \tr(\rho_1'))\right) \oplus (\tr\left(\rho_0\right) - \tr(\rho_0')) \rho_1' \\
	&\mapsto&& \rho_0' \otimes \left( \rho_1' \oplus (\tr\left(\rho_1\right) - \tr(\rho_1')) \oplus 0 \right) + \rho_1' \left(0 \oplus 0 \oplus (\tr\left(\rho_0\right) - \tr(\rho_0'))\right) \\
	&\mapsto&& \tr\left(\rho_1\right) \rho_0' + (\tr\left(\rho_0\right) - \tr(\rho_0'))\rho_1'
\end{alignat*}

\paragraph{\textsc{T-MixedApp}}
\begin{quantikz}
	\lstick{$\Hilb(\Delta)$} & \gate{e} & \qwbundle{\Hilb(T)} & \gate{f} & \rstick{$\Hilb(T')$} \qw
\end{quantikz}
\begin{alignat*}{2}
	&&\;& \op{\tau}{\tau'} \in \linear(\Hilb(\Delta)) \\
	&\mapsto&& \msem{e} \left( \op{\tau}{\tau'} \right) \in \linear(\Hilb(T)) \\
	&\mapsto&& \msem{f} \left( \msem{e} ( \op{\tau}{\tau'} ) \right) \in \linear(\Hilb(T'))
\end{alignat*}

\paragraph{\textsc{T-MixedPerm}}
\[
\begin{quantikz}
	\lstick{$\Hilb(\pi(\Delta))$} & \gate{\pi^{-1}} & \qwbundle{\Hilb(\Delta)} & \gate{e} & \rstick{$\Hilb(T)$} \qw
\end{quantikz}
\]
\begin{alignat*}{2}
	&&\;& \op{\pi(\tau)}{\pi(\tau')} \in \linear(\Hilb(\pi(\Delta))) \\
	&\mapsto&& \op{\tau}{\tau'} \in \linear(\Hilb(\Delta)) \\
	&\mapsto&& \msem{e} \left( \op{\tau}{\tau'} \right) \in \linear(\Hilb(T))
\end{alignat*}

\paragraph{\textsc{T-Channel}}
\begin{quantikz}
	\lstick{$\Hilb(T)$} & \gate{f} & \rstick{$\Hilb(T')$} \qw
\end{quantikz}
\begin{alignat*}{3}
	&&\;& \op{v}{v'} &&\in \linear(\Hilb(T)) \\
	&\mapsto&& \msem{f} \op{v}{v'} \msem{f}^\dagger &&\in \linear(\Hilb(T')) \\
\end{alignat*}

\paragraph{\textsc{T-MixedAbs}}
\begin{quantikz}
	\lstick{$\Hilb(T)$} & \gate[2,nwires=2]{e^\dagger} & \qwbundle{\Hilb(\Delta)} & \gate{e'} & \rstick{$\Hilb(T')$} \qw \\
											&& \trash{\Hilb(\Delta_0)}
\end{quantikz}
\begin{alignat*}{3}
	&&\;& \op{v}{v'} &&\in \linear(\Hilb(T)) \\
	&\mapsto&& \msem{e}^\dagger \op{v}{v'} \msem{e} &&\in \linear(\Hilb(\Delta, \Delta_0)) \\
	&\mapsto&& \tr_{\Delta_0} \left( \msem{e}^\dagger \op{v}{v'} \msem{e} \right) &&\in \linear(\Hilb(\Delta)) \\
	&\mapsto&& \msem{e'} \left(\tr_{\Delta_0} ( \msem{e}^\dagger \op{v}{v'} \msem{e} ) \right) &&\in \linear(\Hilb(T'))
\end{alignat*}

\begin{lemma}[orthogonality compilation]
	Suppose $(\varnothing \partition \Gamma_j \vdash e_j : T)$ for all $j \in \{1, \ldots, n\}$ and $\ortho{T}{e_1, \ldots, e_n}$.
	Then, one can implement a Kraus operator $\msem{\ortho{T}{e_1, \ldots, e_n}} : \Hilb(T) \to \Hilb(T)^{\oplus n}$ such that for all $v \in \V(T)$:
	\begin{align*}
		\msem{\ortho{T}{e_1, \ldots, e_n}} \ket{v} &= \bigoplus_{j=1}^n \sum_{\sigma_j \in \V(\Gamma_j)} \bra{\sigma_j} \msem{\varnothing : \varnothing \partition \Gamma_j \vdash e_j : T}^\dagger \ket{v} \cdot \ket{v} \\
		\msem{\ortho{T}{e_1, \ldots, e_n}}^\dagger \left(\inj_j \ket v\right) &= \sum_{\sigma_j \in \V(\Gamma_j)} \bra{\sigma_j} \msem{\varnothing : \varnothing \partition \Gamma_j \vdash e_j : T}^\dagger \ket{v} \cdot \ket{v}
	\end{align*}
\end{lemma}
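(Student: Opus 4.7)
The plan is to prove this by first handling the spanning case (where the set is exhaustive, yielding an isometry) and then reducing orthogonality to spanning via projection. By definition of $\ortho{T}{e_1, \ldots, e_n}$, there exists a spanning set $\{e_1, \ldots, e_n, e_{n+1}, \ldots, e_m\}$ of which $\{e_1, \ldots, e_n\}$ is a subsequence. I will first construct an isometry $S : \Hilb(T) \to \Hilb(T)^{\oplus m}$ implementing the analogous spanning operator $S \ket v = \bigoplus_{j=1}^m \sum_{\sigma_j \in \V(\Gamma_j)} \bra{\sigma_j} \msem{e_j}^\dagger \ket v \cdot \ket v$, and then compose with the projector onto the first $n$ summands (implementable using the injections from Lemma~\ref{lem:compile-inj} and their adjoints from Lemma~\ref{lem:compile-adj}) to recover $\msem{\ortho{T}{e_1, \ldots, e_n}}$.

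The construction of $S$ proceeds by structural induction on the spanning derivation (Figure~\ref{fig:spanning}). The base cases \textsc{S-Void}, \textsc{S-Unit}, and \textsc{S-Var} are immediate: $\msem{e_j}$ is essentially an identity, so $S$ is either the empty map or trivially an injection into a single summand paired with the preserved copy of the input. For \textsc{S-Sum}, I recursively build spanning operators for each side of the sum, then combine them using the $\msem{\lef{T}{T'}}^\dagger$ and $\msem{\rit{T}{T'}}^\dagger$ tags to route the input into the correct sub-operator and the $\lef{}{}$/$\rit{}{}$ injections to reassemble the preserved copy. For \textsc{S-Pair}, I first apply the $T$-spanning operator on a copy of the $T$ component, use distributivity to expose each summand as a pair, and apply the appropriate $T'$-spanning operator conditionally within each branch. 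For \textsc{S-Perm}, I simply permute output summands. That $S$ is an isometry follows from Lemma~\ref{lem:spanning-sem}: $\|S\ket v\|^2 = \sum_{j,\sigma_j} |\bra{\sigma_j} \msem{e_j}^\dagger \ket v|^2 = \bra v (\sum_j \msem{e_j}\msem{e_j}^\dagger) \ket v = 1$, where the squared amplitudes collapse to amplitudes themselves because Lemma~\ref{lem:ortho-sem} pins every entry to $\{0,1\}$.

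Given $S$, the orthogonal operator is the composition $P_n \circ S$, where $P_n : \Hilb(T)^{\oplus m} \to \Hilb(T)^{\oplus n}$ keeps only the first $n$ summands and discards the rest (implementable by repeated use of $\msem{\lef{T}{T'}}^\dagger$). By orthogonality of the patterns, for any fixed basis vector $\ket v$ at most one summand can be nonzero, so $P_n S$ acts as claimed: $\ket v \mapsto \bigoplus_{j=1}^n \sum_{\sigma_j}\bra{\sigma_j}\msem{e_j}^\dagger\ket v \cdot \ket v$. The adjoint property follows immediately by conjugate-transposing this formula and using that every coefficient is a real number in $\{0,1\}$ (Lemma~\ref{lem:ortho-sem}), so $(P_n S)^\dagger (\inj_j \ket v) = \sum_{\sigma_j} \bra{\sigma_j} \msem{e_j}^\dagger \ket v \cdot \ket v$ as required.

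The hardest step will be the \textsc{S-Pair} inductive case: implementing conditional application of the $T'$-spanning operator inside each summand while preserving the ``copy of $v$'' invariant, and correctly threading the per-pattern contexts $\Gamma_j$ through. This requires coordinating the share/copy gates with \textsc{distr} and the controlled recursive subcircuits so that each branch's $\msem{e_j}^\dagger$ acts on the right portion of the state; I expect this to work because the \textsc{S-Pair} rule's side condition on disjoint free variables guarantees that the copied data in each branch aligns cleanly with the typing context used for compiling the $T'$-side subcircuits.
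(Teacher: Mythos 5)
Your proposal follows essentially the same route as the paper: build the spanning operator by structural induction on the spanning derivation (with the same treatment of \textsc{S-Sum} via tagged direct sums and \textsc{S-Pair} via \textsc{distr} plus a direct sum of the $T'$-side subcircuits), then obtain the orthogonality operator by killing the summands not in the subsequence. The only cosmetic difference is that the paper composes with $\bigoplus_j V_j$ where each $V_j$ is either $\mathbb{I}$ or the zero map to $\Hilb(\Void)$, rather than permuting the spanning list so the chosen patterns form a prefix; both are valid.
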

\begin{proof}
	We will first prove this for the case where $\spanning{T}{e_1, \ldots, e_n}$ holds, and then generalize.
	We construct the required circuit recursively based on specific rule used to prove the spanning judgment.
	
	\paragraph{\textsc{S-Unit}}
	In this case, $n=1$, $e_1 = \unit$, $T = \Unit$, and $\Gamma_1 = \varnothing$.
	The output Hilbert space $\Hilb(\Unit)^{\oplus 1}$ is isomorphic to the input Hilbert space $\Hilb(\Unit)$, and the required behavior reduces to the identity function because $\bra{\varnothing} \msem{\varnothing : \varnothing \partition \varnothing \vdash \unit : \Unit}^\dagger \ket{\unit} = 1$, so an empty circuit suffices:
	
	\[
	\begin{quantikz}
		\lstick{$\Hilb(\Unit)$} & \qw \rstick{$\Hilb(\Unit)^{\oplus 1}$}
	\end{quantikz}
	\]

	\begin{align*}
		\msem{\spanning{\Unit}{\unit}} \ket{\unit}
		&= \ket{\unit} \\
		&= \bra{\varnothing} \msem{\varnothing : \varnothing \partition \varnothing \vdash \unit : \Unit}^\dagger \ket{\unit} \cdot \ket{\unit} \\
		&= \sum_{\sigma_1 \in \V(\varnothing)} \bra{\sigma_1} \msem{\varnothing : \varnothing \partition \varnothing \vdash \unit : \Unit}^\dagger \ket{\unit} \cdot \ket{\unit} \\
		&= \bigoplus_{j=1}^1 \sum_{\sigma_j \in \V(\Gamma_j)} \bra{\sigma_j} \msem{\varnothing : \varnothing \partition \Gamma_j \vdash e_j : \Unit}^\dagger \ket{\unit} \cdot \ket{\unit} \\
		\msem{\spanning{\Unit}{\unit}}^\dagger \left(\inj_1 \ket{\unit}\right)
		&= \ket{\unit} \\
		&= \sum_{\sigma_1 \in \V(\varnothing)} \bra{\sigma_1} \msem{\varnothing : \varnothing \partition \varnothing \vdash \unit : \Unit}^\dagger \ket{\unit} \cdot \ket{\unit}
	\end{align*}

	\paragraph{\textsc{S-Var}}
	In this case, $n=1$, $e_1=x$, and $\Gamma_1 = (x : T)$.
	Again, an identity circuit suffices:

	\[
		\begin{quantikz}
			\lstick{$\Hilb(T)$} & \qw \rstick{$\Hilb(T)^{\oplus 1}$}
		\end{quantikz}
	\]
	\begin{align*}
		\msem{\spanning{T}{x}} \ket{v}
		&= \ket{v} \\
		&= \bra{x \mapsto v} \msem{\varnothing : \varnothing \partition x : T \vdash x : T}^\dagger \ket{v} \cdot \ket{v} \\
		&= \sum_{\sigma_1 \in \V(x : T)} \bra{\sigma_1} \msem{\varnothing : \varnothing \partition x : T \vdash x : T}^\dagger \ket{v} \cdot \ket{v} \\
		&= \bigoplus_{j=1}^1 \sum_{\sigma_j \in \V(\Gamma_j)} \bra{\sigma_j} \msem{\varnothing : \varnothing \partition \Gamma_j \vdash e_j : T}^\dagger \ket{v} \cdot \ket{v} \\
		\msem{\spanning{T}{x}}^\dagger \left(\inj_1 \ket{v}\right)
		&= \ket{v} \\
		&= \sum_{\sigma_1 \in \V(x : T)} \bra{\sigma_1} \msem{\varnothing : \varnothing \partition x : T \vdash x : T}^\dagger \ket{v} \cdot \ket{v}
	\end{align*}

	\paragraph{\textsc{S-Sum}}
	In this case, our list of expressions is:
	\[
		[\lef{T}{T'} e_1, \ldots, \lef{T}{T'} e_n, \rit{T}{T'} e_1', \ldots, \rit{T}{T'} e_{n'}']
	\]
	\[
		\begin{quantikz}
			\lstick{$\Hilb(T \oplus T')$} & \gate{\texttt{spanning} \oplus \texttt{spanning}'} & \qwbundle{\substack{\Hilb(T)^{\oplus n} \oplus \Hilb(T')^{\oplus n'} \\ \\ \hfill}} & \gate{\msem{\lef{T}{T'}} \oplus \msem{\rit{T}{T'}}} & \qw \rstick{$\Hilb(T \oplus T')^{\oplus (n + n')}$}
		\end{quantikz}
	\]
	\begin{alignat*}{2}
		&&& \msem{\spanning{T\oplus T'}{\cdots}} (\ket{v} \oplus 0) \\
																				 &=&& \left(\bigoplus_{j=1}^n \sum_{\sigma_j \in \V(\Gamma_j)} \bra{\sigma_j} \msem{\varnothing : \varnothing \partition \Gamma_j \vdash e_j : T}^\dagger \ket{v} \cdot \left(\ket{v} \oplus 0\right)\right) \oplus 0 \\
																				 &=&& \left(\bigoplus_{j=1}^n \sum_{\sigma_j \in \V(\Gamma_j)} \bra{\sigma_j} \msem{\varnothing : \varnothing \partition \Gamma_j \vdash \lef{T}{T'} e_j : T \oplus T'}^\dagger (\ket{v} \oplus 0) \cdot \left(\ket{v} \oplus 0\right)\right) \oplus 0 \\
																				 &&&\msem{\spanning{T\oplus T'}{\cdots}} (0 \oplus \ket{v}) \\
																				 &=&& 0 \oplus \left(\bigoplus_{j=1}^{n'} \sum_{\sigma_j \in \V(\Gamma_j')} \bra{\sigma_j} \msem{\varnothing : \varnothing \partition \Gamma_j' \vdash e_j' : T'}^\dagger \ket{v} \cdot \left(0 \oplus \ket{v}\right)\right) \\
																				 &=&& 0 \oplus \left(\bigoplus_{j=1}^{n'} \sum_{\sigma_j \in \V(\Gamma_j')} \bra{\sigma_j} \msem{\varnothing : \varnothing \partition \Gamma_j' \vdash \rit{T}{T'} e_j' : T \oplus T'}^\dagger (0 \oplus \ket{v}) \cdot \left(0 \oplus \ket{v}\right)\right) \\
																				 &&&\msem{\spanning{T\oplus T'}{\cdots}}^\dagger \left(\inj_j (\ket{v} \oplus 0) \oplus 0\right) \\
																								 &=&& \sum_{\sigma_j \in \V(\Gamma_j)} \bra{\sigma_j} \msem{\varnothing : \varnothing \partition \Gamma_j \vdash e_j : T}^\dagger \ket{v} \cdot \ket{v} \oplus 0 \\
																								 &=&& \sum_{\sigma_j \in \V(\Gamma_j)} \bra{\sigma_j} \msem{\varnothing : \varnothing \partition \Gamma_j \vdash \lef{T}{T'} e_j : T \oplus T'}^\dagger (\ket{v} \oplus 0) \cdot (\ket{v} \oplus 0) \\
																				 &&&\msem{\spanning{T\oplus T'}{\cdots}}^\dagger \left(\inj_j (0 \oplus \ket{v}) \oplus 0\right) \\
																								 &=&\;& 0 \\
																								 &=&& \sum_{\sigma_j \in \V(\Gamma_j)} \bra{\sigma_j} \msem{\varnothing : \varnothing \partition \Gamma_j \vdash \lef{T}{T'} e_j : T \oplus T'}^\dagger (0 \oplus \ket{v}) \cdot (0 \oplus \ket{v}) \\
																				 &&&\msem{\spanning{T\oplus T'}{\cdots}}^\dagger \left(0 \oplus \inj_j (\ket{v} \oplus 0)\right) \\
																								 &=&\;& 0 \\
																								 &=&& \sum_{\sigma_j \in \V(\Gamma_j')} \bra{\sigma_j} \msem{\varnothing : \varnothing \partition \Gamma_j' \vdash \rit{T}{T'} e_j' : T \oplus T'}^\dagger (\ket{v} \oplus 0) \cdot (\ket{v} \oplus 0) \\
																				 &&&\msem{\spanning{T\oplus T'}{\cdots}}^\dagger \left(0 \oplus \inj_j (0 \oplus \ket{v})\right) \\
																								 &=&& \sum_{\sigma_j \in \V(\Gamma_j')} \bra{\sigma_j} \msem{\varnothing : \varnothing \partition \Gamma_j' \vdash e_j' : T}^\dagger \ket{v} \cdot (0 \oplus \ket{v}) \\
																								 &=&& \sum_{\sigma_j \in \V(\Gamma_j')} \bra{\sigma_j} \msem{\varnothing : \varnothing \partition \Gamma_j' \vdash \rit{T}{T'} e_j' : T \oplus T'}^\dagger (0 \oplus \ket{v}) \cdot (0 \oplus \ket{v})
	\end{alignat*}

	\paragraph{\textsc{S-Pair}}
	We can compile this judgment for $T \otimes T'$ in terms of the compiled judgments for $\spanning{T}{e_1, \ldots, e_m}$ and $\spanning{T'}{e_{j,1}', \ldots, e_{j,n_j}'}$ for $j = 1, \ldots, m$.
	The output from the $T$ compilation is used to control which of the $T'$ judgments is used.
	With abbreviation $\textsc{span}_j \defeq \spanning{T'}{e_{j,1}', \ldots, e_{j,n_j}'}$:

	\[
	\begin{quantikz}
		\lstick{$\Hilb(T)$} & \gate{\spanning{T}{\cdots}} & \qwbundle{\substack{\Hilb(T)^{\oplus m} \\ \hfill}} & \gate[2]{\textsc{distr}} & \qwbundle{\Hilb(T \otimes T')^{\oplus m}} &[1cm] \gate{\bigoplus_{j=1}^m \textsc{span}_j} & \qw \rstick{$\Hilb(T \otimes T')^{\oplus \sum_{j=1}^m n_j}$} \\
		\lstick{$\Hilb(T')$} & \qw & \qw &
	\end{quantikz}
	\]

	Assume that each pattern is well-typed as $\varnothing \partition \Gamma_{j,k} \vdash \pair{e_j}{e_{j,k}'} : T \otimes T'$.
	The spanning judgment ensures that the free variables in $e_j$ are distinct from those of $e_{j,k}$, so the context can be partitioned as $\Gamma_{j,k} = \Gamma_j, \Gamma_{j,k}'$, with $\varnothing \partition \Gamma_j \vdash e_j : T$ and $\varnothing \partition \Gamma_{j,k}' \vdash e_{j,k}' : T'$.
	\begin{alignat*}{2}
		&&\;&\msem{\spanning{T\otimes T'}{\cdots}} \ket{v, v'} \\
		&=&& \bigoplus_{j=1}^m \bigoplus_{k=1}^{n_j} \sum_{\sigma_j \in \V(\Gamma_j)} \sum_{\sigma_{j,k}' \in \V(\Gamma_{j,k}')} \bra{\sigma_j} \msem{\varnothing : \varnothing \partition \Gamma_j \vdash e_j : T}^\dagger \ket{v} \cdot \ket{v} \otimes \bra{\sigma_{j,k}'} \msem{\varnothing : \varnothing \partition \Gamma_{j,k}' \vdash e_{j,k}' : T'}^\dagger \ket{v'} \cdot \ket{v'} \\
		&=&& \bigoplus_{j=1}^n \bigoplus_{k=1}^{n_j} \sum_{\sigma_j \in \V(\Gamma_j)} \sum_{\sigma_{j,k}' \in \V(\Gamma_{j,k}')} \bra{\sigma_j, \sigma_{j,k}'} \msem{\varnothing : \varnothing \partition \Gamma_j, \Gamma_{j,k}' \vdash \pair{e_j}{e_{j,k}'} : T \otimes T'}^\dagger \ket{v, v'} \cdot \ket{v, v'} \\
		&&\;&\msem{\spanning{T\otimes T'}{\cdots}}^\dagger \inj_j \inj_{k} \ket{v, v'} \\
		&=&& \sum_{\sigma_j \in \V(\Gamma_j)} \sum_{\sigma_{j,k}' \in \V(\Gamma_{j,k}')} \bra{\sigma_j} \msem{\varnothing : \varnothing \partition \Gamma_j \vdash e_j : T}^\dagger \ket{v} \cdot \ket{v} \otimes \bra{\sigma_{j,k}'} \msem{\varnothing : \varnothing \partition \Gamma_{j,k}' \vdash e_{j,k}' : T'}^\dagger \ket{v'} \cdot \ket{v'} \\
		&=&& \sum_{\sigma_j \in \V(\Gamma_j)} \sum_{\sigma_{j,k}' \in \V(\Gamma_{j,k}')} \bra{\sigma_j, \sigma_{j,k}'} \msem{\varnothing : \varnothing \partition \Gamma_j, \Gamma_{j,k}' \vdash \pair{e_j}{e_{j,k}'} : T \otimes T'}^\dagger \ket{v, v'}
	\end{alignat*}

	\paragraph{\textsc{S-Flip}}

	\begin{quantikz}
		\lstick{$\Hilb(T)$} & \gate{\spanning{T}{\cdots}} & \qwbundle{\substack{\Hilb(T)^{\oplus (n_1 + n_2 + n_3 + n_4)} \\ \hfill}} & \gate{\mathbb{I} \oplus \textsc{flip} \oplus \mathbb{I}} & \rstick{$\Hilb(T)^{\oplus (n_1 + n_3 + n_2 + n_4)}$} \qw
	\end{quantikz}
	\begin{alignat*}{2}
		&&\;& \msem{\spanning{T}{\cdots}} \\
		&:&& \ket{v} \\
		&\mapsto&& \bigoplus_{j=1}^{n_1} \sum_{\sigma \in \V(\Gamma_{1,j})} \bra{\sigma} \msem{\varnothing : \varnothing \partition \Gamma_{1,j} \vdash e_{1,j} : T}^\dagger \ket{v} \cdot \ket{v} \\
		&&& \oplus \bigoplus_{j=1}^{n_2} \sum_{\sigma \in \V(\Gamma_{2,j})} \bra{\sigma} \msem{\varnothing : \varnothing \partition \Gamma_{2,j} \vdash e_{2,j} : T}^\dagger \ket{v} \cdot \ket{v} \\
		&&& \oplus \bigoplus_{j=1}^{n_3} \sum_{\sigma \in \V(\Gamma_{3,j})} \bra{\sigma} \msem{\varnothing : \varnothing \partition \Gamma_{3,j} \vdash e_{3,j} : T}^\dagger \ket{v} \cdot \ket{v} \\
		&&& \oplus \bigoplus_{j=1}^{n_4} \sum_{\sigma \in \V(\Gamma_{4,j})} \bra{\sigma} \msem{\varnothing : \varnothing \partition \Gamma_{4,j} \vdash e_{4,j} : T}^\dagger \ket{v} \cdot \ket{v} \\
		&\mapsto&& \bigoplus_{j=1}^{n_1} \sum_{\sigma \in \V(\Gamma_{1,j})} \bra{\sigma} \msem{\varnothing : \varnothing \partition \Gamma_{1,j} \vdash e_{1,j} : T}^\dagger \ket{v} \cdot \ket{v} \\
		&&& \oplus \bigoplus_{j=1}^{n_3} \sum_{\sigma \in \V(\Gamma_{3,j})} \bra{\sigma} \msem{\varnothing : \varnothing \partition \Gamma_{3,j} \vdash e_{3,j} : T}^\dagger \ket{v} \cdot \ket{v} \\
		&&& \oplus \bigoplus_{j=1}^{n_2} \sum_{\sigma \in \V(\Gamma_{2,j})} \bra{\sigma} \msem{\varnothing : \varnothing \partition \Gamma_{2,j} \vdash e_{2,j} : T}^\dagger \ket{v} \cdot \ket{v} \\
		&&& \oplus \bigoplus_{j=1}^{n_4} \sum_{\sigma \in \V(\Gamma_{4,j})} \bra{\sigma} \msem{\varnothing : \varnothing \partition \Gamma_{4,j} \vdash e_{4,j} : T}^\dagger \ket{v} \cdot \ket{v} \\
	\end{alignat*}

	\paragraph{Ortho}
	\label{sec:compile-ortho}
	We have now shown that the needed Kraus operator can be implemented whenever $\spanning{T}{e_1, \ldots, e_n}$ is true.
	In the general case, $\ortho{T}{e_1, \ldots, e_n}$ is true instead, where $[e_1, \ldots, e_n]$ is a subsequence of some larger sequence $[e_1', \ldots, e_{n'}']$ such that $\spanning{T}{e_1', \ldots, e_{n'}'}$ is true.
	The circuit for the orthogonallity judgment can be constructed from the circuit for the spanning judgment.
	Given an expression $e_j'$ in the spanning list, define $V_j$ to be $\mathbb{I} \in \linear(\Hilb(T))$ whenever $e_j'$ is in the ortho list, and $0 \in \linear(\Hilb(T), \Hilb(\Void))$ whenever it is not.
	The ortho circuit is then constructed as follows:

	\begin{quantikz}
		\lstick{$\Hilb(T)$} & \gate{\spanning{T}{e_1', \ldots, e_{n'}'}} & \qwbundle{\Hilb(T)^{\oplus n'}} & \qw & \gate{\bigoplus_{j=1}^{n'} V_j} & \rstick{$\Hilb(T)^{\oplus n}$} \qw
	\end{quantikz}
	\begin{alignat*}{2}
		&&\;& \msem{\ortho{T}{e_1, \ldots, e_n}} \\
		&:&& \ket{v} \\
		&\mapsto&& \bigoplus_{j=1}^{n'} \sum_{\sigma \in \V(\Gamma_j)} \bra{\sigma} \msem{\varnothing : \varnothing \partition \Gamma_j \vdash e_j' : T}^\dagger \ket{v} \cdot \ket{v} \in \Hilb(T)^{\oplus n'} \\
		&\mapsto&& \bigoplus_{j=1}^{n'} \sum_{\sigma \in \V(\Gamma_j)} \bra{\sigma} \msem{\varnothing : \varnothing \partition \Gamma_j \vdash e_j' : T}^\dagger \ket{v} \cdot V_j \ket{v} \\
		&=&& \bigoplus_{j=1}^{n} \sum_{\sigma \in \V(\Gamma_j)} \bra{\sigma} \msem{\varnothing : \varnothing \partition \Gamma_j \vdash e_j : T}^\dagger \ket{v} \cdot \ket{v} \in \Hilb(T)^{\oplus n} \\
		&&\;& \msem{\ortho{T}{e_1, \ldots, e_n}}^\dagger \\
		&:&& \inj_j \ket{v} \\
		&\mapsto&& \sum_{\sigma \in \V(\Gamma_j)} \bra{\sigma} \msem{\varnothing : \varnothing \partition \Gamma_j \vdash e_{j'}' : T}^\dagger \ket{v} \cdot \ket{v} \\
		&\mapsto&& \sum_{\sigma \in \V(\Gamma_j)} \bra{\sigma} \msem{\varnothing : \varnothing \partition \Gamma_j \vdash e_j : T}^\dagger \ket{v} \cdot \ket{v} \\
	\end{alignat*}
\end{proof}

From the operators constructed here, it is straightforward to prove Lemmas~\ref{lem:spanning-sem} and \ref{lem:ortho-sem}, by noting that the operators in those lemmas are $\msem{\ortho{T}{e_1, \ldots, e_n}}^\dagger\msem{\ortho{T}{e_1, \ldots, e_n}}$ and \\ $\msem{\spanning{T}{e_1, \ldots, e_n}}^\dagger \msem{\spanning{T}{e_1, \ldots, e_n}}$.
In the spanning case, this construction is norm-preserving, and in the ortho case, it is norm-non-increasing.

\begin{lemma}[erasure compilation]
	\label{lem:erasure}
	Suppose $(\Gamma, \Gamma_j \partition \Delta, \Delta' \vdash e_j' : T')$ for all $j \in \{1, \ldots, n\}$ and $\erases{T'}(x; e_1', \ldots, e_n')$ is true for all $x \in \dom(\Delta)$.
	Then, one can implement a Kraus operator $\msem{\erases{T'}(\Delta; e_1', \ldots, e_n')} : \Hilb(\Delta) \otimes \Hilb(T') \to \Hilb(T')$ with the following behavior for all $\sigma \in \V(\Gamma), \sigma_j \in \V(\Gamma_j), \tau \in \V(\Delta), \tau' \in \V(\Delta')$:
	\[
		\ket{\tau} \otimes \msem{\sigma, \sigma_j : \Gamma, \Gamma_j \partition \Delta, \Delta' \vdash e_j' : T'} \ket{\tau, \tau'}
		\mapsto \msem{\sigma, \sigma_j : \Gamma, \Gamma_j \partition \Delta, \Delta' \vdash e_j' : T'} \ket{\tau, \tau'}
	\]
\end{lemma}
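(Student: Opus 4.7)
The plan is to construct the required Kraus operator inductively, first on the cardinality of $\Delta$ by peeling off one variable at a time and composing the resulting single-variable erasers in sequence, then on the derivation of $\erases{T'}(x; e_1', \ldots, e_n')$ for each individual $x \in \dom(\Delta)$. The guiding intuition is that the erases judgment certifies that every branch $e_j'$ stores the value of $x$ at a fixed position within its output in $\Hilb(T')$, and that this position is the same across all $j$, so a single uncomputation procedure recovers and discards $x$ regardless of which branch fires in the surrounding \texttt{ctrl}. The reduction to one variable at a time is safe because erasing $x_1$ does not perturb the slot used by $x_2$: each single-variable eraser is defined entirely on the qubit subregister identified by its own derivation.

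First I would dispatch the base rule E-Var: when all $e_j' = x$ syntactically, $\msem{e_j'}\ket{\tau, \tau'} = \ket{\tau(x)}$, and the required map sends $\ket{v} \otimes \ket{v} \mapsto \ket{v}$, which is precisely the adjoint of the share gate already compiled in Figure~\ref{fig:compile-share}. For E-Pair0 and E-Pair1 I would exploit that $\msem{\pair{e_0}{e_1}}$ factors as a tensor product on the shared quantum context, and invoke the inductive hypothesis on the tensor factor where $x$ lives while leaving the other factor acted on by the identity. The E-Gphase case contributes only an overall scalar $e^{ir}$ to one branch; by linearity the same eraser obtained from the inner erases derivation still satisfies the required equation, so the rule is essentially free once the inner case is handled.

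The crux will be the E-Ctrl rule. Here some $e_j'$ is itself a \texttt{ctrl} expression whose bodies $e_{j,1}, \ldots, e_{j,m}$ appear (together with the other $e_k'$) in the flattened list of the premise, and the inductive hypothesis yields an eraser that works for each individual body. I would argue that this same eraser also works for the composite \texttt{ctrl} output, because the denotation of \texttt{ctrl} in Figure~\ref{fig:sem-pure-expr} expresses $\msem{e_j'}\ket{\tau, \tau'}$ as a linear combination of terms each of the form $\alpha_k \cdot \msem{e_{j,k}}\ket{\tau, \tau'}$, and the eraser annihilates $\ket{\tau}$ from each summand in the same way, hence commutes with the summation. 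Verifying this commutation rigorously, and threading the construction through the additional purification, flag, and classical-context structure that the \texttt{ctrl} compilation introduces so that no spurious garbage wire leaks out in violation of Definition~\ref{def:implement-kraus}, will be the principal obstacle; in particular, the classical subcontexts $\Gamma_{j,k}$ differ across inner branches, and I expect the cleanest way to handle this is to define the eraser directly on the common output register identified by the erases derivation and then argue its correctness by invoking Theorem~\ref{thm:well-defined} to rewrite the composite semantics one branch at a time.
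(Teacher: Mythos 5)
Your proposal is correct and follows essentially the same route as the paper: recursion on $\Delta$ to reduce to a single-variable eraser, then induction on the erases derivation, with \textsc{E-Var} handled by the adjoint of the share gate, \textsc{E-Pair0}/\textsc{E-Pair1} by acting on the relevant tensor factor, \textsc{E-Gphase} by linearity, and \textsc{E-Ctrl} by observing that the \texttt{ctrl} denotation is a linear combination of its bodies' denotations so the flattened-list eraser commutes with the sum. The complications you anticipate around the \texttt{ctrl} compilation's purification and flag structure do not actually arise, since the argument is carried out at the level of the denotational semantics rather than the compiled circuit.
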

\begin{proof}
	We construct the circuit by recursing on $\Delta$.
	In the base case, $\Delta = \varnothing$ and an identity operator (empty circuit) suffices.
	Thus, we focus on the inductive case where our context is $(x : T_x, \Delta)$, assuming the inductive hypothesis that $\msem{\erases{T'}(\Delta; e_1', \ldots, e_n')}$ is implementable with the behavior:
	\[
		\ket{\tau} \otimes \msem{\sigma, \sigma_j : \Gamma, \Gamma_j \partition x : T_x, \Delta, \Delta' \vdash e_j' : T'} \ket{x \mapsto v, \tau, \tau'}
		\mapsto \msem{\sigma, \sigma_j : \Gamma, \Gamma_j \partition x : T_x, \Delta, \Delta' \vdash e_j' : T'} \ket{x \mapsto v, \tau, \tau'}
	\]
	(Note that $x$ appears here even though this is the inductive hypothesis, which would normally be free of $x$. This is still a valid induction principle; we are effectively inducting on the number of variables that must be erased.)
	The problem is then reduced to implementing a Kraus operator $\msem{\erases{T'}(x; e_1', \ldots, e_n')} : \Hilb(T_x) \otimes \Hilb(T') \to \Hilb(T')$ with the following behavior:
	\[
		\ket{v} \otimes \msem{\sigma, \sigma_j : \Gamma, \Gamma_j \partition x : T_x, \Delta, \Delta' \vdash e_j' : T'} \ket{x \mapsto v, \tau, \tau'}
		\mapsto \msem{\sigma, \sigma_j : \Gamma, \Gamma_j \partition x : T_x, \Delta, \Delta' \vdash e_j' : T'} \ket{x \mapsto v, \tau, \tau'}
	\]
	\begin{quantikz}
		\lstick{$\Hilb(T_x)$} & \qw & \hphantomgate{} & \gate[3,nwires=2]{\erases{T'}(x; e_1', \ldots, e_n')} \\
		\lstick{$\Hilb(\Delta)$} & \gate[2]{\erases{T'}(\Delta; e_1', \ldots, e_n')} & & \\
		\lstick{$\Hilb(T')$} & & \qwbundle{\Hilb(T')} & & \rstick{$\Hilb(T')$} \qw
	\end{quantikz}
	\begin{alignat*}{2}
		&&\;& \msem{\erases{T'}(x : T_x, \Delta; e_1', \ldots, e_n')} \\
		&:&& \ket{x \mapsto v, \tau} \otimes \msem{\sigma, \sigma_j : \Gamma, \Gamma_j \partition x : T_x, \Delta, \Delta' \vdash e_j' : T'} \ket{x \mapsto v, \tau, \tau'} \\
		&\mapsto&& \ket{v} \otimes \msem{\sigma, \sigma_j : \Gamma, \Gamma_j \partition x : T_x, \Delta, \Delta' \vdash e_j' : T'} \ket{x \mapsto v, \tau, \tau'} \\
		&\mapsto&& \msem{\sigma, \sigma_j : \Gamma, \Gamma_j \partition x : T_x, \Delta, \Delta' \vdash e_j' : T'} \ket{x \mapsto v, \tau, \tau'} \\
	\end{alignat*}

	The rest of this proof constructs this gate $\msem{\erases{T'}(x; e_1', \ldots, e_n')}$ by induction on the rule used to prove the erasure judgment.

	\paragraph{\textsc{E-Var}}
	In this case, $e_1' = \cdots = e_n' = x$ and $T' = T_x$.
	We know that $\Delta = \Delta' = \varnothing$ because these contexts must be relevant.

	\begin{quantikz}
		\lstick{$\Hilb(T')$} & \gate[style={cloud}]{} \\
		\lstick{$\Hilb(T')$} & \ctrl{-1} & \rstick{$\Hilb(T')$} \qw
	\end{quantikz}
	\begin{alignat*}{2}
		&&\;& \msem{\erases{T'}(x; e_1', \ldots, e_n')} \\
		&:&& \ket{v} \otimes \msem{\sigma, \sigma_j : \Gamma, \Gamma_j \partition x : T' \vdash x : T'} \ket{x \mapsto v} \\
		&=&& \ket{v} \otimes \ket{v} \\
		&\mapsto&& \ket{v} \\
		&=&& \msem{\sigma, \sigma_j : \Gamma, \Gamma_j \partition x : T' \vdash x : T'} \ket{x \mapsto v}
	\end{alignat*}

	\paragraph{\textsc{E-Gphase}}
	In this case, the circuit produced by the inductive hypothesis already has the needed behavior.
	\begin{alignat*}{2}
		&&\;& \msem{\erases{T'}(x; e_1', \ldots, e_{j-1}', e_j' \triangleright \gphase{T}{r}, e_{j+1}', \ldots, e_n')} \\
		&:&& \ket{v} \otimes \msem{\sigma, \sigma_j : \Gamma, \Gamma_j \partition x : T_x, \Delta, \Delta' \vdash e_j' \triangleright \gphase{T}{r} : T'} \ket{x \mapsto v, \tau, \tau'} \\
		&=&& \ket{v} \otimes e^{ir} \msem{\sigma, \sigma_j : \Gamma, \Gamma_j \partition x : T_x, \Delta, \Delta' \vdash e_j' : T'} \ket{x \mapsto v, \tau, \tau'} \\
		&\mapsto&& e^{ir} \msem{\sigma, \sigma_j : \Gamma, \Gamma_j \partition x : T_x, \Delta, \Delta' \vdash e_j' : T'} \ket{x \mapsto v, \tau, \tau'} \\
		&=&& \msem{\sigma, \sigma_j : \Gamma, \Gamma_j \partition x : T_x, \Delta, \Delta' \vdash e_j' \triangleright \gphase{T}{r} : T'} \ket{x \mapsto v, \tau, \tau'}
	\end{alignat*}

	\paragraph{\textsc{E-Ctrl}}
	This rule allows us effectively to ``inline'' the right side of the \texttt{ctrl} expressions for the purpose of the erases judgment.
	Assume one of the expressions is of the following form:
	\[
\cntrl{e}{T}{e_{j,1} &\mapsto e_{j,1}' \\ &\cdots \\ e_{j,m} &\mapsto e_{j,m}'}{T'}
	\]

	Then, we can use the fact that the semantics of \texttt{ctrl} is a linear combination of the semantics of its subexpressions:
	\begin{alignat*}{2}
		&&\;& \msem{\erases{T'}(x; e_1', \ldots, e_{j-1}', \cntrl{e}{T}{\cdots}{T'}, e_{j+1}', \ldots, e_n')} \\
		&:&& \ket{v} \otimes \msem{\sigma, \sigma_j : \Gamma, \Gamma_j \partition x : T_x, \Delta, \Delta' \vdash \cntrl{e}{T}{\cdots}{T'} : T'} \ket{x \mapsto v, \tau, \tau'} \\
		&=&& \sum \cdots \ket{v} \otimes \msem{\sigma, \sigma_{j,k} : \Gamma, \Gamma_{j,k} \partition x : T_x, \Delta, \Delta' \vdash e_{j,k}' : T'} \ket{x \mapsto v, \tau, \tau'} \\
		&\mapsto&& \sum \cdots \msem{\sigma, \sigma_{j,k} : \Gamma, \Gamma_{j,k} \partition x : T_x, \Delta, \Delta' \vdash e_{j,k}' : T'} \ket{x \mapsto v, \tau, \tau'} \\
		&=&& \msem{\sigma, \sigma_j : \Gamma, \Gamma_j \partition x : T_x, \Delta, \Delta' \vdash \cntrl{e}{T}{\cdots}{T'} : T'} \ket{x \mapsto v, \tau, \tau'}
	\end{alignat*}

	\paragraph{\textsc{E-Pair0}}
	\[
	\begin{quantikz}
		\lstick{$\Hilb(T_x)$} & \gate[2]{\erases{T_0}(x; e_{0,1}, \ldots, e_{0,n})} \\
		\lstick{$\Hilb(T_0)$} & & \rstick{$\Hilb(T_0)$} \qw \\
		\lstick{$\Hilb(T_1)$} & \qw & \rstick{$\Hilb(T_1)$} \qw
	\end{quantikz}
\]
	\begin{alignat*}{2}
		&&\;& \msem{\erases{T_0 \otimes T_1}(x; \pair{e_{0,1}}{e_{1,1}}, \ldots, \pair{e_{0,n}}{e_{1,n}})} \\
		&:&& \ket{v} \otimes \msem{\sigma, \sigma_j : \Gamma, \Gamma_j \partition x : T_x, \Delta, \Delta' \vdash \pair{e_{0,j}}{e_{1,j}} : T_0 \otimes T_1} \ket{x \mapsto v, \tau, \tau'} \\
		&=&& \ket{v} \otimes \msem{\sigma, \sigma_j : \Gamma, \Gamma_j \partition x : T_x, \Delta_{*}, \Delta_0, \Delta_1, \Delta_{*}', \Delta_0', \Delta_1' \vdash \pair{e_{0,j}}{e_{1,j}} : T_0 \otimes T_1} \ket{x \mapsto v, \tau_{*}, \tau_0, \tau_1, \tau_{*}', \tau_0', \tau_1'} \\
		&=&& \ket{v} \otimes \msem{\sigma, \sigma_j : \Gamma, \Gamma_j \partition x : T_x, \Delta_{*}, \Delta_0, \Delta_{*}', \Delta_0' \vdash e_{0,j} : T_0} \ket{x \mapsto v, \tau_{*}, \tau_0, \tau_{*}', \tau_0'} \\ &&&\otimes \msem{\sigma, \sigma_j : \Gamma, \Gamma_j \partition x : T_x, \Delta_{*}, \Delta_1, \Delta_{*}', \Delta_1' \vdash e_{1,j} : T_1} \ket{x \mapsto v, \tau_{*}, \tau_1, \tau_{*}', \tau_1'} \\
		&\mapsto&& \msem{\sigma, \sigma_j : \Gamma, \Gamma_j \partition x : T_x, \Delta_{*}, \Delta_0, \Delta_{*}', \Delta_0' \vdash e_{0,j} : T_0} \ket{x \mapsto v, \tau_{*}, \tau_0, \tau_{*}', \tau_0'} \\ &&&\otimes \msem{\sigma, \sigma_j : \Gamma, \Gamma_j \partition x : T_x, \Delta_{*}, \Delta_1, \Delta_{*}', \Delta_1' \vdash e_{1,j} : T_1} \ket{x \mapsto v, \tau_{*}, \tau_1, \tau_{*}', \tau_1'} \\
		&=&& \msem{\sigma, \sigma_j : \Gamma, \Gamma_j \partition x : T_x, \Delta, \Delta' \vdash \pair{e_{0,j}}{e_{1,j}} : T_0 \otimes T_1} \ket{x \mapsto v, \tau, \tau'}
	\end{alignat*}

	\paragraph{\textsc{E-Pair1}}
	\begin{quantikz}
		\lstick{$\Hilb(T_x)$} & \qw & \gate[2]{\erases{T_1}(x; e_{1,1}, \ldots, e_{1,n})} \\
		\lstick{$\Hilb(T_0)$} & \gate[swap]{} & & \gate[swap]{} & \rstick{$\Hilb(T_0)$} \qw \\
		\lstick{$\Hilb(T_1)$} & & \qw & & \rstick{$\Hilb(T_1)$} \qw
	\end{quantikz}
	\begin{alignat*}{2}
		&&\;& \msem{\erases{T_0 \otimes T_1}(x; \pair{e_{0,1}}{e_{1,1}}, \ldots, \pair{e_{0,n}}{e_{1,n}})} \\
		&:&& \ket{v} \otimes \msem{\sigma, \sigma_j : \Gamma, \Gamma_j \partition x : T_x, \Delta, \Delta' \vdash \pair{e_{0,j}}{e_{1,j}} : T_0 \otimes T_1} \ket{x \mapsto v, \tau, \tau'} \\
		&=&& \ket{v} \otimes \msem{\sigma, \sigma_j : \Gamma, \Gamma_j \partition x : T_x, \Delta_{*}, \Delta_0, \Delta_1, \Delta_{*}', \Delta_0', \Delta_1' \vdash \pair{e_{0,j}}{e_{1,j}} : T_0 \otimes T_1} \ket{x \mapsto v, \tau_{*}, \tau_0, \tau_1, \tau_{*}', \tau_0', \tau_1'} \\
		&=&& \ket{v} \otimes \msem{\sigma, \sigma_j : \Gamma, \Gamma_j \partition x : T_x, \Delta_{*}, \Delta_0, \Delta_{*}', \Delta_0' \vdash e_{0,j} : T_0} \ket{x \mapsto v, \tau_{*}, \tau_0, \tau_{*}', \tau_0'} \\ &&&\otimes \msem{\sigma, \sigma_j : \Gamma, \Gamma_j \partition x : T_x, \Delta_{*}, \Delta_1, \Delta_{*}', \Delta_1' \vdash e_{1,j} : T_1} \ket{x \mapsto v, \tau_{*}, \tau_1, \tau_{*}', \tau_1'} \\
		&\mapsto&& \msem{\sigma, \sigma_j : \Gamma, \Gamma_j \partition x : T_x, \Delta_{*}, \Delta_0, \Delta_{*}', \Delta_0' \vdash e_{0,j} : T_0} \ket{x \mapsto v, \tau_{*}, \tau_0, \tau_{*}', \tau_0'} \\ &&&\otimes \msem{\sigma, \sigma_j : \Gamma, \Gamma_j \partition x : T_x, \Delta_{*}, \Delta_1, \Delta_{*}', \Delta_1' \vdash e_{1,j} : T_1} \ket{x \mapsto v, \tau_{*}, \tau_1, \tau_{*}', \tau_1'} \\
		&=&& \msem{\sigma, \sigma_j : \Gamma, \Gamma_j \partition x : T_x, \Delta, \Delta' \vdash \pair{e_{0,j}}{e_{1,j}} : T_0 \otimes T_1} \ket{x \mapsto v, \tau, \tau'}
	\end{alignat*}

	We have thus demonstrated that a circuit with this semantics can always be constructed.
\end{proof}

\fi

\end{document}